%
%
%
%
\documentclass[reqno]{amsart}
\usepackage{amsfonts,latexsym,enumerate}
\usepackage{amsmath}
\usepackage{amscd}
\usepackage{float,amsmath,amssymb,mathrsfs,bm,multirow,graphics}
\usepackage[dvips]{graphicx}
\usepackage{overpic}
\usepackage{amsaddr}
\usepackage[numbers,sort&compress]{natbib}
\usepackage{tikz}
\usepackage{epstopdf}
\usepackage{subfigure}
\usepackage{enumerate}
\usepackage{xcolor}
\definecolor{lightblue}{RGB}{216,128,144} 
\colorlet{mred}{lightblue}
\colorlet{mgreen}{lightblue}
\colorlet{mblue}{lightblue}

\addtolength{\topmargin}{-10ex}
\addtolength{\oddsidemargin}{-3em}
\addtolength{\evensidemargin}{-3em}
\addtolength{\textheight}{15ex}
\addtolength{\textwidth}{4em}

\newcommand{\R}{{\Bbb R}}

\newcommand{\CC}{{\Bbb C}}

\newcommand{\diag}{\text{\upshape diag\,}}

\newtheorem{theorem}{Theorem}[section]
\newtheorem{proposition}[theorem]{Proposition}
\newtheorem{lemma}[theorem]{Lemma}

\newtheorem{assumption}[theorem]{Assumption}
\newtheorem{remark}[theorem]{Remark}

\newtheorem{RHproblem}[theorem]{RH problem}
\numberwithin{equation}{section}


\setlength{\floatsep}{10pt plus 2pt minus 2pt}
\setlength{\textfloatsep}{10pt plus 2pt minus 2pt}
\setlength{\intextsep}{10pt plus 2pt minus 2pt}


\begin{document}
	\title[Long-time asymptotics of the Tzitz\'eica equations]
	{Long-time asymptotics of the Tzitz\'eica equation on the line}
	
	\author{Lin Huang$^{* \dagger}$,~Deng-Shan Wang$^{* \ddagger}$ and Xiaodong Zhu$^{* \ddagger,\S}$}
	\address{$^{\dagger}$School of Science, Hangzhou Dianzi University, Zhejiang 310018, China. \\ $^{\ddagger}$Laboratory of Mathematics and Complex Systems (Ministry of Education),  \\ School of Mathematical Sciences,
		Beijing Normal University,Beijing 100875, China.\\
    ${\S}$ SISSA, via Bonomea 265, 34136, Trieste, Italy}
	\email{lin.huang@hdu.edu.cn, dswang@bnu.edu.cn, xzhu@sissa.it}

	\subjclass[2010]{Primary 37K40; Secondary 35Q15, 37K10}
	
	\date{\today}
	
	
	\keywords{Inverse scattering transform, Lax pair, Tzitz\'eica equation, Riemann-Hilbert problem}
	
	\begin{abstract}
		In this paper, the renowned Riemann-Hilbert method is employed to study the long-time asymptotics of pure radiation solution to the initial value problem of Tzitz\'eica equation on the line, which is an important issue that remains unsolved. Initially, our analysis focuses on elucidating the properties of two reflection coefficients, which are determined by the initial values. Subsequently, leveraging these reflection coefficients, we construct a Riemann-Hilbert problem that is a powerful tool to articulate the solution of the Tzitz\'eica equation. Finally, the nonlinear steepest descent method is applied to the oscillatory Riemann-Hilbert problem, which enables us to delineate the long-time asymptotic behaviors of solutions to the Tzitz\'eica equation across various regions. Moreover, it is shown that the leading-order terms of asymptotic formulas match well with full numerical simulations.

	\end{abstract}
	
	\maketitle
	
	\tableofcontents

	\section{\bf Introduction}
	In a seminal series of papers published between 1907 and 1910, the distinguished Romanian geometer Tzitz\'eica \cite{tzitzeica} initiated a pioneering study of a special class of hyperbolic surfaces in Euclidean three-space whose second fundamental form is non-degenerate. 
	Tzitz\'eica discovered that the ratio 
	\(
	I=\frac{K}{d^4},
	\)
	where $K$ denotes the Gauss curvature and $d$ is the Euclidean distance from a fixed point to the tangent plane, remains invariant under equiaffine transformations of Euclidean 3-space. 
	A surface for which $I$ is constant is therefore called a {\it Tzitz\'eica surface} \cite{Krivoshapko-Ivanov-1984}. 
	In affine differential geometry, such a surface is also known as a {\it proper affine sphere}, since the affine distance from the origin is a nonzero constant \cite{Inoguchi-2018}. 
	
	For an indefinite proper affine sphere with negative affine mean curvature, the Gauss--Codazzi equations with Blaschke metric $h=2e^{u} dXdT$ in isothermal coordinates reduce to the nonlinear wave equation
	\begin{align}\label{TTorg}
		u_{X,T}=e^{u}-e^{-2u},
	\end{align}
	which is known as the {\it Tzitz\'eica equation}, where $u$ is a real-valued function. 
	In the context of integrable systems, Mikhailov  \cite{M-1981} showed that the Tzitz\'eica equation \eqref{TTorg} arises as a reduction of the periodic two-dimensional Toda lattice \cite{FB-1980} of type $A_2^{(2)}$,
	\[
	u_{n,X,T}= e^{u_n-u_{n-1}} - e^{u_{n+1}-u_n},
	\]
	with period three under the constraint $u_1=u$, $u_2=-u$, and $u_3=0$, which corresponds to the affine root system.  Moreover, as shown in \cite{Dunajski-2009}, the Tzitz\'eica equation \eqref{TTorg} arises as a reduction of the anti-self-dual Yang--Mills equations on $\mathbb{R}^{2,2}$ with gauge group $SL(3,\mathbb{R})$.
	
	\par
By introducing the ``light-cone'' coordinates $(x,t)$ defined by
\begin{align*}
	x = X + T, \qquad t = T - X,
\end{align*}
the Tzitz\'eica equation \eqref{TTorg} can be rewritten in the form
\begin{align}\label{Tt}
	u_{tt} - u_{xx} = e^{-2u} - e^u.
\end{align}
This equation has appeared in various mathematical and physical contexts, highlighting its importance in the theory of integrable systems. 
Dodd and Bullough \cite{DB-1977} discovered two nontrivial conservation laws associated with this equation. 
Later, Zhiber and Shabat \cite{ZS-1979} showed that the equation admits an infinite Lie--B\"acklund symmetry group, which further reveals its rich integrable structure. 
For this reason, the equation is sometimes also referred to as the Bullough--Dodd--Zhiber--Shabat equation.
	\par	
	Mikhailov \cite{M-1981} discovered a Lax pair for equation \eqref{Tt}, which opened the way to the construction of various exact solutions, including soliton solutions, finite-gap solutions, and algebro--geometric solutions \cite{Brezh,cmg-1999,KSY-1999,wgz-2015}.
	The integrable properties of this equation, such as soliton solutions, conservation laws, symmetries, B\"acklund transformations, and Darboux transformations, have attracted considerable attention in the literature \cite{Babalic-Constantinescu-Gerdjikov,KW-1981,cmg-1999,SS-1993,Brezh,NW-1997,ZG-2005}. 
	These studies have greatly advanced the understanding of the Tzitz\'eica equation.
	\par	
	The well-posedness of the Tzitz\'eica equation is an important topic in the study of integrable systems and nonlinear partial differential equations. 
	It concerns the existence, uniqueness, and continuous dependence of solutions on the initial data. 
	In 2017, Jevnikar and Yang studied the Tzitz\'eica equation in its elliptic form and investigated the occurrence of blow-up phenomena together with conditions ensuring the existence of solutions \cite{JY-2017}. 
	{More recently, Riemann--Hilbert techniques have also proved useful in the study of blow-up phenomena for related integrable models. 
		For instance, Charlier \cite{Charlier-2024-blow} employed the Riemann--Hilbert approach to construct blow-up solutions for the ``bad'' Boussinesq equation and demonstrated that a wide range of asymptotic blow-up scenarios can occur.}
	To the best of our knowledge, the well-posedness of the Tzitz\'eica equation has not yet been investigated from the perspective adopted in this work. 
	In this paper, we establish a connection between solutions of equation~\eqref{Tt} and an associated Riemann--Hilbert problem, which provides a new framework for studying the well-posedness of the equation.
	\par
	The Tzitz\'eica equation \eqref{Tt} and the sine--Gordon equation
	\begin{align}\label{SG}
		u_{tt}-u_{xx}+\sin(u)=0
	\end{align}
	are two typical examples of integrable systems arising from classical differential geometry \cite{Udriste}. 
	The sine--Gordon equation \eqref{SG} \cite{Rogers-2002} appears as the Gauss--Codazzi equation for surfaces with constant Gauss curvature $-1$ in $\mathbb{R}^3$, and hence describes pseudo-spherical surfaces. 
	Over the past decades, the sine--Gordon equation \eqref{SG} has been extensively studied by various analytical approaches. 
	For instance, Cheng, Venakides, and Zhou \cite{Zhou-Cpde-singordan} investigated the long-time asymptotics of pure-radiation solutions of the sine--Gordon equation, while Lu and Miller \cite{Lu-Miller-2022} studied the Dubrovin universality near the critical point in the semiclassical sine--Gordon equation. 
	
	The integrability-based analysis of the sine--Gordon equation is relatively tractable since it possesses a $2\times2$ Lax pair. 
	In contrast, the Tzitz\'eica equation \eqref{Tt} admits a $3\times3$ Lax pair, which leads to substantial difficulties in the development of the inverse spectral theory. 
	To the best of our knowledge, the asymptotic analysis of pure-radiation solutions to the Tzitz\'eica equation \eqref{Tt} remains an open problem.
	\par	
	In this paper, we study the long-time asymptotic behavior of solitonless (pure-radiation) solutions of the Tzitz\'eica equation \eqref{Tt} with initial data in the Schwartz class,
	\begin{align}
		\label{TT}
		\begin{cases}
			u_{tt}-u_{xx}=e^{-2u}-e^{u},\\
			u(x,0)=u_0(x)\in\mathcal{S}(\mathbb{R}),\quad 
			u_t(x,0)=u_1(x)\in\mathcal{S}(\mathbb{R}).
		\end{cases}
	\end{align}
	
	Our analysis begins with the study of the direct and inverse scattering problems associated with a third-order spectral problem \cite{Beals-Deift-Tomei,Beals-Coifman-1984}. 
	This leads to the development of an inverse scattering transform (IST) framework for the initial-value problem. 
	Within this framework, the solution of the Tzitz\'eica equation can be formulated in terms of a $3\times3$ matrix Riemann--Hilbert problem. 
	
	To analyze the long-time behavior of the solution, we employ the nonlinear steepest descent method introduced by Deift and Zhou \cite{Deift-Zhou-1993} and further developed by Lenells \cite{lenells-2015,Lenells-2018}. 
	Using this approach, we derive asymptotic formulas describing the long-time behavior of solutions. 
	For simplicity, we restrict our analysis to initial data in the Schwartz space $\mathcal S(\mathbb{R})$.
	\par	
The Riemann--Hilbert method has become a powerful analytical tool with important applications in several areas of mathematics, including integrable systems, random matrix theory, and orthogonal polynomials \cite{Deift1999}. 
Significant progress has been made in the study of the long-time asymptotic behavior of solutions to integrable equations associated with higher-order Lax pairs. 
In particular, the inverse spectral theory for integrable systems of Gelfand--Dickey type with third-order Lax pairs has been systematically developed by Charlier, Lenells, and one of the present authors \cite{Charlier-Lenells-2021,CL-2021,Charlier-Lenells-Wang-2021}. 

Other notable examples include the Degasperis--Procesi equation \cite{BS-2013,Boutet-de-Monvel-1}, the ``good'' and ``bad'' Boussinesq equations \cite{CL-2023-sectorI,CL-2023-sectorIV,CL-2023-soliton,CL-2023-sectorV,Charlier-Lenells-Wang-2021,Deift1982}, and the Lenells equation \cite{CL-2021}; see also \cite{Boutet-de-Monvel-2,Constantin-4,GWC-2021,GL-2018,HL-2020,Ds-Xd-2022,XF-2022,YF-2023,Ds-Xd-2026} for further developments.
	\par
	The paper is organized as follows. 
	In Section \ref{Main-Results}, we present the main results of this work. 
	Section \ref{spectralanalysis} is devoted to the spectral analysis based on the Lax pair associated with the Tzitz\'eica equation, which leads to the proof of Theorem \ref{RHth}. 
	Finally, in Section \ref{longtimeasymptotic}, we prove Theorem \ref{uasy} describing the long-time asymptotic behavior of solutions to the Tzitz\'eica equation.
	
	\section{\bf Main Results}\label{Main-Results}
	
	This section lists the main results of the present work. For the initial value problem \eqref{TT}, direct scattering analysis shows that one can define the scattering matrices $s(\lambda)=(s_{ij}(\lambda))_{3\times3}$ and $s^{A}(\lambda)=(s^{A}_{ij}(\lambda))_{3\times3}$ in \eqref{Scattering-Ma} and \eqref{Scattering-Ma-A} below, respectively. With these scattering data in mind,
	this paper delineates its core contributions through the formulation and proof of three main theorems, which emerge from a foundational theoretical framework, predicated on a series of basic assumptions.
	Below, we outline these assumptions, which are essential for the derivation of our main results:
	
	{\begin{assumption}\label{solitonless}
			Let $D_1 := \{\lambda \in \mathbb{C} \mid 0 < \arg(\lambda) < \frac{\pi}{3}\}$ and 
			$D_4 := \{\lambda \in \mathbb{C} \mid \pi < \arg(\lambda) < \frac{5\pi}{3}\}$, 
			see Figure~\ref{sigma0}. 
			For the initial value problem \eqref{TT}, we assume that 
			$s_{11}(\lambda) \neq 0$ for $\lambda \in \overline{D}_1 \setminus \{0\}$ and 
			$s_{11}^A(\lambda) \neq 0$ for $\lambda \in \overline{D}_4 \setminus \{0\}$. 
			That is, we assume that the scattering data contain no discrete spectrum, 
			so that problem \eqref{TT} corresponds to the solitonless case and we only 
			consider pure radiation solutions of the Tzitz\'eica equation.
	\end{assumption}
\begin{remark}
	In this manuscript, we only focus on the pure radiation case. 
	The analysis can also be generalized to the general case with solitons, 
	in which one may study the corresponding soliton resolution conjecture. 
	The soliton resolution conjecture for the bad Boussinesq equation was studied in~\cite{Charlier-Lenells-2024-soliton}.
\end{remark}
}
	\par
	The subsequent discussion will elucidate the rationality of the aforementioned assumption, demonstrating its validity by selecting a specific initial value followed by numerical calculations.
	\par
	Our first result reveals the characteristics of two spectral functions, denoted as $r_1(\lambda)$ and $r_2(\lambda)$. These functions are interpreted as the reflection coefficients associated with the equation \eqref{Tt}, determined by its initial conditions. Furthermore, $r_1(\lambda)$ and $r_2(\lambda)$ are conceptualized as the nonlinear Fourier transform of the initial data. The properties of these spectral functions are crucial for formulating the Riemann-Hilbert (RH) problem and, subsequently, for the accurate reconstruction of the solution to equation \eqref{TT} from the Riemann-Hilbert framework.
	\par
	To be specific,	the reflection coefficients $r_1(\lambda)$ and $r_2(\lambda)$ are defined by  \begin{align}\label{r1r2def}
		\begin{cases}
			r_1(\lambda):=\frac{s_{12}(\lambda)}{s_{11}(\lambda)},\quad \lambda \in (0,\infty),\\
			r_2(\lambda):=\frac{s_{12}^A(\lambda)}{s_{11}^A(\lambda)},\quad \lambda \in (-\infty,0).
		\end{cases}
	\end{align}
	In Proposition \ref{sprop} below, we demonstrate that the matrix entries $s_{11}(\lambda)$ and $s_{12}(\lambda)$, as defined in equation (\ref{r1r2def}), are smooth functions over the interval $\lambda \in (0,\infty)$. Analogously, Proposition \ref{sAprop} establishes the smoothness of the entries $s^A_{11}(\lambda)$ and $s^A_{12}(\lambda)$, also referenced in equation \eqref{r1r2def}, for $\lambda \in (-\infty,0)$. Consequently, the reflection coefficients $r_1(\lambda)$ and $r_2(\lambda)$ exhibit smoothness within their respective domains, except for potential discontinuities at points where $s_{11}(\lambda)$ and $s^A_{11}(\lambda)$ approaches zero. These zero points are indicative of the emergence of solitons. However, our analysis confines itself to scenarios devoid of solitons, focusing exclusively on pure radiation solutions (see Assumption \ref{solitonless}).
	
{\begin{theorem}\label{r1r2prop}
		Suppose that $u_0,u_1\in\mathcal{S}(\R)$ and satisfy Assumption~\ref{solitonless}. 
		Then the reflection coefficients $r_1(\lambda)$ and $r_2(\lambda)$ are well defined 
		for $\lambda\in (0,\infty)$ and $\lambda\in(-\infty,0)$, respectively, and satisfy 
		the following properties:
		\begin{enumerate}
			\item The functions $r_1(\lambda)$ and $r_2(\lambda)$ are smooth for $\lambda$ in their domain and decay rapidly as $\lambda\to\infty$.
			\item The functions $r_1(\lambda)$, $r_2(\lambda)$ and their derivatives decay rapidly as $\lambda\to0$.
		\end{enumerate}
	\end{theorem}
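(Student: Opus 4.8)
The plan is to deduce all properties of $r_1$ and $r_2$ from the corresponding properties of the scattering entries $s_{11},s_{12}$ and $s_{11}^A,s_{12}^A$, reducing everything to sharp estimates on a single oscillatory integral representation for the off-diagonal entries. I would first settle smoothness. Proposition \ref{sprop} gives that $s_{11}$ and $s_{12}$ are smooth on $(0,\infty)$, and Assumption \ref{solitonless} guarantees $s_{11}(\lambda)\neq 0$ on $\bar D_1\setminus\{0\}\supset(0,\infty)$; hence $r_1=s_{12}/s_{11}$ is smooth on $(0,\infty)$ as a quotient of smooth functions with non-vanishing denominator. The same reasoning with Proposition \ref{sAprop} and the non-vanishing of $s_{11}^A$ on $\bar D_4\setminus\{0\}$ gives smoothness of $r_2$ on $(-\infty,0)$, settling the first half of item (1).

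For the decay statements I would use the integral formula for the off-diagonal scattering coefficient produced by the spectral analysis of Section \ref{spectralanalysis}. With the Jost solutions normalized to the identity at $x\to\pm\infty$, one has a representation of the schematic form $s_{12}(\lambda)=\int_{\R}e^{-(\omega_1(\lambda)-\omega_2(\lambda))x}\,[P(x)\mu(x,\lambda)]_{12}\,dx$, where $P$ is the off-diagonal potential assembled from $u_0,u_1$ and their derivatives (hence Schwartz in $x$), $\mu$ is the bounded eigenfunction depending smoothly on $\lambda$, and $\omega_j(\lambda)$ are the eigenvalues of the leading part of the $x$-Lax operator. The decisive structural fact, forced by the light-cone form of the Tzitz\'eica Lax pair, is that the $x$-operator carries both a $\lambda$ and a $\lambda^{-1}$ term, so the phase difference $\phi(\lambda):=\omega_1(\lambda)-\omega_2(\lambda)$ grows like $\lambda$ as $\lambda\to\infty$ and like $\lambda^{-1}$ as $\lambda\to0$. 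Integrating by parts $N$ times in $x$ turns the representation into $\phi(\lambda)^{-N}\int_{\R}e^{-\phi(\lambda)x}\partial_x^N[P\mu]_{12}\,dx$, with no boundary terms because $P\in\mathcal S(\R)$; the $L^1$ norm of $\partial_x^N[P\mu]_{12}$ stays bounded in $\lambda$, so $|s_{12}(\lambda)|\lesssim |\phi(\lambda)|^{-N}$ for every $N$. As $\lambda\to\infty$ this gives decay faster than any power, and since $s_{11}(\lambda)\to1$ in the same limit, $r_1=s_{12}/s_{11}$ inherits rapid decay; the analysis of $r_2$ via $s_{12}^A,s_{11}^A$ is identical. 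This completes item (1).

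The main obstacle is item (2), the decay of $r_1,r_2$ \emph{together with all $\lambda$-derivatives} as $\lambda\to0$. The bound $|s_{12}(\lambda)|\lesssim|\phi(\lambda)|^{-N}\sim|\lambda|^{N}$ already shows $s_{12}$ vanishes to all orders at the origin, but differentiation in $\lambda$ is delicate: each $\partial_\lambda$ landing on the exponential produces a weight $-\phi'(\lambda)x$, and since $\phi(\lambda)\sim\mathrm{const}/\lambda$ near zero one has $\phi'(\lambda)\sim\mathrm{const}/\lambda^2$, so the $k$-th derivative generates negative powers of $\lambda$ up to $\lambda^{-2k}$. I would control this by an induction on the derivative order, showing that $\partial_\lambda^k$ of the integrand is a finite sum of terms $x^{m}\lambda^{-p}\,(\text{Schwartz in }x)$ with $p\le 2k$, where the polynomial weights $x^m$ are absorbed harmlessly by $P\in\mathcal S(\R)$ and the $\lambda$-derivatives of $\mu$ are controlled by its smooth dependence on $\lambda$ down to $\lambda=0$ established in the spectral analysis. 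Applying the same $N$-fold integration by parts to each such term produces a factor $\lambda^{N-p}$, which can be made an arbitrarily large positive power by choosing $N>p$; hence $\partial_\lambda^k s_{12}(\lambda)\to0$ faster than any power of $\lambda$. Carrying out this competing-powers bookkeeping cleanly, and verifying the requisite uniform bounds on $\mu$ and its $\lambda$-derivatives near the singular point $\lambda=0$, is the technical heart of the proof. Finally, because $s_{11}(\lambda)$ approaches its nonzero free value as $\lambda\to0$ with the difference vanishing to all orders by the same estimate, it is bounded away from zero near the origin, and the quotient rule transfers the rapid vanishing of $s_{12}$ and its derivatives to $r_1$; the same holds for $r_2$, which establishes item (2).
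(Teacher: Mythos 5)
Your item (1) is essentially sound and close in spirit to what the paper does (the paper simply defers to Propositions \ref{sprop} and \ref{sAprop}): smoothness of $r_1,r_2$ follows from the quotient structure together with Assumption \ref{solitonless}, and rapid decay at $\lambda=\infty$ follows from non-stationary phase applied to the representation \eqref{Scattering-Ma}. Even there, one small point deserves care: your claim that $\|\partial_x^N[P\mu]_{12}\|_{L^1}$ stays bounded in $\lambda$ is not automatic, because $\partial_x\mu$ contains the commutator $[\mathcal{L},\mu]$ with $\mathcal{L}=O(\lambda)$; it is saved only by the WKB input that the \emph{off-diagonal} entries of $\mu=\Phi_+$ are $O(\lambda^{-1})$, so that $([\mathcal{L},\mu])_{ij}=(l_i-l_j)\mu_{ij}=O(1)$. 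You should say this explicitly.

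The genuine gap is in item (2). Near $\lambda=0$ the representation \eqref{Scattering-Ma} degenerates in two ways that your bookkeeping does not survive: the potential $L_1=U_0+\lambda^{-1}(U_1-\tfrac12 J^2)$ has a simple pole at $\lambda=0$, and the Jost function does \emph{not} tend to $I$ there --- it tends to $G(x)$ of \eqref{G}, whose off-diagonal entries are nonzero. Consequently each integration by parts, which gains a factor $\phi(\lambda)^{-1}\sim\lambda$, simultaneously loses a factor $\lambda^{-1}$ from $\partial_x\Phi_+=[\mathcal{L},\Phi_+]+L_1\Phi_+$ with $\mathcal{L}\sim J^2/(2\lambda)$ and $(\Phi_+)_{ij}=O(1)$ off the diagonal. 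The net outcome of $N$ integrations by parts is $O(\lambda^{-1})$, not $O(\lambda^{N})$, so the asserted bound $|s_{12}(\lambda)|\lesssim|\phi(\lambda)|^{-N}$ simply fails as $\lambda\to0$, and the subsequent ``competing powers'' induction for the $\lambda$-derivatives is built on it. The missing idea --- and the one the paper actually uses --- is the gauge transformation $\phi=G\psi$, after which the potential becomes $\tilde L_1=\tilde U_0+\lambda\tilde U_1$ (regular, indeed vanishing to first order in its singular part, at $\lambda=0$) and $\Psi_\pm=I+O(\lambda)$; since $\Phi_-^{-1}\Phi_+=\Psi_-^{-1}\Psi_+$, the scattering matrix is unchanged, and in the new gauge the off-diagonal entries of $\Psi_\pm$ are $O(\lambda)$, so the commutator is again $O(1)$ and your integration-by-parts/non-stationary-phase argument (the phase frequency is $\sim\lambda^{-1}\to\infty$) goes through and yields rapid vanishing of $s_{12}$ and all its $\lambda$-derivatives at the origin. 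Without this regularization step item (2) is not proved.
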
}
	\begin{proof}
		The theorem follows the definitions of $r_1(\lambda)$ and $r_2(\lambda)$, as shown in equation \eqref{r1r2def}, alongside a thorough analysis of the properties adhered to by the scattering matrices $s(\lambda)$ and $s^{A}(\lambda)$ in \eqref{Scattering-Ma} and \eqref{Scattering-Ma-A}, respectively.
		These properties are meticulously detailed in Proposition \ref{sprop} and  Proposition \ref{sAprop}, respectively.
	\end{proof}
	
	\begin{remark} It is remarked that in his pioneering literature \cite{Zhou-JDE-1995}, Zhou had already meticulously demonstrated that as $\lambda \to 0$, the reflection coefficients $r_1(\lambda)$ and $r_2(\lambda)$ converge to zero.
	\end{remark}
	
	The second principal conclusion of this study delineates the establishment of a substantive linkage between the solution of the Tzitz\'eica equation, characterized by Schwartz class initial values, and the resolution of a specific Riemann-Hilbert problem. This problem is defined through the ``reflection coefficients"  \(r_1(\lambda)\), \(r_2(\lambda)\) involved in Theorem \ref{r1r2prop}, and a set of designated phase functions.

	\begin{RHproblem}\label{MRHp}
		To identify a \(3 \times 3\) piecewise analytic matrix-valued function, denoted as \(M(x, t, \lambda)\), which possesses characteristics outlined below:
		\begin{itemize}
			\item $M(x,t,\lambda)$ is analytic in $\CC\setminus{\Sigma}$, where
			$\Sigma=
			\cup_{j=1}^6 e^{i(j-1)\pi/3}\R_+,$
			the orientation is shown in Figure \ref{sigma0}.
			\item  $M(x,t,\lambda)$ is analytic for $\CC\setminus\Sigma$; and for $\lambda$ approaches $\Sigma$ from the left and right, the limits of $M(x,t,\lambda)$ exist. Denote the limits as $M_+(x,t,\lambda)$ and $M_-(x,t,\lambda)$, respectively, and they have the following relationship
			$$
			M_+(x,t,\lambda)=M_-(x,t,\lambda)V(x,t,\lambda).
			$$
			\item  As $\lambda\to\infty,$ for $\lambda\in \CC\setminus{\Sigma}$, \ $M(x,t,\lambda)= I+\mathcal{O}\left(\frac{1}{\lambda}\right)$.
			\item  As $\lambda\to0,$ for $\lambda\in\CC\setminus{\Sigma}$,\ $M(x,t,\lambda)= G(x,t)+\mathcal{O}(\lambda)$.
			\item  $
			M(x, t, \lambda)=\mathcal{A}^{-1} M(x, t, \omega \lambda) \mathcal{A}={\mathcal{B}} {M(x, t, {\lambda}^{*})}^{*} \mathcal{B}^{-1}, \quad \lambda \in \mathbb{C} \backslash \Sigma.
			$
		\end{itemize}
	\end{RHproblem}
{Here, if $\lambda\in e^{i(j-1)\pi/3}\mathbb{R}_+$ for $j=1,2,\dots,6$, then 
$V(x,t,\lambda)=v_j(x,t,\lambda)$ denotes the corresponding jump matrix, 
whose explicit form is given in Lemma~\ref{jumpvj}. 
The function $G(x,t)$ is defined in~\eqref{G}, and the matrices 
$\mathcal{A}$ and $\mathcal{B}$ are given in~\eqref{mathcal-A} 
and~\eqref{mathcal-B}, respectively.}
	
	%
	\begin{theorem}\label{RHth}
		Let $u(x,t)$ be a solution belonging to the Schwartz class of the initial value problem given in \eqref{TT}, defined for an existence time {$T \in (0, \infty]$} with initial data $u_0, u_1 \in \mathcal{S}(\mathbb{R})$ satisfying Assumptions \ref{solitonless}. Define the reflection coefficients $r_1(\lambda)$ and $r_2(\lambda)$ with respect to $u_0, u_1$ as per \eqref{r1r2def}.
		It is then established that the Riemann-Hilbert problem \ref{MRHp} admits a unique solution $M(x,t,\lambda)$ for each point in the domain $(x,t) \in \mathbb{R} \times [0,T)$. Furthermore, the solution $u(x,t)$ of Tzitz\'eica equation for all $(x,t) \in \mathbb{R} \times [0,T)$ can be expressed by
		\begin{equation}\label{usolution}
			u(x,t)=\lim_{\lambda \to 0} \log \left[ (\omega,\omega^2,1) M(x,t,\lambda) \right]_{13},
		\end{equation}
		where $\omega=e^{2i\pi/3}$.
	\end{theorem}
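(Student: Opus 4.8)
The plan is to prove Theorem \ref{RHth} by carrying out a complete direct-and-inverse spectral analysis of the third-order Lax pair associated with \eqref{Tt}, in the spirit of the Beals--Coifman--Deift construction adapted to the $\mathbb{Z}_3$-symmetric setting. I would organize the argument in four stages: construction of the sectionally holomorphic matrix $M$, proof of uniqueness, proof of existence, and derivation of the reconstruction formula \eqref{usolution}.

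First, I would build $M(x,t,\lambda)$ out of the Jost eigenfunctions of the Lax pair. The six rays comprising $\Sigma$ are precisely the lines on which two of the exponents governing the spectral problem have equal real part, so the plane splits into six sectors, in each of which a distinguished bounded, analytic eigenfunction exists. Comparing eigenfunctions normalized at $x=\pm\infty$ produces the scattering matrices $s(\lambda)$ and $s^A(\lambda)$, and the jump relations across each ray, expressed through $r_1,r_2$ and the phase functions, are exactly those recorded in Lemma \ref{jumpvj}. The two symmetries of the RH problem are inherited directly from the Lax pair: invariance under $\lambda\mapsto\omega\lambda$ conjugated by $\mathcal{A}$ encodes the cyclic $A_2^{(2)}$ structure, while invariance under $\lambda\mapsto\lambda^*$ conjugated by $\mathcal{B}$ reflects the reality of $u$. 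The asymptotics $M=I+\mathcal{O}(1/\lambda)$ as $\lambda\to\infty$ and $M=G(x,t)+\mathcal{O}(\lambda)$ as $\lambda\to0$ follow from the large- and small-$\lambda$ behavior of the eigenfunctions, the latter using the rapid decay of $r_1,r_2$ at the origin established in Theorem \ref{r1r2prop}.

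With the RH problem in hand, uniqueness is the routine half: if $M$ and $\tilde M$ both solve RH problem \ref{MRHp}, then $M\tilde M^{-1}$ has no jump across $\Sigma$ (the common jump cancels) and no singularity at $\lambda=0$ or $\infty$, so by Liouville's theorem together with the normalization at infinity it equals the identity. The substantive step is existence, which I would obtain via Zhou's vanishing lemma. Here the $\mathcal{B}$ symmetry is decisive: it forces the jump matrix to satisfy the Hermitian positivity condition required for the associated homogeneous (\emph{vanishing}) RH problem to admit only the trivial solution, whence the singular integral equation equivalent to RH problem \ref{MRHp} is solvable in $L^2(\Sigma)$. I expect the main obstacle to lie precisely here, because the contour $\Sigma$ passes through the degeneracy point $\lambda=0$ and one must check that the normalization $M\to G(x,t)$ is compatible with the vanishing lemma; the rapid decay of $r_1,r_2$ and their derivatives at $0$ guaranteed by Theorem \ref{r1r2prop}(2) is exactly what prevents a singularity there and makes the argument close.

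Finally, to extract the reconstruction formula I would expand $M=G(x,t)+\mathcal{O}(\lambda)$ near the origin and compare the resulting expression with the small-$\lambda$ expansion of the Jost eigenfunctions read off from the Lax pair. The left-row vector $(\omega,\omega^2,1)$ is chosen to project onto the $\mathcal{A}$-eigendirection that isolates the coefficient carrying $u$; its $(1,3)$ component then equals $e^{u(x,t)}$, and taking the logarithm yields \eqref{usolution}. Consistency of this formula for all $(x,t)\in\mathbb{R}\times[0,T)$ follows because $M$ depends smoothly on the parameters and solves the same RH problem pointwise.
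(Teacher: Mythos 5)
Your overall architecture (build $M$ from sectorwise Jost eigenfunctions, prove uniqueness by Liouville, read off $u$ from the $\lambda\to 0$ expansion through the left eigenvector $(\omega,\omega^2,1)$ of $G$) matches the paper: the paper constructs $M_n$ on each sector $D_n$ by Fredholm integral equations, shows via Lemmas \ref{Lemma-formula-1} and \ref{Lemma-formula-2} that the time-dependent $M_n$ satisfy both halves of the Lax pair and hence the jump relations of Lemma \ref{jumpvj}, proves uniqueness exactly by your Liouville argument, and obtains \eqref{usolution} from $M\to G$ and $(\omega,\omega^2,1)G=e^{u}(\omega,\omega^2,1)$. The one place you genuinely diverge is existence. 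The paper does \emph{not} invoke Zhou's vanishing lemma: since the theorem already assumes a Schwartz-class solution $u(x,t)$ on $[0,T)$, existence of $M$ is obtained \emph{constructively} — the Fredholm equations \eqref{Fredholm integral equation} with potential $u(x,\cdot)$ define $M_n$, Assumption \ref{solitonless} guarantees the Fredholm determinants (equivalently $s_{11}$, $s_{11}^A$) do not vanish so the $M_n$ are pole-free, and the assembled piecewise function is checked to satisfy every condition of RH problem \ref{MRHp}. Note that your own first stage, if completed, already \emph{is} this existence proof, so your third stage is logically redundant.

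As written, the vanishing-lemma step is also the one with a real gap. The Hermitian positivity condition you attribute to the $\mathcal{B}$ symmetry is not automatic here: on $\mathbb{R}_+$ one computes $v_1+v_1^{\dagger}=\diag(2,\,2-2|r_1|^2,\,2)$, so positivity requires $|r_1(\lambda)|<1$, i.e.\ $|s_{11}|^2-|s_{12}|^2>0$, which does not follow from Assumption \ref{solitonless} (that assumption controls $s_{11}$ and the minor $s_{11}^A=m_{11}(s)$, not $m_{33}(s)$). Moreover, four of the six rays of $\Sigma$ are not invariant under $\lambda\mapsto\lambda^*$, so the standard vanishing-lemma argument must be run on conjugation-paired contours and then reconciled with the $\mathcal{A}$-symmetry, and the self-intersection of all six rays at $\lambda=0$ (where the normalization is $G(x,t)$ rather than $I$) needs separate treatment beyond the decay of $r_1,r_2$. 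None of this is needed if you follow the constructive route, which is what the theorem's hypotheses are designed to enable.
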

	
	\begin{proof}
		See Section \ref{uissolution}.
	\end{proof}
	
	\begin{lemma}
		Under the assumptions of Theorem \ref{RHth}, the solution $M=M(x,t,\lambda)$ of RH problem \ref{MRHp} is unique, if it exits.
	\end{lemma}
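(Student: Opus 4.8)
The plan is to run the classical Liouville-type uniqueness argument for Riemann--Hilbert problems, adapted to the nonstandard normalization at the origin. Fix $(x,t) \in \R \times [0,T)$ and suppose $M_1$ and $M_2$ are two solutions of RH problem \ref{MRHp} with the same data.

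First I would establish that each solution has unit determinant, so that it is pointwise invertible. Since the jump matrices $v_j$ from Lemma \ref{jumpvj} satisfy $\det v_j = 1$, the scalar function $\det M_k$ obeys $\det M_{k,+} = \det M_{k,-}$ across every ray of $\Sigma$; by continuity of the boundary values and Morera's theorem it therefore extends holomorphically across $\Sigma \setminus \{0\}$ and is holomorphic on $\C \setminus \{0\}$. The expansion $M_k = G(x,t) + \mathcal{O}(\lambda)$ forces $\det M_k$ to stay bounded near the origin, so the origin is a removable singularity and $\det M_k$ extends to an entire function. Combined with $\det M_k \to 1$ as $\lambda \to \infty$, Liouville's theorem yields $\det M_k \equiv 1$; in particular $\det G(x,t) = 1$, and $M_k(x,t,\lambda)^{-1}$ exists for every $\lambda \in \C \setminus \Sigma$.

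Next I would form $N(x,t,\lambda) = M_1(x,t,\lambda)\, M_2(x,t,\lambda)^{-1}$ and show it carries no jump. Using the common relation $M_{k,+} = M_{k,-} V$ one finds $N_+ = M_{1,-} V V^{-1} M_{2,-}^{-1} = M_{1,-} M_{2,-}^{-1} = N_-$ on each ray, so $N$ extends holomorphically across $\Sigma \setminus \{0\}$ and is holomorphic on $\C \setminus \{0\}$. Since $M_k \to G(x,t)$ as $\lambda \to 0$, we get $N \to G G^{-1} = I$, whence $N$ is bounded near the origin and the singularity there is again removable; thus $N$ is entire. From $M_k = I + \mathcal{O}(1/\lambda)$ at infinity we obtain $N \to I$ as $\lambda \to \infty$, so $N$ is a bounded entire function. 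Liouville's theorem then forces $N \equiv I$, i.e. $M_1 \equiv M_2$, which is the claimed uniqueness.

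The main obstacle is the nonstandard behavior at the origin, where the six rays of $\Sigma$ meet and the prescribed limit is the matrix $G(x,t)$ rather than the identity. The point to justify with care is that the expansion $M_k = G(x,t) + \mathcal{O}(\lambda)$ genuinely makes the origin a removable singularity for both $\det M_k$ and $N$: because this asymptotic holds from each of the six sectors bounded by $\Sigma$, the sectorial limits all agree and no pole or essential singularity can hide at $\lambda = 0$. Once this is granted the two Liouville arguments close at once, and the remaining input — that $\det v_j = 1$ for each $j$, read off from the explicit jumps in Lemma \ref{jumpvj} — is routine. I note that the symmetry relations in RH problem \ref{MRHp} are not needed for the uniqueness itself; they are additional constraints that any solution satisfies, and the argument above uses only the jump relation and the two normalizations.
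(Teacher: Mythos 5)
Your argument is correct and is essentially the paper's own proof: the paper likewise shows $\det M$ is analytic on $\C\setminus\{0\}$ with a removable singularity at the origin and equals $1$ by Liouville, then applies the same Liouville argument to $M'M^{-1}$ to conclude $M'=M$. Your version merely spells out the intermediate steps (unit determinant of the jump matrices, removability at $\lambda=0$ via the $G(x,t)+\mathcal{O}(\lambda)$ expansion) in more detail.
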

	
	\begin{proof}
		Since the determinant
		of solution $M(x,t;\lambda)$ for RH problem \ref{MRHp} is analytic for $\lambda\in\CC\setminus\{0\}$ with a removable singularity at $\lambda=0$ and $M=I$ as $\lambda\to\infty$, it implies that $\det M=1$ and $M$ is invertible. Consequently, if $M'$ is another solution of RH problem \ref{MRHp}, $M'M^{-1}$ is analytic for $\lambda\in\CC$ and tends to $I$ as $\lambda\to\infty$, it follows that $M'M^{-1}=I$, which denotes that $M'=M$.
	\end{proof}
	Having established the intricate link between the solutions of the Tzitz\'eica equations, framed by Schwartz class initial value conditions, and the Riemann-Hilbert problem, we can now delve into the study of the long-time asymptotic behaviors of the solutions. This constitutes the third significant result obtained in this paper, as depicted below:

	\begin{theorem}\label{uasy}
		{Under the assumptions of Theorem \ref{RHth} and that the
		solution $u(x,t)$ exists globally in time, i.e., $T=+\infty$.} Then the solution $u(x,t)$ to the Tzitz\'eica equation in \eqref{TT}, as defined in \eqref{usolution}, exhibits the following asymptotic behaviors as time $t \to \infty$ (see Figure \ref{sector} for the asymptotic sectors {\rm I-IV} in the $x$-$0$-$t$ half-plane):
		\par
		\noindent {\bf Sector I \& II}: If $\left|\frac{x}{t}\right|\geq 1$, as $t\to\infty$, the function $u(x,t)$ rapidly tends to zero. Specifically, in Sector II where $1\leq \left|\frac{x}{t}\right| < \infty$, the solution $u(x,t)$ behaves as {$\mathcal{O}(t^{-N})$} for sufficiently large $t$. Conversely, as $\left|\frac{x}{t}\right|$ tends to $\infty$ in Sector I, the solution $u(x,t)$ behaves as { $\mathcal{O}(|x|^{-N})$} for sufficiently large $|x|$, where $c$ is a positive constant.
		\par		
	    \noindent{ {\bf Sector III}: If $\left|\frac{x}{t}\right|$ tends to $1$ from within $\left|\frac{x}{t}\right|<1$, it represents a transition region and for any integer $N>1$, the leading-order term of $u(x,t)$ is same with (\ref{zm}), but with an error of $\mathcal{O}\left({|t|^{-N}+\frac{C_N(\lambda_0)\ln t}{t}}\right)$ where $C_N(\lambda)$ is a nonnegative smooth function, which vanishes to any order at $\lambda=0$ and $\lambda=\infty$.}
		\par		
		\noindent {\bf Sector IV}:  If $\left|\frac{x}{t}\right|< 1$, then the long-time asymptotics of the Tzitz\'eica equation in problem \eqref{TT} is
		\begin{equation}\label{zm}
			\small{
				\begin{aligned}
					\small u(x,t)=\log(1+3^{-\frac{1}{4}}&\sqrt{\frac{2(1+\lambda_0^2)}{t\lambda_0}}\left(\sqrt{\nu_1}\cos\left(\frac{5\pi i}{12}-\frac{2\sqrt{3}t\lambda_0}{1+\lambda_0^2}-\nu_1\ln\left(\frac{6\sqrt{3}t\lambda_0}{1+\lambda_0^2}\right)+s_1\right)\right.\\
					&\small \left.\quad -\sqrt{\nu_4}\cos\left(\frac{13\pi i}{12}-\frac{2\sqrt{3}t\lambda_0}{1+\lambda_0^2}
					-\nu_4\ln\left(\frac{6\sqrt{3}t\lambda_0}{1+\lambda_0^2}\right)+s_2\right)\right)+\mathcal O\left(\frac{\ln t}{t}\right),
			\end{aligned}}
		\end{equation}
		where $s_1$ and $s_2$ are specified by \eqref{s}, while $\lambda_0$, $\nu_1$, and $\nu_4$ are defined by \eqref{lambda0def}, \eqref{nu1def}, and \eqref{nu4def}, respectively.
	\end{theorem}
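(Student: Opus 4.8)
The plan is to apply the Deift--Zhou nonlinear steepest descent method to RH problem \ref{MRHp} and then read off $u$ from the reconstruction formula \eqref{usolution}. The first step is to make the oscillatory structure explicit: the jump matrix $V(x,t,\lambda)$ is built from $r_1,r_2$ multiplied by exponentials $e^{\pm t\phi(\lambda;\xi)}$, where $\xi=x/t$ and $\phi$ is the phase inherited from the Lax pair. I would compute $\phi$, locate its saddle points $\partial_\lambda\phi=0$, and draw the signature table for $\re\phi$ across the six rays of $\Sigma$. The two symmetries of the RH problem---the $\mathcal A$-symmetry giving the $\mathbb Z_3$-rotation by $\omega$ and the $\mathcal B$-symmetry giving conjugation---reduce everything to a fundamental sector, so the saddle points occur in $\mathbb Z_3$- and conjugation-related families. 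The governing dichotomy is that the relevant saddles $\pm\lambda_0$ (see \eqref{lambda0def}) sit on the contour exactly when $|\xi|<1$, coalesce or reach a boundary value at $|\xi|=1$, and leave the active rays when $|\xi|>1$; this is precisely what separates the four sectors.

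For Sectors I and II ($|\xi|\ge 1$) the argument is the soft one. Using the signature table I would deform $\Sigma$ into regions where $\re(t\phi)$ is strictly negative along each deformed ray, so that $V\to I$ uniformly and exponentially---at rate $e^{-ct}$ for $|\xi|$ bounded (Sector II), and at rate $e^{-c|x|}$ when the $x$-part of $\phi$ dominates as $|\xi|\to\infty$ (Sector I). A standard small-norm estimate then gives $M=I+\mathcal O(e^{-ct})$ (resp.\ $\mathcal O(e^{-c|x|})$), and since $\log[(\omega,\omega^2,1)I]_{13}=\log 1=0$, plugging into \eqref{usolution} yields the claimed exponential decay of $u$.

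Sector IV ($|\xi|<1$) is the technical heart. With real saddles present I would (i) introduce a scalar $\delta$-function solving a scalar RH problem on the contour to absorb the diagonal factors (of the form $1-r_1r_2$), which generates both the logarithmic phase correction $\nu_j\ln(\cdots)$ and the exponents $\nu_1,\nu_4$ of \eqref{nu1def}, \eqref{nu4def}; (ii) factorize the jumps and open lenses so the deformed jumps decay away from the saddles; (iii) rescale near each saddle and match the localized jump to the standard parabolic-cylinder model RH problem, whose $\Gamma$-function solution supplies the amplitudes $\sqrt{\nu_j}$ and the constant phases $s_1,s_2$ of \eqref{s}; and (iv) assemble a global parametrix from the model solutions and bound the error by a small-norm argument, giving $\mathcal O(\ln t/t)$. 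The feature special to this paper is that $u$ is extracted from the $\lambda\to0$ coefficient $G(x,t)$ rather than from $\lambda\to\infty$, so I must track the contribution of each local model and of the $\delta$-function to the expansion of $M$ near the origin before forming $\log[\,\cdot\,]_{13}$; expanding $\log(1+t^{-1/2}(\cdots))$ and retaining the leading oscillatory term reproduces \eqref{zm}.

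In the transition Sector III ($|\xi|\to1^-$) the saddles coalesce, the separate parabolic-cylinder parametrices overlap, and the amplitude $\sqrt{2(1+\lambda_0^2)/(t\lambda_0)}$ degenerates; here I would replace the split local analysis by a single uniform estimate showing the leading term collapses with residual $\mathcal O(\ln t/t)$. The main obstacle throughout is the $3\times3$ structure: unlike the $2\times2$ AKNS case, the jump factorization, the lens opening, and the local parametrices must all be organized compatibly with the $\mathcal A$- and $\mathcal B$-symmetries and with the several phase differences at once, and crucially every construction must remain controlled in the limit $\lambda\to0$ that defines $u$. Establishing uniformity of the small-norm error both as the saddles approach coalescence (Sector III) and as $\lambda\to0$ is where I expect the bulk of the technical effort to lie.
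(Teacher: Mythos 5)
Your proposal follows essentially the same route as the paper: exponential decay via jump factorization and a small-norm argument outside the light cone, and inside the light cone the $\delta$-function conjugation, analytic splitting of $r_j$ and $\rho_j$ with lens opening, parabolic-cylinder local parametrices at $\pm\omega^j\lambda_0$ organized by the $\mathcal A$- and $\mathcal B$-symmetries, and extraction of $u$ from the $\lambda\to0$ limit, with error $\mathcal O(\ln t/t)$. The only point where the paper is simpler than you anticipate is Sector III: rather than a uniform coalescence analysis, it just observes that $\lambda_0\to0$ (or $\infty$) there, so the rapid decay of $r_1,r_2$ at $0$ and $\infty$ forces $\nu_1,\nu_4\to0$ and the leading term collapses.
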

	
	\begin{proof}
		See Lemma \ref{lemma-tends-to-0}, Section \ref{SectorI} and Section \ref{SectorII}.
	\end{proof}
	\begin{figure}[!h]
		\centering
		\begin{overpic}[height=.3\textwidth,width=.65\textwidth]{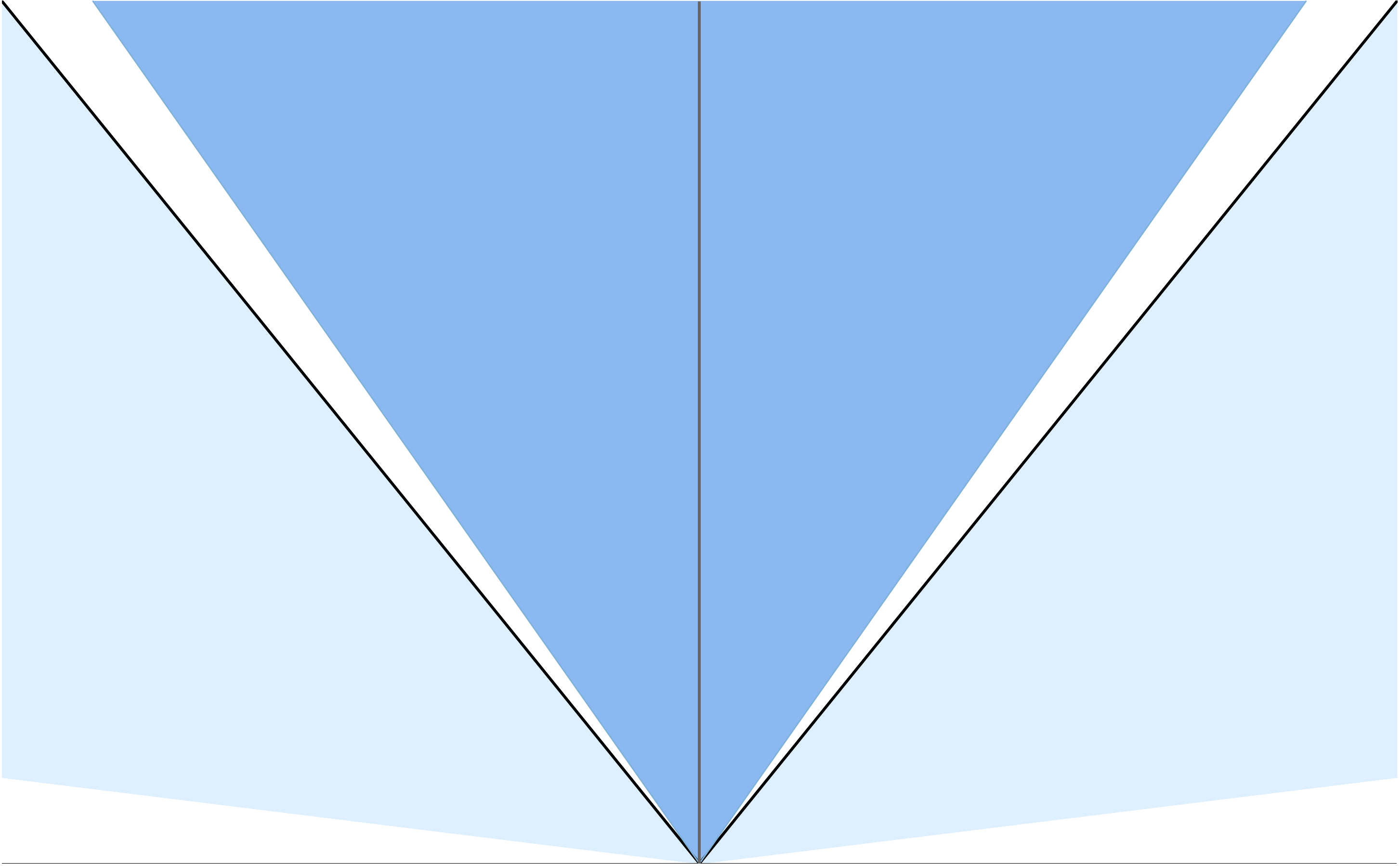}
			\put(8,0.5){${\rm I}$}
			\put(90,0.5){${\rm I}$}
			\put(100,-0.6){$x$}
			\put(100,-0.6){$x$}
			\put(85,15.5){${\rm II}$}
			\put(15,15.5){${\rm II}$}
			\put(88,39){${\rm III}$}
			\put(7,39){${\rm III}$}
			\put(48,32){${\rm IV}$}
			\put(50,47){$t$}
			\put(49,-5){$0$}
			\put(50,0){\vector(0,1){46.5}}
			\put(50,0){\vector(1,0){50}}
			\put(50,0){\line(-1,0){50}}
		\end{overpic}
		\caption{{\protect\small
				The asymptotic sectors I-IV in the $x$-$0$-$t$ half-plane.}}
		\label{sector}
	\end{figure}
	{\begin{lemma}\label{lemma-tends-to-0}
		As $\left|\frac{x}{t}\right|\to 1 $, from the inside of the light cone, the solution $u(x,t)$ of sector $\rm III$ in Theorem \ref{uasy} can match the long-time asymptotic formula of sector $\rm II$.
	\end{lemma}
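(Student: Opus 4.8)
The plan is to read everything off the explicit Sector IV formula \eqref{zm} and to show that its oscillatory correction term collapses at the edge of the light cone. I would write \eqref{zm} as $u(x,t)=\log\bigl(1+R(x,t)+\mathcal O(\ln t/t)\bigr)$, where $R$ is the $t^{-1/2}$-order bracket assembled from $\lambda_0$, $\nu_1$, $\nu_4$ and the two cosines. Since $\log$ is continuous with $\log 1=0$, it suffices to prove that $R\to0$ (with the error term staying bounded) as $|x/t|\to1$ from the inside. Thus the whole argument reduces to a single amplitude estimate for $R$.

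First I would pin down the behaviour of the saddle $\lambda_0$. From its definition \eqref{lambda0def} as the stationary point of the relevant phase, a short computation gives $\lambda_0^2=(t+x)/(t-x)$, so $\lambda_0$ is a monotone bijection of $\xi:=x/t\in(-1,1)$ onto $(0,\infty)$ with $\lambda_0=1$ at $\xi=0$. Hence $\xi\to1^-$ forces $\lambda_0\to\infty$, while $\xi\to-1^+$ forces $\lambda_0\to0$. At the same saddle one checks that the dominant phase is $-\tfrac{2\sqrt3\,t\lambda_0}{1+\lambda_0^2}=-\sqrt3\,\sqrt{t^2-x^2}$ and that the amplitude $3^{-\frac14}\sqrt{2(1+\lambda_0^2)/(t\lambda_0)}$ equals $3^{-\frac14}\cdot2(t^2-x^2)^{-1/4}$; in the $\lambda_0$ variable this prefactor grows only algebraically, like $\lambda_0^{1/2}$ as $\lambda_0\to\infty$ and like $\lambda_0^{-1/2}$ as $\lambda_0\to0$.

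Next I would balance this against the decaying factors. The coefficients $\nu_1,\nu_4$ from \eqref{nu1def} and \eqref{nu4def} are built from $r_1,r_2$ evaluated at the saddle (and its $\omega$-symmetric images), so $\sqrt{\nu_1},\sqrt{\nu_4}$ inherit the rapid decay of the reflection coefficients. By Theorem \ref{r1r2prop} both $r_1$ and $r_2$, hence $\nu_1$ and $\nu_4$, decay faster than any power of $\lambda_0$ as $\lambda_0\to0$ and as $\lambda_0\to\infty$. Since the two cosine factors — including the fixed phase shifts $s_1,s_2$ from \eqref{s} — remain uniformly bounded, each summand of $R$ is dominated by $(\text{algebraic in }\lambda_0)\times\sqrt{\nu_j}$ and therefore tends to $0$ at both endpoints $\lambda_0\in\{0,\infty\}$. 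Consequently $R\to0$, so $u=\log(1+o(1))\to0$, as claimed.

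The one point requiring care — and the crux of the proof — is exactly this competition: the amplitude genuinely blows up as $t^2-x^2\to0$, so the conclusion is not automatic and hinges on the super-polynomial vanishing of $\nu_1,\nu_4$ at $\lambda_0=0,\infty$ outrunning the algebraic growth $\lambda_0^{\pm1/2}$. This decay is precisely the content of Theorem \ref{r1r2prop}. I would also note that the entire balance lives in the $\lambda_0$ variable, so the explicit $t^{-1/2}$ only strengthens the estimate and no delicate uniformity in $t$ is required beyond keeping $t$ bounded away from $0$.
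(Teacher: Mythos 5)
Your proposal is correct and follows essentially the same route as the paper: both arguments read the claim directly off the Sector IV formula \eqref{zm} and observe that the super-polynomial decay of $r_1,r_2$ (hence of $\nu_1,\nu_4$) at $\lambda_0=0$ and $\lambda_0=\infty$ beats the algebraic growth $\lambda_0^{\mp1/2}$ of the amplitude, so the bracket tends to zero and $u=\log(1+o(1))\to0$. The only discrepancy is that your identification $\lambda_0^2=(t+x)/(t-x)$ is inverted relative to the paper's definition \eqref{lambda0def}, so your assignment of which endpoint ($0$ or $\infty$) corresponds to $\xi\to\pm1$ is swapped; since the decay argument is symmetric in the two endpoints this does not affect the conclusion (the paper itself additionally remarks that the jump contour between $\pm\lambda_0$ collapses so the RH problem degenerates to the outside-the-light-cone case, which you omit but do not need).
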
}

	\begin{proof}
		Consider the scenarios as $\frac{x}{t} \to 1^-$ for $x > 0$ and $\frac{x}{t} \to -1^+$ for $x < 0$. In these limits, we analyze the behavior of the critical point $\lambda_0 = \sqrt{\frac{x+t}{x-t}}$ given in (\ref{lambda0def}).
		
		\textbf{Case 1:} $\frac{x}{t} \to 1^-$.
		
		As $\frac{x}{t} \to 1^-$, the expression for $\lambda_0$ approaches to zero, i.e., $\lambda_0 \to 0$. This behavior impacts the leading-order term of the asymptotic expression (\ref{zm}), which is $\sqrt{\frac{2(1+\lambda_0^2)\nu_j}{t\lambda_0}}$ for $j = 1, 4$. As $\lambda_0 \to 0$, the reflection coefficients $r_1(\lambda)$ and $r_2(\lambda)$ tend to zero rapidly. Consequently, $u(x,t)$ vanishes in this limit, which can be observed from the decay properties of the kernel in (\ref{Fredholm integral equation}) and the fact that $\sqrt{\frac{2(1+\lambda_0^2)\nu_j}{t\lambda_0}} \to 0$ for $j = 1, 4$.
		
		\textbf{Case 2:} $\frac{x}{t} \to -1^+$.
		
		For this case, $\lambda_0$ approaches to infinity, i.e., $\lambda_0 \to \infty$. Similar to the previous case, $u(x,t)$ also vanishes as the reflection coefficients decay rapidly when $\lambda \to \infty$.
		\par	
		Moreover, for the {\bf Case 1}, the jump matrix on $(-\lambda_0,\lambda_0)$ effectively diminishes and tends to the identity matrix. This simplification results in the rational decomposition of reflection coefficient $r_1(\lambda)$, i.e., $r_{1,a}$ and $r_{1,r}$ approaching to zero near the critical points $\pm \lambda_0$. Consequently, the jump matrix on $(-\lambda_0,\lambda_0)$ can be approximated as $I + \mathcal{O}(\frac{1}{|t|^l})$, with $l \geq 1$, indicating that the solution in Sector III with $x>0$ converges to the trivial solution as expected from the decay properties of $r_1(\lambda)$ and $r_2(\lambda)$ near the boundary $\lambda \to 0$. For the {\bf Case 2}, following the same procedure as {\bf Case 1}, the solution in Sector III with $x<0$ converges to the trivial solution near the boundary $\lambda \to \infty$.
	\end{proof}
	
	\par
	\noindent{\bf Numerical results.}
	To validate the accuracy of Theorem \ref{uasy}, we introduce an initial value problem characterized by a Gaussian wave packet, specifically defined as
	\begin{equation}\label{eq:initial_data}
		u_0 = u(x,0) = -\frac{1}{10} e^{-\frac{x^2}{2}}, \quad \text{and} \quad u_t(x,0) = 0.
	\end{equation}
	This initial condition ensures that the reflection coefficients meet the requirements of Assumption \ref{solitonless}, namely, $s_{11}(\lambda)\neq 0$ and $s_{11}^A(\lambda)\neq 0$ for $\lambda \neq 0$.
	\par		
	Figures \ref{t20} and \ref{t50} depict the comparisons between the asymptotic solutions posited in Theorem \ref{uasy} and the outcomes derived from numerical simulations with the initial conditions specified in \eqref{eq:initial_data} at times $t = 20$ and $t = 50$, respectively. These figures illustrate the asymptotic predictions with dashed red lines, whereas the numerical results are presented through solid blue lines. These visual comparisons affirm that the large-time asymptotic solutions provide a close approximation to the numerical solutions within an acceptable margin of error.		
	Additionally, an analysis of Figures \ref{t20} and \ref{t50} reveals that for $|x|>t$ the solution $u(x,t)$ in Sector I approaches zero. This behavior corroborates the theoretical anticipation for Sector I, where the solution is expected to decay rapidly as $t$ escalates.
	\par		
	In summary, these numerical investigations reinforce the validity of Theorem \ref{uasy}, underscoring the reliability and precision of the asymptotic expressions delineated therein.
	%
	%

	\begin{figure}[h]
		\centering
		\includegraphics[width=0.9\textwidth]{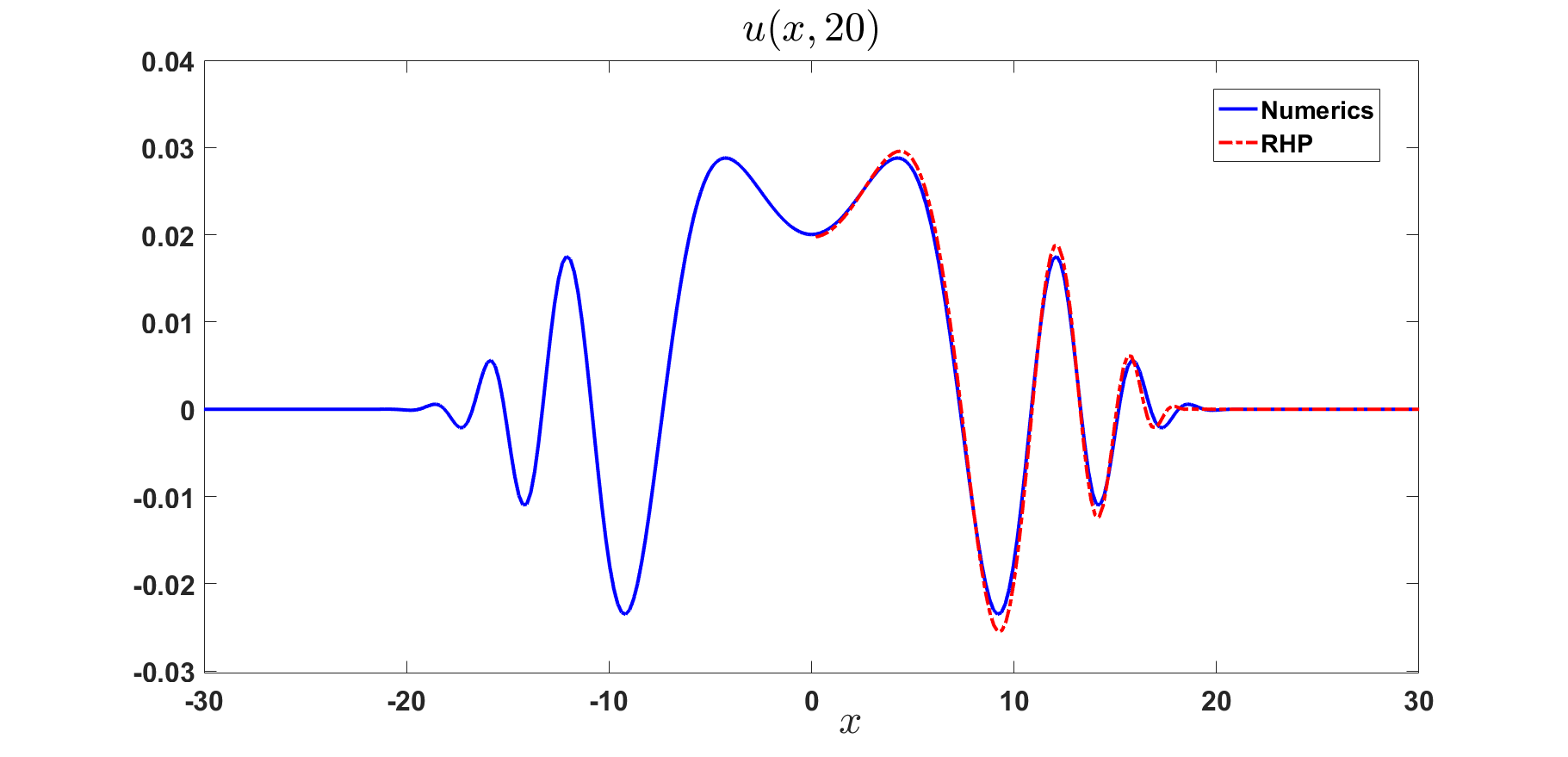}
		\caption{{\small
				The comparison of theoretical results given in Theorem \ref{uasy} and
				full numerical simulations of the Tzitz\'eica equation (\ref{Tt}) with initial condition
				(\ref{eq:initial_data}) at time $t = 20$.}}
		\label{t20}
	\end{figure}
	\begin{figure}[h]
		\centering
		\includegraphics[width=0.9\textwidth]{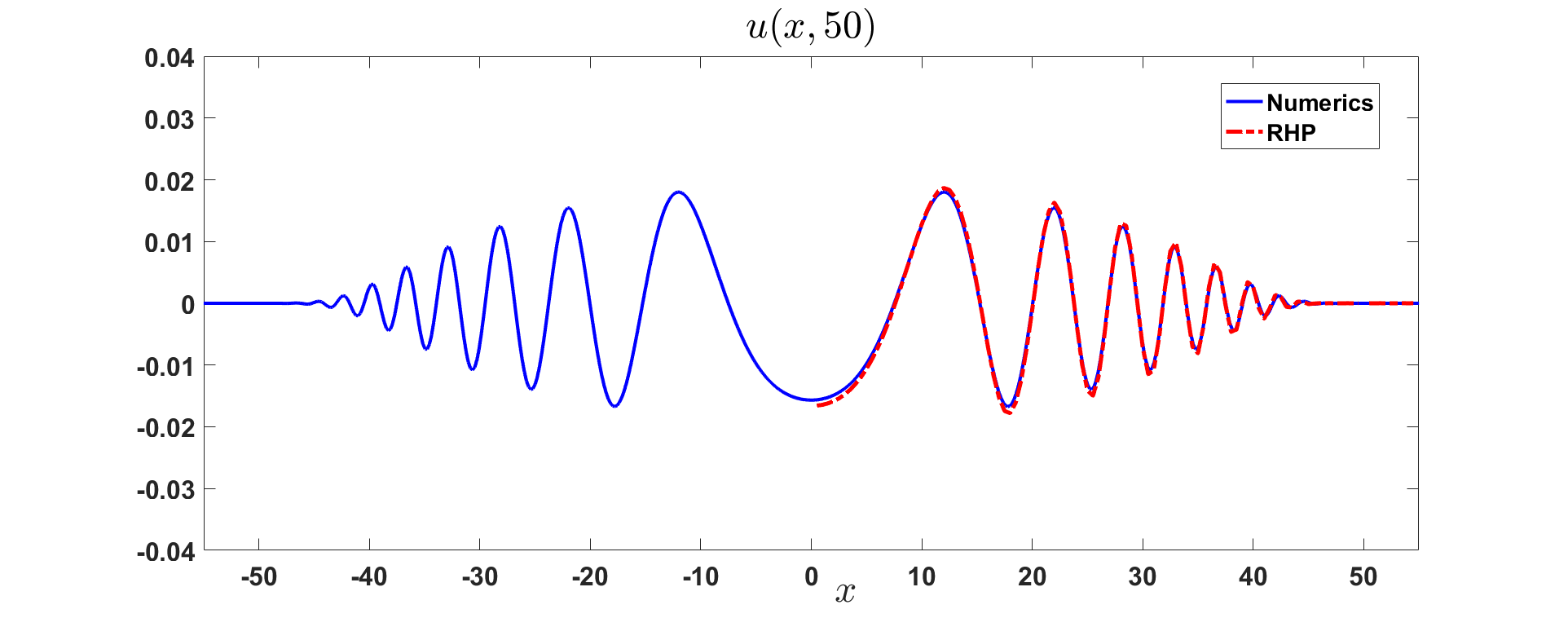}
		\caption{{\small
				The comparison of theoretical results given in Theorem \ref{uasy} and
				full numerical simulations of the Tzitz\'eica equation (\ref{Tt}) with initial condition
				(\ref{eq:initial_data}) at time $t = 50$.}}
		\label{t50}
	\end{figure}

	\section{\bf Spectral analysis}\label{spectralanalysis}
	
	This section focuses on the spectral analysis and inverse scattering transform of the Tzitz\'eica equation \eqref{Tt} based on its Lax pair.

	\subsection{Lax pair}
	The Tzitz\'eica equation \eqref{Tt} has the Lax pair of the form
	\begin{equation}\label{lax pair}
		\left\{\begin{array}{l}
			\phi_x(x, t, \lambda)=L(x, t, \lambda) \phi(x, t, \lambda), \\
			\phi_t(x, t, \lambda)=Z(x, t, \lambda) \phi(x, t, \lambda),
		\end{array}\right.
	\end{equation}
	where the matrices $L(x, t, \lambda)$ and $Z(x, t, \lambda)$ satisfy
	\begin{equation}\label{lax_space0}
		L(x, t, \lambda)=\frac{\lambda}{2} J+U_0+\frac{1}{\lambda} U_1,
	\end{equation}
	\begin{equation}\label{lax_time0}
		Z(x, t, \lambda)=\frac{\lambda}{2} J+U_0-\frac{1}{ \lambda} U_1,
	\end{equation}
	with
	\begin{align*}
		J=\left(\begin{array}{ccc}
			\omega & 0 & 0 \\
			0 & \omega^2 & 0 \\
			0 & 0 & 1
		\end{array}\right),\
		U_0=\frac{i \sqrt{3}\left(u_x+u_t\right)}{6}\left(\begin{array}{ccc}
			0 & 1 & -1 \\
			-1 & 0 & 1 \\
			1 & -1 & 0
		\end{array}\right),\\
		U_1=\frac{1}{6}\left(\begin{array}{ccc}
			\omega^2\left(2 e^u+e^{-2 u}\right) & e^{-2 u}-e^u & \omega\left(e^{-2 u}-e^u\right) \\
			e^{-2 u}-e^u & \omega\left(2 e^u+e^{-2 u}\right) & \omega^2\left(e^{-2 u}-e^u\right) \\
			\omega\left(e^{-2 u}-e^u\right) & \omega^2\left(e^{-2 u}-e^u\right) & 2 e^u+e^{-2 u}
		\end{array}\right).
	\end{align*}
	
	\subsection{Direct scattering problem}
	Let us consider the expressions for $l_j$ and $z_j$
	defined as follows: \begin{align*}
		l_j=\frac{\omega^j\lambda+(\omega^j\lambda)^{-1}}{2},\quad z_j=\frac{\omega^j\lambda-(\omega^j\lambda)^{-1}}{2},\quad j=1,2,3,
	\end{align*}
	and denote $\mathcal{L}=\text{diag}(l_1,l_2,l_3)=\frac{1}{2}(\lambda J+\frac{J^2}{\lambda})$ and $\mathcal{Z}=\text{diag}(z_1,z_2,z_3)=\frac{1}{2}(\lambda J-\frac{J^2}{\lambda})$. The matrix functions $L$ and $Z$ in  (\ref{lax pair})  can be written as follows
	$$
	L:=\mathcal{L}+L_1,\quad Z:=\mathcal{Z}+Z_1,
	$$
	where $L_1,Z_1$ are given by
	$$
	L_1=U_0+\frac{U_1-\frac{1}{2}J^2}{\lambda},\ Z_1=U_0-\frac{U_1+\frac{1}{2}J^2}{\lambda}.
	$$
	Since $u_0,u_1 \in \mathcal{S}(\R)$, it is straightforward to check that the matrices \(L_1\) and \(Z_1\) have the following asymptotic properties
	\begin{equation}
		\lim_{|x| \rightarrow \infty} L_1 = \lim_{|x| \rightarrow \infty} Z_1 = 0.
	\end{equation}
	Noticing that $ u$ is a real function, it can be directly verified that the matrix-valued functions $L$ and $Z$ satisfy the $\mathbb{Z}_3$ symmetry
	\begin{equation}\label{mathcal-A}
		L(\lambda)=\mathcal{A}^{-1} L(\omega \lambda) \mathcal{A}, \quad Z(\lambda)=\mathcal{A}^{-1} Z(\omega \lambda) \mathcal{A}, \quad \mathcal{A}=\left(\begin{array}{ccc}
			0 & 1 & 0 \\
			0 & 0 & 1 \\
			1 & 0 & 0
		\end{array}\right),
	\end{equation}
	and	the $\mathbb{Z}_2$ symmetry
	\begin{equation}\label{mathcal-B}
		L(\lambda)=\mathcal{B} \overline{L(\bar{\lambda})} \mathcal{B}^{-1}, \quad Z(\lambda)=\mathcal{B} \overline{Z(\bar{\lambda})} \mathcal{B}^{-1}, \quad \mathcal{B}=\left(\begin{array}{ccc}
			0 & 1 & 0 \\
			1 & 0 & 0 \\
			0 & 0 & 1
		\end{array}\right).
	\end{equation}
	To facilitate the analysis, we introduce the eigenfunction \(\Phi\) defined by the transformation
	\[
	\phi = \Phi e^{\mathcal{L} x + \mathcal{Z} t},
	\]
	then the Lax pair \eqref{lax pair} can be rewritten as
	\begin{align}\label{lax equation}
		\Phi_x-[\mathcal{L}, \Phi]=L_1 \Phi, \\
		\Phi_t-[\mathcal{Z}, \Phi]=Z_1 \Phi.
	\end{align}
	\par
	The solutions to the equation (\ref{lax equation}) can be formalized through the introduction of two \(3 \times 3\) matrix-valued functions which are determined by solving specific linear Volterra integral equations. Explicitly, define \(\Phi_{+}(x, \lambda)\) and \(\Phi_{-}(x, \lambda)\) as follows
	\begin{equation}\label{voltera0}
		\begin{aligned}
			& \Phi_{+}(x, \lambda)=I-\int_x^{\infty} e^{(x-y) \widehat{\mathcal{L}(\lambda)}}\left(L_1(y, \lambda) \Phi_{+}(y, \lambda)\right) d y, \\
			& \Phi_{-}(x, \lambda)=I+\int_{-\infty}^x e^{(x-y) \widehat{\mathcal{L}(\lambda)}}\left(L_1(y, \lambda) \Phi_{-}(y, \lambda)\right) d y,
		\end{aligned}
	\end{equation}
	where $e^{\hat{\mathcal{L}}}$ is an operator that acts on a $3\times 3$ matrix $A$ by $e^{\hat{\mathcal{L}}}A=e^{ \mathcal{L}}Ae^{-\mathcal{L}}$.
	\par
	Furthermore, decompose the complex plane by $\Sigma$, $i.e.$,
	\begin{align}\label{Sigmadef}
		\Sigma:=\{\lambda\in\mathbb{C}|Re(l_i)=Re(l_j),1\le i<j\le3\},
	\end{align}
	in fact, $\Sigma$ can also be rewritten as
	$\Sigma:=\{\R,\omega\R,\omega^2\R\}$.
	Notice that $\Sigma$ divides the complex plane $\CC$ into open sets $\{{D}_j\}_{j=1}^6$ (see Figure \ref{sigma0}) and suppose
	$S:=\{\lambda\in\mathbb{C}|0<arg(\lambda)<\frac{2\pi}{3}\}.$
	\begin{figure}[!h]
		\centering
		\begin{overpic}[width=.55\textwidth]{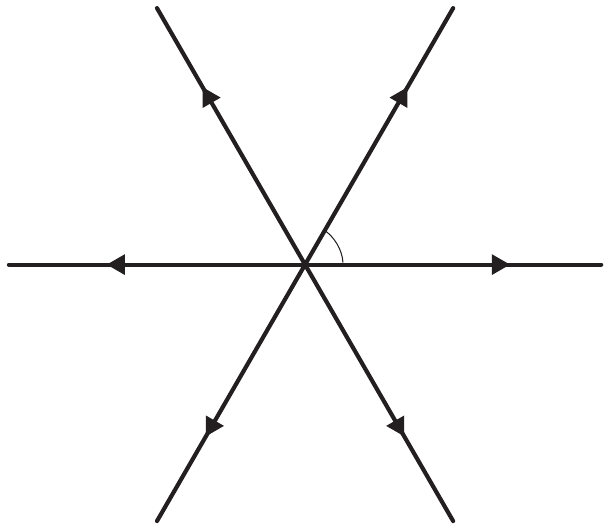}
			\put(57,48){$\frac{\pi}{3}$}
			\put(103,42){$\Sigma$}
			\put(75,55){${D}_1$}
			\put(48,70){${D}_2$}
			\put(23,55){${D}_3$}
			\put(23,25){${D}_4$}
			\put(48,15){${D}_5$}
			\put(75,25){${D}_6$}
		\end{overpic}
		\caption{{\protect\small
				The contour $\Sigma$ decomposes the $\lambda$ plane into six parts.}}
		\label{sigma0}
	\end{figure}
	
	\begin{proposition}
		Suppose the initial data $u_0(x),u_1(x)\in\mathcal{S}(\R)$, then the matrix-valued Jost functions $\Phi_+(x,\lambda)$ and $\Phi_-(x,\lambda)$ have the properties:
		\begin{enumerate}
			\item  $\Phi_+(x,\lambda)$ is well-defined in the closure of  $(S,\omega^2 S,\omega S)\setminus\{0\}$, and $\Phi_-(x,\lambda)$ is well-defined in the closure of  $(-S,-\omega^2 S,-\omega S)\setminus\{0\}$. Moreover, $\Phi_+(\cdot,\lambda)$ and $\Phi_-(\cdot,\lambda)$ are smooth and rapidly decay in the closure of their domains with determinant equal to 1.
			
			\item $\Phi_+(x,\cdot)$ and $\Phi_-(x,\cdot)$ are analytic in interior of their domains, and any order partial derivative of $\lambda$ can be continuous to the closure of their domains.
			
			\item $\Phi_+(x,\lambda)$  and $\Phi_-(x,\lambda)$ satisfy the following symmetric:
			$$
			\begin{aligned}
				\Phi_{\pm}(x, \lambda)=\mathcal{A}^{-1} \Phi_{\pm}(x, \omega \lambda) \mathcal{A}=\mathcal{B} {\Phi_{\pm}^*(x, {\lambda}^*)} \mathcal{B}^{-1},
			\end{aligned}
			$$
			with $\lambda$ in their domains.
			
			\item Assuming that the initial conditions \(u_0\) and \(u_1\) have compact support, the functions \(\Phi_+\) and \(\Phi_-\) are well-defined and analytic over the complex plane \(\mathbb{C} \setminus \{0\}\).
		\end{enumerate}
	\end{proposition}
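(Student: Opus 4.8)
The plan is to read all four properties off the Volterra integral equations \eqref{voltera0} via a convergent Neumann series, supplemented by the algebraic symmetries \eqref{mathcal-A}--\eqref{mathcal-B} of $L$ and $Z$.

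First I would establish well-definedness on the stated sectors. The $(i,j)$ entry of the kernel $e^{(x-y)\widehat{\mathcal{L}(\lambda)}}L_1(y,\lambda)$ carries the scalar factor $e^{(x-y)(l_i-l_j)}$. For $\Phi_+$, where one integrates over $y\in[x,\infty)$ (so $x-y\le 0$), this factor stays bounded precisely when $\re(l_i-l_j)\ge 0$ for every $i$, i.e. when $\re(l_j)=\min_k\re(l_k)$; for $\Phi_-$ (integration over $(-\infty,x]$) one needs $\re(l_j)=\max_k\re(l_k)$ instead. The level curves $\re(l_i)=\re(l_j)$ are exactly the contour $\Sigma$ of \eqref{Sigmadef}, which cuts the plane into the six sectors $D_j$; inside each the ordering of $\{\re(l_k)\}$ is frozen, so the region on which a given column is bounded is a $120^\circ$ sector. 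This is how the column-wise domains $(S,\omega^2 S,\omega S)$ for $\Phi_+$ and $(-S,-\omega^2 S,-\omega S)$ for $\Phi_-$ arise, the closures being admissible because on $\Sigma$ the relevant exponent is purely imaginary and hence bounded. Since $L_1(\cdot,\lambda)\in\mathcal{S}(\R)$, the kernel is integrable in $y$ and the Neumann series is dominated by $\exp(\int|L_1|\,dy)$, giving absolute and uniform convergence.

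Next come the regularity and normalization claims. Differentiating \eqref{voltera0} in $x$ and re-running the same Volterra estimate yields smoothness of $\Phi_\pm(\cdot,\lambda)-I$, while the tail integral $\int_x^\infty$ inherits the rapid decay of $L_1$, so $\Phi_\pm-I$ decays faster than any power of $x$. For the determinant I would use Abel's formula: $\partial_x\det\Phi_\pm=(\tr L_1)\det\Phi_\pm$, and $\tr L_1=\tr U_0+\lambda^{-1}(\tr U_1-\tfrac12\tr J^2)=0$, since $\tr U_0=0$ while the relation $1+\omega+\omega^2=0$ annihilates both $\tr U_1$ and $\tr J^2$; combined with the normalization $\Phi_\pm\to I$ at $\pm\infty$ this forces $\det\Phi_\pm\equiv 1$. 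Analyticity in $\lambda$ is then immediate: every Neumann iterate is analytic on the interior of its sector (the exponential and $L_1$ being analytic there), uniform convergence transfers analyticity to the limit, and uniform control of the $\lambda$-derivatives up to the rays of $\Sigma$ gives continuity of all $\partial_\lambda$-derivatives on the closed sectors.

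The symmetries in item (3) I would derive from uniqueness rather than by direct computation: using \eqref{mathcal-A}--\eqref{mathcal-B} together with $\mathcal{A}^{-1}\mathcal{L}(\omega\lambda)\mathcal{A}=\mathcal{L}(\lambda)$ and the matching relation for $L_1$, one checks that $\mathcal{A}^{-1}\Phi_\pm(x,\omega\lambda)\mathcal{A}$ and $\mathcal{B}\,\Phi_\pm^*(x,\lambda^*)\mathcal{B}^{-1}$ satisfy the very same equation \eqref{voltera0}, so by uniqueness of the Volterra solution they coincide with $\Phi_\pm(x,\lambda)$. Item (4) follows by observing that compact support of $u_0,u_1$ makes $L_1(\cdot,\lambda)$ supported in a fixed bounded set, so the integration range in \eqref{voltera0} is finite and $e^{(x-y)(l_i-l_j)}$ is bounded for all $\lambda$ with no sign restriction; the Neumann series then converges on all of $\C\setminus\{0\}$ and the $\Phi_\pm$ extend analytically there. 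The step I expect to demand the most care is the behavior near $\lambda=0$: since $l_j\sim(2\omega^j\lambda)^{-1}$ the exponents diverge and $L_1$ itself carries a $\lambda^{-1}$ pole, so the Volterra estimates must be arranged to stay uniform as $\lambda\to 0$ along the closed sectors, and one must confirm that the sign of $\re(l_i-l_j)$ is genuinely constant throughout each $D_j$ so that the column domains are exactly the claimed $120^\circ$ sectors.
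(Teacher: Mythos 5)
Your proposal is correct and follows exactly the route the paper indicates: the paper's entire proof is the one-line remark that the proposition ``is a straightforward analysis of the Volterra integral equations in (3.9),'' and your Neumann-series domain analysis, Abel's-formula computation of the determinant (using $\tr U_0=\tr U_1=\tr J^2=0$), uniqueness argument for the symmetries, and compact-support argument for item (4) are precisely the standard details being elided. Your closing worry about uniformity as $\lambda\to0$ is not actually needed for this proposition, since the stated domains exclude $\lambda=0$ (the paper treats the $\lambda\to0$ behavior separately via the gauge transformation $G$).
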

	\begin{proof}
		The proof of this proposition is a straightforward analysis of the Volterra integral equations in \eqref{voltera0}.
	\end{proof}
	\begin{remark}
		For a comprehensive analysis and foundational methodologies, the reader is referred to seminal works such as those by Charlier and Lenells \cite{Charlier-Lenells-2021}, Huang and Lenells \cite{HLNonlinearFourier}, and further elaborations in \cite{Ds-Xd-2022}. At first inspection, the kernel of the integral equation in (\ref{voltera0}) appears to possess singularities at \(\lambda = 0\). This situation is reminiscent of, yet distinct from, the scenario encountered in the analysis of the Boussinesq equation, where the kernel exhibits second order poles at \(\lambda = 0\) as detailed in \cite{Charlier-Lenells-2021}. Notably, in the context of the Tzitz\'eica equation, the ensuing section will demonstrate that the Jost solutions can be analytically continued to \(\lambda = 0\), thereby indicating a significant divergence in behavior when compared to the solution characteristics of Boussinesq equation.
	\end{remark}

	\subsubsection{The behavior of $Jost$ functions as $\lambda\to\infty$.}
	Next consider the properties of the eigenfunctions \(\Phi_+\) and \(\Phi_-\) as $\lambda \to +\infty$. For the Lax equation (\ref{lax equation}) for \(\Phi\), we proceed by conducting a Wentzel--Kramers--Brillouin (WKB) expansion of \(\Phi\) as \(\lambda \rightarrow \infty\), which yields the series
	\[
	\Phi_{\pm} = I + \frac{\Phi^{(1)}_{\pm}}{\lambda} + \frac{\Phi^{(2)}_{\pm}}{\lambda^2} + \cdots,
	\]
	where the coefficients \(\Phi^{(n)}_{\pm}\) are determined by the recursive formula
	\[
	\left\{\begin{array}{l}
		\left[\frac{J}{2}, \Phi^{(n+1)}_{\pm}\right] + \left[\frac{J^2}{2}, \Phi^{(n-1)}_{\pm}\right] = (\partial_x \Phi^{(n)}_{\pm})^{(o)} - \left(\mathrm{U}_0 \Phi^{(n)}_{\pm}\right)^{(o)} - \left(\mathrm{U}_1 \Phi^{(n-1)}_{\pm}\right)^{(o)}, \\
		(\partial_x \Phi^{(n)}_{\pm})^{(d)} = \left(\mathrm{U}_0 \Phi^{(n)}_{\pm}\right)^{(d)} + \left(\mathrm{U}_1 \Phi^{(n-1)}_{\pm}\right)^{(d)},
	\end{array}\right.
	\]
	where \(\Phi^{(0)}_{\pm} = I\) and \(\Phi^{(-1)}_{\pm} = \mathbf{0}\). The notation \((\star)^{(o)}\) denotes the off-diagonal part of a matrix, whereas \((\star)^{(d)}\) refers to the diagonal part.
	\par	
	The following proposition describes the asymptotic properties of the Jost functions \(\Phi_+\) and \(\Phi_-\)  when $\lambda$ tends to infinity.
	
	
	\begin{proposition}
		Suppose $u_0, u_1 \in \mathcal{S}(\mathbb{R})$ and as $\lambda \rightarrow \infty$, the functions $\Phi_+$ and $ \Phi_-$ coincide to all orders with their expansions, respectively. Precisely, for an integer $p \geq 0$, the functions
		$$
		\begin{aligned}
			& (\Phi_+)_{(p)}(x, \lambda):=I+\frac{\Phi^{(1)}_{+}(x)}{\lambda}+\cdots+\frac{\Phi^{(p)}_{+}(x)}{\lambda^p}, \\
			& (\Phi_-)_{(p)}(x, \lambda):=I+\frac{\Phi^{(1)}_{-}(x)}{\lambda}+\cdots+\frac{\Phi^{(p)}_{-}(x)}{\lambda^p}, \\
		\end{aligned}
		$$
		are well-defined and, for each integer $j \geq 0$,
		$$
		\begin{aligned}
			& \left|\frac{\partial^j}{\partial \lambda^j}\left(\Phi_+-(\Phi_+)_{(p)}\right)\right| \leq \frac{f_{+}(x)}{|\lambda|^{p+1}}, \quad x \in \mathbb{R},\quad \lambda \in\left( \overline{\mathrm{S}}, \omega^2 \overline{\mathrm{S}}, \omega\overline{\mathrm{S}}\right),\quad|\lambda| \geq 2, \\
			& \left|\frac{\partial^j}{\partial \lambda^j}\left(\Phi_--(\Phi_-)_{(p)}\right)\right| \leq \frac{f_{-}(x)}{|\lambda|^{p+1}}, \quad x \in \mathbb{R},\quad \lambda \in\left(- \overline{\mathrm{S}},-\omega^2 \overline{\mathrm{S}},-\omega \overline{\mathrm{S}}\right),\quad|\lambda| \geq 2,
		\end{aligned}
		$$
		where positive functions $f_{+}(x)$ and $f_{-}(x)$ are bounded and smooth for $x \in \mathbb{R}$, and they rapidly decay as $x \rightarrow+\infty$ and $x \rightarrow-\infty$, respectively.
	\end{proposition}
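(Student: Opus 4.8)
The plan is to build the WKB coefficients explicitly from the recursion, and then to control the difference between $\Phi_\pm$ and its truncation by recasting it as a Volterra integral equation with a small source and iterating. First I would verify that the truncated expansions $(\Phi_\pm)_{(p)}$ are well-defined. The recursion separates into its off-diagonal and diagonal parts. Since the eigenvalues $\omega,\omega^2,1$ of $J$ are distinct, the map $X\mapsto[\tfrac{J}{2},X]$ is invertible on off-diagonal matrices, so each off-diagonal part $(\Phi^{(n+1)}_{\pm})^{(o)}$ is determined algebraically from the lower-order coefficients, while the diagonal part $(\Phi^{(n)}_{\pm})^{(d)}$ is recovered by integrating the second relation of the recursion from $\pm\infty$. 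Because $u_0,u_1\in\mathcal S(\R)$, the nonconstant entries of $U_0$ and of $U_1-\tfrac12J^2$ lie in $\mathcal S(\R)$, and an induction on $n$ shows that every $\Phi^{(n)}_{\pm}$ is smooth in $x$ and decays rapidly as $x\to\pm\infty$; hence the truncations are well-defined.

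For the error estimate, set $E_p:=\Phi_+-(\Phi_+)_{(p)}$, the case of $\Phi_-$ being symmetric. By construction the truncated expansion solves the $x$-equation in \eqref{lax equation} up to a remainder $R_p(x,\lambda)$ which, assembled from the coefficients $\Phi^{(n)}_{\pm}$ with $n\le p$, is of order $\mathcal O(\lambda^{-(p+1)})$ and decays rapidly in $x$. Since $\Phi_+$ and $(\Phi_+)_{(p)}$ both tend to $I$ as $x\to+\infty$, the error satisfies the Volterra equation
\begin{equation*}
E_p(x,\lambda)=\int_x^\infty e^{(x-y)\widehat{\mathcal L(\lambda)}}\bigl(R_p(y,\lambda)-L_1(y,\lambda)E_p(y,\lambda)\bigr)\,dy .
\end{equation*}

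The core of the argument is the Neumann-series estimate of this equation. The contour $\Sigma$, hence the sectors $(\overline S,\omega^2\overline S,\omega\overline S)$, was defined precisely so that for $y\ge x$ the relevant entries of the kernel $e^{(x-y)\widehat{\mathcal L}}$, whose $(i,j)$ component is $e^{(x-y)(l_i-l_j)}$, stay uniformly bounded. The inhomogeneous term then contributes $\int_x^\infty|R_p(y,\lambda)|\,dy\le C\,|\lambda|^{-(p+1)}f_+(x)$, and the Volterra operator, whose norm is controlled by $\int_x^\infty|L_1(y,\lambda)|\,dy$ (finite by the Schwartz decay and bounded uniformly for $|\lambda|\ge2$), produces a convergent iteration. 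This yields $|E_p(x,\lambda)|\le f_+(x)\,|\lambda|^{-(p+1)}$ with $f_+$ bounded, smooth, positive and rapidly decaying as $x\to+\infty$, and the analogous $f_-$ for $\Phi_-$.

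The main obstacle is the estimate for the $\lambda$-derivatives up to the boundary rays $\Sigma$. Differentiating the Volterra equation $j$ times in $\lambda$ produces terms where $\partial_\lambda$ hits the exponential kernel, bringing down a factor $(x-y)\,\partial_\lambda\mathcal L(\lambda)$; since $\partial_\lambda\mathcal L=\tfrac12\bigl(J-J^2/\lambda^2\bigr)$ is bounded for $|\lambda|\ge2$, each differentiation merely costs a polynomial weight $(y-x)^k$. Away from $\Sigma$ this growth is absorbed by the exponential decay of the kernel, but exactly on $\Sigma$ the kernel is only oscillatory, so the convergence of $\int_x^\infty(y-x)^k|L_1(y,\lambda)|\,dy$ and $\int_x^\infty(y-x)^k|R_p(y,\lambda)|\,dy$ must come entirely from the rapid decay of $L_1$ and $R_p$ in $y$; this is exactly where the Schwartz hypothesis on $u_0,u_1$ is indispensable. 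Arguing by induction on $j$ and collecting all the resulting $x$-dependent constants into a single bounded, rapidly decaying function $f_+(x)$ (respectively $f_-(x)$) completes the proof.
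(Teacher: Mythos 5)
Your proposal is correct and follows essentially the same route the paper intends: the paper omits the proof, deferring to the sine-Gordon/Boussinesq literature (Charlier--Lenells, Huang--Lenells), where the argument is exactly your combination of the WKB recursion (algebraic inversion of $\mathrm{ad}_{J/2}$ on off-diagonal parts, integration from $\pm\infty$ for diagonal parts) with a Volterra/Neumann-series estimate for the remainder, using boundedness of the kernel $e^{(x-y)(l_i-l_j)}$ in the prescribed sectors and the Schwartz decay of $L_1$ to absorb the polynomial factors produced by $\lambda$-differentiation. No gaps worth flagging.
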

	\begin{proof}
		This proof is similar to the case of the sine-Gordon equation and is omitted here, see \cite{Charlier-Lenells-2021,HLNonlinearFourier}.
	\end{proof}
	\subsubsection{The behavior of $Jost$ functions as $\lambda\to0$.}
	The presence of a pole at \(\lambda = 0\) in the kernel of the Jost functions signifies that the Volterra integral equation framework is not ideally suited for analyzing the behavior of Jost functions as \(\lambda \to 0\). To circumvent this limitation, we propose a gauge transformation defined by \(\phi(x, t; \lambda) = G(x,t) \psi(x, t; \lambda)\), where \(G(x, t)\) is given by
	\begin{equation}\label{G}
		G(x, t) = \frac{1 + e^u + e^{2u}}{3e^u} \left(\begin{array}{ccc}
			1 & \frac{\omega(e^u - 1)}{e^u - \omega^2} & \frac{\omega^2(e^u - 1)}{e^u - \omega} \\
			\frac{\omega^2(e^u - 1)}{e^u - \omega} & 1 & \frac{\omega(e^u - 1)}{e^u - \omega^2} \\
			\frac{\omega(e^u - 1)}{e^u - \omega^2} & \frac{\omega^2(e^u - 1)}{e^u - \omega} & 1
		\end{array}\right).
	\end{equation}
	The function $\psi(x,t;\lambda)$ satisfies a new Lax pair as following
	\begin{align*}
		\begin{cases}
			\psi_x(x, t;\lambda)=G^{-1}(LG-G_x)\psi(x, t;\lambda)=\tilde L(x,t;\lambda)\psi(x, t;\lambda), \\
			\psi_t(x, t;\lambda)=G^{-1}(ZG-G_t) \psi(x, t;\lambda)=\tilde Z(x,t;\lambda)\psi(x, t;\lambda),
		\end{cases}
	\end{align*}
	where
	\begin{align*}
		\tilde{L}(x, t, \lambda)=\frac{1}{2 \lambda} J^2+\tilde U_0+\lambda \tilde U_1,\
		\tilde{Z}(x, t, \lambda)=\frac{1}{2 \lambda} J^2+\tilde U_0-\lambda \tilde U_1,
	\end{align*}
	and
	\begin{align*}
		&\tilde U_0=\frac{i \sqrt{3}\left(u_x-u_t\right)}{6}\left(\begin{array}{ccc}
			0 & 1 & -1 \\
			-1 & 0 & 1 \\
			1 & -1 & 0
		\end{array}\right),\\
		&\tilde U_1=\frac{\lambda}{6}\left(\begin{array}{ccc}
			\omega\left(2 e^u+e^{-2 u}\right) & e^{-2 u}-e^u & \omega^2\left(e^{-2 u}-e^u\right) \\
			e^{-2 u}-e^u & \omega^2\left(2 e^u+e^{-2 u}\right) & \omega\left(e^{-2 u}-e^u\right) \\
			\omega^2\left(e^{-2 u}-e^u\right) & \omega\left(e^{-2 u}-e^u\right) & 2 e^u+e^{-2 u}
		\end{array}\right).
	\end{align*}
	Hence one has
	$$
	\tilde L\sim \frac{1}{2}(\frac{J^2}{\lambda}+\lambda J),\quad \tilde Z\sim \frac{1}{2}(\frac{J^2}{\lambda}+\lambda J), \quad x\to\pm\infty,
	$$
	and suppose that
	\begin{align*}
		&\tilde{\mathcal{L}}=\frac{1}{2}(\frac{J^2}{\lambda}+\lambda J),\quad
		\tilde{\mathcal{Z}}=\frac{1}{2}(\frac{J^2}{\lambda}-\lambda J),\\
		&\tilde L_1:=\tilde U_0+\lambda \tilde U_1,\quad
		\tilde Z_1:=\tilde U_0-\lambda \tilde U_1.
	\end{align*}
	\par
	Introducing a transformation $\psi = \Psi e^{\mathcal{L}x + \mathcal{Z}t}$, we derive a new Lax pair for $\Psi$ of the form
	\begin{equation}\label{new-Lax-pair}
		\left\{\begin{array}{l}
			\Psi_x-[\mathcal{L}, \Psi]=\tilde{L}_1 \Psi, \\
			\Psi_t-[\mathcal{Z}, \Psi]=\tilde{Z}_1 \Psi.
		\end{array}\right.
	\end{equation}
	To further analyze the function $\Psi$, formulate the Volterra integral equations as follows
	\begin{align*}
		& \Psi_{+}(x, \lambda)=I-\int_x^{\infty} e^{(x-y) \widehat{\mathcal{L}(\lambda)}}\left(\tilde L_1(y, \lambda) \Psi_{+}(y, \lambda)\right) d y, \\
		& \Psi_{-}(x, \lambda)=I+\int_{-\infty}^x e^{(x-y) \widehat{\mathcal{L}(\lambda)}}\left(\tilde L_1(y, \lambda) \Psi_{-}(y, \lambda)\right) d y.
	\end{align*}
	\noindent This formulation ensures that the kernel of the Volterra integral equations is not affected by the poles as $\lambda \to 0$. Consequently, we propose a formal expansion of function $\Psi$ for $\lambda \to 0$, given by
	$$
	\Psi_{\pm}(x,\lambda)=\Psi_{\pm}^{(0)}(x)+\Psi_{\pm}^{(1)}(x)\lambda+\cdots,\quad \text{as} \ \lambda\to0.
	$$
	Combining the formal expansion above with the $x$-part of Lax pair (\ref{new-Lax-pair}), we arrive at the following system of equations
	$$
	\left\{\begin{array}{l}
		{\left[\frac{J}{2}, \Psi^{(n-1)}_{\pm}\right]+\left[\frac{J^2}{2}, \Psi^{(n+1)}_{\pm}\right]=(\partial_x  \Psi^{(n)}_{\pm})^{(o)}-\left( \tilde{U}_0  \Psi^{(n)}_{\pm}\right)^{(o)}-\left( \tilde{U}_1  \Psi^{(n-1)}_{\pm}\right)^{(o)},} \\
		(\partial_x  \Psi^{(n)}_{\pm})^{(d)}=\left( \tilde{U}_0 \Psi^{(n)}_{\pm}\right)^{(d)}+\left( \tilde{U}_1 \Psi^{(n-1)}_{\pm}\right)^{(d)},
	\end{array}\right.
	$$
	where $\Psi_{\pm}^{(0)}(x) = I$ and $\Psi_{\pm}^{(-1)}(x) = 0$.
	%
	In this context, the behavior of $\Phi(x, \lambda)$ as $\lambda \to 0$ is described through the transformation $\Phi = G\Psi$, leading to the expansion:
	$$
	\Phi_{\pm}(x,\lambda)=G(x)+G\Psi_{\pm}^{(1)}(x)\lambda+\cdots,\quad \text{as} \ \lambda\to0.
	$$
	\par
	\par
	\begin{remark} Here are several properties of the function $G(x,t)$:
		\begin{itemize}
			
			{\item  The function $G(x,t)$ satisfies the symmetries
			$$
			\mathcal{A}^{-1}G(x,t)\mathcal{A}=G,\quad \mathcal{B}G(x,t)\mathcal{B}=G(x,t).
			$$}
			\item 		Notice that
			\begin{align*}
				\left(\begin{array}{lll}\omega & \omega^2 & 1\end{array}\right) G(x, t)&=e^u\left(\begin{array}{lll}\omega & \omega^2 & 1\end{array}\right),\\
				\left(\begin{array}{lll}\omega & \omega^2 & 1\end{array}\right) G(x, t)^{-1}&=e^{-u}\left(\begin{array}{lll}\omega & \omega^2 & 1\end{array}\right).
			\end{align*}	
			Due to these properties, the reconstruction of the solution to the Tzitz\'eica equation as \(\lambda\) approaches zero becomes a viable endeavor.
			
		\end{itemize}
	\end{remark}

	\begin{remark}
		In contrast to the scenario encountered with the Boussinesq equation, the Jost functions pertaining to the isospectral problem for the Tzitz\'eica equation exhibit regularity at \(\lambda = 0\). This distinction underscores a fundamental difference in the analytical properties of the solutions associated with these two equations, highlighting the unique behavior of the Tzitz\'eica equation in the context of singularities at the origin of the spectral parameter.
		
	\end{remark}
	\subsection{The scattering matrix}
	Suppose the initial potential functions \(u_0\) and \(u_1\) in (\ref{TT}) have compact support, then there exists a matrix function \(s(\lambda)\) dependent of $x$ and $t$ such that the following relationship holds
	$$
	\Phi_+(x, \lambda)=\Phi_-(x, \lambda) e^{x \widehat{\mathcal{L}(\lambda)}} s(\lambda), \quad \lambda \in \mathbb{C}\setminus\{0\},
	$$
	which characterizes the analytic continuation of the scattering data across the complex plane, excluding the origin. In the more general case where \(\{u_0,u_1\} \in \mathcal{S}(\mathbb{R})\), the aforementioned relationship for \(s(\lambda)\) remains valid within its domain.
	\begin{proposition}\label{sprop}
		Suppose $u_0,u_1 \in\mathcal{S}(\R)$, then the matrix function $s(\lambda)$  has the following properties:
		\begin{enumerate}
			\item  The domain of function $s(\lambda)$ is
			$$
			\left(\begin{array}{ccc}
				\bar S & \mathbb{R}_{+} & \omega \mathbb{R}_{+} \\
				\mathbb{R}_{+} & \omega^2 \bar S & \omega^{2} \mathbb{R}_{+} \\
				\omega \mathbb{R}_{+} & \omega^{2} \mathbb{R}_{+} & \omega\bar{S}
			\end{array}\right)\setminus\{0\},
			$$
			where $\overline{S}$ means the closure of set $S$, and $\R_{+}$  and $\R_-$ represent the positive and negative real axis respectively. The function $s(\lambda)$ is continuous to the boundary of its domain but only analytic in the interior of its domain.
			
			\item The behaviors of $s(\lambda)$ as $\lambda\to\infty$ and $\lambda\to0$, respectively are
			$$
			s(\lambda)=I-\sum_{j=1}^{N} \frac{s_{j}}{\lambda^{j}}+\mathcal{O}\left(\frac{1}{\lambda^{N+1}}\right),\quad \lambda \rightarrow \infty,
			$$
			and
			$$
			s(\lambda)=I+s^{(1)} \lambda+\cdots, \quad \lambda \rightarrow 0.
			$$
			
			\item  Function $s(\lambda)$ satisfies the symmetries
			$$
			s(\lambda)=\mathcal{A}^{-1} s(\omega \lambda) \mathcal{A}=\mathcal{B} s^*({\lambda}^*) \mathcal{B}^{-1}.
			$$
		\end{enumerate}
	\end{proposition}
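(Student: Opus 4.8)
The plan is to work from the explicit scattering relation $\Phi_+(x,\lambda)=\Phi_-(x,\lambda)e^{x\widehat{\mathcal{L}(\lambda)}}s(\lambda)$. Since $\Phi_-\to I$ as $x\to-\infty$, combining this relation with the Volterra representation \eqref{voltera0} for $\Phi_+$ and letting $x\to-\infty$ produces the Fourier-type integral representation
$$s(\lambda)=I-\int_{-\infty}^{\infty}e^{-y\widehat{\mathcal{L}(\lambda)}}\bigl(L_1(y,\lambda)\Phi_+(y,\lambda)\bigr)\,dy,$$
which exhibits $s(\lambda)$ as the nonlinear Fourier transform of the potential and reduces every assertion of the proposition to a property of the Jost function $\Phi_+$ already established. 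I would then carry out the three parts in turn.

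For Property~1, I would analyze this integral entry by entry. The $(i,j)$ entry carries the oscillatory factor $e^{(l_j-l_i)y}$, so I would determine its domain from the sign of $\re(l_i-l_j)$ over the relevant half-line together with the sectorial domains of the columns of $\Phi_+$ supplied by the proposition on the Jost functions. Rapid decay of $L_1\Phi_+$ in $y$ then shows that the diagonal entries extend analytically to the two-dimensional sectors $\bar S,\omega^2\bar S,\omega\bar S$, whereas the off-diagonal entries survive only as continuous boundary values on the rays $\R_+$, $\omega\R_+$, $\omega^2\R_+$ on which the exponent is purely imaginary; continuity up to the boundary follows from the continuity of $\Phi_+$ there, assembling into the stated domain matrix. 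For Property~2, I would substitute the two asymptotic expansions of $\Phi_+$ into the integral representation: as $\lambda\to\infty$ the WKB expansion $\Phi_+=I+\Phi^{(1)}_+/\lambda+\cdots$, inserted term by term, yields $s(\lambda)=I-\sum_{j=1}^{N}s_j\lambda^{-j}+\mathcal{O}(\lambda^{-N-1})$; and as $\lambda\to0$ I would pass to the gauge-transformed variables $\Psi_\pm=G^{-1}\Phi_\pm$, whose Volterra equations are free of the pole at the origin so that $\Psi_\pm=I+\mathcal{O}(\lambda)$, and since $s(\lambda)$ is gauge-independent the same representation written for $\Psi_+$ gives the regular expansion $s(\lambda)=I+s^{(1)}\lambda+\cdots$.

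For Property~3, I would transport the symmetries of $\Phi_\pm$ proved earlier, namely $\Phi_\pm(x,\lambda)=\mathcal{A}^{-1}\Phi_\pm(x,\omega\lambda)\mathcal{A}=\mathcal{B}\Phi_\pm^*(x,\lambda^*)\mathcal{B}^{-1}$, through the scattering relation. The computation $\mathcal{A}^{-1}J\mathcal{A}=\omega^{-1}J$ gives $\mathcal{A}^{-1}\mathcal{L}(\omega\lambda)\mathcal{A}=\mathcal{L}(\lambda)$, so the factor $e^{x\widehat{\mathcal{L}}}$ intertwines correctly with $\mathcal{A}$; substituting the symmetry of $\Phi_\pm$ into $\Phi_+=\Phi_-e^{x\widehat{\mathcal{L}}}s$ and invoking uniqueness of the scattering matrix forces $s(\lambda)=\mathcal{A}^{-1}s(\omega\lambda)\mathcal{A}$, and the analogous real-structure computation with $\mathcal{B}$ gives $s(\lambda)=\mathcal{B}s^*(\lambda^*)\mathcal{B}^{-1}$. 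I expect the main obstacle to be the entry-by-entry domain bookkeeping of Property~1: tracking, for each of the three exponents $l_1,l_2,l_3$ and under the $\mathbb{Z}_3$ symmetry, exactly which sectors support analytic continuation and which rays carry only boundary values. The conceptually delicate point, by contrast, is the $\lambda\to0$ expansion, where the apparent pole in $L_1$ must be absorbed by the gauge factor $G$ before the limit can be taken; this is precisely where the Tzitz\'eica equation departs from the Boussinesq case.
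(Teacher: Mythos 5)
Your proposal follows essentially the same route as the paper: both start from the integral representation $s(\lambda)=I-\int_{-\infty}^{\infty}e^{-y\widehat{\mathcal{L}(\lambda)}}\left(L_1\Phi_+\right)dy$ obtained by letting $x\to-\infty$ in the scattering relation, determine the domain entry by entry from the signs of $\re(l_i-l_j)$, obtain the $\lambda\to 0$ expansion by passing to the gauge-transformed, origin-regular functions $\Psi_\pm$ (the paper evaluates $\Phi_+=G\Psi_+$, $\Phi_-=G\Psi_-$ at $x=0$ and solves for $s$, which is equivalent to your gauge-independence argument), and transport the symmetries of $\Phi_\pm$ through the scattering relation via $\mathcal{A}^{-1}\mathcal{L}(\omega\lambda)\mathcal{A}=\mathcal{L}(\lambda)$. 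Your write-up is in fact somewhat more explicit than the paper's own terse proof, particularly on Properties 2 (the $\lambda\to\infty$ WKB substitution) and 3, and I see no gap.
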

	
	\begin{proof}
		Let us consider the  \(\lambda\) lies within the intersection of the relevant domain. The relationship between the Jost functions \(\Phi_+\) and \(\Phi_-\) can be expressed as:
		$$
		\Phi_+(x, \lambda)=\Phi_-(x, \lambda) e^{x \widehat{\mathcal{L}(\lambda)}} s(\lambda).
		$$
		It is noteworthy that as \(x \to -\infty\), \(\Phi_-(x, \lambda)\) converges rapidly towards the identity matrix \(I\). This observation allows us to represent \(s(\lambda)\) in the following integral
		\begin{equation}\label{Scattering-Ma}
			s(\lambda)=I-\int_{-\infty}^{\infty} e^{-y \widehat{\mathcal{L}(\lambda)}}\left(L_1(y, \lambda) \Phi_{+}(y, \lambda)\right) d y.
		\end{equation}
		Through a detailed analysis of the exponential terms \(e^{l_i - l_j}\), it is directly to show that
		$$
		\lambda \in \left(\begin{array}{ccc}
			\bar S & \mathbb{R}_{+} & \omega \mathbb{R}_{+} \\
			\mathbb{R}_{+} & \omega^2 \bar S & \omega^{2} \mathbb{R}_{+} \\
			\omega \mathbb{R}_{+} & \omega^{2} \mathbb{R}_{+} & \omega\bar{S}
		\end{array}\right)\setminus\{0\}.
		$$
		Furthermore, the analysis of the behavior of \(\Phi(x, \lambda)\) as \(\lambda \to 0\), and letting $x=0$, it shows that
		$$
		G(0)+G(0)\Psi_{+}^{(1)}(0)\lambda+\mathcal{O}(\lambda)=(G(0)+G(0)\Psi_{-}^{(1)}(0)\lambda+\mathcal{O}(\lambda))s(\lambda)
		$$
		which allows us to find the expansion of $s(\lambda)$ as $\lambda\to 0$. Consequently, the expansion of \(s(\lambda)\) is given by
		$$
		s(\lambda)=I+s^{(1)}\lambda+\cdots\quad \lambda\to0,
		$$
		where the coefficients \(s^{(j)}\) are determined by the relation $s^{(j)}=\left(\Psi_-^{(j)}\right)^{-1}\Psi_+^{(j+1)}$.
	\end{proof}
	\begin{remark}
		In fact, by the third property of scattering matrix \(s(\lambda)\), the domain of $s(\lambda)$ can be analytic continuous to the point $\lambda=0$.
	\end{remark}
	\subsection{The cofactor matrix}
	
	Let \(M^A\) denote the cofactor of matrix \(M\), defined by \(M^A = (M^{-1})^T\). Utilizing the space part of Lax pair presented in \eqref{lax equation}, we derive a corresponding Lax representation for the cofactor matrix associated with the Jost functions, given by
	\[
	\Phi_x^A + [\mathcal{L}, \Phi^A] = -L_1^T \Phi^A.
	\]
	To express the Volterra integral equations for \(\Phi^A\), analogous formulations to those previously discussed are employed, which yields
	\[
	\begin{aligned}
		& \Phi_{+}^A(x, \lambda) = I + \int_x^{\infty} e^{-(x-y) \widehat{\mathcal{L}(\lambda)}}\left(L_1^T \Phi_{+}^A\right)(y, \lambda) \, dy, \\
		& \Phi_{-}^A(x, \lambda) = I - \int_{-\infty}^x e^{-(x-y) \widehat{\mathcal{L}(\lambda)}}\left(L_1^T \Phi_{-}^A\right)(y, \lambda) \, dy.
	\end{aligned}
	\]
	The application of a similar analysis to these Volterra integral equations for \(\Phi^A\) as conducted for \(\Phi\) leads to a new proposition regarding the Jost functions, denoted \(\Phi^A_{\pm}\).

	
	\begin{proposition}
		Suppose the initial data $u_0(x),u_1(x)\in\mathcal{S}(x)$, then cofactor matrix of Jost functions $\Phi_+^A(x,\lambda)$ and $\Phi_-^A(x,\lambda)$ have the properties:
		\begin{enumerate}
			\item $\Phi_+^A(x,\lambda)$ is well-defined in the closure of  $(-S,-\omega^2 S,-\omega S)\setminus\{0\}$, and $\Phi_-^A(x,\lambda)$ is well-defined in the closure of  $(S,\omega^2 S,\omega S)\setminus\{0\}$, the determinant of $\Phi^A_{\pm}$ are always equal to $1$, and $\Phi_+^A(\cdot,\lambda)$ and $\Phi_-^A(\cdot,\lambda)$ are smooth and rapidly decay in the closure of  their domains.
			
			\item  $\Phi_+^A(x,\cdot)$ and $\Phi_-^A(x,\cdot)$ are analytic in interior of  their domains, but $k$-order derivative of them can be continuous to the closure of their domains.
			
			\item  $\Phi_+^A(x,\lambda)$  and $\Phi_-^A(x,\lambda)$ satisfy the following symmetries
			$$
			\begin{aligned}
				\Phi_{\pm}^A(x, \lambda)=\mathcal{A}^{-1} \Phi_{\pm}^A(x, \omega \lambda) \mathcal{A}=\mathcal{B} {(\Phi_{\pm}^A)^*(x, {\lambda}^*)} \mathcal{B}^{-1},
			\end{aligned}
			$$
			where $\lambda$ is in the domains of $\Phi_+^A(x,\lambda)$  and $\Phi_-^A(x,\lambda)$.
			
			\item  Assuming that the initial conditions \(u_0\) and \(u_1\) have compact support,  $\Phi_+^A$ and $\Phi_-^A$ are well-defined and analytic for $\lambda\in \CC\setminus\{0\}$.
		\end{enumerate}
	\end{proposition}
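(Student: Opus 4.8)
The plan is to establish all four claims by the same successive-approximation (Neumann series) analysis of the cofactor Volterra integral equations that was carried out for the Jost functions $\Phi_\pm$, the only structural novelty being the reversed sign in the exponential kernel $e^{-(x-y)\widehat{\mathcal{L}(\lambda)}}$. Recall that the hat action reads $e^{-(x-y)\widehat{\mathcal{L}}}A = e^{-(x-y)\mathcal{L}}Ae^{(x-y)\mathcal{L}}$, so the $(i,j)$ entry of the kernel carries the scalar factor $e^{(y-x)(l_i-l_j)}$. For $\Phi_+^A$ the integration runs over $y\ge x$, so this factor stays bounded precisely when $\re(l_i-l_j)\le 0$; for $\Phi_-^A$ the range $y\le x$ demands $\re(l_i-l_j)\ge 0$. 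These are exactly the sign conditions for $\Phi_\mp$ with the inequalities flipped, and since $l_j(-\lambda)=-l_j(\lambda)$ this flip is realized by $\lambda\mapsto-\lambda$; this is why the analyticity sectors of $\Phi_+^A$ and $\Phi_-^A$ are the negatives of those of $\Phi_+$ and $\Phi_-$, namely $(-S,-\omega^2 S,-\omega S)\setminus\{0\}$ and $(S,\omega^2 S,\omega S)\setminus\{0\}$, respectively.

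First I would run the Picard iteration on each cofactor equation. Since $u_0,u_1\in\mathcal S(\R)$, the matrix $L_1^T$ (hence $-L_1^T$) is smooth and rapidly decaying in $x$, so on the regions identified above the bounded exponential factors make the Neumann series converge absolutely and uniformly on compact $x$-sets; this simultaneously yields well-definedness, smoothness and rapid decay in $x$, and, by term-by-term differentiation in $\lambda$, analyticity in the interior together with continuity of all $\lambda$-derivatives up to the boundary rays (claims 1 and 2). For the determinant I would note that $U_0$ is off-diagonal and $\tr(U_1-\tfrac12 J^2)=0$ because $\tr U_1=0$ and $\tr J^2=\omega^2+\omega+1=0$; hence $L_1$ and $L_1^T$ are traceless, Abel's identity forces $\det\Phi_\pm^A$ to be constant in $x$, and the normalization $\Phi_\pm^A\to I$ at the relevant infinity gives $\det\Phi_\pm^A\equiv 1$. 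Equivalently one may invoke $\Phi^A=(\Phi^{-1})^T$ together with $\det\Phi_\pm=1$ from the earlier proposition.

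For the symmetries in claim 3 the cleanest route is to transfer the corresponding symmetries of $\Phi_\pm$ through the inverse–transpose map, exploiting that $\mathcal{A}$ is a permutation matrix with $\mathcal{A}^T=\mathcal{A}^{-1}$ and that $\mathcal{B}$ is a symmetric involution with $\mathcal{B}^T=\mathcal{B}=\mathcal{B}^{-1}$. Applying $(\,\cdot\,)^{-1}$ and then transposition to $\Phi_\pm(x,\lambda)=\mathcal{A}^{-1}\Phi_\pm(x,\omega\lambda)\mathcal{A}$ produces $\Phi_\pm^A(x,\lambda)=\mathcal{A}^{-1}\Phi_\pm^A(x,\omega\lambda)\mathcal{A}$, and the same manipulation on $\Phi_\pm(x,\lambda)=\mathcal{B}\Phi_\pm^*(x,\lambda^*)\mathcal{B}^{-1}$ yields $\Phi_\pm^A(x,\lambda)=\mathcal{B}(\Phi_\pm^A)^*(x,\lambda^*)\mathcal{B}^{-1}$. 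Alternatively, since $L_1^T$ inherits the transposed symmetries $L_1^T(\lambda)=\mathcal{A}^{-1}L_1^T(\omega\lambda)\mathcal{A}$ and $L_1^T(\lambda)=\mathcal{B}(L_1^T)^*(\lambda^*)\mathcal{B}^{-1}$, substituting into the Volterra equations and invoking uniqueness of their solutions gives the symmetries directly, avoiding any appeal to $\Phi_\pm$. Finally, claim 4 is immediate: under compact support the integration is over a finite $y$-interval, so every exponential factor is bounded for all $\lambda\in\C\setminus\{0\}$ irrespective of the sign of $\re(l_i-l_j)$, and the Neumann series converges there, delivering analytic continuation to $\C\setminus\{0\}$.

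The main obstacle I anticipate is a bookkeeping one rather than a conceptual one: carefully tracking, ray by ray among the six sectors $D_j$ and the three lines $\R,\omega\R,\omega^2\R$, which ordering of $\re(l_1),\re(l_2),\re(l_3)$ holds, so as to confirm that the reversed inequalities assemble into exactly the negated sectors claimed. A secondary subtlety is the use of $\Phi_\pm^A=(\Phi_\pm^{-1})^T$ in the symmetry argument, since $\Phi_\pm$ and $\Phi_\pm^A$ a priori live on different (negated) sectors; this identity is cleanest to justify first in the compact-support setting, where both sides continue analytically to $\C\setminus\{0\}$, after which the Schwartz-class statement follows by the direct Volterra route or by approximation. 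The remaining estimates, namely convergence of the Neumann series and rapid decay in $x$, are routine and run in complete parallel with the earlier Jost-function proposition.
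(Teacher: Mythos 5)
Your proposal is correct and follows essentially the same route as the paper, which states this proposition as the outcome of applying to the cofactor Volterra equations the same analysis already carried out for $\Phi_\pm$, and gives no further proof. Your additional details (the sign flip $\re(l_i-l_j)\le 0$ versus $\ge 0$ forced by the reversed kernel, the negation of the sectors via $l_j(-\lambda)=-l_j(\lambda)$, the tracelessness of $L_1^T$ for the determinant, and the transfer of the $\mathcal{A}$- and $\mathcal{B}$-symmetries through the inverse--transpose or directly through uniqueness for the Volterra system) are all consistent with the paper's setup.
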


	\begin{proposition}
		Suppose $u_0, u_1 \in \mathcal{S}(\mathbb{R})$ and as $\lambda \rightarrow \infty, \Phi_+^A$ and $ \Phi_-^A$ coincide to any orders with their expansions like {$(\Phi_{\pm}^A)_{(p)}$}, respectively. Precisely, for an integer $p \geq 0$, the functions
		$$
		\begin{aligned}
			& (\Phi_+^A)_{(p)}(x, \lambda):=I+\frac{(\Phi^A_{+})^{(1)}(x)}{\lambda}+\cdots+\frac{(\Phi^A_{+})^{(p)}(x)}{\lambda^p}, \\
			& (\Phi_-^A)_{(p)}(x, \lambda):=I+\frac{(\Phi^A_{-})^{(1)}(x)}{\lambda}+\cdots+\frac{(\Phi^A_{-})^{(p)}(x)}{\lambda^p}, \\
		\end{aligned}
		$$
		are well-defined and, for any integer $j \geq 0$, one has
		$$
		\begin{aligned}
			& \left|\frac{\partial^j}{\partial \lambda^j}\left(\Phi_+^A-(\Phi_+^A)_{(p)}\right)\right| \leq \frac{g_{+}(x)}{|\lambda|^{p+1}}, \quad x \in \mathbb{R},\quad \lambda \in\left( -\overline{\mathrm{S}}, -\omega^2 \overline{\mathrm{S}}, -\omega\overline{\mathrm{S}}\right),\quad |\lambda| \geq 2, \\
			& \left|\frac{\partial^j}{\partial \lambda^j}\left(\Phi_-^A-(\Phi_-^A)_{(p)}\right)\right| \leq \frac{g_{-}(x)}{|\lambda|^{p+1}}, \quad x \in \mathbb{R},\quad \lambda \in\left( \overline{\mathrm{S}},\omega^2 \overline{\mathrm{S}},\omega\overline{\mathrm{S}}\right),\quad |\lambda| \geq 2,
		\end{aligned}
		$$
		where the positive functions $g_{+}(x)$ and $g_{-}(x)$ are bounded and smooth on $\mathbb{R}$, and decay rapidly as $x \to +\infty$ and $x \to -\infty$, respectively.
	\end{proposition}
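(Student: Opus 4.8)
The plan is to mirror the argument used for $\Phi_\pm$ in the preceding proposition, adapting it to the cofactor Lax equation $\Phi^A_x + [\mathcal{L}, \Phi^A] = -L_1^T \Phi^A$, whose sign-flipped commutator is precisely what swaps the admissible sectors from $(S, \omega^2 S, \omega S)$ to $(-S, -\omega^2 S, -\omega S)$. Concretely, since $l_j(\lambda)$ is odd in $\lambda$, the condition on $\mathrm{Re}(l_i - l_j)$ that governs convergence of the cofactor Volterra equation is the one obtained from the $\Phi_\pm$ case under $\lambda \mapsto -\lambda$, which is exactly the sector reversal recorded in the statement.

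First I would substitute the formal series $\Phi^A_{\pm} = I + \sum_{n \geq 1} (\Phi^A_{\pm})^{(n)} \lambda^{-n}$ into the cofactor equation and match powers of $\lambda$. Writing $\mathcal{L} = \tfrac{1}{2}(\lambda J + \lambda^{-1} J^2)$, the coefficient of $\lambda^{-m}$ yields a relation in which the off-diagonal part of $(\Phi^A_{\pm})^{(m+1)}$ is fixed algebraically by inverting the commutator $[\tfrac{J}{2}, \cdot\,]$, which is invertible on off-diagonal matrices because $J = \mathrm{diag}(\omega, \omega^2, 1)$ has distinct entries, while the diagonal part of $(\Phi^A_{\pm})^{(m)}$ solves a first-order ODE in $x$. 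Imposing the boundary value $I$ at $x = +\infty$ for the $+$ coefficients and at $x = -\infty$ for the $-$ coefficients, and using $u_0, u_1 \in \mathcal{S}(\mathbb{R})$, an induction shows that every $(\Phi^A_{\pm})^{(n)}(x)$ is smooth and, together with all its $x$-derivatives, decays rapidly as $x \to \pm\infty$; hence the truncations $(\Phi^A_{\pm})_{(p)}$ are well-defined.

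Next I would set $R_p := \Phi^A_{\pm} - (\Phi^A_{\pm})_{(p)}$ and derive from the exact Volterra equation a closed integral equation for $R_p$ whose inhomogeneous term is $\mathcal{O}(|\lambda|^{-(p+1)})$, this being exactly the residual of the recursion at order $p$ and controlled by the Schwartz decay of the potential. Its kernel carries the factor $e^{-(x-y)\widehat{\mathcal{L}}}$, and the defining property of the sectors $(-\overline{\mathrm{S}}, -\omega^2 \overline{\mathrm{S}}, -\omega \overline{\mathrm{S}})$ (respectively $(\overline{\mathrm{S}}, \omega^2 \overline{\mathrm{S}}, \omega \overline{\mathrm{S}})$) is that each relevant entry $e^{-(x-y)(l_i - l_j)}$ stays bounded along the ray of integration. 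A Neumann-series estimate for this Volterra operator then gives $|R_p| \leq g_{\pm}(x)/|\lambda|^{p+1}$ for $|\lambda| \geq 2$, with $g_{\pm}$ inheriting boundedness and one-sided rapid decay from $L_1^T$.

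Finally, the $\lambda$-derivative bounds follow by differentiating the integral equation for $R_p$ repeatedly in $\lambda$: each differentiation of the kernel produces at worst polynomially growing prefactors in $(x - y)$, which are absorbed by the Schwartz decay of $L_1^T$, so the rate $|\lambda|^{-(p+1)}$ survives for every $j \geq 0$ uniformly on $|\lambda| \geq 2$. The main obstacle is the uniform control of $e^{-(x-y)\widehat{\mathcal{L}}}$ up to the boundary of the sectors, where some $\mathrm{Re}(l_i - l_j)$ vanishes and the exponential is merely bounded rather than decaying; there the estimate must draw its integrability solely from $L_1^T$, exactly as in the sine-Gordon analysis of \cite{Charlier-Lenells-2021, HLNonlinearFourier}.
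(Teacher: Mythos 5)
Your proposal is correct and follows exactly the route the paper intends: the paper omits the proof, noting only that "a similar analysis" to the one for $\Phi_\pm$ applies (which in turn is deferred to the sine-Gordon treatment in \cite{Charlier-Lenells-2021,HLNonlinearFourier}), and your argument — sector reversal via the oddness of $l_j$ and the sign-flipped commutator, recursive determination of the coefficients by inverting $[\tfrac{J}{2},\cdot\,]$ on off-diagonal parts, and Volterra/Neumann estimates for the remainder and its $\lambda$-derivatives — is precisely that standard analysis spelled out. No gaps; you in fact supply more detail than the paper does.
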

	
	By the definition of $M^A$, it follows naturally that \(\Phi^A = G^A \Psi^A\). Consequently, the function \(\Psi^A\) is governed by the equation
	$$
	\Psi_x^A+[\mathcal{L}, \Psi^A]=-(G^{-1}LG-G^{-1}G_x-\mathcal{L}^T)\Psi^A=-\tilde L_1^T \Psi^A.
	$$
	The Volterra integral equations for the cofactor matrix functions $\Psi^A$ are expressed by
	$$
	\begin{aligned}
		& \Psi_{+}^A(x, \lambda)=I+\int_x^{\infty} e^{-(x-y) \widehat{\mathcal{L}(\lambda)}}\left(\tilde L_1^T \Psi_{+}^A\right)(y, \lambda) d y, \\
		& \Psi_{-}^A(x, \lambda)=I-\int_{-\infty}^x e^{-(x-y) \widehat{\mathcal{L}(\lambda)}}\left(\tilde L_1^T \Psi_{-}^A\right)(y, \lambda) d y.
	\end{aligned}
	$$
	Analogous to the analysis for \(\Psi_{\pm}\) as \(\lambda \to 0\), the expansion for \(\Psi^A_{\pm}\) near \(\lambda = 0\) is given by
	{
	$$
	\Psi_{\pm}^A(x,\lambda)=I+(\Psi_{\pm}^A)^{(1)}(x)\lambda+\cdots,\quad \text{as} \ \lambda\to 0.
	$$}
	
	The analysis of the scattering matrix \(s^A(\lambda)\) parallels that of \(s(\lambda)\), with the relationship between the Jost functions \(\Phi_+^A\) and \(\Phi_-^A\) given by
	$$
	\Phi_+^A(x, \lambda)=\Phi_-^A(x, \lambda) e^{-x \widehat{\mathcal{L}(\lambda)}} s^A(\lambda), \quad \lambda \in \mathbb{C}\setminus\{0\},
	$$
	where the matrix-valued function $s^A(\lambda)$ is
	\begin{equation}\label{Scattering-Ma-A}
		s^A(\lambda)=I+\int_{-\infty}^{\infty} e^{y \widehat{\mathcal{L}(\lambda)}}\left(L_1^T(y, \lambda) \Phi_{+}^A(y, \lambda)\right) d y.
	\end{equation}
	\par	
	In the following proposition, we provide a brief overview of the relevant properties of \(s^A(\lambda)\). The detailed derivation and proof process are omitted.
	\begin{proposition}\label{sAprop}
		Let us consider the initial data \(u_0, u_1 \in \mathcal{S}(\mathbb{R})\), then the matrix function \(s^A(\lambda)\) exhibits the ensuing properties:
		\begin{enumerate}
			\item  The domain of function $s^A(\lambda)$ is
			$$
			\left(\begin{array}{ccc}
				-\bar S & \mathbb{R}_{-} & \omega \mathbb{R}_{-} \\
				\mathbb{R}_{-} & -\omega^2 \bar S & \omega^{2} \mathbb{R}_{-} \\
				\omega \mathbb{R}_{-} & \omega^{2} \mathbb{R}_{-} & -\omega\bar{S}
			\end{array}\right)\setminus\{0\}.
			$$
			\item  The behaviors of $s^A(\lambda)$ as $\lambda\to\infty$ and $\lambda\to0$, respectively, are
			$$
			s^A(\lambda)=I-\sum_{j=1}^{N} \frac{s_{j}^A}{\lambda^{j}}+\mathcal{O}\left(\frac{1}{\lambda^{N+1}}\right),\quad \lambda \rightarrow \infty,
			$$
			and
			$$
			s^A(\lambda)=I+(s^A)^{(1)} \lambda+\cdots, \quad \lambda \rightarrow 0.
			$$
			\item The matrix function $s^A(\lambda)$ satisfies the symmetries
			$$
			s^A(\lambda)=\mathcal{A}^{-1} s^A(\omega \lambda) \mathcal{A}=\mathcal{B} (s^A)^*({\lambda}^*) \mathcal{B}^{-1}.
			$$
		\end{enumerate}
	\end{proposition}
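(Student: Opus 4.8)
The plan is to replicate, step by step, the argument used for $s(\lambda)$ in Proposition \ref{sprop}, now applied to the cofactor Jost functions $\Phi_\pm^A$. The starting point is the scattering relation $\Phi_+^A(x,\lambda)=\Phi_-^A(x,\lambda)e^{-x\widehat{\mathcal{L}(\lambda)}}s^A(\lambda)$. Since $\Phi_-^A(x,\lambda)\to I$ as $x\to-\infty$ (by the first property of the preceding proposition on $\Phi_\pm^A$), letting $x\to-\infty$ and inserting the Volterra representation of $\Phi_+^A$ yields exactly the integral formula \eqref{Scattering-Ma-A}. All three asserted properties are then read off from this representation together with the already-established properties of $\Phi_\pm^A$, using the fact that $\det\Phi_\pm^A=1$ to control $s^A$.

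For the domain (property 1), I would examine entrywise the kernel $e^{y\widehat{\mathcal{L}(\lambda)}}(L_1^T\Phi_+^A)(y,\lambda)$ in \eqref{Scattering-Ma-A}. The $(i,j)$ entry carries the exponential $e^{(l_i-l_j)y}$, and the sign of $\re(l_i-l_j)$ as $y\to\pm\infty$ dictates the region of $\lambda$ for which the integral converges and is analytic. The crucial observation is that this exponent has the \emph{opposite} sign to the one appearing for $s(\lambda)$ (where the integral factor in \eqref{Scattering-Ma} is $e^{-y\widehat{\mathcal{L}(\lambda)}}$). Consequently each convergence region is the reflection $\lambda\mapsto-\lambda$ of the corresponding region for $s(\lambda)$: the diagonal sectors $\bar S,\omega^2\bar S,\omega\bar S$ become $-\bar S,-\omega^2\bar S,-\omega\bar S$, and the rays $\R_+,\omega\R_+,\omega^2\R_+$ become $\R_-,\omega\R_-,\omega^2\R_-$, which is precisely the claimed domain matrix. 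Continuity up to the boundary and interior analyticity follow, as in Proposition \ref{sprop}, from the smoothness and rapid decay of $\Phi_+^A(\cdot,\lambda)$ and a Morera/dominated-convergence argument applied to the Volterra integral.

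The asymptotics (property 2) split into two regimes. As $\lambda\to\infty$ I would insert the expansions $(\Phi_\pm^A)_{(p)}$ together with the uniform bounds involving $g_\pm(x)$ from the immediately preceding proposition; substituting these into \eqref{Scattering-Ma-A} and integrating term by term produces the asymptotic series $s^A(\lambda)=I-\sum_{j=1}^N s_j^A\lambda^{-j}+\mathcal{O}(\lambda^{-N-1})$. As $\lambda\to0$ I would instead use the regular decomposition $\Phi^A=G^A\Psi^A$ and the power series $\Psi_\pm^A(x,\lambda)=I+(\Psi_\pm^A)^{(1)}(x)\lambda+\cdots$; evaluating the scattering relation at $x=0$ and matching powers of $\lambda$ gives $s^A(\lambda)=I+(s^A)^{(1)}\lambda+\cdots$, with coefficients determined by the $(\Psi_\pm^A)^{(j)}(0)$, exactly mirroring the $\lambda\to0$ computation in Proposition \ref{sprop}. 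Finally, the symmetries (property 3) are obtained by substituting the two symmetry relations for $\Phi_\pm^A$, namely $\Phi_\pm^A(x,\lambda)=\mathcal{A}^{-1}\Phi_\pm^A(x,\omega\lambda)\mathcal{A}=\mathcal{B}(\Phi_\pm^A)^*(x,\lambda^*)\mathcal{B}^{-1}$, into the scattering relation and using that $\mathcal{A}$ and $\mathcal{B}$ intertwine with $e^{-x\widehat{\mathcal{L}}}$ compatibly; isolating $s^A$ then yields $s^A(\lambda)=\mathcal{A}^{-1}s^A(\omega\lambda)\mathcal{A}=\mathcal{B}(s^A)^*(\lambda^*)\mathcal{B}^{-1}$.

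The main obstacle is the careful bookkeeping in the domain step: one must verify, ray by ray and sector by sector, that the sign reversal in the exponent really sends each analyticity region of $s(\lambda)$ to its negative, so that no entry is mislabeled. A useful cross-check, which I would carry out in parallel, is the identity $s^A=(s^{-1})^T$ inherited from $M^A=(M^{-1})^T$ together with the multiplicativity of the cofactor under the factorization $\Phi_+=\Phi_- e^{x\widehat{\mathcal{L}}}s$ and the diagonal identity $(e^{x\widehat{\mathcal{L}}})^A=e^{-x\widehat{\mathcal{L}}}$; this links the coefficients $s_j^A$ and $(s^A)^{(j)}$ to those of $s$ and furnishes an independent consistency check on properties 1--3.
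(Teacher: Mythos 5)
Your proposal is correct and follows exactly the route the paper intends: the paper explicitly omits the proof of this proposition, stating only that the analysis of $s^A(\lambda)$ parallels that of $s(\lambda)$ in Proposition \ref{sprop}, and your argument is precisely that parallel (integral representation \eqref{Scattering-Ma-A} from the $x\to-\infty$ limit, domain from the sign-reversed exponentials $e^{(l_i-l_j)y}$ and the reflected domains of $\Phi_\pm^A$, asymptotics from the WKB and $\Psi^A=\left(G^A\right)^{-1}\Phi^A$ expansions, symmetries inherited from those of $\Phi_\pm^A$). The additional cross-check $s^A=(s^{-1})^T$ is a sensible bonus consistent with $\det s=1$.
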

	\subsection{The eigenfunctions $M_n$}
	In this subsection, we construct the piecewise analytic eigenfunctions $M_n$ restricted in the region $D_n=\{\lambda:\frac{(n-1)\pi}{3}<\arg(\lambda)<\frac{n\pi}{3}\}$ for $n=1,2,\cdots,6$.
	\par
	For $\lambda\in D_n$ and $j=1,2,3$, define a $3\times3$ matrix function by Fredholm integral equation
	\begin{equation}\label{Fredholm integral equation}
		\left(M_n\right)_{i j}(x, \lambda)=\delta_{i j}+\int_{\gamma_{i j}^n}\left(e^{\left(x-y\right) \widehat{\mathcal{L}(\lambda)}}\left(L_1 M_n\right)\left(y, \lambda\right)\right)_{i j} d y,
	\end{equation}
	where the contours $\gamma_{i j}^n$ for $i, j=1,2,3$ and $n=1,2, \ldots, 6$, are defined as
	$$
	\gamma_{i j}^n=\left\{\begin{array}{ll}
		(-\infty, x), & \operatorname{Re} l_i(\lambda)<\operatorname{Re} l_j(\lambda), \\
		(+\infty, x), & \operatorname{Re} l_i(\lambda) \geq \operatorname{Re} l_j(\lambda),
	\end{array} \quad \text { for } \quad \lambda \in D_n.\right.
	$$
	Denote the zeros of the Fredholm determinants related to the Fredholm integral equations as $\mathcal{N}$  and suppose $\tilde {\mathcal N}:=\mathcal N\cup\{0\}$. Moreover, the $3\times3$ matrix-valued function defined by (\ref{Fredholm integral equation}) has the following propositions.
	
	\begin{proposition}
		Assume $u_0,u_1\in\mathcal{S}(\R)$, then Fredholm integral equation (\ref{Fredholm integral equation}) defines  $3\times3$ matrix valued functions $M_n$ that have the following properties:
		\begin{enumerate}
			\item  The matrix-valued function $M_n(x,\lambda)$ is defined in $\R\times \bar D_n\setminus\tilde{\mathcal{N}}$ . Furthermore, $M(\cdot,\lambda)$ is a smooth function for $\lambda\in D_n\setminus\tilde{\mathcal{N}}$ and satisfies the Lax pair.
			\item  For any $x\in\R$, the matrix-valued function $M_n(x,\cdot)$ is analytic in $\lambda\in D_n\setminus\tilde{\mathcal{N}}$ and continuous on $\lambda\in \bar D_n\setminus\tilde{\mathcal{N}}$.
			\item  $\forall \epsilon>0$, $M_n(x,\lambda)$ is a bounded matrix function for $x\in\R$ and $\lambda\in D_n,\text{dist}(\lambda,\tilde{\mathcal{N}})\ge\epsilon$.
			\item  $\det M_n(x,\lambda)=1$ for $x\in\R$ and $\lambda\in \bar D_n\setminus{\tilde{\mathcal N}}$.
			\item  $\forall x\in\R$, define the piecewise analytic matrix function $M(x,\lambda):=M_n(x,\lambda)$ for $\lambda\in D_n$ and the matrix function $M(x,\lambda)$ is suited to the following symmetries
			$$
			M(x, \lambda)=\mathcal{A}^{-1} M(x, \omega \lambda) \mathcal{A}=\mathcal{B} {M^*(x, {\lambda}^*)} \mathcal{B}^{-1},\quad \lambda\in\CC\setminus{\tilde{\mathcal{N}}}.
			$$
		\end{enumerate}
	\end{proposition}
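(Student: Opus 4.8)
The plan is to read \eqref{Fredholm integral equation} as a linear integral equation in $x$ with $\lambda$ a parameter and to extract all five properties from Fredholm theory, following the template developed for the sine-Gordon and Boussinesq hierarchies in \cite{Charlier-Lenells-2021,HLNonlinearFourier}. First I would recast the system as $M_n=I+T_\lambda M_n$ on the Banach space of bounded continuous $3\times3$ matrix functions of $x\in\R$ (sup-norm), where $T_\lambda$ is the integral operator with kernel $\bigl(e^{(x-y)\widehat{\mathcal{L}(\lambda)}}L_1(y,\lambda)\,\cdot\,\bigr)_{ij}$ integrated over $\gamma_{ij}^n$. The decisive structural fact is that the sign conditions defining $\gamma_{ij}^n$ are chosen exactly so that $e^{(x-y)(l_i-l_j)}$ stays \emph{bounded} along each contour; combined with the rapid ($\mathcal{S}(\R)$) decay of $L_1$ in $y$, this makes $T_\lambda$ a bounded operator whose kernel decays rapidly in $y$, so the Fredholm determinant $\det(I-T_\lambda)$ is well-defined and holomorphic in $\lambda\in D_n$. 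Its zero set is precisely $\mathcal N$, and $\tilde{\mathcal N}=\mathcal N\cup\{0\}$, the origin being excluded because $L_1=U_0+(U_1-\tfrac12 J^2)/\lambda$ has a pole there. By the Fredholm alternative, for $\lambda\in\bar D_n\setminus\tilde{\mathcal N}$ the equation $(I-T_\lambda)M_n=I$ has a unique solution, which establishes well-definedness.

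For properties (1)--(3), I would use that the kernel is holomorphic in $\lambda$ in the interior of $D_n$ (the $l_i(\lambda)$ and the entries of $L_1$ are holomorphic there, away from $\lambda=0$), so the resolvent $(I-T_\lambda)^{-1}$ is holomorphic on $D_n\setminus\tilde{\mathcal N}$ and $M_n(x,\cdot)$ inherits analyticity. Uniform estimates on the resolvent, using the sup-bound on the exponential and the integrability of $L_1$, give continuity of $M_n$ up to $\bar D_n\setminus\tilde{\mathcal N}$ and the uniform-in-$x$ bound asserted in property (3) on each set $\{\lambda\in D_n:\mathrm{dist}(\lambda,\tilde{\mathcal N})\ge\epsilon\}$. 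Smoothness in $x$ and the Lax-pair statement follow by differentiating \eqref{Fredholm integral equation}: a direct computation gives $\partial_x M_n-[\mathcal{L},M_n]=L_1 M_n$, i.e. $\phi=M_n e^{\mathcal{L}x+\mathcal{Z}t}$ solves the $x$-part of \eqref{lax pair}, and the $t$-part follows from the compatibility of the Lax pair.

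Property (4) is then immediate: since $1+\omega+\omega^2=0$ one checks $\tr\mathcal{L}=\tr\mathcal{Z}=0$ and $\tr L_1=0$, hence $\tr L=0$, so Abel's identity makes $\det\bigl(M_n e^{\mathcal{L}x+\mathcal{Z}t}\bigr)=\det M_n$ independent of $x$; evaluating $\lim_{x\to+\infty}M_n$, which by the structure of the contours $\gamma_{ij}^n$ is triangular with unit diagonal, yields $\det M_n\equiv1$ on $\bar D_n\setminus\tilde{\mathcal N}$. For the symmetries in property (5) the strategy is uniqueness: using the $\Z_3$ and $\Z_2$ symmetries \eqref{mathcal-A}--\eqref{mathcal-B} of $L$ (hence of $\mathcal{L}$ and $L_1$), together with the fact that $\lambda\mapsto\omega\lambda$ and $\lambda\mapsto\lambda^*$ permute the sectors $D_n$ and preserve the contour-defining inequalities $\mathrm{Re}\,l_i\lessgtr\mathrm{Re}\,l_j$, I would verify that $\mathcal{A}^{-1}M(x,\omega\lambda)\mathcal{A}$ and $\mathcal{B}M^*(x,\lambda^*)\mathcal{B}^{-1}$ satisfy the same integral equation with the same normalization as $M(x,\lambda)$; uniqueness then forces equality on $\C\setminus\tilde{\mathcal N}$.

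The main obstacle I anticipate is not the algebra but the boundary analysis. On $\partial D_n\subset\Sigma$ several of the differences $l_i-l_j$ become purely imaginary, so $e^{(x-y)(l_i-l_j)}$ is merely oscillatory rather than decaying; establishing that the resolvent extends continuously to $\bar D_n\setminus\tilde{\mathcal N}$ (property (2)) and that the bound in property (3) survives up to the boundary requires exploiting the Schwartz decay of $L_1$ through an integration-by-parts / dominated-convergence argument rather than crude absolute-value estimates. A secondary technical point is controlling the exceptional set: one must argue that $\det(I-T_\lambda)\not\equiv0$, so that $\mathcal N$ is discrete, and that $\lambda=0$ is handled separately via the gauge transformation $\Phi=G\Psi$ introduced above, which removes the pole of the kernel at the origin.
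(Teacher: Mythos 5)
Your proposal is correct and follows essentially the same route as the paper: recast \eqref{Fredholm integral equation} as a Fredholm equation whose kernel is bounded because the contours $\gamma_{ij}^n$ are matched to the signs of $\mathrm{Re}(l_i-l_j)$ and whose determinant series is controlled by Hadamard's inequality and the Schwartz decay of $L_1$, with $\tilde{\mathcal N}$ the zero set of the determinant together with the origin. The paper's written proof in fact only carries out this Fredholm-determinant analysis (for the first column of $M_1$) and the finiteness of the zero set; your supplementary arguments for property (4) (tracelessness of $L$ and $\mathcal L$ plus the unit-diagonal triangular limit as $x\to+\infty$) and for property (5) (uniqueness applied to $\mathcal{A}^{-1}M(x,\omega\lambda)\mathcal{A}$ and $\mathcal{B}M^{*}(x,\lambda^{*})\mathcal{B}^{-1}$) are exactly the standard completions the paper leaves implicit.
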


	\begin{proof}
	Without loss of generality, we focus on the first column of $M_1(x;\lambda)$ and denote by $m_i(x;\lambda)$ the {$(i,1)$--entry} of $M_1(x;\lambda)$.
		$$
		m(x,\lambda)_{i}=(\delta)_{i1}+\int_{\R}\sum_{j=1}^3K(x,y;\lambda)_{ij}m_j(y,\lambda)dy,
		$$
		where
		$$
		K\left(x, y; \lambda\right)_{i j}=\pm H(x-y)e^{(x-y)(l_i-l_1)}(L_1)_{i j},
		$$
		the $\pm\infty$ with respect to the choice of $\gamma_{ij}$,
		and $|L_1(x)|$ could be dominated by a Schwartz function $b(x)$. Moreover, the Fredholm determinant associated to the first column is
		$$
		f(\lambda)=\sum_{m=0}^{\infty}\frac{(-1)^m}{m !} \sum_{i_1, i_2, \ldots, i_m=1}^3 \int_{\mathbb{R}^m}  K^{(m)}\left(\begin{array}{l}
			x_1, i_1, x_2, i_2, \ldots, x_m, i_m \\
			x_1, i_1, x_2, i_2, \cdots, x_m, i_m
		\end{array} ; \lambda\right) d x_1 d x_2 \cdots d x_m
		$$
		with
		$$
		K^{(m)}\left(\begin{array}{l}
			x_1, i_1, x_2, i_2, \ldots, x_m, i_m \\
			y_1, i_1^{\prime}, y_2, i_2^{\prime}, \cdots, y_m, i_m^{\prime}
		\end{array} ; \lambda\right)=\operatorname{det}\left(\begin{array}{ccc}
			K\left(x_1, y_1, \lambda\right)_{i_1 i_1^{\prime}} & \cdots & K\left(x_1, y_m, \lambda\right)_{i_1 i_m^{\prime}} \\
			\vdots & & \vdots \\
			K\left(x_m, y_1, \lambda\right)_{i_m i_1^{\prime}} & \cdots & K\left(x_m, y_m, \lambda\right)_{i_m i_m^{\prime}}
		\end{array}\right).
		$$
		\par	
		Using the Hardamard's inequality, it is immediate to know that
		$$
		|K^{(m)}|\le m^{m/2}\Pi_{j=1}^m b(y_j).
		$$
		Consequently, the  Fredholm determinant $f$ related to the first column of (\ref{Fredholm integral equation}) is a analytic function for $\lambda\in {D}_1\setminus{\tilde{\mathcal{N}}}$ and continuous to the closure of $ {D}_1\setminus{\tilde{\mathcal{N}}}$. Furthermore, since the potential matrix $\tilde L_1$ is $\mathcal{O}(1)$ as $\lambda\to\infty$, the Fredholm determinant $f$ is bounded as $\lambda\to\infty$ so that the zeros of  $f$ is finite in the region $D_1$. 		
		
	\end{proof}
	
	\begin{lemma}\label{lemmaST}
		Suppose $u_0,u_1\in\mathcal{S}(\R)$ with compact support, then
		$$
		\begin{aligned}
			M_n(x, \lambda) & =\Phi_{-}(x, \lambda) e^{x \mathcal{L}(\lambda)} S_n(\lambda) e^{-x \mathcal{L}(\lambda)} \\
			& =\Phi_{+}(x, \lambda) e^{x \mathcal{L}(\lambda)} T_n(\lambda) e^{-x \mathcal{L}(\lambda)}, \quad n=1,2, \ldots, 6,
		\end{aligned}
		$$
		where the matrices $T_n(\lambda)$ and $S_n(\lambda)$ can be expressed by the elements of the matrix $s(\lambda)$ and the $(ij)$-th minor $m_{ij}(s)$ of $s(\lambda)$, which are expressed by
		$$
		\begin{array}{rlr}
			S_1(\lambda)=\left(\begin{array}{ccc}
				s_{11} & 0 & 0 \\
				s_{21} & \frac{m_{33}(s)}{s_{11}} & 0 \\
				s_{31} & \frac{m_{23}(s)}{s_{11}} & \frac{1}{m_{33}(s)}
			\end{array}\right),
			& S_2(\lambda)=\left(\begin{array}{ccc}
				s_{11} & 0 & 0 \\
				s_{21} & \frac{1}{m_{22}(s)} & \frac{m_{32}(s)}{s_{11}} \\
				s_{31} & 0 & \frac{m_{22}(s)}{s_{11}}
			\end{array}\right), \\
			S_3(\lambda)=\left(\begin{array}{ccc}
				\frac{m_{22}(s)}{s_{33}} & 0 & s_{13} \\
				\frac{m_{12}(s)}{s_{33}} & \frac{1}{m_{22}(s)} & s_{23} \\
				0 & 0 & s_{33}
			\end{array}\right),
			& S_4(\lambda)=\left(\begin{array}{ccc}
				\frac{1}{m_{11}(s)} & \frac{m_{21}(s)}{s_{33}} & s_{13} \\
				0 & \frac{m_{11}(s)}{s_{33}} & s_{23} \\
				0 & 0 & s_{33}
			\end{array}\right), \\
			S_5(\lambda)=\left(\begin{array}{ccc}
				\frac{1}{m_{11}(s)} & s_{12} & -\frac{m_{31}(s)}{s_{22}} \\
				0 & s_{22} & 0 \\
				0 & s_{32} & \frac{m_{11}(s)}{s_{22}}
			\end{array}\right),
			& S_6(\lambda)=\left(\begin{array}{ccc}
				\frac{m_{33}(s)}{s_{22}} & s_{12} & 0 \\
				0 & s_{22} & 0 \\
				-\frac{m_{13}(s)}{s_{22}} & s_{32} & \frac{1}{m_{33}(s)}
			\end{array}\right),
		\end{array}
		$$
		and
		$$
		\begin{array}{lll}
			T_1(\lambda)=\left(\begin{array}{ccc}
				1 & -\frac{s_{12}}{s_{11}} & \frac{m_{31}(s)}{m_{33}(s)} \\
				0 & 1 & -\frac{m_{32}(s)}{m_{33}(s)} \\
				0 & 0 & 1
			\end{array}\right),
			& T_2(\lambda)=\left(\begin{array}{ccc}
				1 & -\frac{m_{21}(s)}{m_{22}(s)} & -\frac{s_{13}}{s_{11}} \\
				0 & 1 & 0 \\
				0 & -\frac{m_{23}(s)}{m_{22}(s)} & 1
			\end{array}\right),
		\end{array}
		$$
		$$
		\begin{array}{lll}
			T_3(\lambda)=\left(\begin{array}{ccc}
				1 & -\frac{m_{21}(s)}{m_{22}(s)} & 0 \\
				0 & 1 & 0 \\
				-\frac{s_{31}}{s_{33}} & -\frac{m_{23}(s)}{m_{22}(s)} & 1
			\end{array}\right),
			& T_4(\lambda)=\left(\begin{array}{ccc}
				1 & 0 & 0 \\
				-\frac{m_{12}(s)}{m_{11}(s)} & 1 & 0 \\
				\frac{m_{13}(s)}{m_{11}(s)} & -\frac{s_{32}}{s_{33}} & 1
			\end{array}\right), \\
			T_5(\lambda)=\left(\begin{array}{ccc}
				1 & 0 & 0 \\
				-\frac{m_{12}(s)}{m_{11}(s)} & 1 & -\frac{s_{23}}{s_{22}} \\
				\frac{m_{13}(s)}{m_{11}(s)} & 0 & 1
			\end{array}\right),
			& T_6(\lambda)=\left(\begin{array}{ccc}
				1 & 0 & \frac{m_{31}(s)}{m_{33}(s)} \\
				-\frac{s_{21}}{s_{22}} & 1 & -\frac{m_{32}(s)}{m_{33}(s)} \\
				0 & 0 & 1
			\end{array}\right).
		\end{array}
		$$
	\end{lemma}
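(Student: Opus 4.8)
The plan is to exploit that all three families $M_n$, $\Phi_-$ and $\Phi_+$ solve the same $x$-equation $\Phi_x-[\mathcal{L},\Phi]=L_1\Phi$ of the Lax pair \eqref{lax equation}. Equivalently, $M_ne^{x\mathcal{L}}$ and $\Phi_\pm e^{x\mathcal{L}}$ all satisfy the linear first-order system $\mu_x=L\mu$; since their determinants equal $1$ they are fundamental solutions, and two fundamental solutions of such a system differ by right multiplication by a matrix independent of $x$. Hence there exist $\lambda$-dependent matrices $S_n(\lambda),T_n(\lambda)$ with
$$
M_n=\Phi_- e^{x\mathcal{L}}S_n e^{-x\mathcal{L}}=\Phi_+ e^{x\mathcal{L}}T_n e^{-x\mathcal{L}},\qquad n=1,\dots,6 .
$$
The compact-support hypothesis guarantees that every object here is defined and analytic on $\mathbb{C}\setminus\{0\}$, so this argument is clean. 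Substituting the scattering relation $\Phi_+=\Phi_- e^{x\widehat{\mathcal{L}}}s$ (whose conjugation action preserves determinants, forcing $\det s=1$) yields the single algebraic constraint $S_n=s\,T_n$, and $\det M_n=1$ together with $\det s=1$ gives $\det S_n=\det T_n=1$.

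It remains to determine the precise entries, and this is dictated entirely by the contour structure built into \eqref{Fredholm integral equation}. In each sector $D_n$ the definition \eqref{Sigmadef} of $\Sigma$ shows that $\operatorname{Re} l_1,\operatorname{Re} l_2,\operatorname{Re} l_3$ have a fixed ordering, since the boundary curves $\operatorname{Re} l_i=\operatorname{Re} l_j$ are exactly the rays $\mathbb{R}\cup\omega\mathbb{R}\cup\omega^2\mathbb{R}$, across which the ordering changes. For the $(i,j)$ entry the contour $\gamma_{ij}^n$ runs to $+\infty$ when $\operatorname{Re} l_i\ge\operatorname{Re} l_j$ and to $-\infty$ otherwise; the former keeps $e^{(x-y)(l_i-l_j)}$ bounded and makes that entry inherit the $\Phi_+$ normalization as $x\to+\infty$, the latter makes it inherit the $\Phi_-$ normalization as $x\to-\infty$. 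Reading off these boundedness and decay requirements as $x\to\pm\infty$ shows that, relative to the ordering in $D_n$, the matrix $T_n$ must be unipotent and triangular, with nonzero off-diagonal entries allowed only where the accompanying exponential decays as $x\to+\infty$, while $S_n$ is triangular of the opposite type, its nontrivial diagonal carrying the scattering coefficients.

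With the zero/unit pattern of $T_n$ and $S_n$ fixed, the remaining entries are pinned down purely algebraically from $S_n=s\,T_n$ and $\det=1$. For instance in $D_1$, where $\operatorname{Re} l_1\le\operatorname{Re} l_2\le\operatorname{Re} l_3$, one takes $T_1$ upper unipotent and $S_1$ lower triangular; equating the strictly upper entries of $s\,T_1$ to zero gives a $2\times2$ Cramer system whose solution produces exactly the quotients $-s_{12}/s_{11}$, $m_{31}(s)/m_{33}(s)$ and $-m_{32}(s)/m_{33}(s)$, and the diagonal of $S_1$ then follows from $\det=1$ as $s_{11},\ m_{33}(s)/s_{11},\ 1/m_{33}(s)$. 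The minors $m_{ij}(s)$ enter precisely because $s^{-1}=\operatorname{adj}(s)$, so the cofactor expansions of $T_n=s^{-1}S_n$ convert products of the $s_{ij}$ into $2\times2$ minors. The same computation in each remaining sector yields the stated $S_n$ and $T_n$; the bulk of the labour is organized by the $\mathbb{Z}_3$ symmetry $\lambda\mapsto\omega\lambda$, which cyclically permutes the sectors and the rows and columns via $\mathcal{A}$, and by the $\mathbb{Z}_2$ symmetry $\lambda\mapsto\lambda^*$, so that effectively only $D_1$ and $D_2$ need be treated by hand.

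The step I expect to be the main obstacle is the second one: correctly reading the sign pattern of $\operatorname{Re}(l_i-l_j)$ in all six sectors and translating it without error into the triangular support of $S_n$ and $T_n$. Everything downstream, namely the explicit entries, is then a mechanical consequence of $S_n=s\,T_n$, the cofactor identities, and $\det S_n=\det T_n=1$.
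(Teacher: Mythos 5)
Your proposal is correct and follows essentially the same route as the paper: the paper likewise obtains $S_n$ and $T_n$ as the $x\to\mp\infty$ limits of $e^{-x\widehat{\mathcal{L}(\lambda)}}M_n(x,\lambda)$ (which exist because $M_n$ and $\Phi_\pm$ solve the same $x$-equation), reads off the triangular/unipotent support from the choice of contours $\gamma_{ij}^n$ via $(S_n)_{ij}=0$ when $\gamma_{ij}^n=(-\infty,x)$ and $(T_n)_{ij}=\delta_{ij}$ when $\gamma_{ij}^n=(+\infty,x)$, and then solves the algebraic relation $s\,T_n=S_n$ sector by sector, working out $D_1$ explicitly. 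Your added remarks on using $\det=1$ and the $\mathbb{Z}_3$/$\mathbb{Z}_2$ symmetries to organize the remaining sectors are consistent with, and slightly more explicit than, the paper's treatment.
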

	\begin{proof}
		Notice that
		\begin{align}\label{Sn}
			\left\{\begin{array}{l}
				S_n(\lambda)=\lim _{x \rightarrow-\infty} e^{-x \widehat{\mathcal{L}(\lambda)}} M_n(x, \lambda), \\
				T_n(\lambda)=\lim _{x \rightarrow \infty} e^{-x \widehat{\mathcal{L}(\lambda)}} M_n(x, \lambda),
			\end{array} \quad \lambda \in \bar{D}_n \backslash \tilde{\mathcal{N}},\right.
		\end{align}
		and
		$$
		s(\lambda) T_n(\lambda)=S_n(\lambda).
		$$
		Furthermore, it is obvious that
		$$
		\begin{aligned}
			& \left(S_n(\lambda)\right)_{i j}=0 \quad \text { if } \quad \gamma_{i j}^n=(-\infty, x), \\
			& \left(T_n(\lambda)\right)_{i j}=\delta_{i j} \quad \text { if } \quad \gamma_{i j}^n=(\infty, x).
		\end{aligned}
		$$
		The equality expressed above gives rise to nine distinct algebraic equations. For illustrative purposes, we shall focus exclusively on the region denoted as \(D_1\). This analysis involves the matrix \(\gamma^1\), as detailed in equation \eqref{Fredholm integral equation} below
		$$
		\gamma^1=\left(\begin{array}{lll}
			\gamma_2 & \gamma_1 & \gamma_1 \\
			\gamma_2 & \gamma_2 & \gamma_1 \\
			\gamma_2 & \gamma_2 & \gamma_2
		\end{array}\right),
		$$
		where $\gamma_1=(-\infty, x)$ and $\gamma_2=(+\infty, x)$. Thus the equation \eqref{Sn} for $n=1$ shows that
		$$
		S_1=\lim _{x \rightarrow-\infty} e^{-x \widehat{L(\lambda)}} M_1(x, \lambda)=\left(\begin{array}{ccc}
			x_{11} & 0 & 0 \\
			x_{21} & x_{22} & 0 \\
			x_{31} & x_{32} & x_{33}
		\end{array}\right),
		$$
		and
		$$
		T_1=\lim _{x \rightarrow \infty} e^{-x \widehat{L(\lambda)}} M_1(x, \lambda)=\left(\begin{array}{ccc}
			1 & y_{12} & y_{13} \\
			0 & 1 & y_{23} \\
			0 & 0 & 1
		\end{array}\right).
		$$
		The proof for the remaining cases $S_i(\lambda), T_i(\lambda)$ for $i=2,3,\cdots 5$ follows a methodology analogous to that employed for the previous discussions.
	\end{proof}

	\begin{lemma}
		Let $u_0, u_1 \in \mathcal{S}(\mathbb{R})$ and
		denote by \(\left\{s(\lambda), M_n(x, \lambda)\right\}\) the scattering data and eigenfunctions related with the \((u_0, u_1)\). For any $u_0,u_1\in \mathcal{S}(\mathbb{R})$, there are two function sequences $u_0^{(i)},u_1^{(i)}\in C_0^{\infty}(\R)$, which converge to \(u_0, u_1\) uniformly, respectively. Then it follows
		$$
		\begin{aligned}
			& \lim _{i \rightarrow \infty} s_{i}(\lambda)=s(\lambda), \quad \lambda \in\left(\begin{array}{ccc}
				\overline{\mathrm{S}} & \mathbb{R}_{+} & \omega \mathbb{R}_{+} \\
				\mathbb{R}_{+} & \omega^2 \overline{\mathrm{S}} & \omega^2 \mathbb{R}_{+} \\
				\omega \mathbb{R}_{+} & \omega \mathbb{R}_{+} & \omega^2\overline{\mathrm{S}}
			\end{array}\right)\backslash\{0\}, \\
			& \lim _{i \rightarrow \infty}\left(s^A\right)_{i}(\lambda)=s^A(\lambda), \quad \lambda \in\left(\begin{array}{ccc}
				- \overline{\mathrm{S}} & \mathbb{R}_{-} & \omega \mathbb{R}_{-} \\
				\mathbb{R}_{-} & -\omega^2 \overline{\mathrm{S}} & \omega^2 \mathbb{R}_{-} \\
				\omega \mathbb{R}_{-} & \omega^2 \mathbb{R}_{-} & -\omega^2\overline{\mathrm{S}}
			\end{array}\right)\backslash\{0\},
		\end{aligned}
		$$

		$$
		\begin{aligned}
			& \lim _{i \rightarrow \infty} \Phi_+^{i}(x, \lambda)=\Phi_+(x, \lambda), \quad x \in \mathbb{R}, \lambda \in\left( \overline{\mathrm{S}}, \omega^2 \overline{\mathrm{S}}, \omega \overline{\mathrm{S}}\right) \backslash\{0\}, \\
			& \lim _{i \rightarrow \infty} \Phi_-^{i}(x, \lambda)=\Phi_-(x, \lambda), \quad x \in \mathbb{R}, \lambda \in\left(-\overline{\mathrm{S}},-\omega^2 \overline{\mathrm{S}},-\omega^2\overline{\mathrm{S}}\right) \backslash\{0\}, \\
			& \lim _{i \rightarrow \infty} M_n^{i}(x, \lambda)=M_n(x, \lambda), \quad x \in \mathbb{R}, k \in \bar{D}_n \backslash \tilde{\mathcal{N}},\quad n=1,2, \ldots, 6.
		\end{aligned}
		$$
	\end{lemma}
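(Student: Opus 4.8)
The plan is to prove this as a \emph{continuous dependence on the potential} statement: each object appearing in the lemma ($\Phi_\pm$, $\Phi_\pm^A$, $s$, $s^A$, $M_n$) is the solution of a linear integral equation whose kernel is built from the potential $L_1$ (or its transpose $L_1^T$), and convergence of the potentials forces convergence of the solutions through uniform resolvent bounds. First I would fix the sense of approximation. Although the statement only records uniform convergence, the natural construction — mollify $u_0,u_1$ and multiply by a cutoff $\chi(x/R_i)$ with $R_i\to\infty$ — produces a sequence converging in \emph{every} Schwartz seminorm. Since at $t=0$ the matrix $L_1=U_0+\frac{1}{\lambda}(U_1-\tfrac12 J^2)$ is assembled algebraically from $u_0'+u_1$ and from $e^{u_0},e^{-2u_0}$, Schwartz convergence of $(u_0^{(i)},u_1^{(i)})$ yields $\|L_1^{(i)}-L_1\|_{L^1(\R)}\to 0$, uniformly for $\lambda$ in any compact subset of the relevant domain avoiding the origin (where the $1/\lambda$ factor stays bounded); the same holds for $L_1^T$.

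Next I would pass this convergence through the Volterra equations \eqref{voltera0}. Writing the difference $\Phi_+^{(i)}-\Phi_+$, it solves the Volterra equation with kernel $L_1^{(i)}$ and source term $\int_x^\infty e^{(x-y)\widehat{\mathcal{L}}}\big((L_1^{(i)}-L_1)\Phi_+\big)\,dy$. On each domain the contour choice makes $|e^{(x-y)(l_i-l_j)}|\le 1$, so the Volterra operator $K^{(i)}$ has norm dominated by $\|L_1^{(i)}\|_{L^1}$; this is uniformly bounded, so $(I-K^{(i)})^{-1}$ is uniformly bounded by the convergent Neumann series. Since $\Phi_+$ is bounded on the closed domain (by the proposition on $\Phi_\pm$) and $L_1^{(i)}-L_1\to0$ in $L^1$, the source tends to $0$ uniformly in $x$, whence $\Phi_+^{(i)}\to\Phi_+$, and identically $\Phi_-^{(i)}\to\Phi_-$, $\Phi_\pm^{A,(i)}\to\Phi_\pm^A$. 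Feeding this into the scattering integrals \eqref{Scattering-Ma} and \eqref{Scattering-Ma-A} and using the splitting $L_1^{(i)}\Phi_+^{(i)}-L_1\Phi_+=L_1^{(i)}(\Phi_+^{(i)}-\Phi_+)+(L_1^{(i)}-L_1)\Phi_+\to0$ in $L^1$ — with the exponential weights $e^{-y(l_i-l_j)}$ bounded precisely on the stated domains of $s$ and $s^A$ — dominated convergence gives $s^{(i)}\to s$ and $(s^A)^{(i)}\to s^A$.

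The remaining and most delicate point is $M_n^{(i)}\to M_n$, because $M_n$ solves the \emph{Fredholm} equation \eqref{Fredholm integral equation} rather than a Volterra one, and its solvability degenerates on the zero set $\tilde{\mathcal N}$ of the Fredholm determinant. Here I would argue that the determinant $f^{(i)}(\lambda)$ depends continuously on the kernel and, by the Hadamard bound $|K^{(m)}|\le m^{m/2}\prod_{j}b(y_j)$ used in the preceding proposition, $f^{(i)}\to f$ uniformly on compact $\lambda$-sets. Fixing $\lambda\in\bar D_n\setminus\tilde{\mathcal N}$ so that $f(\lambda)\neq0$, the value $f^{(i)}(\lambda)$ stays bounded away from $0$ for large $i$; hence the resolvent of the Fredholm kernel exists and is uniformly bounded near $\lambda$, and its continuity in the kernel (again through $L_1^{(i)}\to L_1$ in $L^1$) yields $M_n^{(i)}(x,\lambda)\to M_n(x,\lambda)$. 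I expect this control of the Fredholm determinant and its zeros — guaranteeing that the resolvent does not blow up along the sequence on sets bounded away from $\tilde{\mathcal N}$ (equivalently, that spurious zeros of $f^{(i)}$ cannot accumulate at a regular $\lambda$, which follows from the uniform convergence $f^{(i)}\to f$ together with Hurwitz's theorem) — to be the main obstacle; every other assertion reduces to the uniform Volterra estimates already implicit in the earlier propositions.
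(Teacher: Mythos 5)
The paper states this lemma without any proof at all --- it is inserted between the proposition on the Fredholm eigenfunctions $M_n$ and Lemma \ref{jumpvj} with no accompanying argument --- so there is no in-paper proof to compare against; your proposal is in effect supplying the missing argument. Your continuous-dependence strategy is the standard one for lemmas of this type (it mirrors the corresponding density lemma in the Charlier--Lenells treatment of the Boussinesq equation that the authors follow), and I find it sound. Two points are worth emphasizing. First, you are right that uniform convergence of $u_0^{(i)},u_1^{(i)}$ alone is not enough: $L_1$ at $t=0$ involves $u_0'+u_1$ and the exponentials $e^{u_0},e^{-2u_0}$, so mere $L^\infty$ convergence of $u_0^{(i)}$ controls neither the derivative term nor the $L^1(\R)$ norm of the difference of kernels; your decision to read the hypothesis as convergence in Schwartz seminorms (which the mollify-and-cutoff construction delivers) is the correct repair, and the lemma as literally stated would otherwise be false. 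Second, your treatment of the Fredholm step is the right one and is where the real content lies: the Volterra resolvents are automatically uniformly bounded via the factorial Neumann series, whereas for $M_n$ one must argue that $f^{(i)}(\lambda)\to f(\lambda)\neq 0$ at each fixed $\lambda\in\bar D_n\setminus\tilde{\mathcal N}$ so that the Fredholm resolvent formula passes to the limit; for the pointwise convergence actually claimed, this convergence of determinants and minors at the fixed regular $\lambda$ already suffices, so the appeal to Hurwitz's theorem is a useful remark about non-accumulation of spurious zeros but is not needed for the statement as written. The only minor omission is that you do not spell out the uniform-in-$\lambda$ boundedness of the weights $e^{-y(l_i-l_j)}$ entry by entry on the stated matrix-valued domains of $s$ and $s^A$, but this is exactly the computation already done in Proposition \ref{sprop} and Proposition \ref{sAprop}, so nothing is missing in substance.
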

	\begin{lemma}\label{jumpvj}
		The jump matrices for the matrix functions $M_n(x,t,\lambda)$ associated with a Riemann-Hilbert problem are given by
		$$
		\begin{aligned}
			&\small v_1=\left(\begin{array}{ccc}
				1 & -r_1(\lambda) e^{-\vartheta_{21}} & 0 \\
				r_1^*(\lambda) e^{\vartheta_{21}} & 1-\left|r_1(\lambda)\right|^2 & 0 \\
				0 & 0 & 1
			\end{array}\right),
			v_2=\left(\begin{array}{ccc}
				1 & 0 & 0 \\
				0 & 1-r_2(\omega \lambda) r_2^*(\omega \lambda) & -r_2^*(\omega \lambda) e^{-\vartheta_{32}} \\
				0 & r_2(\omega \lambda) e^{\vartheta_{32}} & 1
			\end{array}\right), \\
			&\small v_3=\left(\begin{array}{ccc}
				1-r_1\left(\omega^2 \lambda\right) r_1^*\left(\omega^2 \lambda\right) & 0 & r_1^*\left(\omega^2 \lambda\right) e^{-\vartheta_{31}} \\
				0 & 1 & 0 \\
				-r_1\left(\omega^2 \lambda\right) e^{\vartheta_{31}} & 0 & 1
			\end{array}\right),
			v_4=\left(\begin{array}{ccc}
				1-\left|r_2(\lambda)\right|^2 & -r_2^*(\lambda) e^{-\vartheta_{21}} & 0 \\
				r_2(\lambda) e^{\vartheta_{21}} & 1 & 0 \\
				0 & 0 & 1
			\end{array}\right) \text {, } \\
			&\small v_5=\left(\begin{array}{ccc}
				1 & 0 & 0 \\
				0 & 1 & -r_1(\omega \lambda) e^{-\vartheta_{32}} \\
				0 & r_1^*(\omega \lambda) e^{\vartheta_{32}} & 1-r_1(\omega \lambda) r_1^*(\omega \lambda)
			\end{array}\right),
			v_6=\begin{pmatrix}
				1 & 0 & r_2\left(\omega^2 \lambda\right) e^{-\vartheta_{31}} \\
				0 & 1 & 0 \\
				-r_2^*\left(\omega^2 \lambda\right) e^{\vartheta_{31}} & 0 & 1-r_2\left(\omega^2 \lambda\right) r_2^*\left(\omega^2 \lambda\right)
			\end{pmatrix}, \\
			&
		\end{aligned}
		$$
		where $\vartheta_{ij}=(l_i-l_j)x+(z_i-z_j)t$ and $l_j=\frac{\omega^j\lambda+(\omega^j\lambda)^{-1}}{2},z_j=\frac{\omega^j\lambda-(\omega^j\lambda)^{-1}}{2}$.
	\end{lemma}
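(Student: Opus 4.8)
The plan is to obtain each jump matrix $v_j$ by comparing the boundary values of the holomorphic functions $M_n$ coming from the two sectors adjacent to the ray $e^{i(j-1)\pi/3}\R_+$, and then to rewrite the resulting expressions in the stated form using the symmetries of $s(\lambda)$ together with the definitions \eqref{r1r2def} of $r_1$ and $r_2$.

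First I would upgrade Lemma \ref{lemmaST} to include the time variable. Since the $t$-part of the Lax pair has the identical triangular structure with $\mathcal{Z}$ playing the role of the spatial exponent, the same reasoning gives
$$
M_n(x,t,\lambda)=\Phi_-(x,t,\lambda)\,e^{x\mathcal{L}+t\mathcal{Z}}\,S_n(\lambda)\,e^{-(x\mathcal{L}+t\mathcal{Z})}=\Phi_+(x,t,\lambda)\,e^{x\mathcal{L}+t\mathcal{Z}}\,T_n(\lambda)\,e^{-(x\mathcal{L}+t\mathcal{Z})}.
$$
Conjugation by $e^{x\mathcal{L}+t\mathcal{Z}}$ multiplies the $(i,j)$ entry of $S_n$ (resp. $T_n$) by $e^{\vartheta_{ij}}$, so the combinatorial skeleton of each jump is controlled by the location of the coincidences $\re l_i=\re l_j$ that define the ray.

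Next, on the ray between $D_n$ and $D_{n+1}$ (indices read cyclically) at least one of $\Phi_\pm$ is continuous across it, and since $s(\lambda)T_n(\lambda)=S_n(\lambda)$ forces $T_n^{-1}T_{n+1}=S_n^{-1}S_{n+1}$, the common Jost factor cancels and the jump becomes
$$
V=e^{x\mathcal{L}+t\mathcal{Z}}\bigl(S_n^{-1}S_{n+1}\bigr)e^{-(x\mathcal{L}+t\mathcal{Z})}.
$$
I would then compute $S_n^{-1}S_{n+1}$ directly from the explicit $3\times3$ matrices of Lemma \ref{lemmaST}; in each case the product equals the identity except in the single $2\times2$ block of indices $(i,j)$ for which $\re l_i=\re l_j$ on that ray, which pins down exactly the block structure of $v_1,\dots,v_6$ and places the correct exponent $\vartheta_{ij}$ on the off-diagonal entries.

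The remaining work is the symmetry bookkeeping, which I expect to be the main obstacle. On the positive real axis the entry $s_{12}/s_{11}$ is literally $r_1(\lambda)$ by \eqref{r1r2def}, but on the other five rays the raw entries of $S_n^{-1}S_{n+1}$ are ratios such as $m_{21}(s)/m_{22}(s)$ or $s_{31}/s_{33}$ evaluated at $\lambda$. To recognise these as $r_1$ or $r_2$ at a rotated argument I would invoke the $\mathbb{Z}_3$ symmetry $s(\lambda)=\mathcal{A}^{-1}s(\omega\lambda)\mathcal{A}$, which cyclically shifts indices and hence carries the $(1,2)/(1,1)$ ratio onto the other rays as $r_1(\omega\lambda)$ or $r_1(\omega^2\lambda)$, together with the cofactor identity $s^A_{ij}=(-1)^{i+j}m_{ij}(s)$ (valid since $\det s=1$), which converts the minor ratios into $s^A_{12}/s^A_{11}=r_2$ at the appropriate rotated argument. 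The conjugated coefficients $r_1^*,r_2^*$ and the diagonal entries $1-|r_1|^2$, $1-r_2(\omega\lambda)r_2^*(\omega\lambda)$, and so on, then arise from the $\mathbb{Z}_2$ symmetry $s(\lambda)=\mathcal{B}s^*(\lambda^*)\mathcal{B}^{-1}$. The genuine effort lies in the careful ray-by-ray tracking of these three symmetries, together with the orientation of $\Sigma$, to verify that every entry, argument rotation, sign, and exponential factor matches the tabulated $v_j$.
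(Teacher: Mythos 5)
The paper states this lemma without an explicit proof; the intended derivation is exactly the one you give, namely combining Lemma \ref{lemmaST} with the time-dependent relation of Lemma \ref{Lemma-formula-2} to get $v_{n+1}=e^{(x\mathcal{L}+t\mathcal{Z})\widehat{\phantom{x}}}\bigl(S_n^{-1}S_{n+1}\bigr)=e^{(x\mathcal{L}+t\mathcal{Z})\widehat{\phantom{x}}}\bigl(T_n^{-1}T_{n+1}\bigr)$, and then using the $\mathcal{A}$- and $\mathcal{B}$-symmetries of $s$ together with $s^A_{ij}=(-1)^{i+j}m_{ij}(s)$ to recognise the entries as $r_1,r_2$ at rotated arguments (e.g.\ on $\mathbb{R}_+$ one checks $T_6^{-1}T_1$ reproduces $v_1$ with $s_{21}/s_{22}=r_1^*$). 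Your proposal is correct and follows essentially the same route the paper leaves implicit.
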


	\begin{lemma}
		Suppose $u_0, u_1 \in \mathcal{S}(\mathbb{R})$, then the functions $M_n$ and $M_n^A \equiv\left(M_n^{-1}\right)^T$ for $n=1,4$ can be given by the Jost functions $\Phi_{\pm}$, $\Phi_{\pm}^A$ and scattering matrices $s(\lambda)$, $s^A(\lambda)$ as follows
		
		\resizebox{1\textwidth}{!}{
			$
			M_1=\small \left(\begin{array}{ccc}
				(\Phi_+)_{11} & \frac{(\Phi_-^A)_{31} (\Phi_+^A)_{23}-(\Phi_-^A)_{21} (\Phi_+^A)_{33}}{s_{11}} & \frac{(\Phi_-)_{13}}{s_{33}^A} \\
				(\Phi_+)_{21} & \frac{(\Phi_-^A)_{11} (\Phi_+^A)_{33}-(\Phi_-^A)_{31} (\Phi_+^A)_{13}}{s_{11}} & \frac{(\Phi_-)_{23}}{s_{33}^A} \\
				(\Phi_+)_{31} & \frac{(\Phi_-)_{21}^A (\Phi_+)_{13}^A-(\Phi_-)_{11}^A (\Phi_+)_{23}^A}{s_{11}} & \frac{(\Phi_-)_{33}}{s_{33}^A}
			\end{array}\right), \quad
			M_1^A=\small \left(\begin{array}{ccc}
				\frac{(\Phi_-)_{11}^A}{s_{11}} & \frac{(\Phi_+)_{31} (\Phi_-)_{23}-(\Phi_+)_{21} (\Phi_-)_{33}}{s_{33}^A} & (\Phi_+^A)_{13} \\
				\frac{(\Phi_-)_{21}^A}{s_{11}} & \frac{(\Phi_+)_{11} (\Phi_-)_{33}-(\Phi_+)_{31} (\Phi_-)_{13}}{s_{33}^A} & (\Phi_+^A)_{23} \\
				\frac{(\Phi_-)_{31}^A}{s_{11}} & \frac{(\Phi_+)_{21} (\Phi_-)_{13}-(\Phi_+)_{11} (\Phi_-)_{23}}{s_{33}^A} & (\Phi_+^A)_{33}
			\end{array}\right),
			$}
		and
		
		\resizebox{1\textwidth}{!}{
			$
			\small	M_4=\small \left(\begin{array}{ccc}
				\frac{(\Phi_-)_{11}}{s_{11}^A} & \frac{(\Phi_-^A)_{23} (\Phi_+^A)_{31}-(\Phi_+^A)_{21} (\Phi_-^A)_{33}}{s_{33}} & (\Phi_+)_{13} \\
				\frac{(\Phi_-)_{21}}{s_{11}^A} & \frac{(\Phi_+^A)_{11} (\Phi_-^A)_{33}-(\Phi_-^A)_{13} (\Phi_+^A)_{31}}{s_{33}} & (\Phi_+)_{23}\\
				\frac{(\Phi_-)_{31}}{s_{11}^A} & \frac{(\Phi_+^A)_{21} (\Phi_-^A)_{13}-(\Phi_+^A)_{11} (\Phi_-^A)_{23}}{s_{33}} & (\Phi_+)_{33}
			\end{array}\right), \quad
			\small	M_4^A=\small \left(\begin{array}{ccc}
				(\Phi_+^A)_{11} & \frac{(\Phi_+)_{23} (\Phi_-)_{31}-(\Phi_-)_{21} (\Phi_+)_{33}}{s_{11}^A} & \frac{(\Phi_-^A)_{13}}{s_{33}} \\
				(\Phi_+^A)_{21} & \frac{(\Phi_-)_{11} (\Phi_+)_{33}-(\Phi_+)_{13} (\Phi_-)_{31}}{s_{11}^A} & \frac{(\Phi_-^A)_{23}}{s_{33}} \\
				(\Phi_+^A)_{31} & \frac{(\Phi_-)_{21} (\Phi_+)_{13}-(\Phi_-)_{11} (\Phi_+)_{23}}{s_{11}^A} & \frac{(\Phi_-^A)_{33}}{s_{33}}
			\end{array}\right).
			$}
	\end{lemma}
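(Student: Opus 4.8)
The plan is to read off the columns of $M_n$ and $M_n^A$ one at a time from the factorizations of Lemma \ref{lemmaST}, choosing for each column the representation ($\Phi_+$ or $\Phi_-$) in which the exponential factors $e^{(l_i-l_j)x}$ drop out. Recall that
$$M_n=\Phi_-\, e^{x\mathcal{L}}S_n\,e^{-x\mathcal{L}}=\Phi_+\, e^{x\mathcal{L}}T_n\,e^{-x\mathcal{L}},$$
so that $(M_n)_{ij}=\sum_k(\Phi_-)_{ik}(S_n)_{kj}e^{(l_k-l_j)x}=\sum_k(\Phi_+)_{ik}(T_n)_{kj}e^{(l_k-l_j)x}$. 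Since $S_n$ and $T_n$ are triangular, for each of the two outer columns of $M_n$ one of these sums collapses to a single term whose exponential prefactor is $e^{0}=1$. For $M_1$ the first column of $T_1$ is $(1,0,0)^T$, giving the first column $(\Phi_+)_{\cdot 1}$, while the third column of $S_1$ is $(0,0,1/m_{33}(s))^T$, giving the third column $(\Phi_-)_{\cdot 3}/m_{33}(s)$. The minors are then converted to cofactor scattering entries through $s^A=(s^{-1})^T$ together with $\det s=1$ (which holds because $\det\Phi_\pm=1$ and conjugation by $e^{x\widehat{\mathcal L}}$ preserves the determinant), i.e. $s^A_{ij}=(-1)^{i+j}m_{ij}(s)$; in particular $m_{33}(s)=s^A_{33}$, so the third column is $(\Phi_-)_{\cdot 3}/s^A_{33}$, matching the claim.

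Next I would obtain the cofactor representations by transposing and inverting Lemma \ref{lemmaST}:
$$M_n^A=(M_n^{-1})^T=\Phi_-^A\,e^{-x\mathcal{L}}S_n^A\,e^{x\mathcal{L}}=\Phi_+^A\,e^{-x\mathcal{L}}T_n^A\,e^{x\mathcal{L}},$$
with $S_n^A=(S_n^{-1})^T$ and $T_n^A=(T_n^{-1})^T$ again triangular (and with diagonal entries the reciprocals of those of $S_n,T_n$). Reading off the outer columns exactly as above yields, for $M_1^A$, the first column $(\Phi_-^A)_{\cdot 1}/s_{11}$ and the third column $(\Phi_+^A)_{\cdot 3}$, which are precisely the claimed entries.

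The middle columns are the only nontrivial case, and here I would invoke the wedge/cross-product identity for unimodular $3\times 3$ matrices: if $\det B=1$ with columns $b_1,b_2,b_3$, then $B^A=(B^{-1})^T$ has columns $b_2\times b_3,\ b_3\times b_1,\ b_1\times b_2$, and $(B^A)^A=B$. Applying this with $B=M_1^A$ (legitimate since $\det M_1=\det M_1^A=1$), the middle column of $M_1=(M_1^A)^A$ equals the cross product of the third and first columns of $M_1^A$, namely $(\Phi_+^A)_{\cdot 3}\times(\Phi_-^A)_{\cdot 1}/s_{11}$; expanding this cross product componentwise reproduces the stated bilinear combinations of $\Phi_\pm^A$ entries. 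Symmetrically, the middle column of $M_1^A=(M_1)^A$ is the cross product of the third and first columns of $M_1$, i.e. $(\Phi_-)_{\cdot 3}\times(\Phi_+)_{\cdot 1}/s^A_{33}$, giving the stated bilinear combinations of $\Phi_\pm$ entries.

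The case $n=4$ is handled identically, now using $S_4,T_4$ and their cofactors: the outer columns of $M_4$ come out as $(\Phi_-)_{\cdot 1}/s^A_{11}$ and $(\Phi_+)_{\cdot 3}$, those of $M_4^A$ as $(\Phi_+^A)_{\cdot 1}$ and $(\Phi_-^A)_{\cdot 3}/s_{33}$, and the two middle columns follow from the same cross-product identity. The main obstacle is precisely the middle column: it cannot be isolated from either triangular factorization without a surviving exponential $e^{(l_i-l_j)x}$, and the resolution is its characterization as the cross product of the already-computed outer columns of the cofactor matrix. The remaining effort is routine bookkeeping of signs and indices in passing between minors $m_{ij}(s)$ and cofactor scattering entries $s^A_{ij}$, and in matching the three components of each cross product to the displayed formulas.
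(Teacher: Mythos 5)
Your proposal is correct and complete. The paper states this lemma without any proof, so there is nothing to compare against line by line; your argument is the standard one (used, e.g., in the Charlier--Lenells treatment of the Boussinesq equation that this paper follows) and is evidently what the authors intended. You correctly identify the two ingredients: the outer columns are read off from whichever triangular factorization in Lemma \ref{lemmaST} kills the exponentials (first column of $M_1$ from $T_1$, third from $S_1$, and the analogous choices for $M_1^A$, $M_4$, $M_4^A$, using $\det s=1$ to identify minors $m_{jj}(s)$ with the entries $s^A_{jj}$), and the middle column — which cannot be isolated from either factorization without a surviving factor $e^{(l_i-l_j)x}$ — is recovered as the cross product of the outer columns of the cofactor matrix via $(B^A)^A=B$ for unimodular $B$, which is legitimate here since $\det M_n=1$. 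I spot-checked the componentwise expansions of the cross products against the displayed entries and they match, including signs.
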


	\begin{remark}
		As the assumption illustrated, the diagonal part of  scattering matrix is not equal to zero and then as $\lambda\to p_j$ with $p_j\in\mathcal{N}$, the matrix-valued function satisfies that
		$$
		(\lambda-p_j)M(x,t;\lambda)=0,\ \text{as}\ \lambda\to p_j.
		$$
		It is immediate to know that $M(x,t;\lambda)$ can be analytically continued to the zeros of Fredholm determination.
	\end{remark}
	
	\begin{lemma}
		Suppose $u_0,u_1\in\mathcal{S}(\R)$, then as $\lambda \to 0$, the behavior of $M(x,t;\lambda)$ has the same expansion with $\Phi_{+}$ as $\lambda\to0$. Specially, the leading term in the expansion is $G(x)$.
	\end{lemma}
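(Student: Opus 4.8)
The plan is to pass to the limit $\lambda\to 0$ in the two factorized representations of $M_n$ furnished by Lemma~\ref{lemmaST}, namely
\begin{equation*}
M_n(x,\lambda)=\Phi_-(x,\lambda)\,e^{x\mathcal{L}(\lambda)}S_n(\lambda)\,e^{-x\mathcal{L}(\lambda)}=\Phi_+(x,\lambda)\,e^{x\mathcal{L}(\lambda)}T_n(\lambda)\,e^{-x\mathcal{L}(\lambda)},
\end{equation*}
exploiting three facts already in hand: the Jost functions admit the regular expansions $\Phi_\pm(x,\lambda)=G(x)+G(x)\Psi_\pm^{(1)}(x)\lambda+\cdots$ as $\lambda\to0$; the scattering matrices satisfy $s(\lambda)=I+s^{(1)}\lambda+\cdots$ and $s^A(\lambda)=I+(s^A)^{(1)}\lambda+\cdots$ by Propositions~\ref{sprop} and~\ref{sAprop}; and, as a consequence, since every diagonal minor $m_{ii}(s)\to1$ while every off-diagonal entry $s_{ij}$ and off-diagonal minor $m_{ij}(s)$ is $\mathcal{O}(\lambda)$, each triangular factor obeys $T_n(\lambda)=I+\mathcal{O}(\lambda)$ and $S_n(\lambda)=I+\mathcal{O}(\lambda)$, with all off-diagonal entries of order $\lambda$.

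Writing out the $(i,j)$ entry of the $T_n$-representation, I would record
\begin{equation*}
(M_n)_{ij}(x,\lambda)=(\Phi_+)_{ij}(x,\lambda)+\sum_{k\neq j}(\Phi_+)_{ik}(x,\lambda)\,(T_n)_{kj}(\lambda)\,e^{x(l_k-l_j)},
\end{equation*}
and isolate the diagonal term $k=j$, which contributes exactly $(\Phi_+)_{ij}$ and hence reproduces $G_{ij}(x)$ to leading order. By the structure established in the proof of Lemma~\ref{lemmaST}, an off-diagonal factor $(T_n)_{kj}$ is nonzero only on those contours with $\gamma^n_{kj}=(-\infty,x)$, i.e.\ precisely when $\operatorname{Re}l_k(\lambda)<\operatorname{Re}l_j(\lambda)$ throughout $D_n$; correspondingly, the nonzero off-diagonal entries of $S_n$ occur only where $\operatorname{Re}l_k\ge\operatorname{Re}l_j$.

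The crux of the argument, and the step I expect to be the main obstacle, is the competition in each surviving summand between the vanishing factor $(T_n)_{kj}=\mathcal{O}(\lambda)$ and the conjugating exponential $e^{x(l_k-l_j)}$, which diverges like $e^{c/\lambda}$ because $l_k-l_j\sim(\omega^{-k}-\omega^{-j})/(2\lambda)$ as $\lambda\to0$, so one cannot simply pass to the limit inside $e^{x\mathcal{L}}T_ne^{-x\mathcal{L}}$. The resolution is to match the sign of $x$ to the representation. For $x>0$ I would use the $\Phi_+/T_n$ form, in which every nonzero off-diagonal $(T_n)_{kj}$ carries $\operatorname{Re}(l_k-l_j)<0$, so that $e^{x(l_k-l_j)}\to0$ exponentially and the correction is not merely $o(1)$ but beyond all orders in $\lambda$; this yields $M_n(x,\lambda)=\Phi_+(x,\lambda)+\mathcal{O}(\lambda^\infty)$, so that $M$ inherits verbatim the $\lambda\to0$ expansion of $\Phi_+$ with leading term $G(x)$. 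For $x<0$ I would instead use the $\Phi_-/S_n$ form, where the nonzero off-diagonal $(S_n)_{kj}$ carry $\operatorname{Re}(l_k-l_j)\ge0$, forcing $e^{x(l_k-l_j)}\to0$ and giving $M_n=\Phi_-+\mathcal{O}(\lambda^\infty)$; since $\Phi_-$ also has leading term $G(x)$, the leading coefficient is again $G(x)$, and the case $x=0$ follows by continuity.

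Since the above argument is uniform over an arbitrary sector, the conclusion holds for every $n=1,\dots,6$. Alternatively, one may verify it in $D_1$ and $D_4$ directly from the explicit representations of $M_1,M_4$ in terms of $\Phi_\pm,\Phi_\pm^A$, using $\Phi_\pm^A=G^A+\mathcal{O}(\lambda)$ with $G^A=(G^{-1})^T$ and $\det G=1$, so that the cofactor combinations appearing in the second columns reconstruct the remaining columns of $G$ through the adjugate identity, and then propagate to the other sectors via the $\mathbb{Z}_3$ symmetry $M(x,\lambda)=\mathcal{A}^{-1}M(x,\omega\lambda)\mathcal{A}$ together with the invariance $\mathcal{A}^{-1}G\mathcal{A}=G$. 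Either route delivers $M(x,t;\lambda)=G(x)+\mathcal{O}(\lambda)$ with the same expansion as $\Phi_+$, as claimed.
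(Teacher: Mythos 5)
Your proposal takes a genuinely different route from the paper, and it has gaps that prevent it from establishing the lemma as stated. The paper does not argue via the triangular factorization at all: it forms the partial sum $\mathcal{M}_p(x,\lambda)=G(x)+G\Psi_+^{(1)}(x)\lambda+\cdots$, derives a Fredholm integral equation for $\mathcal{M}_p^{-1}M_n$ whose kernel is controlled by $\Delta=L-L_{(p)}=\mathcal{O}(\alpha(x)|\lambda|^{p+1})$, and then uses Hadamard's inequality on the Fredholm determinant and minor to conclude $|M_n-\mathcal{M}_p|\le C|\lambda|^{p+1}$ uniformly for $x\in\R$. That uniformity in $x$ is exactly what your argument cannot deliver.

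Two concrete problems. First, Lemma \ref{lemmaST} is stated (and is only valid) for $u_0,u_1$ with compact support: for general Schwartz data the three columns of $\Phi_+$ live in the closures of $S$, $\omega^2S$, $\omega S$ respectively, so the matrix product $\Phi_+e^{x\widehat{\mathcal{L}}}T_n$ for $\lambda\in D_n$ involves entries of $\Phi_+$ that are simply not defined there near $\lambda=0$ --- this is precisely why the paper defines $M_n$ through the Fredholm equations with the mixed contours $\gamma_{ij}^n$ rather than through $\Phi_\pm$. You would need a density/limiting argument in the regime $\lambda\to0$, which is not supplied and is not obviously compatible with the exponential factors you are tracking. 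Second, the ``beyond all orders'' suppression of the off-diagonal terms is not uniform: $\Re(l_k-l_j)$ vanishes on $\partial D_n$, and for $|x|\lesssim|\lambda|$ the factor $e^{x(l_k-l_j)}$ is $\mathcal{O}(1)$, so the correction is only as small as $T_n(\lambda)-I$, which from Proposition \ref{sprop} is merely $\mathcal{O}(\lambda)$. Hence at and near $x=0$ you obtain only $M=G(x)+\mathcal{O}(\lambda)$, not agreement with the full expansion of $\Phi_+$; likewise for $x<0$ your $S_n$-representation yields the expansion of $\Phi_-$ (whose higher-order coefficients $G\Psi_-^{(j)}$ differ from $G\Psi_+^{(j)}$), so again only the leading term $G(x)$ is recovered. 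Your argument therefore proves the leading-order statement away from $x=0$ but not the lemma's claim that $M$ coincides with the expansion of $\Phi_+$ to all orders, uniformly in $x$.
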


	\begin{proof}
		In the process of proof, we extensively employ the methodologies outlined in \cite{Lenells-2018} and \cite{HLNonlinearFourier}.
		\par		
		For convenience, denote $\mathcal{M}_p(x,\lambda):=G(x)+G\Psi_{+}^{(1)}(x)\lambda+\cdots+G\Psi_{+}^{(p)}(x)\lambda^{p+1}$ and then it is sufficient to show that
		$$
		|M(x,\lambda)-\mathcal{M}_p(x,\lambda)|\le C|\lambda|^p, \quad x\in\R,\ \lambda\in\CC\setminus\Gamma,\ |\lambda|<\epsilon,
		$$
		with $\epsilon<1$ small enough.
		
		Let $D_n^{\epsilon}$ be $D_n\cap\{|\lambda|<\epsilon\}$ and then since $G$ is a matrix function in Schwartz space, then the maximal norm of $\mathcal{M}_p$ is less than $1$. As a result, $\mathcal{M}_p^{-1}(x,\lambda)$ exists and is uniformly bounded for $\lambda\in D_n^{\epsilon}$ and $x \in \R$.
		
		Now, assume that $L_{(p)}:=(\partial_x \mathcal{M}_p+\mathcal{M}_p\mathcal{L})\mathcal{M}_p^{-1}$ and $\Delta=L-L_{(p)}$, then we can get the Fredholm integral equation for $\mathcal{M}_p^{-1}M$ as
		$$
		(\mathcal{M}_p^{-1}M)_x=\mathcal{M}_p^{-1}\Delta M+[\mathcal{L},\mathcal{M}_p^{-1}M]
		$$
		heuristically, in which the each entry of $\mathcal{M}_{p}^{-1} M_n$ satisfies the Fredholm equation
		$$
		\left(\mathcal{M}_{p}^{-1} M_n\right)(x, \lambda)_{i j}=\delta_{i j}+\int_{\gamma_{i j}^n}\left(e^{\left(x-y\right) \hat{\mathcal{L}}}\left(M_{(p)}^{-1} \Delta M_n\right)\left(y, \lambda\right)\right)_{i j} d y.
		$$
		Let $H(x)$ be a step function with $H(x)=1$ for $x>0$ and $H(x)=0$ for $x\le0$, and without loss of generality, we focus on the first column of $M_1$ and suppose $m(x,\lambda)_{i}=(M_1)_{i1}$, then rewrite the Fredholm equation as
		$$
		m(x,\lambda)_{i}=(\mathcal{M}_p)_{i1}+\int_{\R}\sum_{j=1}^3K(x,y;\lambda)_{ij}m_j(y,\lambda)dy,
		$$
		where
		$$
		K\left(x, y; \lambda\right)_{i j}=\sum_{s=1}^3 \mathcal{M}_{p}(x, \lambda)_{i s} H_s\left(x, y\right) e^{\left(l_s-l_1\right)\left(x-y\right)}\left(\mathcal{M}_{p}^{-1} \Delta\right)\left(y, \lambda\right)_{s j},
		$$
		and
		$$
		H_s\left(x, y\right)= \begin{cases}H\left(x-y\right) & \text { if } \operatorname{Re} l_i(\lambda)<\operatorname{Re} l_1(\lambda), \\ -H\left(y-x\right) & \text { if } \operatorname{Re} l_i(\lambda)\ge\operatorname{Re} l_1(\lambda).\end{cases}
		$$
		Furthermore, rewrite $\Delta$ as
		$$
		\Delta=\left(L_1 \mathcal{M}_{p}+\left[\mathcal{L}, \mathcal{M}_{p}\right]-\partial_x \mathcal{M}_{p}\right)\mathcal{M}_{p}^{-1}.
		$$
		Since $\mathcal{M}_{p}^{-1}$ is uniformly bounded for $\lambda\in D_{\epsilon}^n$, and noticing that $L_1$ and $G$ are Schwartz matrix-valued functions, it follwos that there is a Schwartz function $\alpha(x)$ such that
		$$
		|\Delta(x,\lambda)|\le C \alpha(x) |\lambda|^{p+1},\quad x\in\R,\ \lambda\in D_n^{\epsilon},
		$$
		and
		$$
		\left|K\left(x, y; \lambda\right)_{i j}\right| \leq {\alpha\left(y\right)}{|\lambda|^{p+1}}, \quad x, y \in \mathbb{R},\quad  \lambda \in {D}_1^{\epsilon}.
		$$
		Let $K^{(0)}=1$ and define
		$$
		K^{(m)}\left(\begin{array}{l}
			x_1, i_1, x_2, i_2, \ldots, x_m, i_m \\
			y_1, i_1^{\prime}, y_2, i_2^{\prime}, \cdots, y_m, i_m^{\prime}
		\end{array} ; \lambda\right)=\operatorname{det}\left(\begin{array}{ccc}
			K\left(x_1, y_1, \lambda\right)_{i_1 i_1^{\prime}} & \cdots & K\left(x_1, y_m, \lambda\right)_{i_1 i_m^{\prime}} \\
			\vdots & & \vdots \\
			K\left(x_m, y_1, \lambda\right)_{i_m i_1^{\prime}} & \cdots & K\left(x_m, y_m, \lambda\right)_{i_m i_m^{\prime}}
		\end{array}\right).
		$$
		\par
		Using the Hardamard's inequality,
		$$
		|\det(A)|^2\le\prod_i^{m}\sum_{j}^{m}|A_{ij}|^2,
		$$
		it is immediate to show that
		$$
		|K^{(m)}|\le \left(\prod_i^{m}\sum_{j}^{m}|K_{ij}^{m}(x_i,y_j,\lambda)|^2\right)^{\frac{1}{2}}
		=m^{\frac{m}{2}}\prod_j^{m}\alpha(y_j)|\lambda|^{p+1}.
		$$
		Define the Fredholm determinant and Fredholm minor as
		$$
		\begin{aligned}
			& f(\lambda)=\sum_{m=0}^{\infty} f^{(m)}(\lambda), \quad \lambda \in {D}_1^\epsilon, \\
			& F\left(x, y, \lambda\right)_{i i^{\prime}}=\sum_{m=0}^{\infty} F^{(m)}\left(x, y, \lambda\right)_{i i^{\prime}}, \quad x,y\in\R,\lambda \in {D}_1^\epsilon,
		\end{aligned}
		$$
		where
		$$
		f^{(m)}(\lambda)=\frac{(-1)^m}{m !} \sum_{i_1, i_2, \ldots, i_m=1}^3 \int_{\mathbb{R}^m}  K^{(m)}\left(\begin{array}{l}
			x_1, i_1, x_2, i_2, \ldots, x_m, i_m \\
			x_1, i_1, x_2, i_2, \cdots, x_m, i_m
		\end{array} ; \lambda\right) d x_1 d x_2 \cdots d x_m,
		$$
		and
		$$
		\small F^{(m)}(x,y,\lambda)_{ii'}=\frac{(-1)^m}{m !} \sum_{i_1, i_2, \ldots, i_m=1}^3 \int_{\mathbb{R}^m}  K^{(m+1)}\left(\begin{array}{l}
			x,i,x_1, i_1,  \ldots, x_m, i_m \\
			y,i',y_1, i_1,\cdots, y_m, i_m
		\end{array} ; \lambda\right) d x_1 d x_2 \cdots d x_m.
		$$
		Therefore, the estimates for $f(\lambda)$ and $F(x,y,\lambda)$ satisfy that
		\begin{align*}
			&\left|f^{(m)}(\lambda)\right| \leq \frac{3^m m^{m / 2}\|\alpha(y)\|_{L^1(\mathbb{R})}^m|\lambda|^{(p+1)m}}{m !}, \quad \lambda \in {D}_1^\epsilon, m \geq 0,\\
			&\left|F^{(m)}\left(x, y, \lambda\right)_{i i^{\prime}}\right| \leq \frac{3^m(m+1)^{(m+1) / 2}\|\alpha\|_{L^1(\mathbb{R})}^m \alpha\left(y\right)|\lambda|^{(p+1)(m+1)}}{m !}, \quad x, y \in \mathbb{R}, \lambda \in D_1^\epsilon, m \geq 0,
		\end{align*}
		which immediately indicates that
		$$
		\begin{aligned}
			& |f(\lambda)-1| \leq {C}{|\lambda|^{p+1}}, \quad \lambda \in {D}_1^{\epsilon}, \\
			& \left|F\left(x, y, k\right)\right| \leq C {\alpha(y)}{|\lambda|^{p+1}}, \quad x, y \in \mathbb{R}, \lambda \in {D}_1^{\epsilon} .
		\end{aligned}
		$$
		Notice that the Fredholm determinant is not zero for $\lambda\in D_1^{\epsilon}$, then by the Fredholm analysis, it yields
		$$
		m(x,\lambda)_{i}=(\mathcal{M}_p)_{i1}+\frac{1}{f(\lambda)}\int_{\R}\sum_{j=1}^3F(x,y;\lambda)_{ij}(\mathcal{M}_p)_{j1}(y,\lambda)dy,
		$$
		and
		$$
		|m(x,\lambda)_i-(\mathcal{M}_p)_{i1}|\le C|\lambda|^{p+1},\quad \lambda\in D_1^{\epsilon}.
		$$
	\end{proof}
	\subsection{Proof of Theorem \ref{RHth}}\label{uissolution}
	Introduce the eigenfunctions $\{M_n(x,t;\lambda)\}_{n=1}^6$ dependent of time by inserting $L_1(x;\lambda)$ into $L_1(x,t;\lambda)$ in the Fredholm integral equations and denote the sectionally analytic function $M(x,t;\lambda)$ as $M(x,t;\lambda)=M_n(x,t;\lambda)$ for $\lambda\in D_n$. Otherwise, the behaviour of $M(x,t;\lambda)$ has the same expansion with $\Phi_+$ which implies that
	\begin{equation}\label{recover formula}
		u(x, t)=\lim _{\lambda \rightarrow 0} \log \left[\left(\omega, \omega^2, 1\right) M(x, t, \lambda)\right]_{13}.
	\end{equation}
	Now, it suffices to show the Theorem \ref{RHth}, if the time-dependent meromorphic function $M(x,t;\lambda)$ satisfies the Lax equation and jump conditions, see Lemma \ref{Lemma-formula-1} and Lemma \ref{Lemma-formula-2} below.
	\begin{lemma}\label{Lemma-formula-1}
		For $\lambda \in D_n$, the function $M_n(x,t;\lambda)$ is smooth for $(x,t)\in\R\times[0,T)$, and satisfies the Lax pair (\ref{lax equation}).
	\end{lemma}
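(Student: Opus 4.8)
The plan is to prove the three assertions --- joint smoothness in $(x,t)$, the $x$-part, and the $t$-part of the Lax equation (\ref{lax equation}) --- in that order, reserving the bulk of the work for the $t$-part. For smoothness I would differentiate the Fredholm integral equation (\ref{Fredholm integral equation}), now carrying the time-dependent potential $L_1(x,t;\lambda)$, with respect to $x$ and $t$. Since $u(x,t)$ is a Schwartz-class solution on $\R\times[0,T)$, the matrices $L_1$ and $Z_1$ are smooth in $(x,t)$ and decay rapidly in $x$, so formal differentiation produces, for each of $\partial_x M_n$ and $\partial_t M_n$, an inhomogeneous Fredholm equation governed by the \emph{same} kernel as (\ref{Fredholm integral equation}). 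Because the Fredholm determinant is nonzero for $\lambda\in D_n\setminus\tilde{\mathcal N}$, each auxiliary equation is uniquely solvable, and iterating yields joint smoothness of $M_n(x,t;\lambda)$.

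The $x$-part is a direct computation. Writing $[e^{(x-y)\widehat{\mathcal L}}A]_{ij}=e^{(x-y)(l_i-l_j)}A_{ij}$ and differentiating (\ref{Fredholm integral equation}) in $x$, the boundary term at the endpoint $y=x$ of $\gamma_{ij}^n$ contributes $(L_1 M_n)_{ij}(x,\lambda)$ with the same sign whether $\gamma_{ij}^n=(-\infty,x)$ or $(+\infty,x)$, while differentiating the exponential inside the integral contributes $(l_i-l_j)\big((M_n)_{ij}-\delta_{ij}\big)$. Since $(l_i-l_j)(M_n)_{ij}=[\mathcal L,M_n]_{ij}$ and $(l_i-l_j)\delta_{ij}=0$, this gives $(M_n)_x-[\mathcal L,M_n]=L_1M_n$.

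For the $t$-part, set $N:=(M_n)_t-[\mathcal Z,M_n]-Z_1M_n$ and aim to show $N\equiv 0$. The crucial input is the zero-curvature identity $L_t-Z_x+[L,Z]=0$, which holds precisely because $u$ solves the Tzitz\'eica equation (\ref{Tt}). Introducing $\phi=M_n e^{\mathcal L x+\mathcal Z t}$ and $W:=\phi_t-Z\phi$, one computes $W_x=L_t\phi+L\phi_t-Z_x\phi-ZL\phi=LW+(L_t-Z_x+[L,Z])\phi=LW$ using $\phi_x=L\phi$, and $W=Ne^{\mathcal L x+\mathcal Z t}$. Translating back shows that $N$ satisfies the same linear $x$-equation as $M_n$, namely $N_x-[\mathcal L,N]=L_1N$. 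It then remains only to verify that each entry $N_{ij}$ vanishes at the infinite endpoint of the contour $\gamma_{ij}^n$; granting this, $N$ solves the homogeneous version of (\ref{Fredholm integral equation}) (with $\delta_{ij}$ replaced by $0$), whence the non-vanishing of the Fredholm determinant on $D_n\setminus\tilde{\mathcal N}$ forces $N=0$, i.e. the $t$-part of (\ref{lax equation}).

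I expect this boundary analysis to be the main obstacle: for every entry $(i,j)$ one must control the decay of $N_{ij}$ at the correct spatial infinity. This uses that $L_1,Z_1\to 0$ as $|x|\to\infty$ together with the vanishing pattern of the limits $S_n(\lambda),T_n(\lambda)$ recorded in Lemma \ref{lemmaST} --- the entries normalized at $-\infty$ are exactly those with $(S_n)_{ij}=0$, and symmetrically at $+\infty$ --- so that neither the $t$-derivative nor the commutator term $[\mathcal Z,M_n]$ spoils the required decay. Once the decay is secured, the uniqueness argument via the Fredholm determinant is routine.
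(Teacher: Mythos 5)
Your proposal follows essentially the same route as the paper: the paper also passes to $\hat M_n = M_n e^{\mathcal L x+\mathcal Z t}$, uses the compatibility (zero-curvature) condition to show that $\partial_t\hat M_n - Z\hat M_n$ satisfies the same homogeneous $x$-equation with vanishing boundary data along $\gamma_{ij}^n$, and invokes the nonvanishing of the Fredholm determinant on $D_n\setminus\tilde{\mathcal N}$ to conclude it is identically zero. Your write-up simply makes explicit the smoothness step, the $x$-part computation, and the endpoint decay analysis that the paper states in one line.
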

	\begin{proof}
		Denote {$\hat{M}_n(x,t;\lambda):=M_n(x,t;\lambda)e^{\mathcal{L}x+\mathcal{Z}t}$}, it follows that {$\hat{M}_n(x,t;\lambda)$} satisfies the space part of Lax pair and according to the compatibility conditions, we have $(\partial_t\hat{M}_n-Z\hat{M}_n)_x=L(\partial_t\hat{M}_n-Z\hat{M}_n)$. Consequently, $\partial_t\hat{M_n}-Z\hat{M_n}$ also solves the space part of Lax pair, but the boundary condition is 0 as $x\to\pm\infty$, w.r.t. $\gamma_{ij}^n$. By the Fredholm theory, for $\lambda\in D_n\setminus\tilde{\mathcal{N}}$, the Fredholm determinant is nonzero, which implies that the integral equation only has trivial zero solution, that is $\partial_t\hat{M}_n-Z\hat{M}_n\equiv0$. Furthermore, it follows from the boundedness of $M_n$ and the behavior of $\lambda\to 0$ that $M_n(x,t;\lambda)$ satisfies the Lax pair for the Tzitz\'eica equation (\ref{Tt}).
	\end{proof}
	\begin{lemma}\label{Lemma-formula-2}
		For $(x,t)\in\R \times [0,T)$, the time-dependent function $M(x,t;\lambda)$ is a sectionally analytic function for $\lambda\in \CC\setminus \Sigma$, and satisfies the same jump conditions with function $M(x;\lambda)$.
	\end{lemma}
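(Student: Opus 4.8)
The plan is to reduce the entire statement to the time-independent analysis already completed for $M(x;\lambda)$ in Lemma \ref{lemmaST} and the accompanying Fredholm proposition, exploiting the fact that the sole effect of introducing the time variable is to promote the spatial phase $e^{x\mathcal{L}}$ to the full phase $e^{x\mathcal{L}+t\mathcal{Z}}$. By Lemma \ref{Lemma-formula-1} the time-dependent $M_n(x,t;\lambda)$ satisfies the complete Lax pair, and the scattering data $s(\lambda)$ (hence the triangular factors $S_n,T_n$ of Lemma \ref{lemmaST}) is time-independent by isospectrality. Everything then follows by transporting the time-independent jump computation through this enlarged phase.

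First I would establish sectional holomorphy. For each fixed $t\in[0,T)$ the function $M_n(x,t;\lambda)$ is defined by the same Fredholm integral equation \eqref{Fredholm integral equation} with $L_1(x;\lambda)$ replaced by the Schwartz-class potential $L_1(x,t;\lambda)$, so the Fredholm-determinant argument applies verbatim: $M_n(x,t;\cdot)$ is holomorphic in $\lambda\in D_n\setminus\tilde{\mathcal{N}}$ and continuous up to $\bar D_n\setminus\tilde{\mathcal{N}}$. Under the no-soliton Assumption \ref{solitonless} the set $\mathcal{N}$ is empty, and the $\lambda\to0$ expansion lemma (with leading term $G(x)$) shows that the singularity at the origin is removable. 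Consequently the piecewise definition $M(x,t;\lambda):=M_n(x,t;\lambda)$ for $\lambda\in D_n$ yields a function holomorphic on all of $\mathbb{C}\setminus\Sigma$.

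Next I would compute the jumps. The time-dependent analogue of Lemma \ref{lemmaST} reads
$$M_n(x,t;\lambda)=\Phi_{-}(x,t;\lambda)\,e^{x\mathcal{L}+t\mathcal{Z}}S_n(\lambda)\,e^{-(x\mathcal{L}+t\mathcal{Z})}=\Phi_{+}(x,t;\lambda)\,e^{x\mathcal{L}+t\mathcal{Z}}T_n(\lambda)\,e^{-(x\mathcal{L}+t\mathcal{Z})},$$
with $S_n,T_n$ the same triangular matrices built from $s(\lambda)$. On a ray of $\Sigma$ separating adjacent sectors $D_n$ and $D_{n+1}$, exactly one of $\Phi_{\pm}$ is analytic across that ray; expressing both $M_n$ and $M_{n+1}$ through the common Jost function and forming the ratio gives
$$M_{+}(x,t;\lambda)=M_{-}(x,t;\lambda)\,e^{x\mathcal{L}+t\mathcal{Z}}\,P_j(\lambda)\,e^{-(x\mathcal{L}+t\mathcal{Z})},$$
where $P_j$ is the relevant product of $S$- or $T$-factors and their inverses. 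The conjugation by $e^{x\mathcal{L}+t\mathcal{Z}}$ produces precisely the phases $\vartheta_{ij}=(l_i-l_j)x+(z_i-z_j)t$ appearing in Lemma \ref{jumpvj}.

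The main obstacle is the final algebraic identification of each $P_j$ with the jump matrix $v_j$ of Lemma \ref{jumpvj}. This requires invoking the $\mathbb{Z}_3$ and $\mathbb{Z}_2$ symmetries of $s(\lambda)$ from Proposition \ref{sprop} to rewrite the minors $m_{ij}(s)$ and entries $s_{ij}$ in terms of the reflection coefficients $r_1,r_2$ of \eqref{r1r2def}, and to relate the values at $\lambda,\omega\lambda,\omega^2\lambda$ that distinguish the six rays. Crucially, these relations are purely algebraic and carry no $t$-dependence, so they are word-for-word identical to the time-independent computation; the time variable enters only through the exponential phase, which is already incorporated into the $v_j$. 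Combining the off-diagonal entries so produced with the diagonal scattering factors yields exactly $v_j$, whence $M(x,t;\lambda)$ satisfies the stated jump conditions and the lemma follows.
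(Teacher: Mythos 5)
Your argument is correct, but it reaches the jump conditions by a different mechanism than the paper. On analyticity you and the paper agree: the Fredholm analysis applies verbatim once $L_1(x;\lambda)$ is replaced by $L_1(x,t;\lambda)$. For the jumps, you re-derive a time-dependent version of Lemma \ref{lemmaST}, arguing that the triangular factors $S_n,T_n$ are unchanged because $s(\lambda)$ is isospectral, then recompute $M_n^{-1}M_{n+1}$ as the conjugate of $S_n^{-1}S_{n+1}$ by the full phase $e^{x\mathcal{L}+t\mathcal{Z}}$ and re-identify the result with the $v_j$ of Lemma \ref{jumpvj}. The paper instead uses Lemma \ref{Lemma-formula-1} directly: since each $M_n e^{\mathcal{L}x+\mathcal{Z}t}$ solves both halves of the Lax pair, the ratio of consecutive ones is constant in $(x,t)$, giving
\[
M_{n+1}(x,t;\lambda)=M_n(x,t;\lambda)\,\mathrm{e}^{\widehat{\mathcal{L}(\lambda)}x+\widehat{\mathcal{Z}(\lambda)}t}\left(M_n(0,0;\lambda)^{-1}M_{n+1}(0,0;\lambda)\right),
\]
which transports the already-established $t=0$ jump to all $(x,t)$ with no further scattering algebra. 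Your route is more explicit --- it actually reproduces the $v_j$, which the lemma does not strictly require, since it only asserts agreement with the jumps of $M(x;\lambda)$ --- but the isospectrality of $s(\lambda)$ that you invoke rests on the same compatibility argument as Lemma \ref{Lemma-formula-1}, so both proofs consume the same essential input. Two small imprecisions: Assumption \ref{solitonless} does not make $\mathcal{N}$ empty outright; the paper's remark instead shows that $M$ extends analytically across the zeros of the Fredholm determinant because $(\lambda-p_j)M\to 0$ there. And the claim that ``exactly one of $\Phi_{\pm}$ is analytic across the ray'' should be read columnwise: what matters is that $M_n$ and $M_{n+1}$ both admit boundary values on the shared ray expressible through the same columnwise-continuous Jost data, which is what lets you form the ratio.
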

	\begin{proof}
		The existence and analyticity of $M_n(x,t;\lambda)$ can be gotten by the standard Fredholm analysis. Moreover, we have proven that $M(x,t;\lambda)$ satisfies the Lax pair, which implies that
		$$
		M_{n+1}(x, t;\lambda)=M_n(x, t;\lambda) \mathrm{e}^{\widehat{\mathcal L(\lambda)} x+\widehat{\mathcal Z(\lambda)} t}\left(M_n(0,0; \lambda)^{-1} M_{n+1}(0,0; \lambda)\right),
		$$
		for $n=1,2,\cdots,6$ and $M_7=M_1$.
	\end{proof}

	\section{\bf Long-time asymptotics for the Tzitz\'eica equation}\label{longtimeasymptotic}
		First, we focus on the critical points of the RH problem \eqref{MRHp} and recall the dispersion relations 
		$\vartheta_{21}$, $\vartheta_{31}$ and $\vartheta_{32}$ appearing in the RH problem:
		\[
		\begin{aligned}
			\vartheta_{21}
			&=(l_2-l_1)x+(z_2-z_1)t  \\
			&=\frac{(\omega^2-\omega)t}{2}\big[(\lambda-\lambda^{-1}) \xi+(\lambda+\lambda^{-1})\big],
		\end{aligned}
		\]
		where $\xi=\frac{x}{t}$. The stationary points of the phase function $\vartheta_{21}$ are
		\begin{align}\label{lambda0def}
			\lambda_0=\pm\sqrt{\frac{|x-t|}{|x+t|}}.
		\end{align}
		Moreover, the stationary points of $\vartheta_{31}$ are given by 
		$\omega\lambda_0=\pm\omega\sqrt{\frac{|x-t|}{|x+t|}}$, while the stationary points of 
		$\vartheta_{32}$ are $\omega^2\lambda_0=\pm\omega^2\sqrt{\frac{|x-t|}{|x+t|}}$.
		
		\par
		
		By convention, we denote
		\[
		B(a,\lambda_0)=B_{\lambda_0}(a):=\{\lambda \in \mathbb{C}\,|\,|\lambda-a|<\lambda_0\},
		\]
		which represents an open disk in the complex plane centered at $a$ with radius $\lambda_0$, and
		\[
		\partial B(a,\lambda_0)=\partial B_{\lambda_0}(a):=\{\lambda\in\mathbb{C}\,|\,|\lambda-a|=\lambda_0\},
		\]
		which is the boundary of $B_{\lambda_0}(a)$. Moreover, the signature of the real part of the phase function 
		$\vartheta_{21}$ is as follows:
		\begin{enumerate}
			\item For $|\xi|<1$, $\Re \vartheta_{21}>0$ when $\lambda \in B_{\lambda_0}(0) \cap \Im \lambda<0$ and $\lambda \in B^c_{\lambda_0}(0) \cap \Im \lambda>0$, while $\Re \vartheta_{21}<0$ elsewhere.
			\item For $\xi\geq 1$, $\Re \vartheta_{21}>0$ when $\Im \lambda>0$, and $\Re \vartheta_{21}<0$ when $\Im \lambda<0$.
			\item For $\xi\leq -1$, $\Re \vartheta_{21}>0$ when $\Im \lambda<0$, and $\Re \vartheta_{21}<0$ when $\Im \lambda>0$.
		\end{enumerate}

		Figure \ref{sign} describes the signature of real part of phase function $\vartheta_{21}$ for $|\xi|<1$.
		
		\begin{figure}[!h]
			\centering
			\begin{overpic}[width=.65\textwidth]{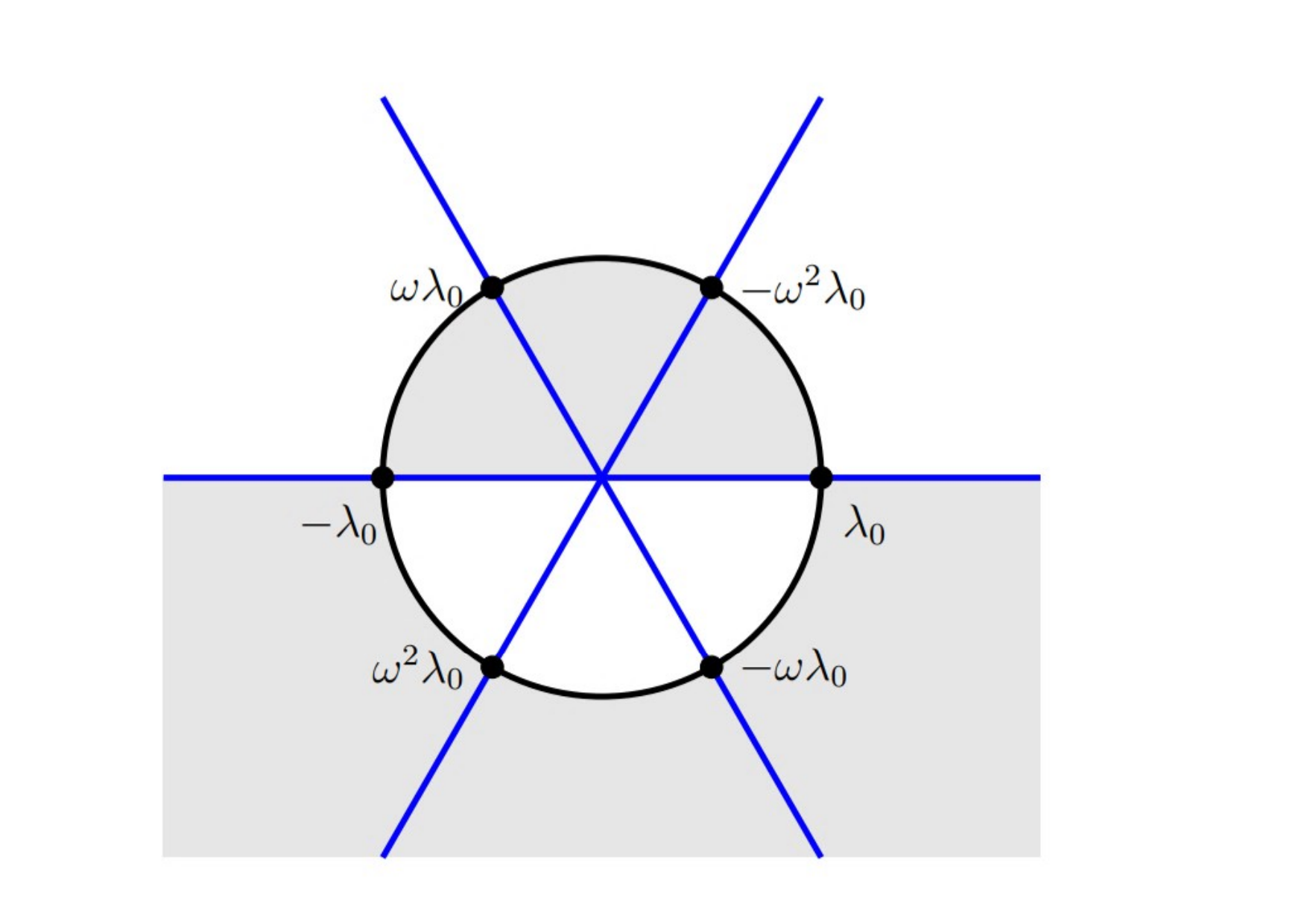}
			\end{overpic}
			\caption{{\protect\small
					The signature of real part of $\vartheta_{21}$ for $|\xi|<1$:
					the shaded region represents $\Re\vartheta_{21}<0$ and the white region represents $\Re\vartheta_{21}>0$.
			}}
			\label{sign}
		\end{figure}

	\subsection{Long-time behavior outside the light cone $|\frac{x}{t}|>1$}
	
	For $\xi>1$, we have that
	
	$$
	\begin{aligned}
		v_1&=\left(\begin{array}{ccc}
			1 & 0 & 0 \\
			r_1^*(\lambda) e^{\vartheta_{21}} & 1 & 0 \\
			0 & 0 & 1
		\end{array}\right)\left(\begin{array}{ccc}
			1 & -r_1(\lambda) e^{-\vartheta_{21}} & 0 \\
			0 & 1 & 0 \\
			0 & 0 & 1
		\end{array}\right)\\&:=v^{lower}_{1}v^{upper}_{1},
	\end{aligned}
	$$
	
	and
	
	$$
	\begin{aligned}
		v_4&=\left(\begin{array}{ccc}
			1 & -r_2^*(\lambda) e^{-\vartheta_{21}} & 0 \\
			0 & 1 & 0 \\
			0 & 0 & 1
		\end{array}\right)\left(\begin{array}{ccc}
			1 & 0 & 0 \\
			r_2(\lambda) e^{\vartheta_{21}} & 1 & 0 \\
			0 & 0 & 1
		\end{array}\right)\\&:=v^{upper}_{4}v^{lower}_{4}.
	\end{aligned}
	$$
	%
	
	Without loss of generality, assume that the reflection coefficient $r_{1}(\lambda)$ is analytic in the region $\{\lambda: 0 < \arg \lambda < \frac{\pi}{3}, \Re \lambda > 0\}$ and continuous to the boundary. Similarly, assume $r_2(\lambda)$ to be analytic in the region $\{\lambda: \pi < \arg \lambda < \frac{4\pi}{3}, \Re \lambda < 0\}$ and continuous to the boundary. Under these assumptions, it follows that the norm $\|r_1(\lambda)e^{-\vartheta_{21}}\|_{\infty}$ decays exponentially as $t \to \infty$ for $\Im \lambda > 0$. Consequently, the jump matrices $v_1$ and $v_4$ approach the identity matrix $I$ in this limit, which implies that the solution to the Riemann-Hilbert problem \ref{MRHp} converges to identity matrix $I$. Combining the recovery formula within this framework, it is deduced that $u(x,t) \to 0$ rapidly as $t \to \infty$. Furthermore, the analyticity and symmetry of the other jump matrices allow to similar factorizations. As $\frac{|x|}{t} \to \infty$ and $t \to \infty$, corresponding to Sector ${\rm I}$ in Theorem \ref{uasy}, the phase functions in the jump matrices are predominantly influenced by $e^{-c|x|}$. This dominance leads to the conclusion that $u(x,t) \to 0 $ in this sector, providing a clear asymptotic behavior as the spatial variable $x$ or time $t$ becomes large.

	\begin{lemma}\label{1xi}
		For $\xi>1$, as $t\to\infty$, the solution of RH problem \ref{MRHp} satisfies
		$
		\|M(x,t;\lambda)-I\|\to0,\
		$
		rapidly, and the solution of Tzitz\'eica equation (\ref{Tt}) rapidly decays to $0$.
	\end{lemma}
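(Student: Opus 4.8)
The plan is to reduce the Riemann--Hilbert problem on $\Sigma$ to a small-norm problem by deforming the jump contour off the real axis into the regions where the oscillatory exponentials decay, and then to read the asymptotics of $u$ off the reconstruction formula \eqref{usolution}.

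First I would record the precise sign structure of the phase. Writing $\omega^2-\omega=-i\sqrt3$ and collecting terms gives $\vartheta_{21}=-\tfrac{i\sqrt3\,t}{2}\big[(\xi+1)\lambda-(\xi-1)\lambda^{-1}\big]$, so that for $\lambda=re^{i\theta}$,
\[
\Re\vartheta_{21}=\frac{\sqrt3\,t}{2}\,\sin\theta\,\big[(\xi+1)r+(\xi-1)r^{-1}\big].
\]
For $\xi>1$ both coefficients in the bracket are positive, hence $\Re\vartheta_{21}$ has the sign of $\sin\theta=\operatorname{sign}(\Im\lambda)$ at every $\lambda\neq0$, and $|\Re\vartheta_{21}|\to\infty$ as $\lambda\to0$ off the real axis. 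In particular the would-be critical point $\pm i\lambda_0$ (with $\lambda_0$ as in \eqref{lambda0def}) lies on the imaginary axis, so there is no stationary point on the contour $\mathbb{R}$; this is exactly what makes the sector trivial.

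Next I would carry out the contour deformation. Using the factorizations $v_1=v_1^{lower}v_1^{upper}$ and $v_4=v_4^{upper}v_4^{lower}$ displayed above, together with the standing assumption that $r_1$ is analytic in the first quadrant and $r_2$ in the third (continuous to the boundary), define $M^{(1)}$ to equal $M$ multiplied by the appropriate triangular factor inside thin lens-shaped regions adjacent to $\mathbb{R}_+$ and $\mathbb{R}_-$, and $M^{(1)}=M$ elsewhere. This moves the jump from the real axis onto deformed rays in $\Im\lambda>0$ and $\Im\lambda<0$, on which the off-diagonal entries carry $r_1e^{-\vartheta_{21}}$ and $r_1^*e^{\vartheta_{21}}$ (and the $r_2$ analogues). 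By the sign computation these are bounded by $\|r_j\|_\infty e^{-c\,t}$ for some $c=c(\xi)>0$, uniformly in $\lambda$; near $\lambda=0$ the bound is only stronger, since $r_j$ and all its derivatives decay rapidly there by Theorem \ref{r1r2prop}(2) while $|\Re\vartheta_{21}|$ blows up. The remaining four rays $e^{i(j-1)\pi/3}\mathbb{R}_+$ are reduced to these two by the symmetries $M(\lambda)=\mathcal A^{-1}M(\omega\lambda)\mathcal A=\mathcal B M^*(\lambda^*)\mathcal B^{-1}$ of RH problem \ref{MRHp}, so every jump $V^{(1)}$ of $M^{(1)}$ satisfies $\|V^{(1)}-I\|_{L^1\cap L^\infty}=O(e^{-ct})$.

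Then the conclusion follows from small-norm theory: the deformed jump being exponentially close to $I$, the associated singular integral operator is invertible for $t$ large and $M^{(1)}=I+O(e^{-ct})$ uniformly, in particular as $\lambda\to0$; undoing the triangular (hence harmless) transformation gives $M=I+O(e^{-ct})$. Feeding this into \eqref{usolution} and using that $(\omega,\omega^2,1)I=(\omega,\omega^2,1)$ has third entry $1$, we obtain $u(x,t)=\log 1+O(e^{-ct})\to0$ rapidly. The main obstacle I anticipate is purely technical: justifying the lens opening and the uniformity of the small-norm estimate all the way down to the singular point $\lambda=0$. This is handled by the explicit lower bound on $|\Re\vartheta_{21}|$ near the origin together with the rapid vanishing of $r_1,r_2$ at $0$, which jointly guarantee that the deformed jump remains uniformly exponentially small even in a full neighborhood of the origin.
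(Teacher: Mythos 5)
Your strategy coincides with the paper's: you compute the sign of $\Re\vartheta_{21}$ (correctly, including the observation that the stationary point $\pm i\sqrt{(\xi-1)/(\xi+1)}$ leaves the real contour for $\xi>1$ and that $|\Re\vartheta_{21}|$ blows up at the origin off the axis), open lenses via the triangular factorizations $v_1=v_1^{lower}v_1^{upper}$ and $v_4=v_4^{upper}v_4^{lower}$, invoke small-norm theory for the associated singular integral operator, and read off $u\to0$ from \eqref{usolution}. The one place where your argument does not close is the ``standing assumption'' that $r_1$ is analytic in the first quadrant and $r_2$ in the third. For $u_0,u_1\in\mathcal S(\R)$ the reflection coefficients are only smooth on their real domains --- Theorem \ref{r1r2prop} provides no analytic continuation --- so the contour deformation you perform is not justified as stated; this assumption is exactly what needs to be discharged, not invoked.

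The paper flags the analyticity as a heuristic before the lemma, but its actual proof consists precisely in removing it: one sets $g(\lambda)=r_1(\lambda)(\lambda^2-i)^2$, regards $g$ as a function of the phase variable $\theta_{21}$, and splits its Fourier transform at $s=t$ to write $r_1=r_{1,a}+r_{1,r}$, where $r_{1,a}e^{-\vartheta_{21}}$ continues analytically to the upper half-plane and decays exponentially in $t$ in $L^1\cap L^\infty$, while $r_{1,r}e^{-\vartheta_{21}}$ remains supported on $\R_+$ but decays rapidly as $t\to\infty$ (similarly for $r_2$ in the lower half-plane by Schwarz reflection). The deformed RH problem therefore retains a residual, rapidly decaying jump on the real rays in addition to the exponentially small jumps on the lens boundaries, and the small-norm argument is run in the weighted space $\dot L^3(\Sigma)$ on that configuration. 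Without this decomposition (or an equivalent device for non-analytic data) your proof covers only the special case of analytic reflection coefficients; with it, the remainder of your argument, including the treatment of $\lambda\to0$ and the passage through the reconstruction formula, goes through exactly as you describe.
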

	\begin{proof}
		Rewrite
		$
		\vartheta_{21}=\frac{-\sqrt{3}it}{2}[(\lambda-\lambda^{-1}) \xi+(\lambda+\lambda^{-1})]:=2it\theta_{21}(\lambda).
		$
		Since for $\lambda\neq0$, $\theta_{21}$ is a monotone function of $\lambda$, we can view $\lambda$ as the function of $\theta_{21}$, and then
		suppose that $g(\lambda)=r_1(\lambda)(\lambda^2-i)^2$. Thus, it follows
		
		$$
		\begin{aligned}
			r_1(\lambda)e^{-\vartheta_{21}(\lambda)}&=\frac{e^{-2it\theta_{21}}}{(\lambda^2-i)^2}\frac{1}{\sqrt{2\pi}}\int_{\R}\hat g(s)e^{is\theta_{21}}ds\\
			&=\frac{e^{-2it\theta_{21}}}{(\lambda^2-i)^2}\frac{1}{\sqrt{2\pi}}\left( \int_{-\infty}^t\hat g(s)e^{is\theta_{21}}ds+\int^{\infty}_t\hat g(s)e^{is\theta_{21}}ds \right)\\			
			&:=r_{1,a}e^{-\vartheta_{21}(\lambda)}+r_{1,r}e^{-\vartheta_{21}(\lambda)},
		\end{aligned}
		$$
		where we have used the fact that
		$$
		\hat g(s)=\frac{1}{\sqrt{2\pi}}\int_{\R}g(\lambda(\theta))e^{-is\theta_{21}}d\theta.
		$$
		It is clear that $r_{1,a}e^{-\vartheta_{21}(\lambda)}$ has an analytic continuation to the upper half plane and by using the Schwartz reflection principle, the $r_{1,a}^*e^{\vartheta_{21}}$ is analytic in the lower half plane. Furthermore, $r_{1,a}e^{-\vartheta_{21}(\lambda)}$ decays exponentially as $t\to\infty$ in the $L^1\cap L^\infty$ norm. On the other hand, $r_{1,r}e^{-\vartheta_{21}(\lambda)}$ still relies on the positive half real line, {and for any integer $N>0$ $r_{1,r}e^{-\vartheta_{21}(\lambda)}=\mathcal{O}\left(t^{-N}\right)$ } as $t\to\infty$.
		\par		
		In the same way, one can split $r_2(\lambda)e^{\vartheta_{21}}$ into $r_{2,a}(\lambda)e^{\vartheta_{21}}$ and $r_{2,r}(\lambda)e^{\vartheta_{21}}$, where $r_{2,r}(\lambda)e^{\vartheta_{21}}$ relies on the negative half line, rapidly decays as $t\to\infty$ and $r_{2,a}(\lambda)e^{\vartheta_{21}}$ can be analytic continuation to the lower plane with exponentially decaying as $t\to\infty$.
		\par	
		
		As a result, the jump matrices of RH problem \ref{MRHp} can be factorized into
		$$
		V(x,t;\lambda)=\begin{cases}
			\begin{aligned}
				&\left(\begin{matrix}
					1 & -r_{1,r}(\lambda) e^{-\vartheta_{21}} & 0 \\
					r_{1,r}^*(\lambda) e^{\vartheta_{21}} & 1-r_{1,r}(\lambda)r_{1,r}^*(\lambda) & 0 \\
					0 & 0 & 1
				\end{matrix}\right),&&\lambda\in\Sigma_1',\\
				
				&\left(\begin{array}{ccc}
					1 & 0 & 0 \\
					r_{1,a}^*(\lambda) e^{\vartheta_{21}} & 1 & 0 \\
					0 & 0 & 1
				\end{array}\right),&&\lambda\in\Sigma_{1,lower}',\\
				
				&\left(\begin{array}{ccc}
					1 & r_{1,a}(\lambda) e^{-\vartheta_{21}} & 0 \\
					0 & 1 & 0 \\
					0 & 0 & 1
				\end{array}\right),&&\lambda\in\Sigma_{1,upper}',\\
				&\left(\begin{array}{ccc}
					1-r_{2,r}(\lambda)r_{2,r}^*(\lambda) & -r_{2,r}^*(\lambda) e^{-\vartheta_{21}} & 0 \\
					r_{2,r}(\lambda) e^{\vartheta_{21}} & 1 & 0 \\
					0 & 0 & 1
				\end{array}\right),&&\lambda\in\Sigma_4',\\
				
				&\left(\begin{array}{ccc}
					1 & -r_{2,a}^*(\lambda) e^{-\vartheta_{21}} & 0 \\
					0 & 1 & 0 \\
					0 & 0 & 1
				\end{array}\right),&&\lambda\in\Sigma_{4,upper},\\
				
				&\left(\begin{array}{ccc}
					1 & 0 & 0 \\
					-r_{2,a}(\lambda) e^{\vartheta_{21}} & 1 & 0 \\
					0 & 0 & 1
				\end{array}\right),&&\lambda\in\Sigma_{4,lower},\\
			\end{aligned}
		\end{cases}
		$$
		and at the same time the other jump matrices can be gotten by the symmetries.
		\par		
		For function $f(\xi)$, define the Cauchy operator $\mathcal{C}$ on $\Sigma$ as
		$$
		(\mathcal{C}f)(z)=\frac{1}{2\pi i}\int_{\Sigma}\frac{f(\xi)}{\xi-z}d\xi.
		$$
		Denote $\dot L^3(\Sigma)$ as $(1+|\xi|)^{\frac{1}{3}}f(\xi)\in L^3(\Sigma)$, i.e., the weighted $L^3$ space, and Cauchy operator maps the function in $\dot L^3(\Sigma)$ into an analytic function except for $\lambda\in\Sigma$, while the boundary valued functions $\mathcal{C}_{\pm}f(z)$ belong to $\dot L^3(\Sigma)$.  Suppose that $w\in L^{\infty}$, and define $\mathcal{C}_w$ as $\mathcal{C}_w(f):=\mathcal{C}_-(wf)$, it follows that $\mathcal{C}_w$ is a bounded linear operator on $\dot L^3(\Sigma)$. Let $\mu$ satisfy the singular integral equation
		$$
		\mu=I+\mathcal{C}_w\mu,\ \mu \in I+\dot L^3(\Sigma).
		$$
		Consequently, if $(I-\mathcal{C}_w)$ is invertible, there exits a unique solution  for a RH problem with jump contour $\Sigma$ and the jump condition is $I+w$. Furthermore, the solution of this RH problem is
		$$
		M(x,t;\lambda)=I+C(\mu w).
		$$
		Suppose $w(x,t;\lambda)=V(x,t;\lambda)-I$, it is immediate to know that $w(x,t;\lambda)\in\dot L^3(\Sigma')$ and $\|w(x,t;\lambda)\|_{\infty}\to0$ rapidly as $t\to\infty$, which implies that $(I-\mathcal{C}_w)$ is invertible. Reminding that the solution of RH problem \ref{MRHp} exists and is unique, moreover, $M_{\pm}\to I$ rapidly in $\dot L^3(\Sigma)$ norm as $t\to\infty$, thus it is concluded that $M\to I$ and  $u(x,t)\to 0$ as $t\to\infty$ by the reconstruction formula $u(x,t)=\lim_{\lambda\to0}\log \left[ (\omega,\omega^2,1) M(x,t,\lambda) \right]_{13}$.
	\end{proof}

	For $\xi<-1$, the signature of $\vartheta_{21}$ is opposite to the case $\xi>1$ and the factorizations of jump matrices are
	$$
	\begin{aligned}
		v_1
		&=\left(\begin{array}{ccc}
			1 & -\frac{r_1(\lambda) e^{-\vartheta_{21}}}{1-|r_1(\lambda)|^2} & 0 \\
			0 & 1 & 0 \\
			0 & 0 & 1
		\end{array}\right)
		\left(\begin{array}{ccc}
			(1-\left|r_1(\lambda)\right|^2)^{-1} & 0 & 0 \\
			0 & 1-\left|r_1(\lambda)\right|^2 & 0 \\
			0 & 0 & 1
		\end{array}\right)
		\left(\begin{array}{ccc}
			1 & 0 & 0 \\
			\frac{r_1^*(\lambda) e^{\vartheta_{21}}}{1-|r_1(\lambda)|^2} & 1 & 0 \\
			0 & 0 & 1
		\end{array}\right),
	\end{aligned}
	$$
	and
	$$
	\begin{aligned}
		v_4&=\left(\begin{array}{ccc}
			1 & 0 & 0 \\
			\frac{r_2^*(\lambda)}{1-\left|r_2(\lambda)\right|^2} e^{\vartheta_{21}} & 1 & 0 \\
			0 & 0 & 1
		\end{array}\right)\left(\begin{array}{ccc}
			1-\left|r_2(\lambda)\right|^2 & 0 & 0 \\
			0 & {({1-\left|r_2(\lambda)\right|^2})^{-1}} & 0 \\
			0 & 0 & 1
		\end{array}\right)\left(\begin{array}{ccc}
			1 & -\frac{r_2(\lambda)}{1-\left|r_2(\lambda)\right|^2} e^{-\vartheta_{21}} & 0 \\
			0 & 1 & 0 \\
			0 & 0 & 1
		\end{array}\right).
	\end{aligned}
	$$
	In this case, it is necessary to introduce the functions $\delta_1$ and $\delta_4$ of the forms
	$$
	\left\{\begin{aligned}
		\delta_{1+}(\lambda) & =\delta_{1-}(\lambda)\left(1-|r_1(\lambda)|^2\right), & & \lambda\in\R_+, \\
		& =\delta_{1-}(z), & & \lambda\in\CC\setminus\R_+, \\
		\delta_1(\lambda) & \rightarrow 1 & & \text { as } \lambda \rightarrow \infty,
	\end{aligned}\right.
	$$
	and
	$$
	\left\{\begin{aligned}
		\delta_{4+}(\lambda) & =\delta_{4-}(\lambda)\left(1-|r_2(\lambda)|^2\right), & & k\in\R_-, \\
		& =\delta_{4-}(z), & & k\in\CC\setminus\R_-, \\
		\delta_4(\lambda) & \rightarrow 1 & & \text { as } k \rightarrow \infty.
	\end{aligned}\right.
	$$
	Thus, it follows that
	$$
	\delta_1(\lambda)=\exp \left\{\frac{1}{2 \pi i} \int_{0}^{\infty} \frac{\ln \left(1-\left|r_1(s)\right|^2\right)}{s-\lambda} d s\right\}, \quad \lambda \in \mathbb{C} \backslash \R_+ ,
	$$
	and
	$$
	\delta_4(\lambda)=\exp \left\{\frac{1}{2 \pi i} \int^{-\infty}_0 \frac{\ln \left(1-\left|r_2(s)\right|^2\right)}{s-\lambda} d s\right\}, \quad \lambda \in \mathbb{C} \backslash \R_-,
	$$
	where we have denoted $\ln(z)$ as the real-valued function.
	\par
	Using the symmetries of functions $\delta_1$ and $\delta_4$ to define $\delta_j$ for $j=2,3,5,6$ as follows
	$$
	\begin{array}{ll}
		\delta_3( \lambda)=\delta_1\left( \omega^2 \lambda\right), & \lambda \in \mathbb{C} \backslash  \omega\R_+, \\
		\delta_5( \lambda)=\delta_1( \omega \lambda), & \lambda \in \mathbb{C} \backslash \omega\R_-,\\
		\delta_2( \lambda)=\delta_4\left( \omega^2 \lambda\right), & \lambda \in \mathbb{C} \backslash \omega^2\R_-, \\
		\delta_6( \lambda)=\delta_4( \omega \lambda), & \lambda \in \mathbb{C} \backslash \omega^2\R_+.
	\end{array}
	$$
	\par
	Define the matrix-valued function $\Delta$ as
	$$
	\Delta( \lambda)=\left(\begin{array}{ccc}
		\frac{\delta_1( \lambda)\delta_6( \lambda)}{\delta_3( \lambda)\delta_4( \lambda)} & 0 & 0 \\
		0 & \frac{\delta_5( \lambda)\delta_4( \lambda)}{\delta_1( \lambda)\delta_2( \lambda)} & 0 \\
		0 & 0 & \frac{\delta_3( \lambda)\delta_2( \lambda)}{\delta_5( \lambda)\delta_6( \lambda)}
	\end{array}\right)=
	\left(\begin{array}{ccc}
		\Delta_1(\lambda) & 0 & 0 \\
		0 & \Delta_2(\lambda) & 0 \\
		0 & 0 & \Delta_3(\lambda)
	\end{array}\right),
	$$
	and take the transformation
	$$
	\tilde M(x,t;\lambda)=M(x,t;\lambda)\Delta(\lambda).
	$$
	It is clear that
	$$
	\begin{aligned}
		\tilde v_1&=\left(\begin{array}{ccc}
			1-|r_1(\lambda)|^2 & -\frac{\tilde\delta_{v_1}}{\delta_{1-}^2}\frac{r_1(\lambda)}{1-|r_1(\lambda)|^2} e^{-\vartheta_{21}} & 0 \\
			\frac{\delta_{1+}^2 }{\tilde\delta_{v_1}} \frac{r_1^*(\lambda)}{1-|r_1(\lambda)|^2} e^{\vartheta_{21}} & 1 & 0 \\
			0 & 0 & 1
		\end{array}\right)\\
		&=\left(\begin{array}{ccc}
			1 & -\frac{\tilde\delta_{v_1}}{\delta_{1-}^2}\frac{r_1(\lambda)}{1-|r_1(\lambda)|^2} e^{-\vartheta_{21}} & 0 \\
			0 & 1 & 0 \\
			0 & 0 & 1
		\end{array}\right)\left(\begin{array}{ccc}
			1 & 0 & 0 \\
			\frac{\delta_{1+}^2 }{\tilde\delta_{v_1}} \frac{r_1^*(\lambda)}{1-|r_1(\lambda)|^2} e^{\vartheta_{21}} & 1 & 0 \\
			0 & 0 & 1
		\end{array}\right):=\tilde v_1^{lower}\tilde v_1^{upper},
	\end{aligned}
	$$
	and
	$$
	\begin{aligned}
		\tilde v_{4}^{(1)}&=\left(\begin{array}{ccc}
			1 & -\frac{\delta_{4+}^2}{\tilde\delta_{v_4}}\frac{r_{2}^{*}(\lambda)}{1-|r_2(\lambda)|^2} e^{-\vartheta_{21}} & 0 \\
			\frac{\tilde\delta_{v_4}}{\delta_{4-}^2}\frac{r_{2}(\lambda)}{1-|r_2(\lambda)|^2} e^{\vartheta_{21}} & 1-\left|r_{2}(\lambda)\right|^{2} & 0 \\
			0 & 0 & 1
		\end{array}\right)\\
		&=\left(\begin{array}{ccc}
			1 & 0 & 0 \\
			\frac{\tilde\delta_{v_4}}{\delta_{4-}^2}\frac{r_{2}(\lambda)}{1-|r_2(\lambda)|^2} e^{\vartheta_{21}} & 1 & 0 \\
			0 & 0 & 1
		\end{array}\right)\left(\begin{array}{ccc}
			1 & -\frac{\delta_{4+}^2}{\tilde\delta_{v_4}}\frac{r_{2}^{*}(\lambda)}{1-|r_2(\lambda)|^2} e^{-\vartheta_{21}} & 0 \\
			0 & 1 & 0 \\
			0 & 0 & 1
		\end{array}\right):=\tilde v_4^{upper}\tilde v_4^{lower},
	\end{aligned}
	$$
	where $\tilde\delta_{v_1}=\frac{\delta_3\delta^2_4\delta_5}{\delta_6\delta_2}$ and $\tilde\delta_{v_4}=\frac{\delta_1^2\delta_2\delta_6}{\delta_5\delta_3}$.
	%
	
	As in the analysis for $\xi > 1$, the solution of the RH problem for $\tilde{M}$ behaves $\tilde{M} \to I$ rapidly, and note that as $\lambda \to 0$, we have $r_1(\lambda), r_2(\lambda) \to 0$, which implies that the delta functions $\delta_1$ and $\delta_4$ are continuous at $\lambda = 0$. By using the symmetric properties, it follows that $\delta_1(0) = \delta_3(0) = \delta_5(0)$ and $\delta_4(0) = \delta_6(0) = \delta_2(0)$. It is clear that $\Delta \to I$ as $\lambda \to 0$, and furthermore, $u(x, t) \to 0$ as $t \to \infty$ for $\xi < -1$. So the following lemma holds immediately.

	\begin{lemma}\label{-1xi}
		For $\xi<-1$, as $t\to\infty$, the solution of RH problem \ref{MRHp} satisfies
		{
		$$
		\|M(x,t;\lambda)-I\|=\mathcal{O}\left(t^{-N}\right),
		$$}
		and the solution of the Tzitz\'eica equation (\ref{Tt}) decays to zero rapidly.
	\end{lemma}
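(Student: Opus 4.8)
The plan is to mirror the outside-the-light-cone argument of Lemma \ref{1xi}, the only genuinely new feature being that for $\xi<-1$ the sign of $\Re\vartheta_{21}$ is reversed, so $v_1,v_4$ no longer admit a direct triangular factorization: each now carries a nontrivial diagonal middle factor $\diag\big((1-|r_1|^2)^{-1},1-|r_1|^2,1\big)$ (and its analogue for $v_4$). To remove this obstruction I would first carry out the scalar $\delta$-conjugation set up above: introduce $\delta_1,\delta_4$ solving the scalar RH problems across $\R_+,\R_-$, define $\delta_2,\delta_3,\delta_5,\delta_6$ by the $\mathbb{Z}_3$ symmetry, assemble the diagonal matrix $\Delta(\lambda)$, and pass to $\tilde M=M\Delta$. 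By construction the diagonal factors are absorbed into the jumps across $\R_\pm$, so that $\tilde v_1,\tilde v_4$ (and, by symmetry, the remaining $\tilde v_j$) split into products $\tilde v^{upper}\tilde v^{lower}$ of upper- and lower-triangular matrices whose off-diagonal entries are bounded multiples of $r_{\bullet}e^{\mp\vartheta_{21}}$.

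Second, I would perform exactly the analytic/remainder splitting of Lemma \ref{1xi}: write each reflection coefficient as $r=r_a+r_r$ via a Fourier decomposition of $r\,(\lambda^2-i)^2$, deform the triangular factors off the real axis into the regions where Figure \ref{sign} guarantees $\Re(\pm\vartheta_{21})<0$, and observe that the analytic parts $r_a e^{\mp\vartheta_{21}}$ decay exponentially in $t$ while the remainders $r_r e^{\mp\vartheta_{21}}$ decay rapidly on the rotated real axes. Writing $w=\tilde V-I$ for the resulting jump, one obtains $w\in\dot L^3(\Sigma)$ with $\|w\|_\infty\to0$ rapidly; hence $I-\mathcal C_w$ is invertible for large $t$, the singular integral equation $\mu=I+\mathcal C_w\mu$ has a unique solution, and $\tilde M=I+\mathcal C(\mu w)\to I$ rapidly in the $\dot L^3(\Sigma)$ sense.

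Finally I would undo the conjugation, $M=\tilde M\Delta^{-1}$, and feed the result into the reconstruction formula \eqref{usolution}. The decisive point -- and the step I expect to be the main obstacle -- is the behavior of $\Delta$ at the endpoint $\lambda=0$ of the contour, where the scalar problems for $\delta_1,\delta_4$ are a priori singular. Here I would invoke Theorem \ref{r1r2prop}(2): since $r_1,r_2$ and their derivatives decay rapidly as $\lambda\to0$, the integrands $\ln(1-|r_{1,2}|^2)$ vanish to all orders at the origin, so each $\delta_j$ extends continuously and nonvanishingly to $\lambda=0$. The $\mathbb{Z}_3$ symmetry then forces $\delta_3(0)=\delta_5(0)=\delta_1(0)$ and $\delta_2(0)=\delta_6(0)=\delta_4(0)$, whereupon every diagonal entry of $\Delta(0)$ collapses to $1$, that is $\Delta(\lambda)\to I$ as $\lambda\to0$. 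Consequently $M(x,t,\lambda)\to I$ as $\lambda\to0$, so $[(\omega,\omega^2,1)M]_{13}\to1$ and $u(x,t)=\log\big(1+o(1)\big)\to0$ rapidly. The one subtlety to watch is uniformity of the small-norm estimates near $\lambda=0$, where the $\delta$-factors and the $(\lambda^2-i)^{-2}$ weights from the Fourier splitting interact; the rapid vanishing of $r_1,r_2$ at the origin is precisely what keeps these contributions controlled.
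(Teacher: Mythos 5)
Your proposal is correct and follows essentially the same route as the paper: factor out the diagonal middle term via the scalar functions $\delta_1,\delta_4$ (extended to $\delta_2,\delta_3,\delta_5,\delta_6$ by symmetry), conjugate by $\Delta$, run the same analytic/remainder splitting and small-norm argument as in the $\xi>1$ case, and then use the rapid vanishing of $r_1,r_2$ at the origin together with the $\mathbb{Z}_3$ symmetry to conclude $\delta_1(0)=\delta_3(0)=\delta_5(0)$, $\delta_2(0)=\delta_4(0)=\delta_6(0)$, hence $\Delta(0)=I$ and $u\to0$ via the reconstruction formula. This matches the paper's proof, including the key observation about the behavior of $\Delta$ at $\lambda=0$.
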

	
	\subsection{Proof of the Sector I \& II in Theorem \ref{uasy}} \label{SectorI}
	Building upon the  Lemma \ref{1xi} and Lemma \ref{-1xi},  under the condition $\left|\frac{x}{t}\right| > 1$ and as $t \to \infty$, the function $M(x, t, \lambda)$ converges to the identity matrix { $I+\mathcal{O}(t^{-N})$ for any positive integer $N$.} While for $\left|\frac{x}{t}\right|$ tends to $\infty$, the phase function in the jump matrix can be rewritten as $\vartheta_{21}:=\frac{-\mathrm{sgn}(x)\sqrt{3} i |x|(\lambda-\lambda^{-1})}{2}$, while retaining the signature of the phase function. Therefore, the function $M(x, t, \lambda)$ behaves like { $I+\mathcal{O}(|x|^{-N})$.} Moreover, leveraging the explicit expression for the solution to the Tzitz\'eica equation (\ref{Tt}), as articulated in \eqref{usolution}, the result in the Sector I \& II of Theorem \ref{uasy} is proved.

	\subsection{Long-time behavior inside the light cone $|\frac{x}{t}|<1$}
	In this subsection, the long-time asymptotics of Tzitz\'eica equation (\ref{Tt}) inside the light cone $|\frac{x}{t}|<1$ is formulated by deforming RH problem \ref{MRHp} step by step.
	
	\subsubsection{First transformation}
	Recall the signature of $\vartheta_{21}$ for $|\xi|<1$, and if $|\lambda|>|\lambda_0|$, for $\lambda\in\R$ the jump matrix has the same factorization as in the case of $\xi>1$.  On the other hand, if $\lambda$ lies inside of the circle in Figure \ref{sign}, the factorization is the same as in the case of $\xi<-1$. Similarly, introduce the scalar RH problem to cancel the diagonal matrix inside the circle, abusing the same symbol for the convenience. Find functions $\delta_1$ and $\delta_4$ satisfy
	$$
	\left\{\begin{aligned}
		\delta_{1+}(\lambda) & =\delta_{1-}(\lambda)\left(1-|r_1(\lambda)|^2\right), & & 0<\lambda<\lambda_0, \\
		& =\delta_{1-}(z), & & \text{elsewhere}, \\
		\delta_1(\lambda) & \rightarrow 1 & & \text { as } \lambda \rightarrow \infty,
	\end{aligned}\right.
	$$
	and
	$$
	\left\{\begin{aligned}
		\delta_{4+}(\lambda) & =\delta_{4-}(\lambda)\left(1-|r_2(\lambda)|^2\right), & & -\lambda_0<\lambda<0, \\
		& =\delta_{4-}(z), & & \text{elsewhere}, \\
		\delta_4(\lambda) & \rightarrow 1 & & \text { as } \lambda \rightarrow \infty.
	\end{aligned}\right.
	$$
	It is immediate to see that
	$$
	\delta_1(\lambda)=\exp \left\{\frac{1}{2 \pi i} \int_{0}^{\lambda_0} \frac{\ln \left(1-\left|r_1(s)\right|^2\right)}{s-\lambda} d s\right\}, \quad \lambda \in \mathbb{C} \backslash (0,\lambda_0 ),
	$$
	and
	$$
	\delta_4(\lambda)=\exp \left\{\frac{1}{2 \pi i} \int^{-\lambda_0}_0 \frac{\ln \left(1-\left|r_2(s)\right|^2\right)}{s-\lambda} d s\right\}, \quad \lambda \in \mathbb{C} \backslash (-\lambda_0,0) .
	$$
	
	Suppose that $\log_{\theta}(z)$ denotes the logarithm of $z$ with the branch cut along $\arg z=\theta$, in detail, for $\theta=0$, that is
	$$
	\log_0(z)=\ln|z|+arg_0(z),\ arg_0(z)\in(0,2\pi).
	$$
	\begin{proposition}
		The functions $\delta_1(\lambda)$ and $\delta_4(\lambda)$ are piecewise analytic for $\lambda\in(0,\lambda_0)$ and $(-\lambda_0,0)$, respectively, and have the  properties as below:
		\begin{enumerate}
			\item The function $\delta_1(\lambda)$ can be rewritten as
			$$
			\delta_1( \lambda)=e^{i \nu_1 \log_{-\pi}\left(\lambda-\lambda_0\right)} e^{-\chi_1( \lambda)},
			$$
			where
			\begin{align}\label{nu1def}
				\nu_1=-\frac{1}{2 \pi} \ln \left(1-\left|r_1\left(\lambda_0\right)\right|^2\right),
			\end{align}
			and
			$$
			\chi_1( \lambda)=\frac{1}{2 \pi i} \int_{0}^{\lambda_0} \log_{-\pi}(\lambda-s) d \ln \left(1-\left|r_1(s)\right|^2\right) .
			$$
			Besides, the function $\delta_4(\lambda)$ can be rewritten as
			$$
			\delta_4( \lambda)=e^{i \nu_4 \log_{0}\left(\lambda+\lambda_0\right)} e^{-\chi_4( \lambda)},
			$$
			where
			\begin{align}\label{nu4def}
				\nu_4=-\frac{1}{2 \pi} \ln \left(1-\left|r_2\left(-\lambda_0\right)\right|^2\right),
			\end{align}
			and
			$$
			\chi_4( \lambda)=\frac{1}{2 \pi i} \int_{0}^{-\lambda_0} \log_0(\lambda-s) d \ln \left(1-\left|r_2(s)\right|^2\right).
			$$
			
			\item The functions $\delta_1( \lambda)$ and $\delta_4( \lambda)$ are piecewise analytic expect for $\lambda\in(0,\lambda_0)$ and $(-\lambda_0,0)$, with continuous boundary values on $[0,\lambda_0) $ and $(-\lambda_0,0]$, respectively.
			
			\item  The functions $\delta_1( \lambda)$ and $\delta_4( \lambda)$ satisfy the conjugate symmetry and are bounded in $|\lambda|<\lambda_0$, in addition, the properties below hold
			$$
			\delta_1( \lambda)={(\overline{\delta_1( \bar{\lambda})})^{-1}}, \  k \in \mathbb{C} \backslash (0,\lambda_0),\quad
			\delta_4( \lambda)={(\overline{\delta_4( \bar{\lambda})})^{-1}}, \  k \in \mathbb{C} \backslash (-\lambda_0,0),
			$$
			and
			$$
			|\delta_{1\pm}(\lambda)|<\infty,\quad 0<\lambda<\lambda_0,\quad
			|\delta_{4\pm}(\lambda)|<\infty,\quad -\lambda_0<\lambda<0.
			$$
			Moreover, $|\delta_j(\lambda)|$ and $|\delta_j^{-1}(\lambda)|~(j=1,4)$ are bounded.
			
			\item  As $\lambda\to \pm \lambda_0$ along with paths which are not tangential to $\lambda<\lambda_0$ and $\lambda>-\lambda_0$, respectively, we have
			\begin{align*}
				& \left|\chi_1( \lambda)-\chi_1\left( \lambda_0\right)\right| \leq C\left|\lambda-\lambda_0\right|\left(1+|\ln | \lambda-\lambda_0||\right), \\
				& \left|\chi_4( \lambda)-\chi_4\left(-\lambda_0\right)\right| \leq C\left|\lambda+\lambda_0\right|\left(1+|\ln | \lambda+\lambda_0||\right). \\
			\end{align*}
		\end{enumerate}
	\end{proposition}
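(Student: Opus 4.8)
The plan is to establish all four parts for $\delta_1$ in detail and to obtain the statements for $\delta_4$ by the obvious substitutions (replacing $\log_{-\pi}$ by $\log_0$, $r_1$ by $r_2$, the interval $(0,\lambda_0)$ by $(-\lambda_0,0)$, and $\nu_1$ by $\nu_4$), since the two functions play symmetric roles. The starting point for part (1) is the explicit representation of $\delta_1$, which I would rewrite by integration by parts. Using the identity $\frac{1}{s-\lambda}=\frac{d}{ds}\log_{-\pi}(\lambda-s)$, the defining integral for $\log\delta_1$ becomes a boundary term plus a remaining integral. The boundary contribution at $s=0$ vanishes because Theorem \ref{r1r2prop} guarantees that $r_1(s)\to 0$ rapidly as $s\to 0$, so that $\ln(1-|r_1(0)|^2)=0$; the contribution at $s=\lambda_0$ equals $-2\pi\nu_1\log_{-\pi}(\lambda-\lambda_0)$ by the definition \eqref{nu1def} of $\nu_1$. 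Dividing by $2\pi i$ turns the factor $-2\pi\nu_1$ into $i\nu_1$, while the leftover integral is precisely $-\chi_1(\lambda)$, which yields the claimed factorization $\delta_1(\lambda)=e^{i\nu_1\log_{-\pi}(\lambda-\lambda_0)}e^{-\chi_1(\lambda)}$.

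For part (2), analyticity off $(0,\lambda_0)$ is immediate, since $\log\delta_1$ is a Cauchy transform of the density $\ln(1-|r_1(s)|^2)$, which is smooth on $[0,\lambda_0]$ by the smoothness of $r_1$ from Theorem \ref{r1r2prop} together with $1-|r_1|^2>0$ under Assumption \ref{solitonless}; the existence and continuity of the boundary values on $[0,\lambda_0)$ then follow from the Plemelj--Sokhotski formulas. For part (3), the conjugation symmetry follows because $|r_1(s)|^2$ is real on the real axis: conjugating the defining integral replaces $s-\lambda$ by $s-\bar\lambda$ and reverses the sign coming from $\frac{1}{2\pi i}$, giving $\overline{\log\delta_1(\lambda)}=-\log\delta_1(\bar\lambda)$, i.e. $\delta_1(\lambda)=(\overline{\delta_1(\bar\lambda)})^{-1}$. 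Boundedness of $|\delta_1|$ and $|\delta_1^{-1}|$ I would read directly off the factorization of part (1): one has $|e^{i\nu_1\log_{-\pi}(\lambda-\lambda_0)}|=e^{-\nu_1\arg_{-\pi}(\lambda-\lambda_0)}$, which is bounded above and below because $\nu_1\ge 0$ and the argument ranges over a bounded interval, while $e^{-\chi_1(\lambda)}$ is controlled once the endpoint estimate of part (4) has been proved.

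The main obstacle is part (4), the endpoint estimate. I would first write $d\ln(1-|r_1(s)|^2)=h(s)\,ds$ with $h$ smooth and bounded on $[0,\lambda_0]$, and then estimate
$$\chi_1(\lambda)-\chi_1(\lambda_0)=\frac{1}{2\pi i}\int_0^{\lambda_0}\bigl[\log_{-\pi}(\lambda-s)-\log_{-\pi}(\lambda_0-s)\bigr]h(s)\,ds.$$
The strategy is to split the interval of integration into the portion near $s=\lambda_0$ of length comparable to $|\lambda-\lambda_0|$ and its complement. On the near portion each logarithm is individually integrable, and integrating $\ln|\lambda_0-s|$ over a set of length $\sim|\lambda-\lambda_0|$ produces a contribution of size $|\lambda-\lambda_0|\,(1+|\ln|\lambda-\lambda_0||)$. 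On the far portion the difference of logarithms is $\mathcal{O}(|\lambda-\lambda_0|/|\lambda_0-s|)$, and integrating $1/|\lambda_0-s|$ over the far region supplies exactly the logarithmic factor. Combining the two pieces yields $|\chi_1(\lambda)-\chi_1(\lambda_0)|\le C|\lambda-\lambda_0|(1+|\ln|\lambda-\lambda_0||)$, and the same argument with $\log_0$ gives the bound for $\chi_4$. The delicate points, where I expect the bulk of the technical work to lie, are keeping the branch of the logarithm consistent as $\lambda\to\pm\lambda_0$ along non-tangential paths and verifying that the constant $C$ is uniform.
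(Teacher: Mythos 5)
Your treatment of parts (1), (3) and (4) — integration by parts to extract the $e^{i\nu_1\log_{-\pi}(\lambda-\lambda_0)}$ factor, conjugation of the Cauchy integral for the symmetry, and the near/far splitting for the endpoint estimate — is correct and is exactly the "standard analysis" the paper invokes without writing out; the sign bookkeeping in (1) checks out ($\tfrac{1}{2\pi i}\ln(1-|r_1(\lambda_0)|^2)=\tfrac{-2\pi\nu_1}{2\pi i}=i\nu_1$), and the non-tangential restriction is indeed what keeps the constant in (4) uniform.

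The one place you are too quick is part (2), and it happens to be the only step the paper actually writes out. You claim continuity of the boundary values on the half-open interval $[0,\lambda_0)$ "from the Plemelj--Sokhotski formulas," but Plemelj only gives boundary values on the open interval $(0,\lambda_0)$; at the endpoint $\lambda=0$ the Cauchy transform of a density $f$ generically has a logarithmic singularity $\sim -\tfrac{1}{2\pi i}f(0)\log\lambda$, and even when $f(0)=0$ one still needs $f(s)/s$ integrable near $s=0$ to get a finite limit. The paper's proof consists precisely of this check: $\delta_1(0)=\exp\{\tfrac{1}{2\pi i}\int_0^{\lambda_0}\tfrac{\ln(1-|r_1(s)|^2)}{s}\,ds\}$ converges because $\ln(1-|r_1(s)|^2)\sim -|r_1(s)|^2$ and $r_1$ decays rapidly at the origin (Theorem \ref{r1r2prop}(2)); this is also what later guarantees $\delta_1(0)=\delta_3(0)=\delta_5(0)$ and hence $D(\lambda)\to I$ as $\lambda\to 0$. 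You already invoke the rapid vanishing of $r_1$ at $s=0$ to kill the boundary term in your integration by parts, so the ingredient is present in your write-up — you just need to apply it once more to conclude continuity and boundedness of $\delta_1$ at the endpoint $\lambda=0$ rather than attributing that endpoint to Plemelj.
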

	\begin{proof}
		The proof of (1), (3) and (4) is standard analysis. For the (2), we claim that both $\delta_1$ and $\delta_4$ are continuous at $\lambda=0$. First of all, notice that $r_j(0)=0$ for $j=1,2$, so that the jump condition is continuous at $\lambda=0$. Then substituting $\lambda=0$ into the $\delta_1(\lambda)$, it yields
		$$
		\delta_1(0)=\exp \left\{\frac{1}{2 \pi i} \int_{0}^{\lambda_0} \frac{\ln \left(1-\left|r_1(s)\right|^2\right)}{s} d s\right\}.
		$$
		By $\frac{\ln(1-|r_1(s)|^2)}{s}=-\frac{|r_1(s)|^2}{s}$ and $|r_1(s)|$ decays rapidly as $s\to0$,  it is concluded that $\delta_1(0)$ is bounded, which implies that $\delta_1$ can be continuous to $\lambda=0$.
	\end{proof}
	Further, introduce the functions $\delta_j(\lambda)$ for $j=2,3,5,6$ as follows
	$$
	\begin{array}{ll}
		\delta_3( \lambda)=\delta_1\left( \omega^2 \lambda\right), & \lambda \in \mathbb{C} \backslash (0,\omega\lambda_0), \\
		\delta_5( \lambda)=\delta_1( \omega \lambda), & \lambda \in \mathbb{C} \backslash (0,\omega^2\lambda_0),\\
		\delta_2( \lambda)=\delta_4\left( \omega \lambda\right), & \lambda \in \mathbb{C} \backslash (-\omega^2\lambda_0,0), \\
		\delta_6( \lambda)=\delta_4( \omega^2 \lambda), & \lambda \in \mathbb{C} \backslash (-\omega\lambda_0,0),
	\end{array}
	$$
	which satisfy the jump conditions
	$$
	\begin{array}{ll}
		\delta_{3+}( \lambda)=\delta_{3-}( \lambda)\left(1-\left|r_1\left(\omega^2 \lambda\right)\right|^2\right), & \lambda \in (0,\omega\lambda_0), \\
		\delta_{5+}( \lambda)=\delta_{5-}( \lambda)\left(1-\left|r_1(\omega \lambda)\right|^2\right), & \lambda \in (0,\omega^2\lambda_0),\\
		\delta_{2+}( \lambda)=\delta_{2-}( \lambda)\left(1-\left|r_2\left(\omega \lambda\right)\right|^2\right), & \lambda \in (-\omega^2\lambda_0,0), \\
		\delta_{6+}( \lambda)=\delta_{6-}( \lambda)\left(1-\left|r_2(\omega^2 \lambda)\right|^2\right), & \lambda \in (-\omega\lambda_0,0) .
	\end{array}
	$$
	\par
	Now, define the matrix-valued function $M^{(1)}(x,t;\lambda)$ to deform that original RH problem \ref{MRHp} for function $M(x,t;\lambda)$ as
	$$
	M^{(1)}(x,t;\lambda):=M(x,t;\lambda)D(\lambda),
	$$
	where the function $D(\lambda)$ is defined by
	$$
	D(\lambda)=\left(\begin{array}{ccc}
		\frac{\delta_1( \lambda)\delta_6( \lambda)}{\delta_3( \lambda)\delta_4( \lambda)} & 0 & 0 \\
		0 & \frac{\delta_5( \lambda)\delta_4( \lambda)}{\delta_1( \lambda)\delta_2( \lambda)} & 0 \\
		0 & 0 & \frac{\delta_3( \lambda)\delta_2( \lambda)}{\delta_5( \lambda)\delta_6( \lambda)}
	\end{array}\right).
	$$
	In fact, since both $\delta_1(\lambda)$ and $\delta_4(\lambda)$ are continuous at $\lambda=0$, it follows that $\delta_1(0)=\delta_3(0)=\delta_5(0)$ and $\delta_2(0)=\delta_4(0)=\delta_6(0)$ by using  the symmetries as above which implies that $D(\lambda)=I+\mathcal{O}(\lambda)$ as $\lambda\to 0$.
	Then the jump matrices for function $M^{(1)}(x,t;\lambda)$ are
	$$
	V^{(1)}(x,t;\lambda)=
	\begin{cases}
		\begin{aligned}
			&\left(\begin{array}{ccc}
				1-|r_1(\lambda)|^2 & -\frac{\tilde\delta_{v_1}}{\delta_{1-}^2}\frac{r_1(\lambda)}{1-|r_1(\lambda)|^2} e^{-\vartheta_{21}} & 0 \\
				\frac{\delta_{1+}^2 }{\tilde\delta_{v_1}} \frac{r_1^*(\lambda)}{1-|r_1(\lambda)|^2} e^{\vartheta_{21}} & 1 & 0 \\
				0 & 0 & 1
			\end{array}\right), && 0<\lambda<\lambda_0,\\
			
			&\left(\begin{array}{ccc}
				1 & -\frac{\tilde\delta_{v1} }{\delta_{1}^2} r_1(\lambda) e^{-\vartheta_{21}} & 0 \\
				\frac{\delta_{1}^2 }{\tilde\delta_{v1}} r_1^*(\lambda) e^{\vartheta_{21}} & 1-r_1(\lambda) r_1^*(\lambda)& 0 \\
				0 & 0 & 1
			\end{array}\right), &&\lambda_0<\lambda, \\
			
			&\left(\begin{array}{ccc}
				1 & -\frac{\delta_{4+}^2}{\tilde\delta_{v_4}}\frac{r_{2}^{*}(\lambda)}{1-|r_2(\lambda)|^2} e^{-\vartheta_{21}} & 0 \\
				\frac{\tilde\delta_{v_4}}{\delta_{4-}^2}\frac{r_{2}(\lambda)}{1-|r_2(\lambda)|^2} e^{\vartheta_{21}} & 1-\left|r_{2}(\lambda)\right|^{2} & 0 \\
				0 & 0 & 1
			\end{array}\right),&& -\lambda_0<\lambda<0,\\
			
			&\left(\begin{array}{ccc}
				1-\left|r_{2}(\lambda)\right|^{2} & -\frac{\delta_{4}^2}{\tilde\delta_{v_4}}{r_{2}^{*}(\lambda)} e^{-\vartheta_{21}} & 0 \\
				\frac{\tilde\delta_{v_4}}{\delta_{4}^2}{r_{2}(\lambda)} e^{\vartheta_{21}} & 1 & 0 \\
				0 & 0 & 1
			\end{array}\right),&&\lambda<-\lambda_0,\\
		\end{aligned}
	\end{cases}
	$$
	with $\tilde\delta_{v_1}=\frac{\delta_3\delta^2_4\delta_5}{\delta_6\delta_2}$ and $\tilde\delta_{v_4}=\frac{\delta_1^2\delta_2\delta_6}{\delta_5\delta_3}$, and other jump matrices can be gotten by the symmetries. Moreover, the jump contour of the RH problem for function $M^{(1)}(x,t;\lambda)$
	is $\Sigma^{(1)}=\Sigma.$

	\subsubsection{Second transformation}
	
	Suppose that
	$$
	\rho_1(\lambda)=\frac{r_1(\lambda)}{1-r_1(\lambda)r_1^{*}(\lambda)}~~{\rm for}~~\ 0<\lambda<\lambda_0, \quad
	\rho_2(\lambda)=\frac{r_2^*(\lambda)}{1-r_2(\lambda)r_2^{*}(\lambda)}~~{\rm for}~~\,  -\lambda_0<\lambda<0.
	$$
	\begin{lemma}\label{lemma-decomposition}
		The functions $r_j(\lambda)$ and $\rho_j(\lambda)$ for $j=1,2$ have the following decompositions
		$$
		\begin{array}{ll}
			r_1(\lambda)=r_{1, a}(x, t, \lambda)+r_{1, r}(x, t, \lambda), & \lambda \in\left[\lambda_0, \infty\right), \\
			r_2^*(\lambda)=r_{2, a}^*(x, t, \lambda)+r_{2, r}^*(x, t, \lambda), & \lambda \in\left( -\infty,-\lambda_0\right], \\
			\rho_1(\lambda)=\rho_{1, a}(x, t, \lambda)+\rho_{1, r}(x, t, \lambda), & \lambda \in\left[0, \lambda_0\right),\\
			\rho_2(\lambda)=\rho_{2, a}(x, t, \lambda)+\rho_{2, r}(x, t, \lambda), & \lambda \in(-\lambda_0,0].
		\end{array}
		$$
		Additionally, the decomposition functions have the properties below:
		\begin{enumerate}
			\item  The functions $r_{1,a}$ and $r_{2,a}^*$ can be analytically continuous to $\lambda_0<\lambda,\Im \lambda>0$ and  $\lambda<-\lambda_0,\Im \lambda>0$, respectively. The functions $\rho_{1,a},\rho_{2,a}$  are able to analytically continuous to $0<\lambda<\lambda_0,\Im \lambda<0$ and  $-\lambda_0<\lambda<0,\Im \lambda<0$, respectively.
			
			\item  Recall that $\vartheta_{21}=\frac{(\omega^2-\omega)t}{2}[(\lambda-\lambda^{-1}) \xi+(\lambda+\lambda^{-1})]$ and denote $\theta_{21}=\frac{-\sqrt{3}it[(\lambda-\lambda^{-1})\xi+(\lambda+\lambda^{-1})]}{4}$, then the functions $r_{1,a},r_{2,a}^*$ and $\rho_{1,a},\rho_{2,a}$ satisfy the following properties:
			\begin{align*}
				&\left|r_{1, a}(x, t, \lambda)-r_1(\lambda_0)\right| \leq C{e^{-\frac{t}{4}\operatorname{Re} \theta_{21}( \lambda)}}|\lambda-\lambda_0|,\quad \Re\lambda>\lambda_0 \ \text{and}\ \Im(\lambda)>0,\\
				&\left|r_{2, a}^*(x, t, \lambda)-r_2^*(-\lambda_0)\right| \leq C{e^{-\frac{t}{4}\operatorname{Re} \theta_{21}( \lambda)}}|\lambda+\lambda_0|,\quad \Re\lambda<-\lambda_0\ \text{and}\ \Im(\lambda)>0.\\
			\end{align*}
			Moreover, it follows
			\begin{align*}
				&\left|\rho_{1, a}(x, t, \lambda)-\rho_1(\lambda_0)\right| \leq {C e^{-\frac{t}{4}\operatorname{Re} \theta_{21}( \lambda)}|\lambda-\lambda_0|}, \quad \lambda\in B_{\lambda_0}(0)\cap \Im\lambda>0\ \text{and}\ \frac{\lambda_0}{2}<\Re\lambda<\lambda_0,\\
				&\small \left|\rho_{2, a}(x, t, \lambda)-\rho_2(-\lambda_0)\right| \leq {C e^{-\frac{t}{4}\operatorname{Re} \theta_{21}( \lambda)}|\lambda+\lambda_0|},\lambda\in B_{\lambda_0}(0)\cap \Im\lambda>0\ \text{and}\ -\lambda_0<\Re\lambda<\frac{-\lambda_0}{2},
			\end{align*}
			and
			\begin{align*}
				&\left|\rho_{1, a}(x, t, \lambda)\right| \leq {C|\lambda|e^{-\frac{t}{4}\operatorname{Re} \theta_{21}( \lambda)}}, \quad \lambda\in B_{\lambda_0}(0)\cap \Im\lambda>0\ \text{and}\ 0<\Re\lambda<\frac{\lambda_0}{2},\\
				&\left|\rho_{2, a}(x, t, \lambda)\right| \leq {C|\lambda|e^{-\frac{t}{4}\operatorname{Re} \theta_{21}( \lambda)}}, \quad \lambda\in B_{\lambda_0}(0)\cap \Im\lambda>0\ \text{and}\ \frac{-\lambda_0}{2}<\Re\lambda<0.
			\end{align*}

			\item  For $1\le p\le\infty$,  the functions  $r_{j,r} $ and $\rho_{j,r}$ satisfy
			\begin{align*}
				&\left\| (1+|\cdot|)r_{1, r}(x, t, \lambda)e^{-2t\theta_{21}}\right\|_{L^p{(\lambda_0,\infty)}} \leq \frac{c}{t^{3/2}}\quad \lambda_0<\lambda,\\
				&\left\| (1+|\cdot|)r_{2, r}^*(x, t, \lambda)e^{-2t\theta_{21}}\right\|_{L^p(-\infty,-\lambda_0)} \leq \frac{c}{t^{3/2}}\quad \lambda<-\lambda_0,
			\end{align*}
			and
			$$
			\left\|(1+{|\cdot|})\rho_{1, r}(x, t, \cdot)e^{2t\theta_{21}}\right\|_{L^p(0,\lambda_0)} \leq \frac{c}{t^{3/2}},\
			\left\|(1+|\cdot|)\rho_{2, r}(x, t, \cdot)e^{2t\theta_{21}}\right\|_{L^p(-\lambda_0,0)} \leq \frac{c}{t^{3/2}}.
			$$
			In particular, { $\rho_{j,r}(x,t;\lambda)=\mathcal{O}(t^{-N})$} for $0<|\lambda|<\frac{\lambda_0}{2}$.
		\end{enumerate}

	\end{lemma}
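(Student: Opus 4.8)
The plan is to treat this as the standard analytic/remainder splitting of the nonlinear steepest descent method, refining the Fourier-cut argument already used in the proof of Lemma \ref{1xi} to the situation where the phase $\theta_{21}$ now possesses a genuine stationary point at the finite endpoint $\lambda_0$. By the $\mathbb{Z}_3$ and $\mathbb{Z}_2$ symmetries relating the scattering data on $\mathbb{R}_+$ and $\mathbb{R}_-$ (Propositions \ref{sprop} and \ref{sAprop}), it suffices to carry out the construction for $r_1$ on $[\lambda_0,\infty)$ and for $\rho_1$ on $[0,\lambda_0)$; the decompositions of $r_2^*$ and $\rho_2$ then follow verbatim after applying the reflection $\lambda\mapsto-\lambda$ and conjugation. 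Throughout I would use that $r_1$ is smooth with all derivatives rapidly decaying, and that $r_1(\lambda)\to 0$ rapidly as $\lambda\to 0$ (Theorem \ref{r1r2prop}).

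First I would record that on each interval $\theta_{21}$ is real-analytic, real-valued and strictly monotone, with $\theta_{21}'$ vanishing to first order exactly at $\lambda_0$; hence $\lambda$ is a smooth function of $\theta=\theta_{21}$ away from $\lambda_0$, with a square-root Jacobian $d\lambda/d\theta\sim|\theta-\theta_{21}(\lambda_0)|^{-1/2}$ at the stationary endpoint. I would then build the analytic part by Taylor subtraction at $\lambda_0$: writing $r_1(\lambda)=P(\lambda)+(\lambda-\lambda_0)^{N}q_1(\lambda)$ with $P$ the Taylor polynomial and $q_1$ smooth, set $g_1(\lambda)=q_1(\lambda)/(\lambda-i)^2$ (the rational factor chosen to make $g_1\circ\lambda(\cdot)$ and its derivatives integrable in $\theta$ and to force decay of the continuation), take the Fourier transform $\hat g_1$ in $\theta$, and split $\int_{\mathbb R}=\int_{-\infty}^{t}+\int_{t}^{\infty}$ exactly as in Lemma \ref{1xi}. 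The low-frequency piece continues analytically into $\{\operatorname{Re}\lambda>\lambda_0,\ \operatorname{Im}\lambda>0\}$, which is precisely the region where $e^{-\vartheta_{21}}$ decays by the signature analysis preceding this subsection; reassembling with the polynomial part yields $r_{1,a}$, whose value at $\lambda_0$ equals $r_1(\lambda_0)$ and which is Lipschitz there, giving the prefactor $|\lambda-\lambda_0|$ together with the exponential factor $e^{-\frac t4\operatorname{Re}\theta_{21}}$. The high-frequency piece defines $r_{1,r}$, which stays on $[\lambda_0,\infty)$.

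For $\rho_1$ the same recipe applies on $[0,\lambda_0)$, with two modifications. Because $\rho_1=r_1/(1-|r_1|^2)$ contains the non-analytic factor $|r_1|^2=r_1(\lambda)\overline{r_1(\bar\lambda)}$, one cannot continue the closed-form expression; instead I would regard $\rho_1$ as a smooth function on the interval and apply the Fourier-cut construction to it directly, continuing the low-frequency part into $\{\operatorname{Im}\lambda<0\}$, where now $e^{\vartheta_{21}}$ decays. Using the rapid vanishing $r_1(\lambda)\to0$ as $\lambda\to0$ gives $\rho_1(\lambda)=O(|\lambda|)$ near the origin, hence the sharper bound $|\rho_{1,a}|\le C|\lambda|e^{-\frac t4\operatorname{Re}\theta_{21}}$ on $0<\operatorname{Re}\lambda<\lambda_0/2$; inserting a smooth cutoff equal to $1$ near $\lambda_0$ and supported in $\{\operatorname{Re}\lambda>\lambda_0/2\}$ separates the two ranges of the estimate and produces the vanishing statement $\rho_{j,2}=0$ for $0<|\lambda|<\lambda_0/2$.

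The main obstacle is the sharp rate $t^{-3/2}$ for the remainders $r_{j,r},\rho_{j,r}$ in the weighted $L^p$ norms. Since $e^{\pm2t\theta_{21}}$ is unimodular on the real contour, this reduces to bounding $\|(1+|\cdot|)r_{1,r}\|_{L^p}$, and the rate does \emph{not} follow from smoothness alone: through the change of variables the transform $\hat g_1(s)$ inherits only a power-type decay dictated by the behavior of $g_1\circ\lambda(\theta)$ at the stationary endpoint, where the square-root Jacobian degrades regularity. The number $N$ of subtracted Taylor terms must therefore be chosen so that $g_1$ vanishes to high enough order at $\lambda_0$ (so that $g_1\circ\lambda$ behaves like $(\theta-\theta_{21}(\lambda_0))^{3/2}$, giving $\hat g_1(s)=O(s^{-5/2})$ and hence the tail $\int_t^\infty|\hat g_1|\,ds=O(t^{-3/2})$); the rational factor $(\lambda-i)^{-2}$ together with the prefactor $(\lambda-\lambda_0)^{N}$ then supplies $L^p$ integrability with the weight $(1+|\cdot|)$. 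Carrying this through, using the rapid decay of $r_1$ at infinity to control the far field, and verifying that all constants are uniform for $(x,t)$ with $|\xi|<1$ constitutes the technical heart of the argument; the analytic-continuation bounds of items (1) and (2) are, by contrast, routine once the change of variables and the signature of $\operatorname{Re}\vartheta_{21}$ are in hand.
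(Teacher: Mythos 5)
Your proposal takes essentially the same route as the paper: the paper's entire proof consists of rewriting $\theta_{21}$ to expose the stationary point at $\pm\lambda_0$ and then deferring to the rational-approximation-plus-Fourier-splitting framework of Cheng, Venakides and Zhou for the sine-Gordon equation, which is precisely the construction you spell out (Taylor subtraction at $\lambda_0$, rational factor to force decay, Fourier cut at $s=t$, and the order of vanishing at the stationary point chosen to secure the $t^{-3/2}$ remainder). The only caveat is a sign slip in your $\rho_1$ paragraph --- inside $B_{\lambda_0}(0)$ with $\Im\lambda<0$ it is $e^{-\vartheta_{21}}$, not $e^{\vartheta_{21}}$, that decays --- but the paper's own statement of the lemma is internally inconsistent on this very point, and the slip does not affect the construction.
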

	
	\begin{proof}
		Notice that
		$$
		\begin{aligned}
			\theta_{21}=\frac{-\sqrt{3}it[(\lambda-\lambda^{-1})\xi+(\lambda+\lambda^{-1})]}{4}
			=\frac{-\sqrt{3}it}{2(1+\lambda_0^2)}(\lambda+\lambda_0^2\lambda^{-1}):=-\sqrt{3}it\theta(\lambda;\lambda_0).
		\end{aligned}
		$$
		The subsequent segment of proof parallels the analytical framework employed in studying the sine-Gordon equation by Cheng, Venakides and Zhou \cite{Zhou-Cpde-singordan}.
	\end{proof}
	
	Open lenses of the contour $\Sigma^{(1)}$ from the critical points, then a new jump contour $\Sigma^{(2)}$ is obtained in Figure \ref{second tranform}. Thus it is ready to take the next deformation.
	Define the matrix-valued function \(M^{(2)}(x,t;\lambda)\) in the vicinity of point \(\lambda_0\) to deform the RH problem for function $M^{(1)}(x,t;\lambda)$ as
	%
	\begin{equation}\label{M12-1}
		M^{(2)}(x,t;\lambda)=\begin{cases}\begin{aligned}
				& M^{(1)}(x,t;\lambda) \left(\begin{array}{ccc}
					1 & 0 & 0 \\
					\frac{\delta_{1}^2 }{\tilde\delta_{v1}}r^*_{1,a} e^{ \vartheta_{21}} & 1 & 0 \\
					0 & 0 & 1
				\end{array}\right) && \text{below }\Sigma_6^{(2)} \text{and  above } \Sigma_4^{(2)},\\
				
				& M^{(1)}(x,t;\lambda) \left(\begin{array}{ccc}
					1 & \frac{\tilde\delta_{v1} }{\delta_{1}^2}r_{1,a} e^{- \vartheta_{21}} & 0 \\
					0 & 1 & 0 \\
					0 & 0 & 1
				\end{array}\right) && \text{below }\Sigma_1^{(2)} \text{and  above } \Sigma_6^{(2)},\\
				
				& M^{(1)}(x,t;\lambda) \left(\begin{array}{ccc}
					1 & -\frac{\tilde\delta_{v_1}}{\delta_{1-}^2}\rho_{1,a} e^-{ \vartheta_{21}} & 0 \\
					0 & 1 & 0 \\
					0 & 0 & 1
				\end{array}\right) && \text{below }\Sigma_5^{(2)} \text{and  above } \Sigma_3^{(2)},\\
				
				& M^{(1)}(x,t;\lambda) \left(\begin{array}{ccc}
					1 & 0  & 0 \\
					-\frac{\delta_{1+}^2 }{\tilde\delta_{v_1}} \rho^*_{1,a}e^{ \vartheta_{21}} & 1 & 0 \\
					0 & 0 & 1
				\end{array}\right) && \text{below }\Sigma_2^{(2)} \text{and  above } \Sigma_5^{(2)},\\
				
				&M^{(1)}(x,t;\lambda) && \text{otherelse}.
			\end{aligned}
		\end{cases}
	\end{equation}
	Similarly, for near $-\lambda_0$, the definition of function \(M^{(2)}(x,t;\lambda)\) is given by
	\begin{equation}\label{M12-2}
		M^{(2)}(x,t;\lambda)=\begin{cases}\begin{aligned}
				& M^{(1)}(x,t;\lambda) \left(\begin{array}{ccc}
					1 & 0 & 0 \\
					-\frac{\tilde\delta_{v_4}}{\delta_{4}^2}r_{2,a} e^{ \vartheta_{21}} & 1 & 0 \\
					0 & 0 & 1
				\end{array}\right) && \text{below }\Sigma_{12}^{(2)} \text{and  above } \Sigma_8^{(2)},\\
				
				& M^{(1)}(x,t;\lambda) \left(\begin{array}{ccc}
					1 & -\frac{\delta_{4}^2}{\tilde\delta_{v_4}}r_{2,a}^* e^{- \vartheta_{21}} & 0 \\
					0 & 1 & 0 \\
					0 & 0 & 1
				\end{array}\right) && \text{below }\Sigma_7^{(2)} \text{and  above } \Sigma_{12}^{(2)},\\
				
				& M^{(1)}(x,t;\lambda) \left(\begin{array}{ccc}
					1 & \frac{\delta_{4+}^2}{\tilde\delta_{v_4}}\rho_{2,a}^{*} e^{- \vartheta_{21}} & 0 \\
					0 & 1 & 0 \\
					0 & 0 & 1
				\end{array}\right) && \text{below }\Sigma_{11}^{(2)} \text{and  above } \Sigma_9^{(2)},\\
				
				& M^{(1)}(x,t;\lambda) \left(\begin{array}{ccc}
					1 & 0  & 0 \\
					\frac{\tilde\delta_{v_4}}{\delta_{4-}^2}\rho_{2,a}e^{ \vartheta_{21}} & 1 & 0 \\
					0 & 0 & 1
				\end{array}\right) && \text{below }\Sigma_{10}^{(2)} \text{and  above } \Sigma_{11}^{(2)},\\
				
				&M^{(1)}(x,t;\lambda) && \text{otherelse}.
			\end{aligned}
		\end{cases}
	\end{equation}
	
	Further factorizations for the situations in the vicinity of \(\pm\omega\lambda_0, \pm\omega^2\lambda_0\) can be similarly defined. However, the details of all transformations are not presented here for simplicity.
	
	\begin{lemma}
		The matrices in transformations (\ref{M12-1}) and (\ref{M12-2}) are uniformly bounded for $\CC\setminus{\Sigma^{(2)}}$ for $t>0$ and as $\lambda\to\infty$ and behave as $I+\mathcal{O}(\frac{1}{\lambda})$. Moreover, as $\lambda\to 0$, they converge to $I$, rapidly.
	\end{lemma}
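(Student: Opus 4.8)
The plan rests on the observation that every matrix appearing in \eqref{M12-1} and \eqref{M12-2} is an elementary triangular factor, equal to $I$ except in one off-diagonal slot. Hence all three assertions reduce to estimates on a single entry, and since the sectors near $-\lambda_0,\pm\omega\lambda_0,\pm\omega^2\lambda_0$ are obtained from those near $\lambda_0$ by the $\mathcal A$- and $\mathcal B$-symmetries, it suffices to treat the five sub-regions of \eqref{M12-1} and \eqref{M12-2}. Each off-diagonal entry is a product of three factors: a ratio of the scalar functions $\delta_j$ (assembled into $\tilde\delta_{v_1}$ or $\tilde\delta_{v_4}$), one of the analytically continued coefficients $r_{1,a},r_{1,a}^*,\rho_{1,a},\rho_{1,a}^*$ (or the corresponding $r_2,\rho_2$ family), and an exponential $e^{\pm\vartheta_{21}}$.

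First I would bound the $\delta$-factors. By the Proposition on $\delta_1$ and $\delta_4$, each $\delta_j$ and its reciprocal is bounded on $\C$, so any ratio such as $\delta_1^2/\tilde\delta_{v_1}=\delta_1^2\delta_2\delta_6/(\delta_3\delta_4^2\delta_5)$ is bounded uniformly in $\lambda\in\C\setminus\Sigma^{(2)}$ and in $t>0$. It then remains to control the product of the reflection-type factor with $e^{\pm\vartheta_{21}}$, and this is exactly what the lens decomposition is designed to furnish: in each sub-region of \eqref{M12-1}--\eqref{M12-2} the analytic continuation provided by item (1) of Lemma \ref{lemma-decomposition} places the continued coefficient precisely where the accompanying exponent obeys $\Re(\pm\vartheta_{21})\le 0$, as dictated by the signature of $\Re\vartheta_{21}$ recorded in Figure \ref{sign}. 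The quantitative bounds of items (2)--(3) of Lemma \ref{lemma-decomposition} then dominate each entry uniformly and, moreover, show it is exponentially small as $t\to\infty$. Verifying this matching for every sub-region, i.e. confirming that no sector yields an exponentially growing term, is the main obstacle and the only genuinely delicate point.

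For the limit $\lambda\to\infty$ I would note that in the unbounded lens sectors the exponent satisfies $\Re(\pm\vartheta_{21})\to-\infty$ by the signature analysis, so $e^{\pm\vartheta_{21}}$ decays exponentially while the $\delta$-ratio tends to $1$ and the reflection factor stays bounded; the off-diagonal entry is therefore $\mathcal O(1/\lambda)$ (indeed smaller than any power of $1/\lambda$ for fixed $t$), and in the remaining region the matrix is exactly $I$. This gives the form $I+\mathcal O(1/\lambda)$. Finally, for $\lambda\to 0$ I would use the last assertion of Lemma \ref{lemma-decomposition}, namely that the entries built from $\rho_{1,a},\rho_{2,a}$ vanish identically for $0<|\lambda|<\lambda_0/2$, so those factors equal $I$ in a punctured neighbourhood of the origin; for the entries built from $r_{1,a},r_{2,a}$ one invokes part (2) of Theorem \ref{r1r2prop}, by which $r_1,r_2$ and hence their continuations decay rapidly as $\lambda\to 0$, together with the boundedness of the $\delta$-ratios and the relations $\delta_1(0)=\delta_3(0)=\delta_5(0)$, $\delta_2(0)=\delta_4(0)=\delta_6(0)$ established in the first transformation. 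Combining these yields convergence to $I$ at a rapid rate and completes the argument.
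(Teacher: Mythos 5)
Your proposal follows essentially the same route as the paper's (very terse) proof: uniform boundedness of the $\delta_j$ and their reciprocals, the analyticity and bounds for $r_{j,a}$, $\rho_{j,a}$ from Lemma \ref{lemma-decomposition} paired with the sign of $\Re\vartheta_{21}$ in each lens sector, and the rapid decay of $r_1,r_2$ at the origin. One small correction: it is the remainder $\rho_{j,r}$, not the analytic part $\rho_{j,a}$, that vanishes identically for $0<|\lambda|<\lambda_0/2$; the conclusion still holds because there $\rho_{j,a}=\rho_j=r_j/(1-|r_j|^2)$, which decays rapidly as $\lambda\to0$ by Theorem \ref{r1r2prop}.
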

	
	\begin{proof}
		Since the $\delta_j$ is uniformly bounded for $\CC\setminus\Sigma^{(2)}$ and combining the properties of $r_{j,a}$ and $\rho_{j,a}$ in Lemma \ref{lemma-decomposition}, it follows that the entries of matrices in transformations (\ref{M12-1}) and (\ref{M12-2}) are uniformly bounded and behave $I+\mathcal{O}(\frac{1}{\lambda})$ for $\lambda\to\infty$. Similarly, since $r_j(\lambda)\to0$ rapidly as $\lambda\to0$, it follows that these matrices tend to $I$ rapidly.
	\end{proof}
	\begin{figure}[h]
		\centering
		\begin{tikzpicture}[>=latex]
			\draw[very thick] (-4,0) to (4,0) node[black,right]{$\mathbb{R}$};
			\draw[very thick] (-2,-1.732*2) to (2,1.732*2)  node[black,above]{$\omega^2\mathbb{R}$};
			\draw[very thick] (2,-1.732*2) to (-2,1.732*2)    node[black,above]{$\omega\mathbb{R}$};
			\filldraw[mred] (1.6,0) node[black,below=0.1mm]{$\lambda_{0}$} circle (1.5pt);
			\filldraw[mred] (-1.6,0) node[black,below=0.1mm]{$-\lambda_{0}$} circle (1.5pt);
			\filldraw[mblue] (.8,1.732*0.8) node[black,right=1mm]{$-\omega^{2}\lambda_{0}$} circle (1.5pt);
			\filldraw[mblue] (-.8,-1.732*0.8) node[black,right=1mm]{$\omega^{2}\lambda_{0}$} circle (1.5pt);
			\filldraw[mgreen] (.8,-1.732*0.8) node[black,right=1mm]{$-\omega \lambda_{0}$} circle (1.5pt);
			\filldraw[mgreen] (-.8,1.732*0.8) node[black,right=1mm]{$\omega \lambda_{0}$} circle (1.5pt);
			
			\draw[->,very thick,rotate=60,mblue] (1.6,0)  to (3.2,0.8) ;
			\draw[-,very thick,rotate=60,mblue] (1.6,0)  to (4,1.2);
			\draw[->,very thick,rotate=60,mblue] (1.6,0)  to (3.2,-0.8) ;
			\draw[->,very thick,rotate=60,mblue] (-1.6,0)  to (-3.2,0.8) ;
			\draw[-,very thick,rotate=60,mblue] (-1.6,0)  to (-4,1.2);
			\draw[->,very thick,rotate=60,mblue] (-1.6,0)  to (-3.2,-0.8);
			\draw[-,very thick,rotate=60,mblue] (-1.6,0)  to (-4,-1.2);
			\draw[-,very thick,rotate=60,mblue] (1.6,0)  to (4,-1.2);
			\draw[-,very thick,rotate=60,mblue] (-1.6,0)  to (-1.0,0.3);
			\draw[-,very thick,rotate=60,mblue] (1.6,0)  to (1.0,0.3);
			\draw[-,very thick,rotate=60,mblue] (1.0,-0.3)  to (0,0);
			\draw[->,very thick,rotate=60,mblue] (1.0,-0.3)  to (0.5,-0.15);
			\draw[-,very thick,rotate=60,mblue] (1.0,0.3)  to (0,0);
			\draw[->,very thick,rotate=60,mblue] (1.0,0.3)  to (0.5,0.15);
			\draw[-,very thick,rotate=60,mblue] (0.7,0)  to (0.9,0);
			\draw[-,very thick,rotate=60,mblue] (-1.6,0)  to (-1.2,-0.2) ;
			\draw[-,very thick,rotate=60,mblue] (-1.6,0)  to (-1.0,-0.3);
			\draw[-,very thick,rotate=60,mblue] (-1.0,-0.3)  to (0,0);
			\draw[->,very thick,rotate=60,mblue] (-1.0,-0.3)  to (-0.5,-0.15);
			\draw[-,very thick,rotate=60,mblue] (1.6,0)  to (1.0,-0.3);
			\draw[->,very thick,rotate=60,mblue] (-1.0,0.3)  to (-0.5,0.15);
			\draw[-,very thick,rotate=60,mblue] (-1.0,0.3)  to (0,0);
			
			\draw[->,very thick,rotate=120,mgreen] (1.6,0)  to (3.2,0.8) ;
			\draw[-,very thick,rotate=120,mgreen] (1.6,0)  to (4,1.2);
			\draw[->,very thick,rotate=120,mgreen] (1.6,0)  to (3.2,-0.8) ;
			\draw[->,very thick,rotate=120,mgreen] (-1.6,0)  to (-3.2,0.8) ;
			\draw[-,very thick,rotate=120,mgreen] (-1.6,0)  to (-4,1.2);
			\draw[->,very thick,rotate=120,mgreen] (-1.6,0)  to (-3.2,-0.8);
			\draw[-,very thick,rotate=120,mgreen] (-1.6,0)  to (-4,-1.2);
			\draw[-,very thick,rotate=120,mgreen] (1.6,0)  to (4,-1.2);
			\draw[-,very thick,rotate=120,mgreen] (-1.6,0)  to (-1.0,0.3);
			\draw[-,very thick,rotate=120,mgreen] (1.6,0)  to (1.0,0.3);
			\draw[-,very thick,rotate=120,mgreen] (1.0,-0.3)  to (0,0);
			\draw[->,very thick,rotate=120,mgreen] (1.0,-0.3)  to (0.5,-0.15);
			\draw[-,very thick,rotate=120,mgreen] (1.0,0.3)  to (0,0);
			\draw[->,very thick,rotate=120,mgreen] (1.0,0.3)  to (0.5,0.15);
			\draw[-,very thick,rotate=120,mgreen] (0.7,0)  to (0.9,0);
			\draw[-,very thick,rotate=120,mgreen] (-1.6,0)  to (-1.2,-0.2) ;
			\draw[-,very thick,rotate=120,mgreen] (-1.6,0)  to (-1.0,-0.3);
			\draw[-,very thick,rotate=120,mgreen] (-1.0,-0.3)  to (0,0);
			\draw[->,very thick,rotate=120,mgreen] (-1.0,-0.3)  to (-0.5,-0.15);
			\draw[-,very thick,rotate=120,mgreen] (1.6,0)  to (1.0,-0.3);
			\draw[->,very thick,rotate=120,mgreen] (-1.0,0.3)  to (-0.5,0.15);
			\draw[-,very thick,rotate=120,mgreen] (-1.0,0.3)  to (0,0);
			
			\draw[->,very thick,mred] (1.6,0)  to (3.2,0.8) node[above,black] {$\small 1$};
			\draw[-,very thick,mred] (1.6,0)  to (4,1.2);
			\draw[->,very thick,mred] (1.6,0)  to (3.2,-0.8) node[below,black] {$\small 4$};
			\draw[->,very thick,mred] (-1.6,0)  to (-3.2,0.8) node[above,black] {$\small 7$};
			\draw[-,very thick,mred] (-1.6,0)  to (-4,1.2);
			\draw[->,very thick,mred] (-1.6,0)  to (-3.2,-0.8) node[below,black] {$\small 8$};
			\draw[-,very thick,mred] (-1.6,0)  to (-4,-1.2);
			\draw[-,very thick,mred] (-1.6,0)  to (-1.0,0.3)node[above,black] {\small $10$};
			\draw[->,very thick,mred] (-1.0,0.3)  to (-0.5,0.15);
			\draw[-,very thick,mred] (-1.0,0.3)  to (0,0);
			\draw[-,very thick,mred] (1.6,0)  to (4,-1.2);
			\draw[-,very thick,mred] (1.6,0)  to (1.2,0.2) node[above,black] {$\small 2$};
			\draw[-,very thick,mred] (1.6,0)  to (1.0,0.3);
			\draw[-,very thick,mred] (1.0,0.3)  to (0,0);
			\draw[->,very thick,mred] (1.0,0.3)  to (0.5,0.15);
			\draw[-,very thick,mred] (1.6,0)  to (1.2,-0.2) node[below,black] {\small $3$};
			\draw[-,very thick,mred] (1.6,0)  to (1.0,-0.3);
			\draw[-,very thick,mred] (1.0,-0.3)  to (0,0);
			\draw[->,very thick,mred] (1.0,-0.3)  to (0.5,-0.15);
			\draw[-,very thick] (0.7,0)  to (0.9,0)node[right] {\small $5$};
			\draw[<->,very thick] (-1,0)  to (1,0);
			\draw[<->,very thick] (-3,0) node[below] {\small $12$} to (3,0) node[above] {\small $6$};
			\draw[-,very thick,mred] (-1.6,0)  to (-1.2,-0.2) node[below,black] {\small $9$};
			\draw[-,very thick,mred] (-1.6,0)  to (-1.0,-0.3);
			\draw[-,very thick,mred] (-1.0,-0.3)  to (0,0);
			\draw[->,very thick,mred] (-1.0,-0.3)  to (-0.5,-0.15);
			\draw[-,very thick] (-0.7,0)  to (-1.4,0)node[right] {\small $11$};
			
			\draw[<->,very thick] (-.5,-1.732*.5)  to (.5,1.732*.5);
			
			%
			
			\draw[<->,very thick] (-1.5,-1.732*1.5)  to (1.5,1.732*1.5) ;
			\draw[<->,very thick] (-.5,1.732*.5)  to (.5,-1.732*.5) ;
			\draw[<->,very thick] (-1.5,1.732*1.5)  to (1.5,-1.732*1.5);
			
		\end{tikzpicture}
		\caption{{\protect\small
				The jump contour $\Sigma^{(2)}$ and saddle points $\pm\omega^j\lambda_0$ for $j=0,1,2$.}}
		\label{second tranform}
	\end{figure}
	
	The jump matrices near $\lambda_0$ are defined by
	$$
	\begin{aligned}
		v^{(2)}_1&=\left(\begin{array}{ccc}
			1 & -\frac{\tilde\delta_{v1} }{\delta_{1}^2}r_{1,a} e^{-\vartheta_{21}} & 0 \\
			0 & 1 & 0 \\
			0 & 0 & 1
		\end{array}\right),\
		v^{(2)}_2=\left(\begin{array}{ccc}
			1 & 0 & 0 \\
			-\frac{\delta_{1+}^2 }{\tilde\delta_{v_1}} \rho^*_{1,a} e^{ \vartheta_{21}} & 1 & 0 \\
			0 & 0 & 1
		\end{array}\right),\\	
		v^{(2)}_3&=\left(\begin{array}{ccc}
			1 & \frac{\tilde\delta_{v_1}}{\delta_{1-}^2}\rho_{1,a} e^{-\vartheta_{21}} & 0 \\
			0 & 1 & 0 \\
			0 & 0 & 1
		\end{array}\right),\
		v^{(2)}_4=\left(\begin{array}{ccc}
			1 & 0 & 0 \\
			\frac{\delta_{1}^2 }{\tilde\delta_{v1}}r^*_{1,a} e^{\vartheta_{21}} & 1 & 0 \\
			0 & 0 & 1
		\end{array}\right),\\
		v^{(2)}_5&=\left(\begin{array}{ccc}
			1-\frac{\delta_{+}^2}{\delta_{1-}^2}\rho_{1,r}\rho^*_{1,r}(\lambda) & -\frac{\tilde\delta_{v_1}}{\delta_{1-}^2}\rho_{1,r} e^{-\vartheta_{21}} & 0 \\
			\frac{\delta_{1+}^2 }{\tilde\delta_{v_1}} \rho_{1,r}^*  e^{\vartheta_{21}} & 1 & 0 \\
			0 & 0 & 1
		\end{array}\right),\
		v^{(2)}_6=\left(\begin{array}{ccc}
			1 & -\frac{\tilde\delta_{v1} }{\delta_{1}^2} r_{1,r} e^{-\vartheta_{21}} & 0 \\
			\frac{\delta_{1}^2 }{\tilde\delta_{v1}}r_{1,r}^{*} e^{\vartheta_{21}} & 1-r_{1,r}r^*_{1,r} & 0 \\
			0 & 0 & 1
		\end{array}\right),\\
	\end{aligned}
	$$
	and jump matrices near $-\lambda_0$ are defined by
	$$
	\begin{aligned}
		v^{(2)}_7&=\left(\begin{array}{ccc}
			1 & -\frac{\delta_{4}^2}{\tilde\delta_{v_4}}r_{2,a}^* e^{-\vartheta_{21}} & 0 \\
			0 & 1 & 0 \\
			0 & 0 & 1
		\end{array}\right),\
		v^{(2)}_8=\left(\begin{array}{ccc}
			1 & 0 & 0 \\
			\frac{\tilde\delta_{v_4}}{\delta_{4}^2}r_{2,a} e^{\vartheta_{21}} & 1 & 0 \\
			0 & 0 & 1
		\end{array}\right),\\
		v^{(2)}_9&=\left(\begin{array}{ccc}
			1 & \frac{\delta_{4+}^2}{\tilde\delta_{v_4}}\rho_{2,a}^{*} e^{-\vartheta_{21}} & 0 \\
			0 & 1 & 0 \\
			0 & 0 & 1
		\end{array}\right),\
		v^{(2)}_{10}=\left(\begin{array}{ccc}
			1 & 0 & 0 \\
			-\frac{\tilde\delta_{v_4}}{\delta_{4-}^2}\rho_{2,a} e^{\vartheta_{21}} & 1 & 0 \\
			0 & 0 & 1
		\end{array}\right),\\
		v^{(2)}_{11}&=\left(\begin{array}{ccc}
			1 & -\frac{\delta_{4+}^2}{\tilde\delta_{v_4}}\rho^*_{2,r} e^{-\vartheta_{21}} & 0 \\
			\frac{\tilde\delta_{v_4}}{\delta_{4-}^2}\rho_{2,r} e^{\vartheta_{21}} & 1-\frac{\delta_{4+}^2}{\delta_{4-}^2}\rho_{2,r}\rho^*_{2,r} & 0 \\
			0 & 0 & 1
		\end{array}\right),\
		v^{(2)}_{12}=\left(\begin{array}{ccc}
			1-r_{2,r}r^*_{2,r} & -\frac{\delta_{4}^2}{\tilde\delta_{v_4}}r_{2,r}^{*} e^{-\vartheta_{21}} & 0 \\
			\frac{\tilde\delta_{v_4}}{\delta_{4}^2}r_{2,r} e^{\vartheta_{21}} & 1 & 0 \\
			0 & 0 & 1
		\end{array}\right).\\
	\end{aligned}
	$$
	
	\begin{lemma}
		The jump matrices of RH problem for function $M^{(2)}(x,t;\lambda)$ converge to $I$ as $t\to\infty$ uniformly for $\lambda\in\Sigma^{(2)}$ except for $\lambda$ closed to the critical points $\pm\omega^j \lambda_0$ with $j=0,1,2$. In particular, the jump matrices $v^{(2)}_{5,6,11,12}$ satisfy the following estimate
		\begin{equation}\label{lemma-estimate}
			\|(1+|\cdot|)(v^{(2)}-I)\|_{L^1\cap L^{\infty}(\Sigma^{(2)}_5\cup\Sigma^{(2)}_6\cup \Sigma^{(2)}_{11}\cup\Sigma^{(2)}_{12})}\le Ct^{-\frac{3}{2}}.
		\end{equation}
	\end{lemma}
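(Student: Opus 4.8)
The plan is to partition the contour $\Sigma^{(2)}$ into two families and treat them separately: the \emph{analytic} lens arms (subscripts $1$--$4$ and $7$--$10$ near $\pm\lambda_0$, together with their $\mathcal{A}$-symmetric copies near $\pm\omega\lambda_0,\pm\omega^2\lambda_0$), whose non-identity entries always carry a factor $e^{\pm\vartheta_{21}}$, and the \emph{residual} real-axis arms (subscripts $5,6,11,12$ and their symmetric copies), whose entries involve only the rough pieces $r_{j,r},\rho_{j,r}$. Throughout I would use the uniform boundedness of $\delta_j^{\pm1}$ shown above, so that every scalar prefactor $\tilde\delta_{v_1}/\delta_1^2$, $\delta_{1\pm}^2/\tilde\delta_{v_1}$, $\delta_4^2/\tilde\delta_{v_4}$, etc.\ is $\mathcal{O}(1)$, reducing the problem to controlling the reflection-type factors alone.

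For the analytic arms, recall that the lenses in Figure \ref{second tranform} were opened precisely along rays into the half-planes where, by Lemma \ref{lemma-decomposition}(1), the continuations of $r_{1,a},r_{2,a}^{*},\rho_{1,a},\rho_{2,a}$ exist and where the signature picture (Figure \ref{sign}) forces $\Re(\pm\vartheta_{21})$ to have the correct sign. Since $\theta(\lambda;\lambda_0)=\tfrac{1}{2(1+\lambda_0^2)}(\lambda+\lambda_0^2\lambda^{-1})$ has a nondegenerate saddle at $\pm\lambda_0$, moving outward along such a ray the relevant exponential obeys $|e^{\pm\vartheta_{21}}|\le e^{-c\,t}$ whenever $|\lambda\mp\lambda_0|\ge\epsilon$, with $c=c(\epsilon,\xi)>0$. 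As $r_{j,a},\rho_{j,a}$ are bounded on the arms (being continuations of Schwartz-class coefficients, Lemma \ref{lemma-decomposition}(1)--(2)), the product of prefactor, reflection factor and exponential is $\le C e^{-c t}$ on any portion of an analytic arm at distance $\ge\epsilon$ from the critical points; hence $v^{(2)}\to I$ uniformly there, which is the first assertion of the lemma.

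For the residual arms, the rough pieces $r_{j,r},\rho_{j,r}$ remain supported on the real axis, where $|e^{\pm\vartheta_{21}}|=1$, so the entire decay must come from Lemma \ref{lemma-decomposition}(3). Reading off $v^{(2)}_{5,6,11,12}$, the off-diagonal entries are linear in $\rho_{1,r},r_{1,r},\rho_{2,r},r_{2,r}$ (times bounded $\delta$-prefactors and a unimodular exponential), while the diagonal corrections $\tfrac{\delta_{+}^2}{\delta_{1-}^2}\rho_{1,r}\rho_{1,r}^{*}$, $r_{1,r}r_{1,r}^{*}$, $\tfrac{\delta_{4+}^2}{\delta_{4-}^2}\rho_{2,r}\rho_{2,r}^{*}$ and $r_{2,r}r_{2,r}^{*}$ are quadratic and thus even smaller. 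Applying the weighted $L^1\cap L^{\infty}$ bounds of Lemma \ref{lemma-decomposition}(3) directly gives
$$
\|(1+|\cdot|)(v^{(2)}-I)\|_{L^1\cap L^{\infty}(\Sigma^{(2)}_5\cup\Sigma^{(2)}_6\cup \Sigma^{(2)}_{11}\cup\Sigma^{(2)}_{12})}\le Ct^{-\frac{3}{2}},
$$
which is exactly \eqref{lemma-estimate}, the quadratic terms contributing only $\mathcal{O}(t^{-3})$. The $\mathcal{A}$-symmetric copies near $\pm\omega\lambda_0,\pm\omega^2\lambda_0$ follow verbatim upon replacing $\lambda$ by $\omega\lambda,\omega^2\lambda$ via the $\mathbb{Z}_3$ symmetry.

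The main obstacle I anticipate is the sign/analyticity bookkeeping that underlies the second paragraph: one must verify that the opened lenses can be chosen with opening angles bounded away from the real axis so that $\Re(\pm\vartheta_{21})\le -c\,t\,|\lambda\mp\lambda_0|^2$ holds near, and $\le -c\,t$ away from, the saddle, while simultaneously staying inside the analyticity domains listed in Lemma \ref{lemma-decomposition}(1). The constant $c$ here depends on $\lambda_0=\sqrt{|x-t|/|x+t|}$ and degenerates as $\lambda_0\to0$ or $\lambda_0\to\infty$, i.e.\ as $|\xi|\to1$; this degeneration is precisely why the present near-$\lambda_0$ analysis is confined to a fixed compact subinterval of $|\xi|<1$ and must be matched separately with the transition region (Sector III) at the edge of the light cone. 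Within such a fixed $\xi$-interval, however, all constants are uniform and the two estimates above close the argument.
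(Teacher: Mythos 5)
Your proposal is correct and follows essentially the same route as the paper's (much terser) proof: uniform boundedness of the $\delta$-prefactors reduces everything to the reflection-type factors, the analytic arms decay exponentially away from the critical points by the sign structure of $\Re\vartheta_{21}$, and the estimate \eqref{lemma-estimate} on $\Sigma^{(2)}_5\cup\Sigma^{(2)}_6\cup\Sigma^{(2)}_{11}\cup\Sigma^{(2)}_{12}$ is read off directly from the weighted $L^1\cap L^\infty$ bounds on $r_{j,r},\rho_{j,r}$ in Lemma \ref{lemma-decomposition}(3). The paper delegates these details to \cite{Charlier-Lenells-2021}; you have simply written them out.
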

	
	\begin{proof}
		
		Because the $\delta$ functions are bounded except for the jump contour and $\delta_{\pm}$ are still bounded for $\lambda$ in the boundary, it follows that the components $\frac{\delta_{j,+}}{\delta_{v_j}}$ is uniformly bounded. On the other hand, the estimates for $r_{j,a},r_{j,r}$ and $\rho_{j,a},\rho_{j,r}$ imply the estimate (\ref{lemma-estimate}) of the lemma. See e.g. \cite{Charlier-Lenells-2021} for details.
	\end{proof}

	\subsubsection{Local parametrix at critical points}

	In the analysis of the second transformation, it is observed that the jump matrix approaches the identity matrix \(I\) as \(t \to \infty\), except near the jump contours around critical points \(\pm \omega^j \lambda_0\) for \(j = 0, 1, 2\). In this context, we aim to construct a parametrix in the vicinity of \(\pm \lambda_0\) utilizing parabolic cylinder functions.
	
	Consider a small disc \(B_{\epsilon}(\lambda_0)\) centered at \(\lambda_0\) with radius \(\epsilon\). For \(\lambda_0\) satisfying \(0 < \lambda_0 < M\), define a conformal map by
	\begin{align}\label{z1def}
		z_1=3^{\frac{1}{4}}(\lambda-\lambda_0)\sqrt{\frac{2t}{(1+\lambda_0^2)\lambda_0}}.
	\end{align}
	
	Let us denote the intersection of the jump contour \(\Sigma^{(2)}\) with the discs centered at \(\lambda_0\) and \(-\lambda_0\) by \(\Sigma^{(\lambda_0)} = \Sigma^{(2)} \cap B_{\epsilon}(\lambda_0)\) and \(\Sigma^{(-\lambda_0)} = \Sigma^{(2)} \cap B_{\epsilon}(-\lambda_0)\), respectively. The exponential part of the phase function, \(\vartheta_{21}(\lambda)\), can be reformulated as
	$$
	\begin{aligned}
		\vartheta_{21}(\lambda)
		=\frac{(\omega^2-\omega)t}{2}[(\lambda-\lambda^{-1}) \xi+(\lambda+\lambda^{-1})]
		:=\vartheta_{21}(\lambda_0)-\frac{iz_1^2}{2}\left(1-\frac{\lambda_0^4z_1}{\sqrt{2\tau}\eta_1^4}\right),
	\end{aligned}
	$$
	where \(\eta_1 = \lambda_0 + 3^{-\frac{1}{4}}k \sqrt{\frac{(1 + \lambda_0^2) \lambda_0}{2t}}\) for \(0 < k < 1\) and \(\tau = \frac{i \vartheta_{21}(\lambda_0)}{2} = \frac{\sqrt{3} t \lambda_0}{1 + \lambda_0^2}\).
	
	Recall the definition of \(\delta_1(\lambda)\) for \(\lambda \in \mathbb{C} \setminus (0, \lambda_0]\) of the form
	$$
	\delta_1( \lambda)=e^{i \nu_1 \log_{-\pi}\left(\lambda-\lambda_0\right)} e^{-\chi_1( \lambda)},\quad \lambda\in\CC\setminus(0,\lambda_0],
	$$
	where \(\nu_1\) is given by
	$$
	\nu_1=-\frac{1}{2 \pi} \ln \left(1-\left|r_1\left(\lambda_0\right)\right|^2\right),
	$$
	and \(\chi_1( \lambda)\) is defined as
	$$
	\chi_1( \lambda)=\frac{1}{2 \pi i} \int_{0}^{\lambda_0} \log_{-\pi}(\lambda-s) d \ln \left(1-\left|r_1(s)\right|^2\right) .
	$$
	The ratio of the square of \(\delta_{1+}(\lambda)\) over \(\delta_{\tilde v_1}(\lambda)\) is
	%
	%
	$$
	\begin{aligned}
		\frac{\delta_{1+}^{2}(\lambda)}{\delta_{\tilde v_1}(\lambda)}
		=e^{2i \nu_1 \log_{-\pi}\left(z_1\right)}\frac{a^{2i\nu_1}e^{-2\chi_1(\lambda_0)}}{\tilde\delta_{ v_1}(\lambda_0)}
		\frac{e^{2\chi_1(\lambda_0)-2\chi_1(\lambda)}\tilde\delta_{v_1}(\lambda_0)}{\tilde\delta_{ v_1}(\lambda)}
		:=e^{2i \nu_1 \log_{-\pi}\left(z_1\right)}\delta_{\lambda_0}^0\delta_{\lambda_0}^1
	\end{aligned}
	$$
	with $a=3^{-\frac{1}{4}}\sqrt{\frac{(1+\lambda_0^2)\lambda_0}{2t}}$, where $\delta_{\lambda_0}^0=\frac{a^{2i\nu_1}e^{-2\chi_1(\lambda_0)}}{\delta_{\tilde v_1}(\lambda_0)}$ and $\delta_{\lambda_0}^1=\frac{e^{2\chi_1(\lambda_0)-2\chi_1(\lambda)}\delta_{\tilde v_1}(\lambda_0)}{\delta_{\tilde v_1}(\lambda)}$.
	\par
	Similarly, for the small disc centered at \(\lambda = -\lambda_0\), which is denoted as \(B_{\epsilon}(-\lambda_0)\), define the conformal map on \(B_{\epsilon}(-\lambda_0)\) as
	$$
	z_2=3^{\frac{1}{4}}(\lambda+\lambda_0)\sqrt{\frac{2t}{(1+\lambda_0^2)\lambda_0}}.
	$$
	For \(\lambda \in \Sigma^{(-\lambda_0)}\), the exponential part, \(\vartheta_{21}(\lambda)\), is similarly transformed into
	$$
	\begin{aligned}
		\vartheta_{21}(\lambda)
		=\vartheta_{21}(-\lambda_0)+\frac{iz_2^2}{2}\left(1+\frac{\lambda_0^4}{\sqrt{2\tau}\eta^4_2}z_2\right),
	\end{aligned}
	$$
	with $\eta_2=-\lambda_0+3^{-\frac{1}{4}}k\sqrt{\frac{(1+\lambda_0^2)\lambda_0}{2t}}$, $0<k<1$.
	\noindent Furthermore, the \(\delta\) functions on \(\Sigma^{(-\lambda_0)}\) involve \(\delta_4\), defined for \(\lambda \in \mathbb{C} \setminus [-\lambda_0, 0)\) as:
	$$
	\delta_4( \lambda)=e^{i \nu_4 \log_{0}\left(\lambda+\lambda_0\right)} e^{-\chi_4( \lambda)},\quad \lambda\in\CC\setminus[-\lambda_0,0),
	$$
	where \(\nu_4 = -\frac{1}{2 \pi} \ln \left(1 - |r_2(-\lambda_0)|^2\right)\), and
	$$
	\chi_4( \lambda)=\frac{1}{2 \pi i} \int_{0}^{-\lambda_0} \log_{0}(\lambda-s) d \ln \left(1-\left|r_2(s)\right|^2\right) .
	$$
	Moreover, the \({\tilde\delta_{v_4}}/{\delta_{4}^2}\) is expressed as
	$$
	\begin{aligned}
		\frac{\tilde\delta_{v_4}}{\delta_{4}^2}=e^{-2i \nu_4 \log_{\pi}\left(z_2\right)} \frac{\tilde\delta_{v_4}(-\lambda_0)}{a^{2i\nu_4}e^{-2\chi_4( -\lambda_0)}}\frac{\tilde\delta_{ v_4}(\lambda)}{e^{2\chi_4(-\lambda_0)-2\chi_4(\lambda)}\tilde\delta_{ v_4}(-\lambda_0)}
		:=e^{-2i \nu_4 \log_{\pi}\left(z_2\right)}\left(\delta_{-\lambda_0}^0\right)^{-1}\left(\delta_{-\lambda_0}^1\right)^{-1}
	\end{aligned}
	$$
	where $\delta_{-\lambda_0}^0=\frac{a^{2i\nu_4}e^{-2\chi_4(-\lambda_0)}}{\tilde\delta_{ v_4}(-\lambda_0)}$ and $\delta_{-\lambda_0}^1=\frac{e^{2\chi_4(-\lambda_0)-2\chi_4(\lambda)}\delta_{ v_4}(-\lambda_0)}{\tilde\delta_{  v_4}(\lambda)}$.
	\par
	Introduce the diagonal matrix \(H(\pm\lambda_0, t)\) as
	$$
	H(\pm\lambda_0,t)=\diag\left(\left(\delta_{{\pm \lambda_0}}^{0}\right)^{\mp\frac{1}{2}} e^{-\frac{1}{2}\vartheta_{21}(\pm\lambda_0)},\left(\delta_{{\pm\lambda_0}}^{0}\right)^{\pm\frac{1}{2}} e^{\frac{1}{2} \vartheta_{21}(\pm\lambda_0)},1\right),
	$$
	and then further take the deformation as
	$$
	M^{(3,\epsilon)}=M^{(2)}(x,t;\lambda)H(\pm \lambda_0,t),\quad \lambda\in B_{\epsilon}(\pm \lambda_0),
	$$
	which leads to the definition of the jump matrices on \(\Sigma^{(\lambda_0)}\) as follows
	$$
	\begin{aligned}
		v^{(3,\epsilon)}_1&=\left(\begin{array}{ccc}
			1 & -e^{-2i \nu_1 \log_{-\pi}\left(z_1\right)}(\delta_{\lambda_0}^1)^{-1} r_{1,a}(\lambda(z_1)) e^{\frac{iz_1^2}{2}\left(1-\frac{\lambda_0^4z_1}{\sqrt{2\tau}\eta_1^4}\right)} & 0 \\
			0 & 1 & 0 \\
			0 & 0 & 1
		\end{array}\right),\\
		v^{(3,\epsilon)}_2&=\left(\begin{array}{ccc}
			1 & 0 & 0 \\
			-e^{2i \nu_1 \log_{-\pi}\left(z_1\right)}\delta_{\lambda_0}^1 \rho^*_{1,a}(\lambda(z_1)) e^{-\frac{iz_1^2}{2}\left(1-\frac{\lambda_0^4z_1}{\sqrt{2\tau}\eta_1^4}\right)} & 1 & 0 \\
			0 & 0 & 1
		\end{array}\right),\\
		v^{(3,\epsilon)}_3&=\left(\begin{array}{ccc}
			1 & e^{-2i \nu_1 \log_{-\pi}\left(z_1\right)}(\delta_{\lambda_0}^1)^{-1}\rho_{1,a}(\lambda(z_1)) e^{\frac{iz_1^2}{2}\left(1-\frac{\lambda_0^4z_1}{\sqrt{2\tau}\eta_1^4}\right)} & 0 \\
			0 & 1 & 0 \\
			0 & 0 & 1
		\end{array}\right),\\
		v^{(3,\epsilon)}_4&=\left(\begin{array}{ccc}
			1 & 0 & 0 \\
			e^{2i \nu_1 \log_{-\pi}\left(z_1\right)}\delta_{\lambda_0}^1r^*_{1,a}(\lambda(z_1)) e^{-\frac{iz_1^2}{2}\left(1-\frac{\lambda_0^4z_1}{\sqrt{2\tau}\eta_1^4}\right)} & 1 & 0 \\
			0 & 0 & 1
		\end{array}\right).\\
	\end{aligned}
	$$
	Similarly, the jump matrices \(v^{(3,\epsilon)}_7\) to \(v^{(3,\epsilon)}_{10}\) for \(\Sigma^{(-\lambda_0)}\) are defined to account for modifications due to the \(\delta_4\) function and its corresponding transformations. The expressions of \(v^{(3,\epsilon)}_j\) for $j=7,8,9,10$ are given by
	$$
	\begin{aligned}
		v^{(3,\epsilon)}_7&=\left(\begin{array}{ccc}
			1 & -e^{2i \nu_4 \log_{\pi}\left(z\right)}\delta_{-\lambda_0}^1r_{2,a}^* e^{-\frac{iz_2^2}{2}\left(1+\frac{\lambda_0^4}{\sqrt{2\tau}\eta^4_2}z_2\right)} & 0 \\
			0 & 1 & 0 \\
			0 & 0 & 1
		\end{array}\right),\\
		v^{(3,\epsilon)}_8&=\left(\begin{array}{ccc}
			1 & 0 & 0 \\
			e^{-2i \nu_4 \log_{\pi}\left(z\right)}\left(\delta_{-\lambda_0}^1\right)^{-1} r_{2,a} e^{\frac{iz_2^2}{2}\left(1+\frac{\lambda_0^4}{\sqrt{2\tau}\eta^4_2}z_2\right)} & 1 & 0 \\
			0 & 0 & 1
		\end{array}\right),\\
		v^{(3,\epsilon)}_9&=\left(\begin{array}{ccc}
			1 & e^{2i \nu_4 \log_{\pi}\left(z\right)}\delta_{-\lambda_0}^1\rho_{2,a}^{*} e^{-\frac{iz_2^2}{2}\left(1+\frac{\lambda_0^4}{\sqrt{2\tau}\eta^4_2}z_2\right)} & 0 \\
			0 & 1 & 0 \\
			0 & 0 & 1
		\end{array}\right),\\
		v^{(3,\epsilon)}_{10}&=\left(\begin{array}{ccc}
			1 & 0 & 0 \\
			-e^{-2i \nu_4 \log_{\pi}\left(z\right)}\left(\delta_{-\lambda_0}^1\right)^{-1}\rho_{2,a} e^{\frac{iz_2^2}{2}\left(1+\frac{\lambda_0^4}{\sqrt{2\tau}\eta^4_2}z_2\right)} & 1 & 0 \\
			0 & 0 & 1
		\end{array}\right).\\
	\end{aligned}
	$$
	\noindent
	When $z$ is fixed, several limits are observed as $t \to \infty$ for $j = 1, 2$. Specifically, when $t \to \infty$ it follows immediately that $r_{j,a} \to r_j(\lambda_0)$, $\rho_{j,a} \to \frac{r_j(\lambda_0)}{1-|r_j(\lambda_0)|^2}$, and both $\delta_{\lambda_0}^1$ and $\delta_{- \lambda_0}^1$ approach $1$, while $\frac{\lambda_0^4 z_j}{\sqrt{2\tau} \eta_j^4} \to 0$. These limits imply that $v^{3,\epsilon} \to v^X_{\pm\lambda_0}$ as $t \to \infty$, where $v^X_{\pm\lambda_0}$ denotes the jump matrices of the model problem associated with $M^X_{ \pm\lambda_0}$.
	\par	
	To further elaborate, define $M^{(\pm \lambda_0)}(x, t, \lambda)$ within $B_{\epsilon}(\pm \lambda_0)$ as follows
	\begin{align}\label{MxMlambda0}
		M^{(\pm \lambda_0)}(x, t, \lambda) = H(\pm \lambda_0) M^X_{\pm\lambda_0}(x, z(\lambda)) H(\pm \lambda_0 )^{-1},
	\end{align}
	where the prefactor $H(\pm\lambda_0)$ ensures that the jump matrix on $\partial B_{\epsilon}(\pm \lambda_0)$ converges to the identity matrix $I$ as $t \to \infty $.


	\begin{lemma}
		The matrix function $H(\pm \lambda_0,t)$ is uniformly bounded, i.e.,
		$$
		\sup_{t\ge t_0}|H(\pm \lambda_0,t)|\le C, \quad 0<\lambda_0<M.
		$$
		The function $\delta_{\pm \lambda_0}^0,\delta_{\pm \lambda_0}^1$ and $\frac{\lambda_0^4z_j}{\sqrt{2\tau}\eta_j^4}$ satisfy the following properties:
		$$
		|\delta_{\lambda_0}^0|=1,\quad |\delta_{-\lambda_0}^0|=e^{-2\pi\nu_4}, \quad 0<\lambda_0<1\ \text{and}\ t \ge t_0,\\
		$$
		moreover, one has
		\begin{align}
			|\delta_{\pm \lambda_0}^1(\lambda)-1|&\le C|\lambda\mp \lambda_0|(1+|\ln(|\lambda\mp\lambda_0|)|).
		\end{align}
	\end{lemma}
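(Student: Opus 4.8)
The plan is to treat the three assertions in turn, reducing each to one of three ingredients: the purely imaginary nature of the phase and of $\chi_j$ at the critical points, a conjugation-symmetry computation that forces $|\delta_{\tilde v_j}(\pm\lambda_0)|=1$, and the H\"older-type bound on $\chi_j$ already supplied by part (4) of the preceding proposition. First I would record that $\frac{\omega^2-\omega}{2}=-\frac{\sqrt3}{2}i$, so that for real $\lambda$ the phase $\vartheta_{21}(\lambda)=-\frac{\sqrt3}{2}it\big[(\lambda-\lambda^{-1})\xi+(\lambda+\lambda^{-1})\big]$ is purely imaginary; in particular $\vartheta_{21}(\pm\lambda_0)=\mp 2i\tau$ with $\tau=\frac{\sqrt3 t\lambda_0}{1+\lambda_0^2}>0$, so $\big|e^{\pm\frac12\vartheta_{21}(\pm\lambda_0)}\big|=1$. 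Since the third diagonal entry of $H(\pm\lambda_0,t)$ equals $1$, the uniform boundedness of $H$ reduces entirely to the boundedness of $\big|\delta_{\pm\lambda_0}^0\big|^{\pm1/2}$, i.e.\ to the second assertion.

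Next I would compute $\big|\delta_{\pm\lambda_0}^0\big|$. Because $a>0$ and $\nu_j\in\R$, one has $|a^{2i\nu_j}|=1$. For the $\chi$-factor at $+\lambda_0$, the argument $\lambda_0-s$ is positive real on the integration interval $(0,\lambda_0)$, hence $\log_{-\pi}(\lambda_0-s)=\ln(\lambda_0-s)\in\R$; since $d\ln(1-|r_1|^2)$ is also real, $\chi_1(\lambda_0)=\frac{1}{2\pi i}(\text{real})$ is purely imaginary and $|e^{-2\chi_1(\lambda_0)}|=1$. For $\delta_{\tilde v_1}(\lambda_0)=\big(\delta_3\delta_4^2\delta_5/(\delta_6\delta_2)\big)\big|_{\lambda_0}$ I would invoke the conjugation symmetry $\delta_j(\lambda)\,\overline{\delta_j(\bar\lambda)}=1$ together with the rotational definitions $\delta_3(\lambda)=\delta_1(\omega^2\lambda)$, $\delta_5(\lambda)=\delta_1(\omega\lambda)$, $\delta_2(\lambda)=\delta_4(\omega\lambda)$, $\delta_6(\lambda)=\delta_4(\omega^2\lambda)$ and $\overline{\omega\lambda_0}=\omega^2\lambda_0$, which after pairing the rotated arguments yields $\overline{\delta_{\tilde v_1}(\lambda_0)}=\delta_{\tilde v_1}(\lambda_0)^{-1}$, i.e.\ $|\delta_{\tilde v_1}(\lambda_0)|=1$. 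Combining gives $|\delta_{\lambda_0}^0|=1$. The $-\lambda_0$ case runs identically except that on the interval the argument $-\lambda_0-s$ is negative real, so $\log_0$ contributes an extra $i\pi$; isolating its real part (and using $r_2(0)=0$) gives $\re\,\chi_4(-\lambda_0)=\pm\frac12\ln(1-|r_2(-\lambda_0)|^2)=\mp\pi\nu_4$, whence $|e^{-2\chi_4(-\lambda_0)}|=e^{-2\pi\nu_4}$ and, with $|\delta_{\tilde v_4}(-\lambda_0)|=1$ by the same symmetry argument, $|\delta_{-\lambda_0}^0|=e^{-2\pi\nu_4}$. Since $|r_2|<1$ on the compact range $0<\lambda_0<M$, the exponent $\nu_4$ stays bounded (and $\nu_4\to0$ as $\lambda_0\to0$), so $H(\pm\lambda_0,t)$ is uniformly bounded for $t\ge t_0$.

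For the third assertion I would factor $\delta_{\lambda_0}^1=e^{2(\chi_1(\lambda_0)-\chi_1(\lambda))}\,\delta_{\tilde v_1}(\lambda_0)/\delta_{\tilde v_1}(\lambda)$. The cuts of $\delta_3,\delta_5,\delta_2,\delta_6$ lie along $\omega\R,\omega^2\R$ and that of $\delta_4$ along the negative axis, all away from a neighborhood of $\lambda_0>0$; hence $\delta_{\tilde v_1}$ is analytic and nonvanishing there, so it is Lipschitz and $|\delta_{\tilde v_1}(\lambda_0)/\delta_{\tilde v_1}(\lambda)-1|\le C|\lambda-\lambda_0|$. For the exponential factor, property (4) gives $|\chi_1(\lambda)-\chi_1(\lambda_0)|\le C|\lambda-\lambda_0|(1+|\ln|\lambda-\lambda_0||)$; applying $|e^{w}-1|\le|w|e^{|w|}$ with $w=2(\chi_1(\lambda_0)-\chi_1(\lambda))$ bounded near $\lambda_0$ bounds this factor minus one by the same quantity. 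Multiplying the two estimates (both factors tending to $1$) and absorbing the Lipschitz term into the dominant $|\lambda-\lambda_0|(1+|\ln|\lambda-\lambda_0||)$ gives the stated bound; the $-\lambda_0$ estimate follows verbatim from the $\chi_4$ bound.

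The main obstacle I anticipate is the branch bookkeeping behind $|\delta_{-\lambda_0}^0|$: the precise sign of the exponent hinges on the convention $\log_0$ with $\arg_0\in(0,2\pi)$, on the orientation of the defining integral for $\chi_4$, and on the placement of $-\lambda_0$ relative to the cut, and one must verify that the $i\pi$ branch jump across the negative reals is the \emph{only} contribution to $\re\,\chi_4(-\lambda_0)$. The conjugation-symmetry identities $|\delta_{\tilde v_j}(\pm\lambda_0)|=1$ are mechanical but must be executed with the correct pairing of the rotated arguments $\omega\lambda_0,\omega^2\lambda_0$ under complex conjugation.
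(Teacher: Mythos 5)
Your proposal is correct and follows essentially the same route as the paper: $|a^{2i\nu_j}|=1$ since $a>0$ and $\nu_j\in\R$, the conjugation/rotation symmetries forcing $|\tilde\delta_{v_j}(\pm\lambda_0)|=1$, the evaluation of $\re\chi_j$ at the critical points via the argument of the logarithm on the integration interval, and the bound $|e^{w}-1|\le C|w|$ combined with property (4) for $\delta^1_{\pm\lambda_0}$ (you are in fact slightly more explicit than the paper in handling the purely imaginary phase in $H$ and the Lipschitz factor $\delta_{\tilde v_1}(\lambda_0)/\delta_{\tilde v_1}(\lambda)$). Your sign hedge for $\re\chi_4(-\lambda_0)$ is warranted: the direct computation gives $\re\chi_4(-\lambda_0)=-\pi\nu_4$ and hence $|\delta_{-\lambda_0}^0|=e^{2\pi\nu_4}$, which is what the paper's own proof obtains and which differs by a sign from the lemma statement, but since $\nu_4$ is bounded the uniform boundedness conclusion is unaffected either way.
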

	\begin{proof}
		Recall that the expression for $\delta_{\lambda_0}^0$ is given by $\delta_{\lambda_0}^0 = \frac{a^{-2i\nu_1}e^{-2\chi_1(\lambda_0)}}{\tilde \delta_{v_1}(\lambda_0)}$. Direct calculation shows that
		$$
		|a^{-2i\nu_1}|=\left|\left(3^{-\frac{1}{4}}\sqrt{\frac{(1+\lambda_0^2)\lambda_0}{2t}}\right)^{2i\nu_1}\right|=|e^{2i\nu_1 ln(a)}|=1.
		$$
		Furthermore, due to the relationship $\delta_{1,4}(\lambda) = {(\overline{\delta_{1,4}(\bar{k})})^{-1}}$ and the symmetry between $\delta_{1}$ (respectively, $\delta_{4}$) and $\delta_{3,5}$ (respectively, $\delta_{2,6}$), we have
		$
		|\tilde \delta_{v_1}(\lambda_0)|=1.
		$
		Additionally, the real part of $\chi_1$ at $\lambda_0$ is
		$$
		\operatorname{Re}\chi_1(\lambda_0)=\frac{1}{2 \pi} \int^{\lambda_0}_{0} 0 d \ln \left(1-\left|r_1(s)\right|^2\right)=0,
		$$
		implying that
		$$
		|\delta_{\lambda_0}^0|=\left|\frac{a^{2i\nu}e^{-2\chi_1(\lambda_0)}}{\delta_{\tilde v_1}(\lambda_0)}\right|=1.
		$$
		In a similar way, it is obvious that
		$$
		\operatorname{Re}\chi_4(-\lambda_0)=\frac{1}{2 \pi} \int^{0}_{-\lambda_0} \pi d \ln \left(1-\left|r_2(s)\right|^2\right)=-\pi\nu_4,
		$$
		and consequently
		$$
		|\delta_{-\lambda_0}^0|=\left|\frac{a^{2i\nu_4}e^{-2\chi_4(\lambda_0)}}{\tilde\delta_{ v_4}(\lambda_0)}\right|=e^{2\pi \nu_4}.
		$$
		\noindent Recalling the definition of $\delta_{\lambda_0}^1$, i.e., $\delta_{\lambda_0}^1 = \frac{e^{2\chi_1(\lambda_0)-2\chi_1(\lambda)}\tilde\delta_{ v_1}(\lambda_0)}{\tilde\delta_{ v_1}(\lambda)}$, it becomes evident that
		$$
		|e^{2\chi_1(\lambda_0)-2\chi_1(\lambda)}-1|\le C|\chi_1(\lambda_0)-\chi_1(\lambda)|\le C|\lambda-\lambda_0|(1+|\ln(|\lambda-\lambda_0|)|).
		$$
	\end{proof}
	
	\begin{lemma}\label{V2lambda0} Denote the jump matrices as $V^{(\pm \lambda_0)}$ on $\Sigma^{(\pm \lambda_0)}$, respectively. The matrix functions $M^{(\pm \lambda_0)}$ are analytic for $\lambda\in B_{\epsilon}(\pm \lambda_0)\setminus\Sigma^{(\pm \lambda_0)}$.  For $0<\lambda_0<M$ and $t$ large enough, it follows
		$$
		\|V^{(2)}-V^{(\pm \lambda_0)}\|_{L^1(\Sigma^{(\pm \lambda_0)})}\le C\frac{\ln t}{t},
		$$
		and
		$$
		\|V^{(2)}-V^{(\pm \lambda_0)}\|_{L^{\infty}(\Sigma^{(\pm\lambda_0)})}\le C\frac{\ln t}{t^{\frac{1}{2}}}.
		$$
		Furthermore, we have
		\begin{align}
			&\left\|M^{(\pm \lambda_0)}(x, t, \zeta)^{-1}-I\right\|_{L^{\infty}(\partial B_{\epsilon}\left(\pm \lambda_0\right))}  =\mathcal{O}\left(t^{-1 / 2}\right), \label{Mlambda0esti} \\
			\label{Masy}& \int_{\partial B_{\epsilon}\left(\pm \lambda_0\right)}\left(M^{(\pm \lambda_0)}(x, t, \zeta)^{-1}-I\right)\zeta^{-1}  \frac{d \zeta}{2 \pi i} =\mp\frac{H(\pm \lambda_0, t) \left(M^X_{\pm\lambda_0}(y)\right)_{1} H(\pm \lambda_0, t)^{-1}}{3^{\frac{1}{4}}\sqrt{\frac{2t\lambda_0}{(1+\lambda_0^2)}}}+\mathcal{O}\left(t^{-1}\right),
		\end{align}
		where $\left(M^X_{\pm\lambda_0}(y)\right)_{1} $ is defined in \eqref{Mlambda0}.
	\end{lemma}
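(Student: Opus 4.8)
The plan is to exploit that the local parametrix $M^{(\pm\lambda_0)}$ is, by its definition in \eqref{MxMlambda0}, a conjugation $H(\pm\lambda_0,t)\,M^X_{\pm\lambda_0}(x,z(\lambda))\,H(\pm\lambda_0,t)^{-1}$ of the parabolic-cylinder model solution by the diagonal matrix $H$, which is $\lambda$-independent and, by the preceding lemma, uniformly bounded together with its inverse. Since the conformal maps $z_1,z_2$ of \eqref{z1def} are affine (hence analytic and invertible) on the discs $B_{\epsilon}(\pm\lambda_0)$ and carry $\Sigma^{(\pm\lambda_0)}$ onto the fixed cross on which $M^X_{\pm\lambda_0}$ is analytic off the contour, analyticity of $M^{(\pm\lambda_0)}$ on $B_{\epsilon}(\pm\lambda_0)\setminus\Sigma^{(\pm\lambda_0)}$ is immediate. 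Throughout I will use the known large-$z$ behaviour of the model, $M^X_{\pm\lambda_0}(z)=I+\frac{\left(M^X_{\pm\lambda_0}(y)\right)_1}{z}+\mathcal{O}(z^{-2})$, with $\left(M^X_{\pm\lambda_0}(y)\right)_1$ the coefficient recorded in \eqref{Mlambda0}, together with the scaling $\lambda\mp\lambda_0=a\,z$, where $a=3^{-\frac14}\sqrt{\frac{(1+\lambda_0^2)\lambda_0}{2t}}=\mathcal{O}(t^{-1/2})$.

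For the jump estimates I would compare $V^{(2)}$ with $V^{(\pm\lambda_0)}$ entry by entry. The discrepancy arises from three sources: replacing $r_{j,a},\rho_{j,a}$ by their values at $\pm\lambda_0$, replacing $\delta^1_{\pm\lambda_0}$ by $1$, and dropping the cubic correction $\frac{\lambda_0^4z_j}{\sqrt{2\tau}\eta_j^4}$ inside the exponent. Each is controlled by an estimate already in hand: the Hölder-type bounds $|r_{j,a}-r_j(\pm\lambda_0)|\le C|\lambda\mp\lambda_0|e^{-\frac t4\Re\theta_{21}}$ of Lemma \ref{lemma-decomposition}, and $|\delta^1_{\pm\lambda_0}-1|\le C|\lambda\mp\lambda_0|(1+|\ln|\lambda\mp\lambda_0||)$ of the preceding lemma. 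On the rays of $\Sigma^{(\pm\lambda_0)}$ the factor $e^{\pm\vartheta_{21}}$ becomes a Gaussian $e^{-c|z|^2}$ in the rescaled variable, so using $|\lambda\mp\lambda_0|=a|z|\sim t^{-1/2}|z|$ the worst-behaved term is $a|z|\,(1+\tfrac12\ln t+\ln|z|)\,e^{-c|z|^2}$; its supremum over $z$ is $\mathcal{O}(t^{-1/2}\ln t)$, which gives the $L^\infty$ bound, while integrating in $\lambda$ (where $d\lambda=a\,dz\sim t^{-1/2}dz$) contributes an extra factor $t^{-1/2}$ and yields the $L^1$ bound $\mathcal{O}(t^{-1}\ln t)$. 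The logarithm in both estimates is precisely the one produced by $\delta^1_{\pm\lambda_0}$ through the modulus of continuity of $\chi_1,\chi_4$.

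The two boundary relations follow from the large-$z$ expansion. On $\partial B_{\epsilon}(\pm\lambda_0)$ one has $|z|=\epsilon/a\sim\epsilon\sqrt t$, so $M^X_{\pm\lambda_0}-I=\mathcal{O}(|z|^{-1})=\mathcal{O}(t^{-1/2})$; conjugating by the uniformly bounded $H$ and inverting preserves this, which is \eqref{Mlambda0esti}. For \eqref{Masy} I would insert $\big(M^{(\pm\lambda_0)}\big)^{-1}-I=-a\,\frac{H\left(M^X_{\pm\lambda_0}(y)\right)_1H^{-1}}{\lambda\mp\lambda_0}+\mathcal{O}(z^{-2})$ into the integral. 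The $\mathcal{O}(z^{-2})=\mathcal{O}\big(a^2(\lambda\mp\lambda_0)^{-2}\big)$ remainder integrates to $\mathcal{O}(a^2/\epsilon)=\mathcal{O}(t^{-1})$, while the leading term is evaluated by residues: for $\epsilon$ small enough that the origin (and the other critical points) lie outside $B_{\epsilon}(\pm\lambda_0)$, the integrand $-a\,H\left(M^X_{\pm\lambda_0}(y)\right)_1H^{-1}(\lambda\mp\lambda_0)^{-1}\zeta^{-1}$ has a single pole there, with residue $-a\,(\pm\lambda_0)^{-1}H\left(M^X_{\pm\lambda_0}(y)\right)_1H^{-1}=\mp a\lambda_0^{-1}H\left(M^X_{\pm\lambda_0}(y)\right)_1H^{-1}$, the sign $\mp$ coming from the $\zeta^{-1}$ weight evaluated at $\pm\lambda_0$. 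Since $a/\lambda_0=\big(3^{1/4}\sqrt{2t\lambda_0/(1+\lambda_0^2)}\big)^{-1}$, this reproduces \eqref{Masy}.

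The main obstacle is the bookkeeping that produces the sharp $\ln t$ factors and the exact constant in \eqref{Masy}: one must track how the conformal scaling $a\sim t^{-1/2}$ interacts with the logarithmic modulus of continuity of $\chi_1,\chi_4$, and verify that the large-$z$ remainder of the parabolic-cylinder model is uniform for $0<\lambda_0<M$ so that all implied constants are $\lambda_0$-uniform. The residue extraction under the unusual $\zeta^{-1}$ weight (which originates from recovering $u$ as $\lambda\to0$) must also be handled carefully so that the pole at the origin is correctly excluded and the $\mp$ sign is assigned consistently to $\pm\lambda_0$.
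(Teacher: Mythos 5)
Your proposal is correct and follows essentially the same route as the paper: conjugating by the uniformly bounded $H(\pm\lambda_0,t)$ to reduce to $V^{(3,\epsilon)}-V^X_{\pm\lambda_0}$, splitting the entrywise discrepancy into the three contributions from $r_{j,a}-r_j(\pm\lambda_0)$, $\delta^1_{\pm\lambda_0}-1$, and the cubic correction in the exponent, and then bounding $s(1+|\ln s|)e^{-cts^2}$ in $L^1$ and $L^\infty$ to get the $t^{-1}\ln t$ and $t^{-1/2}\ln t$ rates, with \eqref{Mlambda0esti} and \eqref{Masy} extracted from the large-$z$ expansion of the model solution via the Cauchy/residue formula. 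Your explicit residue bookkeeping at $\zeta=\pm\lambda_0$ under the $\zeta^{-1}$ weight (with the origin excluded) correctly reproduces the $\mp$ sign and the constant $a/\lambda_0$, matching the paper's conclusion.
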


	\begin{proof}
		Notice that
		$$
		V^{(2)}-V^{(\lambda_0)}=H(\lambda_0,t)\left(V^{(3,\epsilon)}-V^{X}_{\lambda_0}\right)H(\lambda_0,t),
		$$
		and since $H(\lambda_0,t)^{\pm1}$ is uniformly bounded, it suffices to show that
		$$
		\begin{aligned}
			& \left\|V^{(3,\epsilon)}(x, t; \cdot)-V^{X}_{\lambda_0}(x, t, z(\lambda_0, \cdot))\right\|_{L^1\left(\Sigma^{(\lambda_0)}\right)} \leq C t^{-1} \ln t, \\
			& \left\|V^{(3,\epsilon)}(x, t; \cdot)-V^{X}_{\lambda_0}(x, t, z(\lambda_0, \cdot))\right\|_{L^{\infty}\left(\Sigma^{(\lambda_0)}\right)} \leq C t^{-1 / 2} \ln t.
		\end{aligned}
		$$
		Recall that
		
		\begin{align*}
			&v^{X}_1(z_1,y;\lambda_0)=\left(\begin{array}{ccc}
				1 & -y z_1^{-2 i \nu_1(y)} e^{\frac{i z_1^2}{2}} & 0 \\
				0 & 1 & 0 \\
				0 & 0 & 1
			\end{array}\right),\\
			&v^{(3,\epsilon)}_1=\left(\begin{array}{ccc}
				1 & -e^{-2i \nu_1 \log_{-\pi}\left(z_1\right)}(\delta_{\lambda_0}^1)^{-1} r_{1,a}(\lambda(z_1)) e^{\frac{iz_1^2}{2}\left(1-\frac{\lambda_0^4z_1}{\sqrt{2\tau}\eta_1^4}\right)} & 0 \\
				0 & 1 & 0 \\
				0 & 0 & 1
			\end{array}\right),
		\end{align*}
		then it follows
		$$
		\begin{aligned}
			&\left|e^{-2i \nu_1 \log_{-\pi} \left(z_1\right)}(\delta_{\lambda_0}^1)^{-1} r_{1,a}(\lambda(z_1)) e^{\frac{iz_1^2}{2} \left(1-\frac{\lambda_0^4z_1}{\sqrt{2\tau}\eta_1^4}\right)}-y z_1^{-2 i \nu_1(y)} e^{\frac{i z_1^2}{2}}\right|\\
			&\le C\left|((\delta_{\lambda_0}^1)^{-1}-1)r_{1,a} e^{\frac{iz_1^2}{2} \left(1-\frac{\lambda_0^4z_1}{\sqrt{2\tau}\eta_1^4}\right)}+(e^{\frac{iz_1^2}{2} \left(1-\frac{\lambda_0^4z_1}{\sqrt{2\tau}\eta_1^4}\right)}-e^{\frac{iz_1^2}{2}}) r_{1,a}+(r_{1,a}(\lambda)-r_{1}(\lambda_0))e^{\frac{iz_1^2}{2}}\right|\\
			&\le C\left(|\lambda-\lambda_0|\ln(|\lambda-\lambda_0|)+|\lambda-\lambda_0| \right)e^{-\frac{t}{2}\operatorname{Re}\Phi_{21}}|e^{\frac{iz^2}{2}}|\\
			&\le C |\lambda-\lambda_0|(1+\ln(|\lambda-\lambda_0|))e^{-ct|\lambda-\lambda_0|^2},
		\end{aligned}
		$$
		which implies that
		$$
		\left\|\left(v^{(3,\epsilon)}-v^{X_A}(\lambda_0)\right)_{12}\right\|_{L^1\left(\Sigma_{\lambda_0}\right)} \leq C \int_0^{\infty} s(1+|\ln s|) e^{-c t u^2} d s \leq C t^{-1} \ln t,
		$$
		and
		$$
		\left\|\left(v^{(3,\epsilon)}-v^{X_A}\right)_{12}\right\|_{L^{\infty}\left(\Sigma_{\lambda_0}\right)} \leq C \sup _{s \geq 0} s(1+|\ln s|) e^{-c t s^2} \leq C t^{-1 / 2} \ln t.
		$$
		\par
		Observing the expression for $z_1$ defined in \eqref{z1def}, it is evident that for $\lambda \in \partial B_{\epsilon}(\lambda_0)$, the value of $z_1$ tends towards infinity as $t \rightarrow \infty$. Combining this behavior with the WKB expansion for $M^{X}_{\lambda_0}$ shown in \eqref{-Mlmabda0}, yields
		\[M^{X}_{\lambda_0}(y,z)=I+\frac{(M^{X}_{\lambda_0}(y))_1}{3^{\frac{1}{4}}\sqrt{\frac{2t}{(1+\lambda_0^2)\lambda_0}}(\lambda-\lambda_0)}+\mathcal{O}\left(\frac{1}{t}\right),\]
		and considering the equation \eqref{MxMlambda0}, it leads to the derivation that
		\begin{align}\label{Mlambda0miu}
			M^{\lambda_0}(y,z)^{-1}-I=\frac{ H( \lambda_0)(M^{X}_{\lambda_0}(y))_1  H( \lambda_0)^{-1}}{3^{\frac{1}{4}}\sqrt{\frac{2t}{(1+\lambda_0^2)\lambda_0}}(\lambda-\lambda_0)}+\mathcal{O}\left(\frac{1}{t}\right), \quad \text{as } t\to \infty, \quad |y| \leq C,
		\end{align}
		thereby the validation of \eqref{Mlambda0esti} in the scenario of $\lambda_0$ completes. A similar approach can be employed to prove the case for $-\lambda_0$. Ultimately, the establishment of \eqref{Masy} is directly obtained through \eqref{Mlambda0miu} and Cauchy integral formula.
		
	\end{proof}

	\subsubsection{Small norm RH problem}
	
	By using the symmetries of the RH problem, one can define the local RH problem near other critical points $\pm\omega^j \lambda_0$ by the way
	$$
	\tilde M^{(\pm\omega \lambda_0)}(x,t;\lambda)=\mathcal{A}M^{(\pm \lambda_0)}(x,t;\omega \lambda)\mathcal{A}^{-1}.
	$$
	Denote $\tilde B^{(\pm \lambda_0)}_{\epsilon}=\cup_{j=0}^{2} B_{\epsilon}(\pm \omega^j\lambda_0) $ and define the matrix-valued function  $\tilde M(x,t;\lambda)$ as
	$$
	\tilde M(x,t;\lambda):=\begin{cases}\begin{aligned}
			&M^{(2)}\left(\tilde M^{(\pm\lambda_0)}\right)^{-1}, &&\lambda\in \tilde B^{(\pm\lambda_0)}_{\epsilon},\\
			&M^{(2)}, && \text{otherelse}.
		\end{aligned}
	\end{cases}
	$$
	\par
	In conclusion, denote the jump  contour $\tilde{\Sigma}:=\Sigma^{(2)}\cup\partial \tilde B^{(\pm\lambda_0)}_{\epsilon}$ and then the jump matrix is defined by
	$$
	\tilde{V}:=\begin{cases}\begin{aligned}
			&V^{(2)}, && \lambda \in \tilde{\Sigma}\setminus \overline {\left(\tilde B^{(\pm \lambda_0)}_{\epsilon}\right)},\\
			&(\tilde M^{(\pm\lambda_0)})^{-1}, && \lambda \in \partial \tilde B^{(\pm\lambda_0)}_{\epsilon},\\
			
			&\tilde M^{(\pm\lambda_0)}_{-}V^{(2)}(\tilde M^{(\pm\lambda_0)}_+)^{-1}, && \lambda \in \tilde B^{(\pm\lambda_0)}_{\epsilon}\cap {\Sigma}^{(2)}.
			
	\end{aligned}\end{cases}
	$$
	\par
	Let $\tilde\Sigma^{(\pm \lambda_0)}:=\cup_{j=0}^{2}\Sigma^{(\pm\omega^j\lambda_0)}$, then the jump contour of the new RH problem near $\pm\lambda_0$ is depicted in Figure \ref{new RH}.
	The estimates for the jump condition are listed in the following lemma.

    \begin{figure}[H]
		\centering
		\begin{tikzpicture}[>=latex]
			\draw[very thick] (-4,0) to (4,0) node[black,right]{$\mathbb{R}$};
			\draw[very thick] (-2,-1.732*2) to (2,1.732*2)  node[black,above]{$\omega^2\mathbb{R}$};
			\draw[very thick] (2,-1.732*2) to (-2,1.732*2)    node[black,above]{$\omega\mathbb{R}$};
			\filldraw[mred] (1.6,0) node[black,below=0.1mm]{$\lambda_{0}$} circle (1.5pt);
			\filldraw[mred] (-1.6,0) node[black,below=0.1mm]{$-\lambda_{0}$} circle (1.5pt);
			\filldraw[mblue] (.8,1.732*0.8) node[black,right=1mm]{$-\omega^{2}\lambda_{0}$} circle (1.5pt);
			\filldraw[mblue] (-.8,-1.732*0.8) node[black,right=1mm]{$\omega^{2}\lambda_{0}$} circle (1.5pt);
			\filldraw[mgreen] (.8,-1.732*0.8) node[black,right=1mm]{$-\omega \lambda_{0}$} circle (1.5pt);
			\filldraw[mgreen] (-.8,1.732*0.8) node[black,right=1mm]{$\omega \lambda_{0}$} circle (1.5pt);
			
			\draw[->,very thick,rotate=60,mblue] (1.6,0)  to (3.2,0.8) ;
			\draw[-,very thick,rotate=60,mblue] (1.6,0)  to (4,1.2);
			\draw[->,very thick,rotate=60,mblue] (1.6,0)  to (3.2,-0.8) ;
			\draw[->,very thick,rotate=60,mblue] (-1.6,0)  to (-3.2,0.8) ;
			\draw[-,very thick,rotate=60,mblue] (-1.6,0)  to (-4,1.2);
			\draw[->,very thick,rotate=60,mblue] (-1.6,0)  to (-3.2,-0.8);
			\draw[-,very thick,rotate=60,mblue] (-1.6,0)  to (-4,-1.2);
			\draw[-,very thick,rotate=60,mblue] (1.6,0)  to (4,-1.2);
			\draw[-,very thick,rotate=60,mblue] (-1.6,0)  to (-1.0,0.3);
			\draw[-,very thick,rotate=60,mblue] (1.6,0)  to (1.0,0.3);
			\draw[-,very thick,rotate=60,mblue] (1.0,-0.3)  to (0,0);
			\draw[->,very thick,rotate=60,mblue] (1.0,-0.3)  to (0.5,-0.15);
			\draw[-,very thick,rotate=60,mblue] (1.0,0.3)  to (0,0);
			\draw[->,very thick,rotate=60,mblue] (1.0,0.3)  to (0.5,0.15);
			\draw[-,very thick,rotate=60,mblue] (0.7,0)  to (0.9,0);
			\draw[-,very thick,rotate=60,mblue] (-1.6,0)  to (-1.2,-0.2) ;
			\draw[-,very thick,rotate=60,mblue] (-1.6,0)  to (-1.0,-0.3);
			\draw[-,very thick,rotate=60,mblue] (-1.0,-0.3)  to (0,0);
			\draw[->,very thick,rotate=60,mblue] (-1.0,-0.3)  to (-0.5,-0.15);
			\draw[-,very thick,rotate=60,mblue] (1.6,0)  to (1.0,-0.3);
			\draw[->,very thick,rotate=60,mblue] (-1.0,0.3)  to (-0.5,0.15);
			\draw[-,very thick,rotate=60,mblue] (-1.0,0.3)  to (0,0);
			
			\draw[->,very thick,rotate=120,mgreen] (1.6,0)  to (3.2,0.8) ;
			\draw[-,very thick,rotate=120,mgreen] (1.6,0)  to (4,1.2);
			\draw[->,very thick,rotate=120,mgreen] (1.6,0)  to (3.2,-0.8) ;
			\draw[->,very thick,rotate=120,mgreen] (-1.6,0)  to (-3.2,0.8) ;
			\draw[-,very thick,rotate=120,mgreen] (-1.6,0)  to (-4,1.2);
			\draw[->,very thick,rotate=120,mgreen] (-1.6,0)  to (-3.2,-0.8);
			\draw[-,very thick,rotate=120,mgreen] (-1.6,0)  to (-4,-1.2);
			\draw[-,very thick,rotate=120,mgreen] (1.6,0)  to (4,-1.2);
			\draw[-,very thick,rotate=120,mgreen] (-1.6,0)  to (-1.0,0.3);
			\draw[-,very thick,rotate=120,mgreen] (1.6,0)  to (1.0,0.3);
			\draw[-,very thick,rotate=120,mgreen] (1.0,-0.3)  to (0,0);
			\draw[->,very thick,rotate=120,mgreen] (1.0,-0.3)  to (0.5,-0.15);
			\draw[-,very thick,rotate=120,mgreen] (1.0,0.3)  to (0,0);
			\draw[->,very thick,rotate=120,mgreen] (1.0,0.3)  to (0.5,0.15);
			\draw[-,very thick,rotate=120,mgreen] (0.7,0)  to (0.9,0);
			\draw[-,very thick,rotate=120,mgreen] (-1.6,0)  to (-1.2,-0.2) ;
			\draw[-,very thick,rotate=120,mgreen] (-1.6,0)  to (-1.0,-0.3);
			\draw[-,very thick,rotate=120,mgreen] (-1.0,-0.3)  to (0,0);
			\draw[->,very thick,rotate=120,mgreen] (-1.0,-0.3)  to (-0.5,-0.15);
			\draw[-,very thick,rotate=120,mgreen] (1.6,0)  to (1.0,-0.3);
			\draw[->,very thick,rotate=120,mgreen] (-1.0,0.3)  to (-0.5,0.15);
			\draw[-,very thick,rotate=120,mgreen] (-1.0,0.3)  to (0,0);
			
			\draw[->,very thick,mred] (1.6,0)  to (3.2,0.8) ;
			\draw[-,very thick,mred] (1.6,0)  to (4,1.2);
			\draw[->,very thick,mred] (1.6,0)  to (3.2,-0.8) ;
			\draw[->,very thick,mred] (-1.6,0)  to (-3.2,0.8) ;
			\draw[-,very thick,mred] (-1.6,0)  to (-4,1.2);
			\draw[->,very thick,mred] (-1.6,0)  to (-3.2,-0.8) ;
			\draw[-,very thick,mred] (-1.6,0)  to (-4,-1.2);
			\draw[-,very thick,mred] (-1.6,0)  to (-1.0,0.3);
			\draw[->,very thick,mred] (-1.0,0.3)  to (-0.5,0.15);
			\draw[-,very thick,mred] (-1.0,0.3)  to (0,0);
			\draw[-,very thick,mred] (1.6,0)  to (4,-1.2);
			\draw[-,very thick,mred] (1.6,0)  to (1.2,0.2) ;
			\draw[-,very thick,mred] (1.6,0)  to (1.0,0.3);
			\draw[-,very thick,mred] (1.0,0.3)  to (0,0);
			\draw[->,very thick,mred] (1.0,0.3)  to (0.5,0.15);
			\draw[-,very thick,mred] (1.6,0)  to (1.2,-0.2) ;
			\draw[-,very thick,mred] (1.6,0)  to (1.0,-0.3);
			\draw[-,very thick,mred] (1.0,-0.3)  to (0,0);
			\draw[->,very thick,mred] (1.0,-0.3)  to (0.5,-0.15);
			\draw[-,very thick] (0.7,0)  to (0.9,0);
			\draw[<->,very thick] (-1,0)  to (1,0);
			\draw[<->,very thick] (-3,0)  to (3,0);
			\draw[-,very thick,mred] (-1.6,0)  to (-1.2,-0.2) ;
			\draw[-,very thick,mred] (-1.6,0)  to (-1.0,-0.3);
			\draw[-,very thick,mred] (-1.0,-0.3)  to (0,0);
			\draw[->,very thick,mred] (-1.0,-0.3)  to (-0.5,-0.15);
			\draw[-,very thick] (-0.7,0)  to (-1.4,0);
			
			\draw[<->,very thick] (-.5,-1.732*.5)  to (.5,1.732*.5);
			
			\draw[<->,very thick] (-1.5,-1.732*1.5)  to (1.5,1.732*1.5) ;
			\draw[<->,very thick] (-.5,1.732*.5)  to (.5,-1.732*.5) ;
			\draw[<->,very thick] (-1.5,1.732*1.5)  to (1.5,-1.732*1.5);
			\draw[very thick,mred] (1.6,0) circle [radius=0.5cm];
			\draw[->,very thick,mblue,rotate=60] (1.61,0.5) to (1.50,0.5);
			\draw[very thick,mblue,rotate=60] (1.6,0) circle [radius=0.5cm];
			\draw[->,very thick,mred] (1.61,0.5) to (1.50,0.5);
			\draw[->,very thick,mgreen,rotate=120] (1.61,0.5) to (1.50,0.5);
			\draw[very thick,mgreen,rotate=120] (1.6,0) circle [radius=0.5cm];
			\draw[->,very thick,mred,rotate=180] (1.61,0.5) to (1.50,0.5);
			\draw[very thick,mred,rotate=180] (1.6,0) circle [radius=0.5cm];
			\draw[->,very thick,mgreen,rotate=-60] (1.61,0.5) to (1.50,0.5);
			\draw[very thick,mgreen,rotate=-60] (1.6,0) circle [radius=0.5cm];
			\draw[->,very thick,mblue,rotate=-120] (1.61,0.5) to (1.50,0.5);
			\draw[very thick,mblue,rotate=-120] (1.6,0) circle [radius=0.5cm];
		\end{tikzpicture}
		\caption{{\protect\small
				The jump contour $\tilde{\Sigma}:=\Sigma^{(2)}\cup\partial \tilde{B}_{\epsilon}^{(\pm \lambda_0)}$ with circles oriented anticlockwise.}}
		\label{new RH}
	\end{figure}
	\begin{lemma}\label{westimate}
		
		Define $W=\tilde V-I$, then the estimate of jump matrix below is uniformly for $t$ large enough and $0<\lambda_0<M$.
		
		\begin{align}
			& \left\|(1+|\zeta|)  {W}\right\|_{\left(L^1 \cap L^{\infty}\right)(\Sigma)} \leq \frac{C}{ t}, \\
			& \left\|(1+|\zeta|)  W\right\|_{\left(L^1 \cap L^{\infty}\right)\left(\Sigma^{(2)}\setminus{\Sigma\cup\tilde\Sigma^{(\pm \lambda_0)}}\right)} \leq C e^{-c t}, \\
			\label{west}	& \left\| W\right\|_{\left(L^1 \cap L^{\infty}\right)(\partial \tilde{B}_{\epsilon}^{(\pm \lambda_0)})} \leq C t^{-1 / 2}, \\
			& \left\| W\right\|_{L^1\left(\tilde{\Sigma}^{\pm\lambda_0}\right)} \leq C t^{-1} \ln t ,\\
			& \left\| W\right\|_{L^{\infty}\left(\tilde{\Sigma}^{\pm\lambda_0}\right)} \leq C t^{-1 / 2} \ln t.
		\end{align}
	\end{lemma}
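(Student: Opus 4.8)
The plan is to prove the five bounds by decomposing the contour $\tilde{\Sigma}=\Sigma^{(2)}\cup\partial\tilde{B}^{(\pm\lambda_0)}_{\epsilon}$ according to the piecewise definition of $\tilde{V}$ and estimating $W=\tilde{V}-I$ separately on each piece. First I would reduce the problem to the neighborhoods of $\lambda_0$, $-\lambda_0$ and the real-axis rays: by the $\mathbb{Z}_3$ symmetry $\tilde{M}^{(\pm\omega\lambda_0)}(x,t;\lambda)=\mathcal{A}M^{(\pm\lambda_0)}(x,t;\omega\lambda)\mathcal{A}^{-1}$ together with the invariance of $\Sigma^{(2)}$ under multiplication by $\omega$, the bounds near $\pm\omega\lambda_0$ and $\pm\omega^2\lambda_0$ follow verbatim from those near $\pm\lambda_0$, since $\mathcal{A}^{\pm1}$ are constant matrices that leave every $L^1$ and $L^{\infty}$ norm invariant.

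On the lens boundaries belonging to $\Sigma^{(2)}\setminus(\Sigma\cup\tilde{\Sigma}^{(\pm\lambda_0)})$, the jump $W=V^{(2)}-I$ is built from the analytic pieces $r_{j,a},\rho_{j,a}$ multiplied by $e^{\pm\vartheta_{21}}$. Because these lenses are opened into the regions prescribed by the signature of $\Re\vartheta_{21}$ (Figure \ref{sign}) and stay a fixed distance from every critical point $\pm\omega^j\lambda_0$, the factor $|e^{\pm\vartheta_{21}}|$ is bounded by $e^{-ct}$ uniformly there; combined with the estimates on $r_{j,a},\rho_{j,a}$ from Lemma \ref{lemma-decomposition}(2) this yields the $\mathcal{O}(e^{-ct})$ bound. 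On the portions of $\Sigma$ where $\tilde{V}=V^{(2)}$, the jump is assembled from the remainder pieces $r_{j,r},\rho_{j,r}$, and the weighted $L^p$ estimates of Lemma \ref{lemma-decomposition}(3), which give decay of order $t^{-3/2}$, immediately imply $\|(1+|\zeta|)W\|_{(L^1\cap L^{\infty})(\Sigma)}\le C/t$ for $t$ large.

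For the circles, on $\partial\tilde{B}^{(\pm\lambda_0)}_{\epsilon}$ one has $W=(\tilde{M}^{(\pm\lambda_0)})^{-1}-I$, so the bound $\|M^{(\pm\lambda_0)}(x,t,\zeta)^{-1}-I\|_{L^{\infty}(\partial B_{\epsilon}(\pm\lambda_0))}=\mathcal{O}(t^{-1/2})$ of Lemma \ref{V2lambda0}, equation \eqref{Mlambda0esti}, together with the finite length of the circles, supplies both the $L^1$ and $L^{\infty}$ halves of \eqref{west}. Finally, on the local contour $\tilde{\Sigma}^{(\pm\lambda_0)}$ inside the discs, the parametrix $M^{(\pm\lambda_0)}$ is constructed to solve the model jump $V^{(\pm\lambda_0)}$ exactly, so $W=\tilde{M}^{(\pm\lambda_0)}_{-}V^{(2)}(\tilde{M}^{(\pm\lambda_0)}_{+})^{-1}-I$ reduces, after cancelling the model jump, to the conjugate of $V^{(2)}-V^{(\pm\lambda_0)}$ by the uniformly bounded factors $H(\pm\lambda_0,t)^{\pm1}$ and $M^X_{\pm\lambda_0}$; the estimates $\|V^{(2)}-V^{(\pm\lambda_0)}\|_{L^1}\le Ct^{-1}\ln t$ and $\|V^{(2)}-V^{(\pm\lambda_0)}\|_{L^{\infty}}\le Ct^{-1/2}\ln t$ from Lemma \ref{V2lambda0} then deliver the last two bounds.

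The main obstacle I anticipate is the estimate on $\tilde{\Sigma}^{(\pm\lambda_0)}$, where the logarithmic loss $\ln t$ appears: it originates from integrating, against the Gaussian weight $e^{-ct|\lambda\mp\lambda_0|^2}$, terms of size $|\lambda\mp\lambda_0|(1+|\ln|\lambda\mp\lambda_0||)$ coming from $|\delta^1_{\pm\lambda_0}-1|$ and from replacing $r_{j,a}$ by its critical-point value, producing the bound $\int_0^{\infty}s(1+|\ln s|)e^{-cts^2}\,ds\le Ct^{-1}\ln t$. The delicate point is that the constants in all five bounds must be uniform over the full range $0<\lambda_0<M$ rather than for a single $\lambda_0$; this rests on the uniform boundedness of $H(\pm\lambda_0,t)$ and of $\delta^0_{\pm\lambda_0},\delta^1_{\pm\lambda_0}$ established above, together with the scaling built into the conformal map $z_1=3^{1/4}(\lambda-\lambda_0)\sqrt{2t/((1+\lambda_0^2)\lambda_0)}$ in \eqref{z1def}.
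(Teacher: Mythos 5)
Your proposal is correct and follows essentially the same route as the paper: the paper's (very terse) proof likewise derives the first two bounds from the estimates on $V^{(2)}$ (i.e.\ the $t^{-3/2}$ bound \eqref{lemma-estimate} on the remainder pieces and the exponential decay of the analytic pieces on the lenses), the third from \eqref{Mlambda0esti}, and the last two from the $\|V^{(2)}-V^{(\pm\lambda_0)}\|_{L^1}$ and $\|\cdot\|_{L^\infty}$ estimates of Lemma \ref{V2lambda0}, with the $\mathbb{Z}_3$ symmetry handling the sectors at $\pm\omega\lambda_0,\pm\omega^2\lambda_0$. Your write-up merely supplies the details (the conjugation identity $W=\tilde M_-\bigl(V^{(2)}(V^{(\pm\lambda_0)})^{-1}-I\bigr)\tilde M_-^{-1}$ inside the discs, the uniform boundedness of $H(\pm\lambda_0,t)$, and the source of the $\ln t$ factor) that the paper leaves implicit.
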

	
	\begin{proof}	
		It follows from the estimates of the jump matrix $V^{(2)}$ that the first two inequalities are verified, and the third inequality follows from the estimate \eqref{Mlambda0esti} in Lemma \ref{V2lambda0}.
		Moreover, the last two inequalities result from the estimate for $V^{(2)}-V^{(\pm \lambda_0)}$, which is shown in Lemma \ref{V2lambda0}.
		
	\end{proof}
	
	Consider the Cauchy operator defined on the complex plane excluding the contour $\tilde{\Sigma}$, denoted as $\mathcal{C}$, where
	$$
	\left(\mathcal{C} f\right)(z)=\int_{\tilde\Sigma} \frac{f(\zeta)}{\zeta-z} \frac{\mathrm{d} \zeta}{2 \pi i}, \quad z \in \CC\setminus \tilde \Sigma.
	$$
	For a function $f$ satisfying $(1+|z|)^{\frac{1}{3}}f(z) \in L^3(\tilde{\Sigma})$, the operator $\mathcal{C}f$ maps $\mathbb{C} \setminus \tilde{\Sigma}$ to $\mathbb{C}$ analytically. In any component $D$ of $\mathbb{C} \setminus \tilde{\Sigma}$, consider curves $\{C_n\}_{n=1}^{\infty}$ enclosing each compact subset of $D$, which fulfill the condition
	$$
	\sup_{n\ge1}\int_{C_n}(1+|z|)|f(z)|^3|dz|<\infty,
	$$
	Furthermore, $\mathcal{C}_{\pm}f$ exist almost everywhere on $\tilde{\Sigma}$, and both $(1+|z|)^{\frac{1}{3}}\mathcal{C}_{+}f(z)$ and $(1+|z|)^{\frac{1}{3}}\mathcal{C}_{-}f(z)$ belong to $L^3(\tilde{\Sigma})$. The operators $\mathcal{C}_{\pm}$ are bounded from the weighted $L^3(\tilde{\Sigma})$ space to itself, denoted as $\dot{L}^{3}(\tilde{\Sigma})$, and satisfy the relation $\mathcal{C}_+ - \mathcal{C}_- = I$.
	%
	%
	By the Lemma \ref{westimate} and the Riesz-Thorin interpolation inequality, it yields that
	\begin{align}\label{Wesit}
		\|(1+|\cdot|)W\|_{L^p(\tilde\Sigma)}\le C t^{-\frac{1}{2}}(\ln t)^{\frac{1}{p}},
	\end{align}
	which indicates that $W$ belongs to both the weighted $L^3(\tilde{\Sigma})$ and $L^{\infty}(\tilde{\Sigma})$ spaces.
	
	Define the operator $\mathcal{C}_{W}$ as
	$$
	\mathcal {C}_{W} f=\mathcal {C}_{+}\left(fW_{-}\right)+\mathcal {C}_{-}\left(fW_{+}\right),
	$$
	where $\mathcal{C}_{W}$ maps from $\dot{L}^3(\tilde{\Sigma}) \cap L^{\infty}(\tilde{\Sigma})$ to $\dot{L}^3(\tilde{\Sigma})$ and is a bounded linear operator.
	
	\begin{lemma}
		For $t$ sufficiently large and $0<\lambda_0<M$, the operator $I-\mathcal C_W$ is invertible and $(I-\mathcal C_W)^{-1}$is a bounded linear operator from $\dot L^3(\tilde\Sigma)$ to itself.
	\end{lemma}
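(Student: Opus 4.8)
The plan is to show that the operator norm of $\mathcal{C}_W$ on $\dot{L}^3(\tilde{\Sigma})$ tends to zero as $t\to\infty$, uniformly in $0<\lambda_0<M$, so that $I-\mathcal{C}_W$ can be inverted by a Neumann series. First I would recall that the boundary Cauchy projectors $\mathcal{C}_{\pm}$ are bounded operators from $\dot{L}^3(\tilde{\Sigma})$ to itself, with a common bound $\|\mathcal{C}_{\pm}\|_{\dot{L}^3\to\dot{L}^3}\le c_0$, where $c_0$ depends only on the geometry of the contour $\tilde{\Sigma}$. From the definition $\mathcal{C}_W f=\mathcal{C}_+(fW_-)+\mathcal{C}_-(fW_+)$, the elementary pointwise estimate $\|(1+|z|)^{1/3}fW_{\pm}\|_{L^3}\le\|W_{\pm}\|_{L^\infty}\,\|(1+|z|)^{1/3}f\|_{L^3}$ together with the boundedness of $\mathcal{C}_{\pm}$ gives
\[
\|\mathcal{C}_W f\|_{\dot{L}^3(\tilde{\Sigma})}\le c_0\bigl(\|W_-\|_{L^\infty(\tilde{\Sigma})}+\|W_+\|_{L^\infty(\tilde{\Sigma})}\bigr)\|f\|_{\dot{L}^3(\tilde{\Sigma})},
\]
so that $\|\mathcal{C}_W\|_{\dot{L}^3\to\dot{L}^3}\le 2c_0\|W\|_{L^\infty(\tilde{\Sigma})}$.

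Next I would invoke the interpolation estimate \eqref{Wesit}, which for $p=\infty$ yields $\|W\|_{L^\infty(\tilde{\Sigma})}\le Ct^{-1/2}$, uniformly for $0<\lambda_0<M$. Combining the two displayed bounds produces $\|\mathcal{C}_W\|_{\dot{L}^3\to\dot{L}^3}\le 2c_0C\,t^{-1/2}$. Consequently there exists $t_0>0$ such that for all $t\ge t_0$ and all $0<\lambda_0<M$ one has $\|\mathcal{C}_W\|_{\dot{L}^3\to\dot{L}^3}\le\tfrac12<1$. The standard Neumann series argument then shows that $I-\mathcal{C}_W$ is invertible with inverse $(I-\mathcal{C}_W)^{-1}=\sum_{n=0}^{\infty}\mathcal{C}_W^{\,n}$, and the operator norm is bounded by
\[
\bigl\|(I-\mathcal{C}_W)^{-1}\bigr\|_{\dot{L}^3\to\dot{L}^3}\le\frac{1}{1-\|\mathcal{C}_W\|_{\dot{L}^3\to\dot{L}^3}}\le 2,
\]
uniformly for $t\ge t_0$ and $0<\lambda_0<M$, which is precisely the assertion.

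The main obstacle I anticipate is the uniformity in $\lambda_0$ rather than the invertibility itself: I must ensure that the Cauchy-operator bound $c_0$ does not degenerate as $\lambda_0$ ranges over $(0,M)$. This is guaranteed because the contour $\tilde{\Sigma}$ retains a fixed combinatorial shape, since its asymptotic rays and self-intersection angles are independent of $\lambda_0$, while the critical points $\pm\omega^j\lambda_0$ and the fixed-radius discs $B_{\epsilon}(\pm\omega^j\lambda_0)$ merely translate within a compact set as $\lambda_0$ varies. Hence $c_0$, and therefore $t_0$, can be chosen uniformly in $\lambda_0\in(0,M)$, completing the argument.
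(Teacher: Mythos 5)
Your proposal is correct and follows exactly the route the paper takes: the paper leaves this lemma without an explicit proof, but its argument is implicit in the proof of the subsequent lemma on $\mu$, where the bound $\|\mathcal{C}_W\|\le\|\mathcal{C}_{\pm}\|_p\|W\|_{L^{\infty}(\tilde{\Sigma})}$ combined with the decay $\|W\|_{L^{\infty}(\tilde{\Sigma})}=\mathcal{O}(t^{-1/2}\ln t)$ from Lemma \ref{westimate} and \eqref{Wesit} yields convergence of the Neumann series for $t$ large. Your additional remark on uniformity of the Cauchy-operator norm in $\lambda_0$ is a reasonable supplement that the paper does not discuss.
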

	Let $\mu\in I+\dot L^3(\tilde \Sigma)$ satisfy the integral equation
	$
	\mu=I+\mathcal C_W\mu,
	$
	that is $\mu=I+(I-\mathcal C_W)^{-1}C_WI$. Therefore, the following lemma holds
	\begin{lemma}\label{muest}
		For $t$ large enough and $0<\lambda_0<M$, the RH problem $\tilde M$ has a unique solution as follows
		$$
		\tilde M(x,t;\lambda)=I+\mathcal C(\mu W)=I+\int_{\tilde\Sigma} \frac{\mu(x,t;\zeta)W(x,t;\zeta)}{\zeta-\lambda} \frac{\mathrm{d} \zeta}{2 \pi i}, \quad \lambda \in \CC\setminus \tilde \Sigma.
		$$
		On the other hand, we have
		\begin{align}\label{muesit}
			\|\mu-I\|_{L^p(\tilde \Sigma)}\le  {\frac{C(\ln t)^{\frac{1}{p}}}{t^{\frac{1}{2}}}}.
		\end{align}
		
	\end{lemma}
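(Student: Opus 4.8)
The plan is to treat this as the standard small-norm (Beals--Coifman) analysis of the regular RH problem with jump $\tilde V = I + W$ on $\tilde\Sigma$, and to read off both the solution formula and the estimate \eqref{muesit} from the invertibility of $I - \mathcal{C}_W$ already secured in the preceding lemma. First I would fix the Beals--Coifman factorization $W = W_+ + W_-$ compatible with the triangular structure of the jumps $v_j^{(2)}$ and $v_j^{(3,\epsilon)}$, so that $\tilde V = (I - W_-)^{-1}(I + W_+)$, and recall the Plemelj operators $\mathcal{C}_\pm$ on $\tilde\Sigma$ with $\mathcal{C}_+ - \mathcal{C}_- = I$. Since $I - \mathcal{C}_W$ is invertible on $\dot L^3(\tilde\Sigma)$, the integral equation $\mu = I + \mathcal{C}_W\mu$ has the unique solution $\mu = I + (I-\mathcal{C}_W)^{-1}\mathcal{C}_W I \in I + \dot L^3(\tilde\Sigma)$, and I then set $\tilde M = I + \mathcal{C}(\mu W)$.

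Next I would verify that this $\tilde M$ indeed solves the RH problem. Analyticity of $\tilde M$ off $\tilde\Sigma$ and the normalization $\tilde M(\lambda) = I + \mathcal{O}(1/\lambda)$ as $\lambda \to \infty$ follow from the Cauchy-transform representation together with $\mu W \in L^1(\tilde\Sigma)$, a consequence of \eqref{Wesit} and $\mu - I \in \dot L^3(\tilde\Sigma)$. The jump relation $\tilde M_+ = \tilde M_-\tilde V$ is checked by applying $\mathcal{C}_\pm$ to $\mathcal{C}(\mu W)$, using $\mathcal{C}_+ - \mathcal{C}_- = I$ and the factorization of $\tilde V$; this is the routine Beals--Coifman computation. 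For uniqueness I would argue as in the earlier uniqueness lemma: $\det\tilde M \equiv 1$ so $\tilde M$ is invertible, and any second solution $\tilde M'$ makes $\tilde M'\tilde M^{-1}$ entire with limit $I$ at infinity, hence $\tilde M' = \tilde M$; equivalently, uniqueness is immediate from the unique solvability of $\mu = I + \mathcal{C}_W\mu$.

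For the estimate \eqref{muesit}, the key observation is that $\|\mathcal{C}_W f\|_{L^p(\tilde\Sigma)} \le C\|W\|_{L^\infty}\|f\|_{L^p(\tilde\Sigma)}$ for $1 < p < \infty$, because $\mathcal{C}_\pm$ are bounded on $L^p$ and $W = W_+ + W_-$. Since $\|W\|_{L^\infty} \to 0$ as $t \to \infty$ by Lemma \ref{westimate}, a Neumann series shows $I - \mathcal{C}_W$ is invertible on $L^p(\tilde\Sigma)$ with $\|(I-\mathcal{C}_W)^{-1}\|_{L^p \to L^p} \le 2$ for $t$ large. Writing $\mu - I = (I-\mathcal{C}_W)^{-1}\mathcal{C}_W I$ and using $\|\mathcal{C}_W I\|_{L^p} \le C\|W\|_{L^p}$ then yields $\|\mu - I\|_{L^p} \le C\|W\|_{L^p} \le C\|(1+|\cdot|)W\|_{L^p}$, and \eqref{Wesit} gives the stated bound $C(\ln t)^{1/p}/t^{1/2}$, with the endpoints $p = 1, \infty$ handled by interpolation or direct estimation.

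The main obstacle is the self-consistency of the $L^p$ bootstrap rather than any single inequality: one must ensure $\mu - I$ lies in $L^p(\tilde\Sigma)$ to begin with, since it is produced only in $\dot L^3(\tilde\Sigma)$, which is precisely why I pass to the $L^p$-boundedness of $\mathcal{C}_W$ through the smallness of $\|W\|_{L^\infty}$ rather than working solely in $\dot L^3$; one must also keep the weight $(1+|\zeta|)$ under control on the unbounded rays of $\tilde\Sigma$, where the bounds on $W$ in \eqref{Wesit} are weakest. Care is also needed to confirm that the transformations $D$, $H$, and the local parametrices preserve the normalization at $\lambda = 0$, so that the final reconstruction through $u = \lim_{\lambda\to0}\log[(\omega,\omega^2,1)M]_{13}$ remains valid, but this last point is bookkeeping rather than a genuine difficulty.
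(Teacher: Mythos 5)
Your proposal is correct and follows essentially the same route as the paper: the paper's proof is exactly the Neumann-series estimate $\|\mu-I\|_{L^p}\le \|\mathcal{C}_\pm\|_p\|W\|_{L^p}/(1-\|\mathcal{C}_\pm\|_p\|W\|_{L^\infty})$ obtained from $\mu-I=(I-\mathcal{C}_W)^{-1}\mathcal{C}_W I$ together with the bound \eqref{Wesit}, with the existence, jump verification, and uniqueness of $\tilde M=I+\mathcal{C}(\mu W)$ left to the standard Beals--Coifman theory exactly as you outline. The extra care you flag about the weight on the unbounded rays and the endpoint exponents is reasonable but does not change the argument.
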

	
	\begin{proof}
		Let $\|\mathcal{C}_{\pm}\|_p$ denote the sum of the operator norms of $\mathcal{C}_+$ and $\mathcal{C}_-$ from $L^p(\tilde{\Sigma})$ to $L^p(\tilde{\Sigma})$:
		\begin{equation*}
			\|\mathcal{C}_{\pm}\|_p := \|\mathcal{C}_+\|_{L^p(\tilde{\Sigma}) \to L^p(\tilde{\Sigma})} +\|\mathcal{C}_-\|_{L^p(\tilde{\Sigma}) \to L^p(\tilde{\Sigma})}.
		\end{equation*}
		Assume that $t$ is sufficiently large to satisfy the condition $\|W\|_{L^{\infty}(\tilde{\Sigma})} < \|\mathcal{C}_{\pm}\|_p^{-1}$. Then, the following inequality holds
		\begin{align*}
			&\|\mu - I\|_{L^p(\tilde{\Sigma})}
			\leq \sum_{j=1}^{\infty} \|\mathcal{C}_W\|^{j-1}_{L^p(\tilde{\Sigma}) \to L^p(\tilde{\Sigma})} \|\mathcal{C}_W I\|_{L^p(\tilde{\Sigma})}\\
			&\leq \sum_{j=1}^{\infty} \|\mathcal{C}_{\pm}|_p^j \|W\|^{j-1}_{L^{\infty}(\tilde{\Sigma})} \|W\|_{L^p(\tilde{\Sigma})} \
			= \frac{\|\mathcal{C}_{\pm}\|_p \|W\|_{L^p(\tilde{\Sigma})}}{1 - \|\mathcal{C}_{\pm}\|_p \|W\|_{L^{\infty}(\tilde{\Sigma})}}.
		\end{align*}
		Noting that the operators $\mathcal{C}_{\pm}$ are bounded and using the estimate \eqref{Wesit}, the inequality \eqref{muesit} follows.
	\end{proof}
	\par	
	It remains to consider the existence of a non-tangential limit as $\lambda \to 0$ that is defined as
	\begin{align}\label{Qdef}
		Q(x,t):=\lim_{\lambda\to0}\tilde M(x,t;\lambda)=I+\frac{1}{2\pi i}\int_{\tilde\Sigma}\frac{\mu (x,t;\zeta)W(x,t;\zeta)}{\zeta}d\zeta.
	\end{align}
	\par	
	Utilizing the equations \eqref{Masy}, \eqref{west} and \eqref{muesit}, the contributions from $\partial B_{\epsilon}(\lambda_0)$ and   $\partial B_{\epsilon}(-\lambda_0)$ to the right-hand side of \eqref{Qdef} are estimated
	\begin{align*}
		\frac{1}{2\pi i}\int_{\partial B_{\epsilon}(\lambda_0)}W(x,t;\zeta)\frac{d\zeta}{\zeta}+\frac{1}{2\pi i}\int_{\partial B_{\epsilon}(\lambda_0)}\frac{\left(\mu (x,t;\zeta)-I\right)W(x,t;\zeta)}{\zeta}d\zeta.\\
		=-\frac{H(\lambda_0, t) \left(M^X_{\lambda_0}(y)\right)_{1} H( \lambda_0, t)^{-1}}{3^{\frac{1}{4}}\sqrt{\frac{2t\lambda_0}{(1+\lambda_0^2)}}}+\mathcal{O}\left(\ln t\ t^{-1}\right),
	\end{align*}
	and
	\begin{align*}
		\frac{1}{2\pi i}\int_{\partial B_{\epsilon}(-\lambda_0)}W(x,t;\zeta)\frac{d\zeta}{\zeta}+\frac{1}{2\pi i}\int_{\partial B_{\epsilon}(-\lambda_0)}\frac{\left(\mu (x,t;\zeta)-I\right)W(x,t;\zeta)}{\zeta}d\zeta\\
		=\frac{H(-\lambda_0, t) \left(M^X_{-\lambda_0}(y)\right)_{1} H(-\lambda_0, t)^{-1}}{3^{\frac{1}{4}}\sqrt{\frac{2t\lambda_0}{(1+\lambda_0^2)}}}+\mathcal{O}\left(\ln t\ t^{-1}\right).
	\end{align*}

	\noindent Recalling that
	$
	\tilde M(x,t;\lambda)=\mathcal{A}\tilde M(x,t;\omega \lambda)\mathcal{A}^{-1},\,\lambda\in\CC\setminus\tilde \Sigma,
	$
	it follows that $\mu$ and $W$ also satisfy this symmetry, which facilitates the analysis of their properties across the spectrum of $\lambda$ values. Thus it is derived that
	\begin{align}\label{W-BBB}
		\begin{aligned}
			\frac{1}{2\pi i}\int_{\partial\tilde B_\epsilon{(\lambda_0)}\cup\partial\tilde B_\epsilon{(-\lambda_0)}}&W(x,t;\lambda)d\lambda\\
			&=\frac{1}{2\pi i}
			\left(\int_{\partial B_{\epsilon}(\lambda_0)}+\int_{\partial B_{\epsilon}(\omega \lambda_0)}+\int_{\partial B_{\epsilon}(\omega^2\lambda_0)}\right)
			W(x,t;\zeta)\zeta^{-1}d\zeta\\
			&+\frac{1}{2\pi i}
			\left(\int_{\partial B_{\epsilon}(-\lambda_0)}+\int_{\partial B_{\epsilon}(-\omega \lambda_0)}+\int_{\partial B_{\epsilon}(-\omega^2\lambda_0)}\right)
			W(x,t;\zeta)\zeta^{-1}d\zeta\\
			&=Q(x,t;\lambda_0)+ \mathcal{A}^{-1}Q(x,t;\lambda_0)\mathcal{A}+ \mathcal{A}^{-2}Q(x,t;\lambda_0)\mathcal{A}^2\\
			&+Q(x,t;-\lambda_0)+ \mathcal{A}^{-1}Q(x,t;-\lambda_0)\mathcal{A}+ \mathcal{A}^{-2}Q(x,t;-\lambda_0)\mathcal{A}^2.
		\end{aligned}
	\end{align}
	\par
	Consequently, the limit in equation (\ref{Qdef}) can be written as
	\begin{align}\label{Q-limit}
		\lim_{\lambda\to0}\tilde M(x,t;\lambda)=&I-\left(\frac{\sum_{j=0}^{2} \mathcal{A}^{-j}H(\lambda_0, t) \left(M^X_{\lambda_0}(y)\right)_{1} H( \lambda_0, t)^{-1}\mathcal{A}^{j}}{3^{\frac{1}{4}}\sqrt{\frac{2t\lambda_0}{(1+\lambda_0^2)}}}\right)\nonumber\\
		&\quad +\left(\frac{\sum_{j=0}^{2} \mathcal{A}^{-j}H(-\lambda_0, t) \left(M^X_{-\lambda_0}(y)\right)_{1} H( -\lambda_0, t)^{-1}\mathcal{A}^{j}}{3^{\frac{1}{4}}\sqrt{\frac{2t\lambda_0}{(1+\lambda_0^2)}}}\right)\nonumber\\	
		&=\frac{1}{3^{\frac{1}{4}}\sqrt{\frac{2t\lambda_0}{(1+\lambda_0^2)}}}\left(\begin{matrix}
			1 & \alpha & \beta\\
			\beta & 1 & \alpha\\
			\alpha & \beta & 1
		\end{matrix}\right),
	\end{align}
	with
	$\alpha=\delta_{-\lambda_0}^0e^{-\vartheta_{21}(\lambda_0)}\beta_{12}^{-\lambda_0}-(\delta_{\lambda_0}^0)^{-1}e^{-\vartheta_{21}(\lambda_0)}\beta_{12}^{\lambda_0}$ and $\beta=(\delta_{-\lambda_0}^0)^{-1}e^{\vartheta_{21}(-\lambda_0)}\beta_{21}^{-\lambda_0}-\delta_{\lambda_0}^0e^{\vartheta_{21}(\lambda_0)}\beta_{21}^{\lambda_0}$.
	\begin{remark}
		It follows from the condition $r_{j}(0)=0~(j=1,2)$ that the estimate of $W(x,t;\zeta)\zeta^{-1}$ in (\ref{W-BBB}) is regular at $\zeta=0$.
	\end{remark}

	\subsubsection{Proof of the Sector {\rm IV} in Theorem \ref{uasy}} \label{SectorII}
	
	Reminding the reconstruction formula (\ref{usolution}), the limit in equation (\ref{Q-limit}) and series of deformations above, the asymptotic solution of the Tzitz\'eica equation (\ref{Tt}) for $\left|\frac{x}{t}\right|< 1$ can be formulated by
	$$
	\begin{aligned}
		u(x,t)&=\lim_{\lambda\to0}\log[(\omega,\omega^2,1)M(x,t;\lambda)]_{13}\\
		&=\lim_{\lambda\to0}\log[(\omega,\omega^2,1)\tilde M(x,t;\lambda)]_{13}+ {\mathcal{O}\left(\frac{\ln t}{t}\right)}\\
		&=\log(1+\frac{2\Re(\omega^2\delta^0_{-\lambda_0}e^{\vartheta_{21}(-\lambda_0)}\beta_{12}^{-\lambda_0})-2\Re(\omega\delta^0_{\lambda_0}e^{\vartheta_{21}(\lambda_0)}\beta_{21}^{\lambda_0})}{3^{\frac{1}{4}}\sqrt{\frac{2t\lambda_0}{(1+\lambda_0^2)}}})+{\mathcal{O}\left(\frac{\ln t}{t}\right)}\\	
		&=\log(1+3^{-\frac{1}{4}}\sqrt{\frac{2(1+\lambda_0^2)}{t\lambda_0}}\left(\sqrt{\nu_1}\cos\left(\frac{5\pi i}{12}-\frac{2\sqrt{3}t\lambda_0}{1+\lambda_0^2}-\nu_1\ln\left(\frac{6\sqrt{3}t\lambda_0}{1+\lambda_0^2}\right)+s_1\right)\right.\\
		&\left.\quad -\sqrt{\nu_4}\cos\left(\frac{13\pi i}{12}-\frac{2\sqrt{3}t\lambda_0}{1+\lambda_0^2}
		-\nu_4\ln\left(\frac{6\sqrt{3}t\lambda_0}{1+\lambda_0^2}\right)+s_2\right)\right)+\mathcal O\left(\frac{\ln t}{t}\right),
	\end{aligned}
	$$
	where
	\begin{equation}\label{s}
		\begin{aligned}
			s_1=-(\arg y_1+\arg \Gamma(-i\nu_1)+\nu_4\ln4)+\frac{1}{\pi}\int_0^{-\lambda_0}\log_0\frac{|s-\omega\lambda_0|}{|s-\lambda_0|}d\ln(1-|r_2(s)|^2)\\
			+\frac{1}{\pi}\int_0^{\lambda_0}\log_{-\pi}\frac{|s-\lambda_0|}{|s-\omega\lambda_0|}d\ln(1-|r_1(s)|^2),\\
			s_2=-(\arg y_2+\arg \Gamma(-i\nu_4)+\nu_1\ln4)+\frac{1}{\pi}\int_0^{\lambda_0}\log_{-\pi}\frac{|s+\omega\lambda_0|}{|s+\lambda_0|}d\ln(1-|r_1(s)|^2)\\
			+\frac{1}{\pi}\int_0^{-\lambda_0}\log_{0}\frac{|s+\lambda_0|}{|s+\omega\lambda_0|}d\ln(1-|r_2(s)|^2).
		\end{aligned}
	\end{equation}
	\par
	Henceforth, this establishes the asymptotic formula of Theorem \ref{uasy} in Sector II, culminating the proof in its entirety.
\subsubsection{Long-time behavior near the light cone $|\frac{x}{t}|\to1$ from the interior region}
{In the case where $|\frac{x}{t}|\to 1$ from within the light cone, the critical point $\lambda_0$ approaches $0$ for $x>0$ and diverges to $\infty$ for $x<0$. Moreover, the reflection coefficient $r_1(\lambda)$ vanishes to all orders as $\lambda\to0$ and $\lambda\to\infty$. As a result, the estimates in Lemma \ref{lemma-decomposition} can be further refined by
\begin{lemma}\label{lemma-decomposition-new}
Under the same assumptions as in Lemma \ref{lemma-decomposition}, as $\lambda_0\to0$ or $\lambda\to\infty$, there exists a nonnegative smooth function $C_N(\lambda)$ that vanishes to all orders at $\lambda=0$ and $\lambda=\infty$, then we have
    \begin{enumerate}
    \item
    For $Re\lambda>\lambda_0 \ \text{and}\ \Im(\lambda)>0$,
			$$
			   \left|r_{1, a}(x, t, \lambda)-\sum_{n=0}^N \frac{r_1^{(n)}(\lambda_0)(\lambda-\lambda_0)^n}{n!}\right| \leq C_N(\lambda_0){e^{-\frac{t}{4}\operatorname{Re} \theta_{21}( \lambda)}}|\lambda-\lambda_0|^{N+1},
			$$
			Moreover, for $\lambda\in B_{\lambda_0}(0),\Im\lambda>0\ \text{and}\ \frac{\lambda_0}{2}<\Re\lambda<\lambda_0$, it follows
			\begin{align*}
				&\left|\rho_{1, a}(x, t, \lambda)-\sum_{n=0}^{N}\frac{\rho_1^{(n)}(\lambda_0)(\lambda-\lambda_0)^n}{n!}\right| \leq {C_N(\lambda_0) e^{-\frac{t}{4}\operatorname{Re} \theta_{21}( \lambda)}|\lambda-\lambda_0|^{N+1}}, 
			\end{align*}
			\item  For $1\le p\le\infty$,  the functions  $r_{j,r} $ and $\rho_{j,r}$ satisfy
			$$
				\left\| (1+|\cdot|)r_{1, r}(x, t, \lambda)e^{-2t\theta_{21}}\right\|_{L^p{(\lambda_0,\infty)}} \leq \frac{c}{t^{N+1/2}}\quad \lambda_0<\lambda,
			$$
			and  $\rho_{1,r}(x,t;\lambda)=\mathcal{O}(t^{-N-\frac{1}{2}})$ for $0<|\lambda|<{\lambda_0}$.
                The estimates for $r_2$ and $\rho_2$ are similar and are omitted for brevity.  
		\end{enumerate}
\end{lemma}
The other transformation follows a similar analysis. As $t\to\infty$ and $\frac{|x|}{t}\to1$ from within the light cone, the function $u(x,t)$ retains the same leading term, while the error term is refined to $\mathcal{O}\left(t^{-N}+\frac{C_N(\lambda_0)\ln t}{t}\right)$.}

	\appendix
	
	\section{\bf The model Riemann-Hilbert problem}\label{modelRH}
	
	The jump contour of the model RH problem for function $M^X_{\pm\lambda_0}$ is shown in Figure \ref{model}.
	
	\begin{figure}[!h]
		\centering
		\begin{overpic}[width=.55\textwidth]{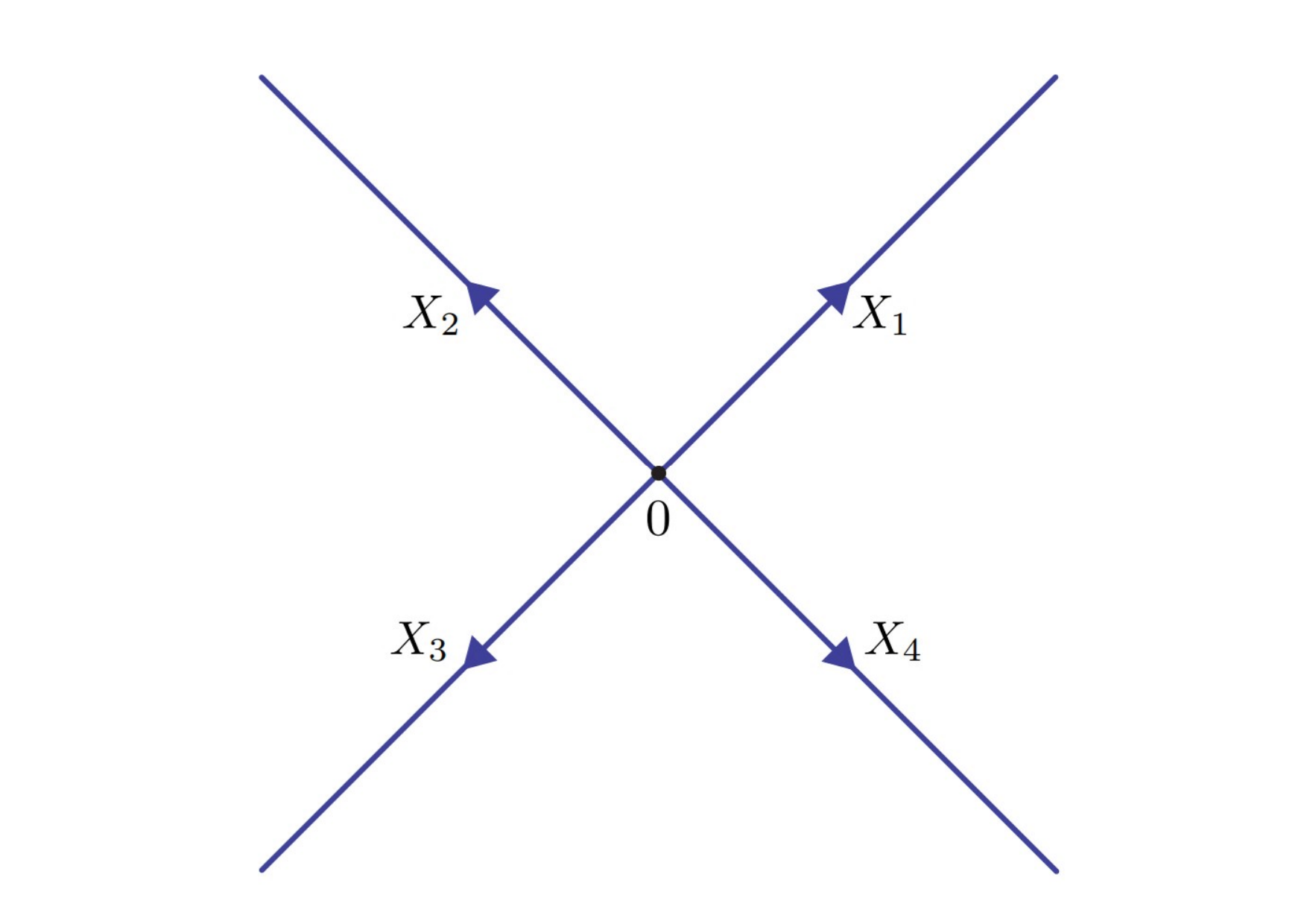}
		\end{overpic}
		\caption{{\protect\small
				The jump contour of the model problem for function $M^X_{\pm\lambda_0}$.}}
		\label{model}
	\end{figure}
	Here the jump matrices are
	\begin{align*}
		&X_1=\{z\in\CC:z=re^{\frac{\pi i}{4}},0\le r\le\infty\},\quad &X_2=\{z\in\CC:z=re^{\frac{3\pi i}{4}},0\le r\le\infty\},\\
		&X_3=\{z\in\CC:z=re^{\frac{5\pi i}{4}},0\le r\le\infty\},\quad &X_4=\{z\in\CC:z=re^{\frac{7\pi i}{4}},0\le r\le\infty\},
	\end{align*}
	which are oriented away from the origin. Denote $X=\cup_{j=1}^4X_j$ and the functions $\nu_{1}(y)=-\frac{1}{2 \pi} \ln \left(1-\left|r_1(y)\right|^2\right)$ and $\nu_4(y)=-\frac{1}{2 \pi} \ln \left(1-\left|r_2\left(y\right)\right|^2\right)$. The model RH problem for function $M^X_{\lambda_0}$ is defined below.
	
	\begin{RHproblem}\label{RHpM0}
		The three-order matrix-valued function $M^X_{\lambda_0}$ satisfies the properties:
		\begin{enumerate}
			\item  $M^X_{\lambda_0}(\cdot\ ,y):~~\CC\setminus X\to\CC^{3\times3}$  is analytic for $z \in\CC\setminus X$.
			
			\item The function $M^X_{\lambda_0}(z,y)$ is continuous on $X\setminus\{0\}$ and satisfies the jump condition
			$$
			(M^X_{\lambda_0}(z,y))_+=(M^X_{\lambda_0}(z,y))_-v^{X}_{\lambda_0}(z,y),\quad z\in\CC\setminus \{0\},
			$$
			where the jump matrix $v^X_{\lambda_0}(z,y)$ is defined by
			$$
			\begin{aligned}
				& \left(\begin{array}{ccc}
					1 & -y z^{-2 i \nu_1(y)} e^{\frac{i z^2}{2}} & 0 \\
					0 & 1 & 0 \\
					0 & 0 & 1
				\end{array}\right) \quad \text { if } z \in X_1, \quad\left(\begin{array}{ccc}
					1 & 0 & 0 \\
					-{\frac{\bar{y}}{1-|y|^2}} z^{2 i \nu_1(y)} e^{-\frac{i z^2}{2}} & 1 & 0 \\
					0 & 0 & 1
				\end{array}\right) \text { if } z \in X_2, \\
				& \left(\begin{array}{ccc}
					1 & \frac{y}{1-|y|^2} z^{-2i \nu_1 }  e^{\frac{iz^2}{2}} & 0 \\
					0 & 1 & 0 \\
					0 & 0 & 1
				\end{array}\right) \text { if } z \in X_3, \quad\left(\begin{array}{ccc}
					1 & 0 & 0 \\
					\bar y z^{2i \nu_1 } e^{-\frac{iz^2}{2}} & 1 & 0 \\
					0 & 0 & 1
				\end{array}\right) \text { if } z \in X_4,
			\end{aligned}
			$$
			with $z^{2i\nu_1(y)}=e^{2i\nu_1(y){log_{-\pi}(z)}}$ .
			\item  $M^X_{\lambda_0}(z ,y)\to I$  as $z\to\infty$.
			\item  $M^X_{\lambda_0}(z ,y)\to \mathcal{O}(1)$ as $z\to 0$.
		\end{enumerate}
	\end{RHproblem}
	\par
	The RH problem \ref{RHpM0} has a unique solution. For $|y|<1$, the solution $M^X_{\lambda_0}$ of RH problem \ref{RHpM0} satisfies the following expansion:
	\begin{align}\label{Mlambda0}		M^X_{\lambda_0}(y,z)=I+\frac{\left(M^X_{\lambda_0}(y)\right)_1}{z}+\mathcal{O}\left(\frac{1}{z^2}\right).
	\end{align}
	where
	$$
	\left(M_{\lambda_0}^X(y)\right)_1=\left(\begin{array}{ccc}
		0 & \beta_{12}^{\lambda_0} & 0 \\
		\beta_{21}^{\lambda_0} & 0 & 0 \\
		0 & 0 & 0
	\end{array}\right), \quad |y|<1,
	$$
	and
	$$
	\beta_{12}^{\lambda_0}=-\frac{\sqrt{2\pi}e^{\frac{\pi i}{4}}e^{-\frac{\pi\nu_1}{2}}}{ \bar y\Gamma(i\nu_1)}, \quad \beta_{21}^{\lambda_0}=-\frac{\sqrt{2\pi}e^{-\frac{\pi i}{4}}e^{-\frac{\pi\nu_1}{2}}}{ y\Gamma(-i\nu_1)}.
	$$
	
	Similarly, the model RH problem for function $M^X_{-\lambda_0}$ is defined below.	
	
	\begin{RHproblem}\label{RHpM-0}
		The three-order matrix-valued function $M^X_{-\lambda_0}$ satisfies the properties:
		\begin{enumerate}
			\item $M^X_{-\lambda_0}(\cdot,y):~~\CC\setminus X\to\CC^{3\times3}$  is analytic for $z \in\CC\setminus X$.
			\item The function $M^X_{-\lambda_0}(z,y)$ is continuous on $X\setminus\{0\}$ and satisfies the jump condition
			$$
			(M^X_{-\lambda_0}(z,y))_+=(M^X_{-\lambda_0}(z,y))_-v^{X}_{-\lambda_0}(z,y),\quad z\in\CC\setminus \{0\},
			$$
			where the jump matrix $v^X_{-\lambda_0}(z,y)$ is defined by
			$$
			\begin{aligned}
				& \left(\begin{array}{ccc}
					1 & 0 & 0 \\
					-\frac{y}{1-|y|^2} z^{-2 i \nu_4(y)} e^{\frac{i z^2}{2}} & 1 & 0 \\
					0 & 0 & 1
				\end{array}\right) \quad \text { if } z \in X_1, \quad\left(\begin{array}{ccc}
					1 & -\bar{y} z^{2 i \nu_4(y)} e^{-\frac{i z^2}{2}} & 0 \\
					0 & 1 & 0 \\
					0 & 0 & 1
				\end{array}\right) \text { if } z \in X_2, \\
				& \left(\begin{array}{ccc}
					1 & 0 & 0 \\
					y z^{-2i \nu_4 }  e^{\frac{iz^2}{2}} & 1 & 0 \\
					0 & 0 & 1
				\end{array}\right) \text { if } z \in X_3, \quad\left(\begin{array}{ccc}
					1 & \frac{\bar y}{1-|y|^2} z^{2i \nu_4 } e^{-\frac{iz^2}{2}} & 0 \\
					0 & 1 & 0 \\
					0 & 0 & 1
				\end{array}\right) \text { if } z \in X_4,
			\end{aligned}
			$$
			with $z^{2i\nu_4(y)}=e^{2i\nu_4(y)}e^{\log_{0}(z)}$ .
			\item $M^X_{-\lambda_0}(z ,y)\to I$  as $z\to\infty$.
			\item $M^X_{-\lambda_0}(z ,y)\to \mathcal{O}(1)$ as $z\to 0$.
		\end{enumerate}
	\end{RHproblem}
	\par
	The RH problem \ref{RHpM-0} has a unique solution.  For $|y|<1$, the solution $M^X_{-\lambda_0}$ of RH problem \ref{RHpM-0} satisfies the following expansion
	\begin{align}\label{-Mlmabda0}
		M^X_{-\lambda_0}(y,z)=I+\frac{\left(M^X_{-\lambda_0}(y)\right)_1}{z}+\mathcal{O}\left(\frac{1}{z^2}\right),
	\end{align}
	where
	$$
	\left(M_{-\lambda_0}^X(y)\right)_1=\left(\begin{array}{ccc}
		0 & \beta_{12}^{-\lambda_0} & 0 \\
		\beta_{21}^{-\lambda_0} & 0 & 0 \\
		0 & 0 & 0
	\end{array}\right), \quad |y|<1,
	$$
	and
	$$
	\beta_{12}^{-\lambda_0}=-\frac{\sqrt{2 \pi} e^{-\frac{\pi i}{4}} e^{-\frac{5 \pi \nu_4}{2}}}{{y} \Gamma(-i \nu_4)},\quad \beta_{21}^{-\lambda_0}=-\frac{\sqrt{2 \pi} e^{\frac{\pi i}{4}} e^{\frac{3 \pi \nu_4}{2}}}{\bar y \Gamma(i \nu_4)}.
	$$
	\begin{remark}
		The methodology employed for proving the solvability and delineating the expansions of the Riemann-Hilbert problems \ref{RHpM0} and \ref{RHpM-0} adheres to a conventional framework. Detailed expositions of this proof process are accessible in the literatures, notably within the paper by Charlier, Lenells and Wang \cite{Charlier-Lenells-Wang-2021}, and further reference provided in \cite{Ds-Xd-2022}.
	\end{remark}
	\noindent{\bf Conflict of interest declaration.} We declare we have no competing interests.
	\subsection*{Acknowledgements}
	The authors are grateful to Jonatan Lenells for valuable comments on the initial manuscript.
	Support is acknowledged from the National Natural Science Foundation of China, Grant No.12371249 , No.12371247 \& No.12431008.  The third author acknowledges the GNFM--INDAM group and the research project 
\textit{Mathematical Methods in NonLinear Physics (MMNLP)}, Gruppo~4 -- 
Fisica Teorica of INFN and the support of the scholarship provided by the China Scholarship Council (CSC) under Grant No. 202406040149.

	\bibliographystyle{amsplain}

\end{document}